\NewDocumentCommand{\seteqnum}{o}{%
  \IfValueTF{#1}
    {\textup{\tagform@{#1}}}
    {\incr@eqnum \print@eqnum}
}
\NewCommandCopy{\ltxlabel}{\ltx@label}
\DeclareSymbolFontAlphabet{\mathbbm}{bbold}
\DeclareSymbolFontAlphabet{\mathbb}{AMSb}
\def\blfootnote{\xdef\@thefnmark{}\@footnotetext}
\newcommand{\indi}[1]{\ensuremath{\mathds{1}}}
\newcommand{\calA}{\mathcal{A}}
\newcommand{\calB}{\mathcal{B}}
\newcommand{\calC}{\mathcal{C}}
\newcommand{\calD}{\mathcal{D}}
\newcommand{\bbD}{\mathbb{D}}
\newcommand{\calE}{\mathcal{E}}
\newcommand{\bbE}{\mathbb{E}}
\newcommand{\calF}{\mathcal{F}}
\newcommand{\calG}{\mathcal{G}}
\newcommand{\calH}{\mathcal{H}}
\newcommand{\bbH}{\mathbb{H}}
\newcommand{\calI}{\mathcal{I}}
\newcommand{\bbI}{\mathbb{I}}
\newcommand{\calJ}{\mathcal{J}}
\newcommand{\calK}{\mathcal{K}}
\newcommand{\calL}{\mathcal{L}}
\newcommand{\calM}{\mathcal{M}}
\newcommand{\calN}{\mathcal{N}}
\newcommand{\bbN}{\mathbb{N}}
\newcommand{\calO}{\mathcal{O}}
\newcommand{\bbP}{\mathbb{P}}
\newcommand{\calQ}{\mathcal{Q}}
\newcommand{\bbR}{\mathbb{R}}
\newcommand{\calS}{\mathcal{S}}
\newcommand{\calT}{\mathcal{T}}
\newcommand{\calU}{\mathcal{U}}
\newcommand{\calV}{\mathcal{V}}
\newcommand{\bbV}{\mathbb{V}}
\newcommand{\calX}{\mathcal{X}}
\newcommand{\calY}{\mathcal{Y}}
\newcommand{\calZ}{\mathcal{Z}}
\newcommand{\expec}{\mathbb{E}}
\newcommand{\dent}{\mathds{h}}
\DeclareMathOperator*{\argmax}{argmax}
\newcommand{\ToV}{\mathbb{V}}
\newcommand{\Squad}{\hspace{0.5em}}
\newcommand{\brk}[1]{\ensuremath{\big[{#1}\big]}}
\newcommand{\kd}[2]{\ensuremath{\bbD\hspace{-0.7mm}\left({#1}{#2}\right)}}
\newcommand{\sbra}[2]{\ensuremath{\left[{#1}{\,,\,}{#2}\right]}}%
\newcommand{\sbr}[1]{\ensuremath{\left[{#1}\right]}}%
\newtheorem{theorem}{Theorem}
\newtheorem{corollary}{Corollary}
\newtheorem{definition}{Definition}
\newtheorem{remark}{Remark}
\newtheorem{lemma}{Lemma}[]
\newcommand{\indic}[1]{\ensuremath{\mathds{1}}}
\newcommand{\card}[1]{\ensuremath{\left|{#1}\right|}}   
\newcommand{\abs}[1]{\ensuremath{\left|#1\right|}}   
\newcommand{\cc}{\text{\textnormal{c}}}
\newcommand{\tck}{\text{\textnormal{tck}}}
\newcommand{\tcsk}{\text{\textnormal{tcsk}}}
\newcommand{\scn}{\text{\textnormal{sc}}}
\newcommand{\A}{\text{\textnormal{A}}}
\newcommand{\B}{\text{\textnormal{B}}}
\newcommand{\Ci}{\text{\textnormal{C}}}
\renewcommand{\leq}{\leqslant} 
\renewcommand{\geq}{\geqslant} 
\acrodef{ACDIS}[ACDIS]{Adaptive Communication Decision and Information Systems}
\acrodef{AEP}{Asymptotic Equipartition Property}
\acrodef{AoA}{Angle of Arrival}
\acrodef{AWGN}{Additive White Gaussian Noise}
\acrodef{AVC}[AVC]{Arbitrarily Varying Channel}
\acrodef{PIR-PNSI}{Private Information Retrieval with Private Noisy Side Information}
\acrodef{BER}{Bit-Error-Rate}
\acrodef{BEC}{Binary Erasure Channel}
\acrodef{BSC}{Binary Symmetric Channel}
\acrodef{BSCO}{Binary Symmetric Channel with Additional ``off'' Symbol}
\acrodef{FDG}{Functional Dependence Graph}
\acrodef{BPSK}{Binary Phase-Shift Keying}
\acrodef{BICM}[BICM]{Bit-Interleaved Coded-Modulation}
\acrodef{CDF}[CDF]{Cumulative Distribution Function}
\acrodef{CGF}[CGF]{Cumulant Generating Function}
\acrodef{CLT}[CLT]{Central Limit Theorem}
\acrodef{CSI}[CSI]{Channel State Information}
\acrodef{DMC}[DMC]{Discrete Memoryless Channel}
\acrodef{DMS}[DMS]{Discrete Memoryless Source}
\acrodef{ERM}[ERM]{Empirical Risk Minimization}
\acrodef{FER}[FER]{Frame Error Rate}
\acrodef{ICA}[ICA]{Independent Component Analysis}
\acrodef{iid}[i.i.d.]{independent and identically distributed}
\acrodef{IoT}[IoT]{Internet of Things}
\acrodef{KKT}[KKT]{Karush-Kuhn Tucker}
\acrodef{LASSO}[LASSO]{Least Absolute Shrinkage and Selection Operator}
\acrodef{LPD}[LPD]{Low Probability of Detection}
\acrodef{LDPC}[LDPC]{Low-Density Parity-Check}
\acrodef{LLMS}[LLMS]{Linear Least Mean Square}
\acrodef{LMS}[LMS]{Least Mean Square}
\acrodef{MAC}[MAC]{Multiple-Access Channel}
\acrodef{ADSI}[ADSI]{Action-Dependent State Information}
\acrodef{MGF}[MGF]{Moment Generating Function}
\acrodef{MLC}[MLC]{Multi-Level Coding}
\acrodef{MLE}[MLE]{Maximum Likelihood Estimate}
\acrodef{MIMO}[MIMO]{Multiple-Input Multiple-Output}
\acrodef{MISO}{Multiple-Input Single-Output}
\acrodef{MSD}[MSD]{Multi-Stage Decoding}
\acrodef{MMSE}[MMSE]{Minimum Mean-Square Error}
\acrodef{PAC}[PAC]{Probably Approximately Correct}
\acrodef{PCA}[PCA]{Principal Component Analysis}
\acrodef{PDF}[PDF]{Probability Density Function}
\acrodef{PMF}[PMF]{Probability Mass Function}
\acrodef{PPM}[PPM]{Pulse Position Modulation}
\acrodef{PSD}{Power Spectral Density}
\acrodef{PSK}{Phase Shift Keying}
\acrodef{QKD}{Quantum Key Distribution}
\acrodef{ROC}{Receiver Operating Characteristic}
\acrodef{CVQKD}{Continuous-Variable \ac{QKD}}
\acrodef{QPSK}{Quadrature Phase-Shift Keying}
\acrodef{RV}{random variable}
\acrodef{SIMO}{Single-Input Multiple-Output}
\acrodef{SNR}{Signal-to-Noise Ratio}
\acrodef{SVM}[SVM]{Support Vector Machine}
\acrodef{TPCP}{Trace-Preserving Completely-Positive}
\acrodef{wrt}[w.r.t.]{with respect to}
\acrodef{WSS}{Wide Sense Stationary}
\acrodef{RHS}{Right Hand Side}
\acrodef{LHS}{Left Hand Side}
\acrodef{PIR}{Private Information Retrieval}
\acrodef{MDS}{Maximum Distance Separable}
\acrodef{LLN}{Law of Large Numbers}
\acrodef{DFRC}{Dual-Function Radar Communication}
\acrodef{ISAC}{Integrated Sensing and Communication}
\acrodef{RadCom}{Joint Radar and Communicatins}
\begin{document}

\title{Covert Communication via Action-Dependent States}

\author{
\IEEEauthorblockN{Hassan ZivariFard and Xiaodong Wang}\\
\thanks{The authors are with the Department of Electrical Engineering, Columbia University, New York, NY 10027. This work is supported in part by the U.S. Office of Naval Research (ONR) under grant N000142412212. E-mails: \{hz2863, xw2008\}@columbia.edu. Part of this work is presented at the 2023 IEEE International Symposium on Information Theory~\cite{ISIT23}.}
}
\maketitle
\date{}

\begin{abstract}
\label{sec:Abstract}
This paper studies covert communication over channels with \ac{ADSI} when the state is available either non-causally or causally at the transmitter. Covert communication refers to reliable communication between a transmitter and a receiver while ensuring a low probability of detection by an adversary, which we refer to as ``warden''. It is well known that in a point-to-point \ac{DMC}, it is possible to communicate on the order of $\sqrt{N}$ bits reliably and covertly over $N$ channel uses while the transmitter and the receiver are required to share a secret key on the order of $\sqrt{N}$ bits.  
This paper studies achieving reliable and covert communication of positive rate, i.e., reliable and covert communication on the order of $N$ bits in $N$ channel uses, over a channel with \ac{ADSI} while the transmitter has non-causal or causal access to the \ac{ADSI}, and the transmitter and the receiver share a secret key of negligible rate. We derive achievable rates for both the non-causal and causal scenarios by using block-Markov encoding and secret key generation from the \ac{ADSI}, which subsumes the best achievable rates for channels with random states. We also derive upper bounds, for both non-causal and causal scenarios, that meet our achievable rates for some special cases. As an application of our problem setup, we study covert communication over channels with rewrite options, which are closely related to recording covert information on memory, and show that a positive covert rate can be achieved in such channels. As a special case of our problem, we study the \ac{AWGN} channels and provide lower and upper bounds on the covert capacity that meet when the transmitter and the receiver share a secret key of sufficient rate and when the warden's channel is noisier than the legitimate receiver channel. As another application of our problem setup, we show that cooperation can lead to a positive covert rate in Gaussian channels. A few other examples are also worked out in detail.
\end{abstract}

\section{Introduction}
\label{sec:Intro}
The inherent broadcast nature of communication networks, while beneficial, also facilitates easier interference or tampering with sensitive information by malicious users \cite{BlochBarros}. Motivated by this challenge, covert communication offers a solution to preserve the users' privacy \cite{Bash13,CheISIT13,Bloch16,Wang16}. 
The present paper studies covert communication over channels with action-dependent states. Our motivation for studying this problem stems from the importance of privacy and security in communication systems and the implications and practical use of the \textit{action} in modeling important problems in communications, including communication with control over state, feedback, and data \cite{Dai20,Information_Embedding,Feedbak_or_Not}, recording on magnetic memory \cite{Weissman10,MultiStage_Writing}, recording on computer memories with defects \cite{Heegard83,Weissman10}, and channels with input cost constraints \cite{Weissman10}.

Covert communication refers to a problem where a transmitter wishes to communicate reliably with a receiver over a channel while ensuring that the distribution of channel observations seen by an adversary remains indistinguishable from that observed when the transmitter sends an innocent symbol~\cite{Bash13,CheISIT13,Bloch16,Wang16,Mehrdad19}. In a point-to-point \ac{DMC}, it is well known that a transmitter can communicate at most on the order of $\sqrt{N}$, i.e., $\calO(\sqrt{N})$, covert and reliable bits over $N$ channel uses with a receiver~\cite{Bloch16,Wang16}, while the transmitter and the receiver are required to share $\calO(\sqrt{N})$ secret key bits~\cite{Bloch16}. In a point-to-point \ac{DMC}, it is possible to achieve a positive covert and reliable communication rate, i.e., $\calO(N)$ covert and reliable bits in $N$ channel uses, only when the innocent symbol $x_0\in\calX$ is \textit{redundant}, which is the case where the distribution induced on the channel observation of the warden by the innocent symbol $x_0\in\calX$ can be written as a convex combination of the distribution induced on the channel observation of the warden by $\{x\in\calX:x\ne x_0\}$~\cite{Wang16}. 

Besides the special case described above, there are a few scenarios where it is possible to go beyond the square root law and communicate with a positive covert rate, which we summarize below.
\begin{enumerate}[i)]
    \item When the warden has uncertainty about the statistical characterization of its channel~\cite{Lee15,Deniable_ITW14}: the uncertainty in what the warden expects to observe when the transmitter is not communicating with the receiver can be leveraged to obtain positive covert and reliable communication rate. We note that \cite{Lee15} assumes that there is a secret key of infinite rate between the transmitter and the receiver, and \cite{Deniable_ITW14} assumes that the warden's channel is degraded \ac{wrt} the legitimate receiver's channel.
    \item When a friendly jammer is present~\cite{Sobers17,Shahzad18,Shmuel19,ISIT21,ISIT22,MyDissertation}: The randomness introduced by the friendly jammer increases the warden's uncertainty about the statistics of its channel, allowing the transmitter to go beyond the square root law and transmit covert information on the order of $\calO(N)$ bits. We note that \cite{Sobers17,Shahzad18,Shmuel19} assume that there is a secret key of infinite rate between the transmitter and the receiver.
    \item When the \ac{CSI} \cite{StateDepChan} is known by the transmitter~\cite{LeeWang18,Keyless22}: this scenario is similar to the previous scenario with a friendly jammer. The randomness introduced by nature increases the warden's uncertainty about its channel statistics, while the transmitter's knowledge of \ac{CSI} enables it to hide information in the \ac{CSI}. This allows the transmitter to overcome the square root law and covertly transmit on the order of $\calO(N)$ bits. Note that the \ac{CSI} is \ac{iid} and is chosen by the nature.
\end{enumerate}

Of particular relevance, \cite{LeeWang18} studies covert communication over a state-dependent channel when \ac{CSI} is available either causally or non-causally at the transmitter and the transmitter and the receiver share a secret key. The authors provide achievable rate regions for this problem and show that their scheme is optimal when the transmitter and the receiver share a secret key of infinite rate. In \cite{Keyless22}, the authors study covert communication over the state-dependent channels when the \ac{CSI} is available either non-causally, causally, or strictly causally, either at the transmitter alone or at both the transmitter and the receiver. When the \ac{CSI} is available at both the transmitter and receiver, the authors derive the covert capacity by simultaneously using the \ac{CSI} for communication and secret key generation. When the \ac{CSI} is available only at the transmitter the authors derive lower and upper bounds on the covert capacity, which improves upon the achievable rates in \cite{LeeWang18}, by using the \ac{CSI} for both communication and secret key generation simultaneously. Another relevant extension of the state-dependent channels is studied in \cite{Steinberg09} in the context of information embedding, where apart from decoding the embedded message, the decoder has an additional constraint on reconstructing the channel input signal reliably.
    
Secure communication over state-dependent channels is studied in \cite{ChenVinck08,ChiaElGamal12,HanSasaki19,ZivBC17,ZivCSIT20}. The trade-off between secret message and secret key rates, simultaneously achievable over a state-dependent wiretap channel with causal and non-causal \ac{CSI} at the encoder is studied in \cite{Bunin20,HanSasaki21}. Secure communication over \ac{MAC} is studied in \cite{GMAWCJamming,YassaeeMAWC,Frey18}, and secure communication over \ac{MAC} with cribbing is studied in \cite{Helal20}. Channels with action-dependent states are first introduced in~\cite{Weissman10}, where the author derives the capacity when \ac{ADSI} is available either causally or non-causally at the encoder and studies the channels with rewrite option as a special case of the channels with \ac{ADSI}. This problem is extended to various scenarios in~\cite{Gaussian_ADSI,Information_Embedding,Feedbak_or_Not,Secure_Source_Coding_Action}. \ac{MAC} with \ac{ADSI} at one encoder is studied in~\cite{Dikstein15}, where the authors derive the capacity region for this problem. Also, secure communication over channels with action-dependent states is studied in~\cite{Dai20}.  

In this paper, we study covert communication over a \ac{DMC} with action-dependent states, when the \ac{ADSI} is available either non-causally or causally at the transmitter, as illustrated in Fig.~\ref{fig:System_Model}. 
When the \ac{ADSI} is available non-causally at the transmitter, we derive an achievable covert rate and an upper bound on the covert capacity. 
As an example, we study this problem in the context of \ac{AWGN} channels, and provide lower and upper bounds on the covert capacity which meet when the transmitter and the receiver share a secret key of sufficient rate and when the warden's channel output observation is ``noisier" than the legitimate receiver's channel output observation. 
When the \ac{ADSI} is available causally at the transmitter, we provide lower and upper bounds on the covert capacity, which meet when the legitimate receiver's channel output observation is less noisy \ac{wrt} the warden's channel output observation and when the receiver and the warden observe the same channel output. This problem could be considered as a generalization of \cite{Keyless22}, in which the \ac{CSI} is generated by nature, whereas in this paper the \ac{CSI} is partially controlled by the transmitter. Compared to the previous works \cite{LeeWang18,Keyless22}, in which \ac{CSI} is solely determined by the environment, the additional degree of freedom in designing the action sequence leads to an advantage in overcoming the square-root law by hiding more information in states. 

As the first application of our problem setup and our results, we study covert communications over channels with a rewrite option, which we refer to as \textit{writing a clean memory}, this is because this problem recovers as a special case, the problem of recording on a computer memory while ensuring that the distribution of the outcome of the recording process is almost identical to the distribution of an erased (clean) computer memory. We consider various scenarios of this problem and visualize our results with some numerical examples.  
As another application of our setting, we show that cooperation between two users can lead to a positive covert rate. We also show that our results can be used to study covert communication over a state-dependent \ac{MAC} with a common message and \ac{CSI} known at one transmitter as studied in \cite{Cooperative_MAC}. 

Our coding scheme combines different code constructions including channel resolvability for the covertness analysis, randomness extraction from \ac{ADSI} for secret key generation, Gel'fand-Pinsker encoding for message transmission, Block-Markov encoding for the dual use of the \ac{ADSI}, and rate splitting for the efficient use of the generated secret key. The key technical challenge consists of properly combining these mechanisms to ensure the overall reliability and covertness of the transmission through block-Markov chaining schemes. Compared with the achievability schemes in \cite[Theorem~4 and 7]{Keyless22} the main challenge in this paper is dealing with the non \ac{iid} nature of the \ac{ADSI}, which complicates the covert analysis, and compared with the achievability scheme in \cite{Weissman10} the main challenge is exploiting the \ac{ADSI} for both message transmission and secret key generation through a block Markov encoding scheme and inducing the distribution corresponding to the no-communication mode on warden's channel output observation. 

\textit{Notation:} Let $\bbN_*$ be the set of positive natural numbers, and $\bbR$ be the set of real numbers. We define $\bbR_+\triangleq\{x\in\bbR|x\ge0\}$. For any $j\in\bbN_*$, $[j]$ denotes the set $\{1,2,\dots,j\}$. Here, \acp{RV} are denoted by capital letters and their realizations by lowercase letters. The expectation \ac{wrt} the \ac{RV} $X$ is denoted by $\bbE_{X}[\cdot]$ and the indicator function is denoted by $\indi{1}_{\{\cdot\}}$. For a set of \acp{RV} $\{X_i\}_{i\in\calI}$, indexed over a countable set $\calI$, $\bbE_{\backslash j}[\cdot]$ denotes the expectation \ac{wrt} all $X_i$, for $i\in\calI$, excluding $X_j$, i.e., $\bbE_{X_1,X_2,\dots,X_{j-1},X_{j+1},\dots,X_{\abs{\calI}}}[\cdot]$. Superscripts indicate the dimension of a vector, e.g., $X^N$. $X_{\sim i}^N$ denotes $(X_1,\dots,X_{i-1},X_{i+1},\dots,X_N)$, and $X_i^j$ denotes $(X_i,X_{i+1},\dots,X_j)$. The cardinality of a set is denoted by $|\cdot|$. For a countable set $\calX$, the KL-Divergence between the two distribution $P_X$ and $Q_X$ is denoted $\bbD(P_X||Q_X)=\sum_{x\in\calX}P_X(x)\log\frac{P_X(x)}{Q_X(x)}$ and the total variation distance between the two distribution $P_X$ and $Q_X$ is denoted by  $\bbV(P_X,Q_X)=\frac{1}{2}\sum_{x\in\calX}\left|P_X(x)-Q_X(x)\right|$. The product distribution is denoted by $P_X^{\otimes N}\triangleq\prod_{i=1}^NP_X(x_i)$ and the $\log$ functions are assumed to be base 2 logarithm in this paper. We use $\calT_{\epsilon}^{(N)}(P_{XY})$ to denote the $\epsilon$-strongly jointly typical sequences of length $N$ \ac{wrt} $P_{XY}$ and $\epsilon>0$. The set of $\epsilon$-strongly typical sequences of length $N$ \ac{wrt} $P_X$ is defined as,
    \begin{align}
        &\calT_\epsilon^{(N)}(P_X)\triangleq\left\{x^N\in\calX^N:\left|\frac{\textup{Nu}\left(\bar{x}|x^N\right)}{N}-P_X(\bar{x})\right|\le\epsilon P_X(\bar{x}),\forall\bar{x}\in\calX\right\},\nonumber
    \end{align}where $\textup{Nu}\left(\bar{x}|x^N\right)=\sum_{i=1}^N\indi{1}_{\{x_i=\bar{x}\}}$.
\begin{figure}[t]
\centering
\includegraphics[width=4.5in]{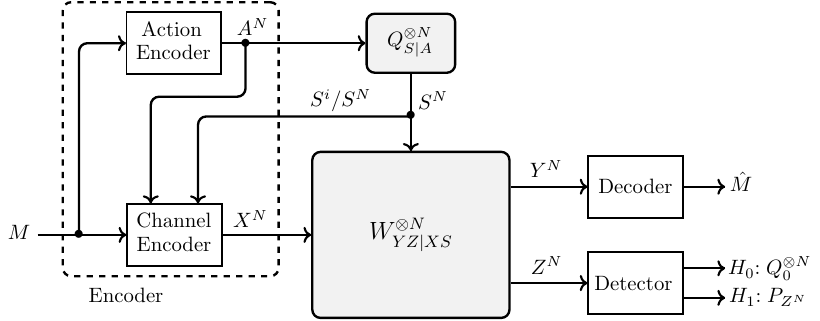}
\caption{Covert communication over channels with action-dependent states}
\label{fig:System_Model}
\end{figure}

The remainder of this paper is organized as follows. The system model is given in Section~\ref{sec:Problem}. The main results are provided in Section~\ref{sec:Main_Results}. Applications of our problem setup to channels with a rewrite option are presented in Section~\ref{sec:rewrite}. The \ac{AWGN} channels and cooperative Gaussian channels are studied in Section~\ref{sec:Gaussian}. Finally, the results are summarized in Section~\ref{sec:Conclusion}.
\section{System Model}
\label{sec:Problem}
Consider a \ac{DMC} $\big(\calA,\calS,\calX,\calY,\calZ,Q_{S|A},W_{YZ|XS}\big)$, as depicted in Fig.~\ref{fig:System_Model}, where $\calA$ is the action alphabet, $\calS$ is the \ac{ADSI} alphabet, $\calX$ is the channel input alphabet, $\calY$ and $\calZ$ are the channel output alphabets for the legitimate receiver and the warden, respectively, $Q_{S|A}$ is the \ac{ADSI} \ac{PMF}, and $W_{YZ|XS}$ is the channel law. We define the innocent symbol for the channel input as $x_0\in\calX$ and the innocent symbol for the action as $a_0\in\calA$, both representing the no-communication mode.  Therefore, the distribution induced at the warden's channel output when communication does not take place is $Q_0^{\otimes N}\triangleq\prod\nolimits_{i = 1}^N Q_0$, where
\begin{align}
    Q_0(\cdot)\triangleq\sum_{s\in\calS} Q_{S|A}(s|a_0)W_{Z|XS}(\cdot|x_0,s).\label{eq:q0}
\end{align}The model described above is essentially a \ac{DMC} with two-stage coding, in which what we are referring to as ``actions" is nothing but the channel input of the first stage of the encoding. Following the first stage of the encoding is the standard problem of encoding over a state-dependent channel. The \ac{ADSI} is assumed to be known either non-causally or causally to the transmitter, but unknown to the receiver and the warden.  
A code is formally defined as follows.
\begin{definition}
\label{defi:Code}
A $(2^{NR},N)$ code $\calC_N$ with the \ac{ADSI} available non-causally or causally at the encoder consists of:
\begin{itemize}
    \item a message set $\calM\triangleq\sbr{2^{NR}}$, a secret key set $\calK\triangleq\sbr{2^{\bar{R}_K^{(N)}}}$ of negligible rate, i.e., $\lim_{N\to\infty}\bar{R}_K^{(N)}\to0$, and local randomness sets $\calQ_1$ and $\calQ_2$;
    \item a stochastic action encoder at the transmitter $\calE_{1,t}:\calM\times\calK\times\calQ_1\times\calA^{t-1}\mapsto \calA_t$, for each time slot $t\in\sbr{N}$, that maps the message $m\in\calM$, the secret shared key $k\in\calK$, a realization of the local randomness $q_1\in\calQ_1$, and the past symbols of the action encoder output $a^{t-1}\in\calA^{t-1}$, to the action symbol $a_t\in\calA_t$\footnote{In this paper, we do not account for the rate of local randomness the transmitter needs, and by convention we assume that $a^0=x^0=0$.};
    \item when the transmitter has non-causal (or causal) access to the \ac{ADSI}, a stochastic encoder at the transmitter $\calE_{2,t}:\calM\times\calK\times\calQ_2\times\calA^N\times\calX^{t-1}\times\calS^N(\text{or}\,\,\calS^t)\mapsto\calX_t$, for each time slot $t\in\sbr{N}$, that maps the message $m\in\calM$, the secret shared key $k\in\calK$, a realization of the local randomness $q_2\in\calQ_2$, the action sequence $a^n\in\calA^N$, the past symbols of the encoder output $x^{t-1}\in\calX^{t-1}$, and the \ac{ADSI} sequence $s^N\in\calS^N$ (or the \ac{ADSI} sub-sequence $s^t\in\calS^t$) to a channel input symbol $x_t\in\calX_t$;
    \item a decoding function $\calD: \calY^N\times\calK\mapsto\calM\cup\{\mathfrak{e}\}$, that maps the channel output $y^N\in\calY^N$ and the secret shared key $k\in\calK$ to an estimate of the message $\hat{m}\in\calM$ or an error $\mathfrak{e}$.
\end{itemize}
\end{definition}
The main reason that we use a stochastic encoder instead of a deterministic encoder is that our achievability scheme is based on the Likelihood encoder \cite{Cuff13,Yassaee13,Watanabe15}, which is a stochastic encoder and helps approximate the desired distribution of the warden's channel observations. 
We assume that the code is public knowledge and is known by all the terminals, including the warden. The transmitter and the receiver aim to design a code that is both reliable and covert. The code is reliable if the probability of error $P_e^{(N)}\triangleq\bbP(\hat{M}\ne M)$ vanishes when $N$ grows. 
The code is covert if the warden cannot distinguish the two hypotheses corresponding to the situation in which that communication is happening (hypothesis $H_1$) or communication is not happening (hypothesis $H_0$). The probabilities of false alarm and missed detection are denoted by $\alpha_N$ and $\beta_N$, respectively. A blind random decision for the warden, where he ignores his channel observations, satisfies $\alpha_N+\beta_N=1$. The optimal hypothesis test by the warden satisfies $\alpha_N+\beta_N\ge 1-\sqrt{\bbD\left(P_{Z^N}||Q_0^{\otimes N}\right)}$ \cite{HypothesesTesting}, where $P_{Z^N}$ denote the distribution induced on the warden's observation when the transmitter is communicating with the receiver. Hence, to prove that the communication is covert we aim to show that $\bbD\left(P_{Z^N}||Q_0^{\otimes N}\right)\to0$, where we assume that $\text{supp}(W_{Z|X}(\cdot|x))\subseteq\text{supp}(Q_0)$, for $x\in\calX$, since otherwise $\bbD\left(P_{Z^N}||Q_0^{\otimes N}\right)\to\infty$. Therefore, we aim to design a code such that
\begin{subequations}
\begin{align}
    &P_{e}^{(N)}\xrightarrow[]{N\to\infty}0,\label{eq:PE}\\
    &\bbD\left(P_{Z^N}||Q_0^{\otimes N}\right)\xrightarrow[]{N\to\infty}0.\label{eq:Covertness}
\end{align}
\end{subequations}We define the covert capacity as the supremum of all achievable rates and denote it as $\textup{C}_{\mbox{\scriptsize\rm AD-NC}}$ and $\textup{C}_{\mbox{\scriptsize\rm AD-C}}$, when the \ac{ADSI} is available non-causally and causally at the encoder, respectively.

The following lemma is frequently used in this paper and relates the KL-Divergence with the total variation distance.  
\begin{lemma}
    \label{lemma:KLD_TV}
    According to Pinsker's inequality, for two distributions $P$ and $Q$ defined over the alphabet set $\calX$ we have,
    \begin{subequations}
    \begin{align}
        \bbV(P,Q)\le\sqrt{\frac{1}{2}\bbD(P||Q)}.\label{eq:Pinsker_Ineq}
    \end{align}There is no reverse Pinsker inequality.
    However, when the alphabet set $\calX$ is finite and $P_n$ is absolutely continuous \ac{wrt} $Q^{\otimes n}$ we have~\cite[Eq.~(30)]{Cuff13},
    \begin{align}
        \bbD(P_n||Q^{\otimes n})\in\calO\left(\left(n+\log\frac{1}{\bbV(P_n,Q^{\otimes n})}\right)\bbV(P_n,Q^{\otimes n})\right),\label{eq:Reverse_Pinsker}
    \end{align}where $f(n)\in\calO(g(n))$ means that $f(n)\le kg(n)$, for some $k$ independent of $n$ and sufficiently large $n$. In particular, \eqref{eq:Reverse_Pinsker} implies that an exponential decay of the total variation distance in $n$ produces an exponential decay of the relative entropy with the same exponent.
    \end{subequations}
\end{lemma}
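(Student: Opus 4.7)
The plan for Pinsker's inequality \eqref{eq:Pinsker_Ineq} is the classical two-step reduction to the binary case. I would first define $A \triangleq \{x \in \calX : P(x) \ge Q(x)\}$, set $p \triangleq P(A)$ and $q \triangleq Q(A)$, and note that $\bbV(P,Q) = p - q$ by construction. The data-processing inequality for KL divergence applied to the binary partition $\{A, A^c\}$ then gives $\bbD(P\|Q) \ge \bbD(\mathrm{Bern}(p)\|\mathrm{Bern}(q))$, reducing the general finite-alphabet statement to its binary version. In the binary case, holding $q$ fixed and considering $f(p) \triangleq \bbD(\mathrm{Bern}(p)\|\mathrm{Bern}(q)) - (2/\ln 2)(p-q)^2$, one verifies $f(q) = f'(q) = 0$ and $f''(p) = (1/\ln 2)(1/(p(1-p)) - 4) \ge 0$ throughout $[0,1]$, so $f \ge 0$ and rearrangement yields \eqref{eq:Pinsker_Ineq}.

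The remark that no unrestricted reverse Pinsker inequality exists is quickly justified by letting the support of $P$ escape that of $Q$: the total variation distance stays bounded while the KL divergence blows up to infinity. Thus some form of absolute continuity, as imposed in \eqref{eq:Reverse_Pinsker}, is indispensable.

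For the reverse bound \eqref{eq:Reverse_Pinsker}, attributed to \cite[Eq.~(30)]{Cuff13}, the plan is a threshold split of the log-likelihood ratio $\Lambda(x^n) \triangleq \log(P_n(x^n)/Q^{\otimes n}(x^n))$. Writing $V \triangleq \bbV(P_n, Q^{\otimes n})$ and $q_{\min}$ for the smallest positive value of $Q$, I would start from the symmetrised upper bound $\bbD(P_n\|Q^{\otimes n}) \le \bbD(P_n\|Q^{\otimes n}) + \bbD(Q^{\otimes n}\|P_n) = \sum_{x^n} \bigl(P_n(x^n) - Q^{\otimes n}(x^n)\bigr)\,\Lambda(x^n)$ and split the sum at a threshold $T$. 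On the typical set $\{\Lambda \le T\}$, each summand is at most $|P_n - Q^{\otimes n}| \cdot T$, yielding a contribution $\le 2TV$. On the atypical set $\{\Lambda > T\}$, absolute continuity and the finiteness of $\calX$ give the crude uniform bound $\Lambda \le n\log(1/q_{\min})$, while the $P_n$-mass of the set is controlled by $P_n(\{\Lambda > T\}) \le Q^{\otimes n}(\{\Lambda > T\}) + 2V \le 2^{-T} + 2V$, where the last step uses $\bbE_{Q^{\otimes n}}[2^{\Lambda}] = 1$ and Markov's inequality. Choosing $T = \log(1/V)$ balances the two contributions and produces $\bbD(P_n\|Q^{\otimes n}) \le 2V\log(1/V) + 3Vn\log(1/q_{\min})$, which is $\calO\bigl((n + \log(1/V))V\bigr)$.

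The main obstacle, as in most reverse Pinsker arguments, is avoiding an exponential-in-$n$ blow-up. Bounding $\Lambda$ by $n\log(1/q_{\min})$ pointwise only yields $\bbD = \calO(nV)$ and loses the $\log(1/V)$ gain, while applying only the Markov tail estimate loses the linear-$n$ term; the threshold tuning $T = \log(1/V)$ is precisely what extracts both terms at once. Once this is in hand, the asserted implication that exponential decay of $V$ in $n$ forces exponential decay of $\bbD$ with the same exponent is immediate, since $V \le 2^{-\gamma n}$ gives $\log(1/V) \le \gamma n$, so the prefactor $(n + \log(1/V))$ is only linear in $n$ and is absorbed by the exponential in $V$.
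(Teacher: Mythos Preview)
The paper does not supply its own proof of this lemma; it simply quotes Pinsker's inequality and cites \cite[Eq.~(30)]{Cuff13} for \eqref{eq:Reverse_Pinsker}. Your argument for \eqref{eq:Pinsker_Ineq} is the standard one and is fine.

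Your treatment of \eqref{eq:Reverse_Pinsker}, however, has a genuine gap. First, the symmetrised bound $\bbD(P_n\Vert Q^{\otimes n})\le \bbD(P_n\Vert Q^{\otimes n})+\bbD(Q^{\otimes n}\Vert P_n)$ is vacuous whenever $\text{supp}(P_n)\subsetneq\text{supp}(Q^{\otimes n})$, since then $\bbD(Q^{\otimes n}\Vert P_n)=\infty$; the lemma only assumes the one-sided absolute continuity $P_n\ll Q^{\otimes n}$. Second, and more seriously, your ``typical set'' estimate is incorrect even under mutual absolute continuity: the two factors in $(P_n-Q^{\otimes n})\Lambda$ always share the same sign, so the summand equals $|P_n-Q^{\otimes n}|\,|\Lambda|$, and on $\{\Lambda\le T\}$ this is \emph{not} bounded by $|P_n-Q^{\otimes n}|\,T$ once $\Lambda<-T$. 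A two-sided threshold would leave you needing to control $|\Lambda|=\log(Q^{\otimes n}/P_n)$ on $\{\Lambda<-T\}$, for which there is no uniform bound absent a lower bound on $P_n$.

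A clean fix avoids the symmetrisation entirely. Write $\bbD(P_n\Vert Q^{\otimes n})=\sum P_n\Lambda$, discard the nonpositive terms (those with $P_n\le Q^{\otimes n}$), and on the remaining set $\{P_n>Q^{\otimes n}\}$ decompose $P_n\Lambda=(P_n-Q^{\otimes n})\Lambda+Q^{\otimes n}\Lambda$. The first piece is at most $n\log(1/q_{\min})\cdot V$ since $\Lambda\le n\log(1/q_{\min})$ and $\sum_{P_n>Q}(P_n-Q^{\otimes n})=V$; the second piece is at most $V/\ln 2$ via $\log x\le (x-1)/\ln 2$. This already gives $\bbD(P_n\Vert Q^{\otimes n})\le \bigl(n\log(1/q_{\min})+1/\ln 2\bigr)V$, which is $\calO(nV)$ and hence in particular $\calO\bigl((n+\log(1/V))V\bigr)$, with no threshold tuning needed.
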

\section{Main Results}
\label{sec:Main_Results}
In this section, we provide the main results of the paper, which include a lower and an upper bound on the covert capacity when the \ac{ADSI} is available non-causally or causally at the encoder. We compare our results with the previous works in the literature and provide some examples and insights.
\subsection{\texorpdfstring{\ac{ADSI}}{ADSI} Available Non-Causally at the Transmitter}
\label{sec:NC}
The following theorem presents a lower bound on the covert capacity when the \ac{ADSI} is available non-causally at the encoder.
\begin{theorem}
\label{thm:Acievability_KG}
Let
\begin{subequations}\label{eq:Achievability_AD_NC}
\begin{align}
  \calF_{\text{L-NC}} =\left\{
    \begin{aligned}
&R\geq 0: \exists P_{ASUVXYZ}\in\calG_{\text{L-NC}}: \\
  &R< \bbI(A,U;Y) - \bbI(U;S|A)& \quad\seteqnum[a]\\
  &R< \bbI(A,U,V;Y) -  \bbI(U,V;S|A)& \quad\seteqnum[b]
\end{aligned}
\right\},\label{eq:Achievability_A_NC}
\end{align}
where 
\begin{align}
  \calG_{\text{L-NC}}\triangleq \left\{
    \begin{aligned}&P_{ASUVXYZ}:& \\
&P_{ASUVXYZ}=P_AP_{U|A}P_{V}Q_{S|AUV}P_{X|US}W_{YZ|XS}&\\
&Q_{S|A}(\cdot|a)=\sum_{v\in\calV}\sum_{u\in\calU}P_V(v)P_{U|A}(u|a)Q_{S|AUV}(s|a,u,v)&\quad\seteqnum[c]\\
&\bbI(V;Y|A,U)>\bbI(V;Z)& \quad\seteqnum[d]\\
&\bbI(A,U,V;Y)\ge\bbI(A,U,V;Z)& \quad\seteqnum[e]\\
&\bbI(A,U,V;Y)\ge\bbI(A,V;Z)+\bbI(U;S|A)& \quad\seteqnum[f]\\
&\bbI(A,U,V;Y) + \bbI(V;Y|A,U)\ge\bbI(A,V;Z) + \bbI(U,V;S|A)& \quad\seteqnum[g]\\
&P_Z=Q_0& \quad\seteqnum[h]\\
&\card{\calU}\leq\card{\calA}\card{\calS}\card{\calX}+4,\,\card{\calV}\leq\left(\card{\calA}\card{\calS}\card{\calX}+4\right)^2& \quad\seteqnum[i]
\end{aligned}\right\}.\label{eq:Achievability_D_NC}
\end{align}
The covert capacity of the \acp{DMC} with action-dependent states, depicted in Fig.~\ref{fig:System_Model}, when the \ac{ADSI} is available non-causally at the transmitter, is lower-bounded as
\begin{align*}
\textup{C}_{\mbox{\scriptsize\rm AD-NC}} \ge\mbox{\rm sup}\{R:R\in\calF_{\text{L-NC}}\}.
\end{align*}
\end{subequations}
\end{theorem}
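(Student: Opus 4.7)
The plan is to establish achievability via a block-Markov random coding scheme that fuses four ingredients: Gel'fand-Pinsker encoding against the \ac{ADSI}, likelihood-encoder-driven channel resolvability to enforce $P_{Z^N}\approx Q_0^{\otimes N}$, secret-key distillation from the \ac{ADSI} through the auxiliary $V$, and rate splitting so that the key produced in one block is consumed by the next. The underlying joint $P_{ASUVXYZ}$ in $\calG_{\text{L-NC}}$ already dictates the correct superposition structure: $A$ is the outer action layer, $U$ is a Gel'fand-Pinsker auxiliary carrying the message, and $V$ is a second auxiliary simultaneously carrying message and a fresh key.

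I would partition the $N$ channel uses into $B$ blocks of length $n$. In block $b$, three stacked random codebooks are sampled independently of the message: action codewords $a^n$ drawn i.i.d.\ from $P_A$; a middle layer of $u^n$-codewords drawn conditionally i.i.d.\ from $P_{U|A}$ superposed on each $a^n$ and indexed by (part of) the message, the shared key $k_b$, and a Gel'fand-Pinsker binning index against $S^n$; and an outer layer of $v^n$-codewords drawn i.i.d.\ from $P_V$ indexed by the remaining part of the message, a second Gel'fand-Pinsker binning index, and the fresh key $k_{b+1}$ to be handed to block $b+1$. A likelihood encoder then selects a triple $(a^n,u^n,v^n)$ with probability proportional to $Q_{S|AUV}^{\otimes n}(s^n\mid\cdot)$ on the codebook, so that the joint $(A^n,U^n,V^n,S^n)$ is close in total variation to the i.i.d.\ target; $X^n$ is drawn symbol-wise from $P_{X|US}$.

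For reliability, the decoder performs joint typicality decoding of $(a^n,u^n,v^n)$ against $y^n$, yielding the two bounds \seteqnum[a] and \seteqnum[b] after absorbing the Gel'fand-Pinsker binning rates $\bbI(U;S|A)$ and $\bbI(U,V;S|A)$. For covertness, I would invoke a soft-covering/resolvability argument conditioned on the action layer to show per-block total-variation closeness $\bbV(P_{Z^n},Q_0^{\otimes n})\to 0$ once the codebook sizes satisfy \seteqnum[e]--\seteqnum[g], then apply the reverse Pinsker bound of Lemma~\ref{lemma:KLD_TV} to convert to $\bbD(P_{Z^n}\|Q_0^{\otimes n})\to 0$, and finally aggregate across the $B$ blocks by chain-rule. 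Condition \seteqnum[h] $P_Z=Q_0$ aligns the single-letter target with the innocent distribution, and condition \seteqnum[d] $\bbI(V;Y|A,U)>\bbI(V;Z)$ is the Wyner-style secrecy inequality that certifies the $V$-bin index as a genuine secret key of positive rate for the next block, so that the externally supplied key may be kept at the vanishing rate $\bar{R}_K^{(N)}$; rate splitting matches the per-block key budget.

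The main obstacle will be the non-i.i.d.\ nature of $S^n$, which is steered by the encoder's action $A^n$ and must respect the marginal constraint \seteqnum[c]. Standard i.i.d.\ soft-covering does not apply directly, so I would first condition on the action layer, verify that $S^n$ is then nearly i.i.d.\ under $Q_{S|A}^{\otimes n}$, apply resolvability to the $(U^n,V^n)$ layers, and only then integrate out. A second subtlety is closing the block-Markov loop: the key $k_{b+1}$ distilled from the $V$-layer in block $b$ must be almost independent of the warden's observations up to block $b$ in a strong enough sense that per-block covertness composes additively across blocks and the final bound $\bbD(P_{Z^N}\|Q_0^{\otimes N})\to 0$ follows; the joint rate constraints \seteqnum[e]--\seteqnum[g] are precisely calibrated to simultaneously guarantee resolvability of the $Z$-marginal and independence of the distilled key from $Z^{bn}$, and organizing the analysis so that these compose cleanly across blocks is where the technical weight lies.
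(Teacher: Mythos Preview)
Your high-level plan is sound and identifies the right ingredients---block-Markov coding, likelihood encoder, resolvability for covertness, key extraction via $V$, and the need to decouple the warden's view across blocks. But your structural picture of the $V$-layer differs from the paper's in a way that matters.

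In the paper, $V$ does \emph{not} carry any part of the message and is not indexed by the key. The $V$-codebook is indexed by a single index $j_b\in[2^{nR_J}]$, chosen by the likelihood encoder so that $V^n(j_b)$ is a good description of the state $S_b^n$; two random-binning maps $\Psi_L,\Psi_K$ then extract a reconciliation index $\ell_b$ and a key $k_b$ from $v^n(j_b)$. The reconciliation index $\ell_b$ is piggy-backed on the $U$-layer of the \emph{next} block (the $U$-codeword is indexed by $(m_b,k_{1,b-2},\ell_{b-1},k_{2,b-2},i_b)$), and the receiver recovers $V^n(j_{b-1})$ by Wyner--Ziv decoding against $Y_{b-1}^n$ once $\ell_{b-1}$ is known. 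Because of this extra round-trip the key generated in block $b$ is consumed in block $b{+}2$, not $b{+}1$, and the key is further split into two parts $k_{1,b},k_{2,b}$ used at the action layer and the $U$-layer respectively. Constraint (b) does not arise because $V$ carries message; it arises after Fourier--Motzkin elimination of the auxiliary rates $(R_I,R_J,R_L,R_{K_1},R_{K_2},\tilde R_L,\tilde R_K)$ coupling the Gel'fand--Pinsker bin, the Wyner--Ziv bin, the reconciliation rate, and the resolvability constraints.

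Your proposal, by contrast, makes $V$ a message/key-bearing codeword and omits the reconciliation mechanism entirely. That is a genuinely different construction; it may be workable, but you would have to redo the rate accounting and the secrecy/covertness coupling from scratch, and it is not obvious that you recover exactly constraints (d)--(g). The paper's separation---$U$ for message plus forwarded reconciliation, $V$ purely for state description and key binning---is what makes the functional-dependence graph factor so that $Z_b^n-(K_{b-1},L_b,K_b)-Z_{b+1}^{B,n}$ is Markov, which is the hinge of the cross-block covertness bound $\bbI(Z_b^n;Z_{b+1}^{B,n})\le\bbD(P_{Z_b^nK_{b-1}L_bK_b}\|Q_Z^{\otimes n}P^U_{K_{b-1}}P^U_{L_b}P^U_{K_b})$. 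Your outline gestures at this independence requirement but does not supply the structure (reconciliation forwarding, two-block delay, key splitting) that actually delivers it.
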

The proof of Theorem~\ref{thm:Acievability_KG} is given in Appendix~\ref{proof:thm:Acievability_KG}, and it is based on block Markov encoding by using the \ac{ADSI} for two different purposes simultaneously: first, Gel'fand-Pinsker encoding for transmitting the message by using the \ac{ADSI} \cite{StateDepChan}, and second, Wyner-Ziv coding for generating a secret key from the \ac{ADSI} \cite{Source_Coding_SI}. Note that we generate the secret key from a description of the \ac{ADSI}; this helps not overgenerate secret keys when secret keys do not help achieve a positive covert communication rate. This also helps exploit the correlation between the \ac{ADSI} $S^N$ and the channel output $Y^N$ to generate the secret key more efficiently, by adapting a Wyner-Ziv encoding scheme. 
Our block-Markov encoding scheme is depicted in Fig.~\ref{fig:Scheme_NC} for some block $b$. As seen in this figure we use a likelihood encoder to perform the Gel'fand-Pinsker encoding and to compute a description of the \ac{ADSI} which will be used for the secret key generation in the next block. After decoding the reconciliation information of the description of the \ac{ADSI} of the previous block, the decoder uses a Wyner-Ziv decoder to reconstruct the description of the \ac{ADSI} of the previous block, therefore, will be able to generate a secret key with the transmitter. 
We also note that the cardinality bounds in this paper are based on the standard techniques in \cite{ElGamalKim}.
\begin{remark}[Optimal Distributions]
\label{rem:Optimal_Dists}
    Since in the mutual information expressions in Theorem~\ref{thm:Acievability_KG}, the auxiliary random variable $U$ appears either alongside $A$ or conditioned on it, we may replace $(A,U)$ with $\tilde{U}\triangleq(A,U)$ without altering the achievable rate or the underlying joint distribution.
\end{remark}
To gain insight into the structure of the achievable rate in Theorem~\ref{thm:Acievability_KG}, notice that $U$ in Theorem~\ref{thm:Acievability_KG} corresponds to the auxiliary \ac{RV} for the Gel'fand-Pinsker encoding and represents the message, and the auxiliary \ac{RV} $V$ is a description of the \ac{ADSI} which is only used for key generation. 
\begin{figure*}[t]
\centering
\includegraphics[width=17cm]{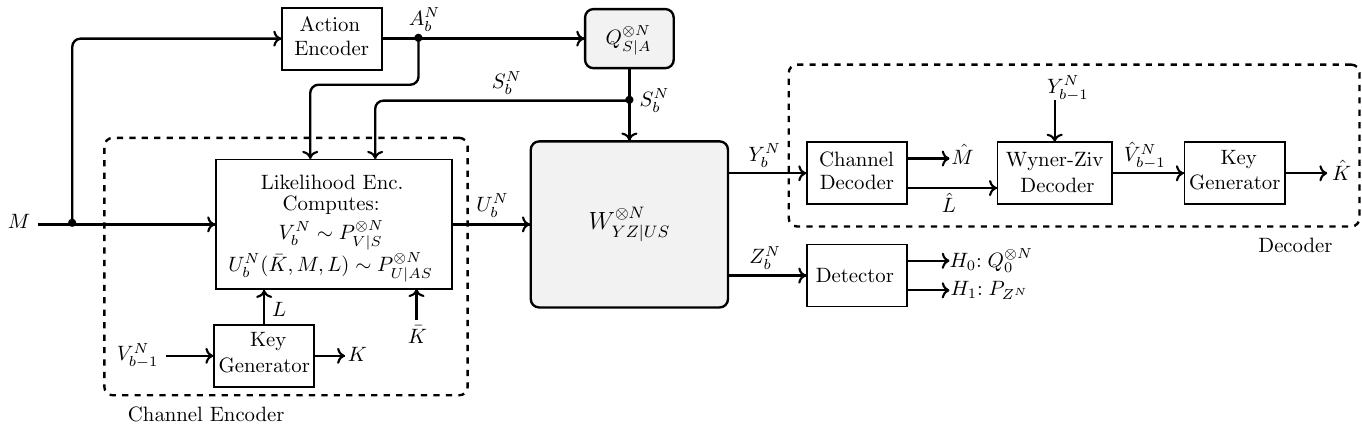}
\caption{The encoding and decoding scheme for block $b$, assuming that the encoder and the decoder have generated a secret key $\bar{K}$ in the previous block. The encoder first generates the reconciliation index $L$ and the secret key $K$ from a description of the \ac{ADSI}, i.e., $V_{b-1}^N$, generated in the previous block. To transmit the message $M$ and the reconciliation index $L$, the likelihood encoder performs Gel'fand-Pinsker encoding. It computes the sequence $U_b^N$ according to the indices $(\bar{K},M,L)$ such that $U_b^N$ is \ac{iid} with the \ac{ADSI}. Also, the likelihood encoder computes the sequence $V_b^N$ that is \ac{iid} with the \ac{ADSI}, which will be used in the next block for the secret key generation. By decoding the indices $M$ and $L$ and accessing $Y_{b-1}^N$ the decoder will be able to reconstruct the sequence $V_{b-1}^N$ by using a Wyner-Ziv decoder, and therefore it can generate the secret key $K$.}
\label{fig:Scheme_NC}
\end{figure*}
We now provide interpretations of the achievable rate in Theorem~\ref{thm:Acievability_KG}. The rate constraints in \eqref{eq:Achievability_A_NC} correspond to two rates that our code design can achieve:
$r_\cc\triangleq\bbI(A,U;Y)-\bbI(U;S|A)$ is the total communication rate. Since the auxiliary \ac{RV} $U$ represents the message, and the auxiliary \ac{RV} $V$ is only used for key generation from the \ac{ADSI}, and since the sufficient condition for transmitting the message and reconstructing $V$ is $r_{\tck}\triangleq\bbI(A,U;Y)+\bbI(V;Y|A,U)-\bbI(U,V;S|A)\mathop=\bbI(A,U,V;Y)-\bbI(U,V;S|A)$ we interpret $r_{\tck}$ as the total communication rate and key rate (secured and unsecured) \cite[Theorem~1]{Weissman10}. 
Moreover, since the \ac{RV} $V$ is used for the secret key generation,  $\bar{r}_{\scn}\triangleq\bbI(V;Y|A,U)-\bbI(V;Z)$ corresponding to inequality $(d)$ in \eqref{eq:Achievability_D_NC} is the secret key rate that the key generation codebook can produce. Therefore, the condition $\bar{r}_{\text{sc}}>0$ in \eqref{eq:Achievability_D_NC} means that the secret key generation over the channel $W_{YZ|XS}$ with action-dependent states $S$ should be feasible. The inequality $(e)$ in \eqref{eq:Achievability_D_NC} corresponds to $r_{\tcsk}\triangleq\bbI(A,U,V;Y)-\bbI(A,U,V;Z)\ge0$, where $\bbI(A,U,V;Y)-\bbI(A,U,V;Z)$ is the total covert communication and secret key rate, and the inequality $(f)$ corresponds to $0\le\bbI(A,U,V;Y) - \bbI(A,V;Z) - \bbI(U;S|A) 
= \bbI(A,U;Y) + \bbI(V;Y|U,A) - \bbI(A,V;Z) - \bbI(U;S|A) =r_\cc+r_{\scn}$, where $r_{\scn}\triangleq\bbI(V;Y|A,U)-\bbI(A,V;Z)$ is the secret key rate generated by our code design. Note that, since the action $A$ carries information about both the message and the \ac{ADSI} it may leak information about the secret key, and therefore the secret key rate that our code design, which includes all the codebooks, can produce is less than the secret key rate that our secret key generation codebook alone can produce, i.e., $r_{\scn}\le\bar{r}_{\scn}$.   
The inequality $(g)$ in \eqref{eq:Achievability_D_NC} corresponds to $r_{\tck}+r_{\scn}\ge0$. 
Also, note that the condition $(c)$ in \eqref{eq:Achievability_D_NC} is because the test channel $Q_{S|A}$ is fixed and chosen by nature. 
Theorem~\ref{thm:Acievability_KG} subsumes some existing results in the literature, as elaborated by the following two remarks.
\begin{remark}[Comparison with Channels with Action-Dependent States Without Covert Constraint]
Setting $Z=\emptyset$ and $V=\emptyset$, and removing the covert constraint $P_Z=Q_0$, the achievable rate in Theorem~\ref{thm:Acievability_KG} recovers the capacity of the channels with action-dependent states in \cite[Theorem~1]{Weissman10}, which is $r_c=I(U;Y)-I(U;S|A)=I(A,U;Y)-I(U;S|A)$.
\end{remark}
\begin{remark}[Comparison with Covert Communications Over Channels with States]
By setting $A=\emptyset$, $Q_{S|A}=Q_S$, and $P_{U|A}=P_U$ the lower bound in Theorem~\ref{thm:Acievability_KG} recovers the lower bound on the covert capacity of state-dependent channels in \cite[Theorem~4]{Keyless22}. 
Note that since the main idea to achieve a positive covert rate in both the state-dependent channels and in the channels with action-dependent states is to ``hide" information in the state; therefore the more random the state is the better. In the state-dependent channels, the \ac{CSI} is controlled by nature whereas the \ac{ADSI} can be partially controlled by the transmitter, therefore we expect to achieve a higher covert rate in channels with action-dependent states. Also, the \ac{ADSI} carries information about the message, this provides an additional resource for the receiver to decode the transmitted message reliably. This can be seen in the expression of Theorem~\ref{thm:Acievability_KG}. 
\end{remark}
\begin{remark}[The Effect of the Secret Shared Key of Negligible Rate]
    To illustrate the effect of the secret key of negligible rate we use non-strict inequalities for the rate constraints that will be affected by a shared secret key between the legitimate terminals in \eqref{eq:Achievability_D_NC}, i.e., the second and the third inequalities. The effect of the secret shared key can also be seen in state-dependent channels \cite[Theorem~2 and Theorem~4]{LeeWang18}.
\end{remark}

When we only use the \ac{ADSI} for the message transmission and not for the key generation, the following covert rate can be achieved by choosing $V=\emptyset$ in Theorem~\ref{thm:Acievability_KG} and considering Remark~\ref{rem:Optimal_Dists}, which will be used to study the Gaussian channels and channels with rewrite option in Section~\ref{sec:rewrite} and Section~\ref{sec:Gaussian}, respectively.
\begin{corollary}
\label{cor:Ach_NC_Simple}
Let
\begin{subequations}
\begin{align}
  \calF_{\text{L-NC}}^* = \left.\begin{cases}R\geq 0: \exists P_{ASUXYZ}\in\calG_{\text{L-NC}}^*:\\
  R< \bbI(U;Y)-\bbI(U;S|A)\\
\end{cases}\right\},\label{eq:Ach_NC_Simple_A}
\end{align}
where
\begin{align}
  \calG_{\text{L-NC}}^*\triangleq \left.\begin{cases}P_{ASUXYZ}:\\
P_{ASUXYZ}=P_AQ_{S|A}P_{U|SA}P_{X|US}W_{YZ|XS}\\
\bbI(U;Y)\ge\bbI(U;Z)\\
\bbI(U;Y)\ge\bbI(A;Z) + \bbI(U;S|A)\\
P_Z=Q_0\\
\card{\calU}\leq\card{\calA}\card{\calS}\card{\calX}+2
\end{cases}\right\}.\label{eq:Ach_NC_Simple_D}
\end{align}
The covert capacity of the \acp{DMC} with action-dependent states, depicted in Fig.~\ref{fig:System_Model}, when the \ac{ADSI} is available non-causally at the transmitter, is lower-bounded as
\begin{align*}
\textup{C}_{\mbox{\scriptsize\rm AD-NC}} \ge\mbox{\rm sup}\{R:R\in\calF_{\text{L-NC}}^*\}.
\end{align*}
\end{subequations}
\end{corollary}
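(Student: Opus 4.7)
The plan is to derive Corollary~\ref{cor:Ach_NC_Simple} as a direct specialization of the achievability scheme of Theorem~\ref{thm:Acievability_KG}. Concretely, I would set $V=\emptyset$ throughout the construction and then apply the relabeling of Remark~\ref{rem:Optimal_Dists} to merge $(A,U)$ into a single auxiliary that I still denote by $U$. Because $V$ enters the achievability scheme of Theorem~\ref{thm:Acievability_KG} only through the Wyner--Ziv description of the \ac{ADSI} that feeds the inter-block secret-key-extraction subcode, setting $V=\emptyset$ amounts to simply discarding the key-generation step: no reconciliation index is needed, no Wyner--Ziv codebook is drawn, and the block-Markov chain collapses to a single-block Gel'fand--Pinsker transmission over the action-induced state channel, combined with a likelihood encoder to shape the warden's output distribution.

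The main bookkeeping is then to verify how each constraint in \eqref{eq:Achievability_D_NC} degenerates under $V=\emptyset$. Constraints (a) and (b) both collapse to the single rate bound $R<\bbI(A,U;Y)-\bbI(U;S|A)$; constraint (c) is automatic since $P_AP_{U|A}Q_{S|AU}$ and $P_AQ_{S|A}P_{U|SA}$ are just two Bayes-equivalent factorizations of the same joint distribution on $\calA\times\calS\times\calU$; constraint (d), which enforces a positive rate for the key extracted from $V$, becomes vacuous once the key-generation subcode is removed; constraint (g) coincides with (f); and (e), (f), (h) survive unchanged as the resolvability/covertness conditions. Invoking Remark~\ref{rem:Optimal_Dists}---writing $\tilde U\triangleq(A,U)$ and then renaming $\tilde U$ as $U$---puts the expressions into the compact form of \eqref{eq:Ach_NC_Simple_D}: the rate becomes $\bbI(U;Y)-\bbI(U;S|A)$, (e) becomes $\bbI(U;Y)\geq\bbI(U;Z)$, and (f) becomes $\bbI(U;Y)\geq\bbI(A;Z)+\bbI(U;S|A)$.

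To close the argument I would revisit the covertness analysis in the proof of Theorem~\ref{thm:Acievability_KG} and check that, once every contribution of $V$ is zeroed out, the resolvability bounds on $\bbD(P_{Z^N}\|Q_0^{\otimes N})$ are still driven to zero by the randomness supplied by the message, the negligible-rate shared key, the local randomness of the action codebook, and the Gel'fand--Pinsker bin; the inequalities (e) and (f) are exactly what is required for these remaining bounds. The cardinality bound $|\calU|\leq|\calA||\calS||\calX|+2$ then follows from a standard Fenchel--Eggleston--Carath\'eodory argument preserving the $|\calA||\calS||\calX|-1$ parameters of the joint marginal on $\calA\times\calS\times\calX$ together with the two genuinely independent mutual-information functionals $\bbI(U;Y)-\bbI(U;S|A)$ and $\bbI(U;Y)-\bbI(U;Z)$; the remaining inequality in \eqref{eq:Ach_NC_Simple_D} is an affine combination of these and the already-fixed $\bbI(A;Z)$. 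The main obstacle I expect is purely notational, namely verifying that every occurrence of $V$ in the original likelihood-encoder and block-Markov argument can be cleanly excised without leaving behind a residual rate or covertness constraint.
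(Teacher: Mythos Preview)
Your proposal is correct and matches the paper's approach: the paper derives Corollary~\ref{cor:Ach_NC_Simple} precisely by setting $V=\emptyset$ in Theorem~\ref{thm:Acievability_KG} and invoking Remark~\ref{rem:Optimal_Dists}, and your constraint-by-constraint bookkeeping (including the observation that the strict inequality~(d) is not a residual obstruction because the key-generation subcode is simply removed from the scheme rather than instantiated with a trivial $V$) is exactly the justification the paper leaves implicit.
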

We note that since $U$ and $S$ are correlated in Theorem~\ref{thm:Acievability_KG}, we can treat $(A,U,V)$ as a single \ac{RV} in the constraint $\bbI(A,U,V;Y)\ge\bbI(A,U,V;Z)$ and therefore the constraint $\bbI(A,U,V;Y)\ge\bbI(A,U,V;Z)$ in Theorem~\ref{thm:Acievability_KG} is equivalent to the constraint $\bbI(U;Y)\ge\bbI(U;Z)$ in Corollary~\ref{cor:Ach_NC_Simple}. However, by choosing a proper realization for the auxiliary \ac{RV} $V$, the constraints $\bbI(A,U,V;Y)\ge\bbI(A,V;Z)+\bbI(U;S|A)$, $\bbI(V;Y|A,U)>\bbI(V;Z)$, and $\bbI(A,U,V;Y)+\bbI(V;Y|A,U)\ge\bbI(A,V;Z)+\bbI(U,V;S|A)$ in Theorem~\ref{thm:Acievability_KG} may relax the constraint $\bbI(U;Y)\ge\bbI(A;Z)+\bbI(U;S|A)$ in Corollary~\ref{cor:Ach_NC_Simple}. As a result, Theorem~\ref{thm:Acievability_KG} may lead to a higher covert rate compared to Corollary~\ref{cor:Ach_NC_Simple}.

We now provide an upper bound on the covert capacity when the \ac{ADSI} is available non-causally at the encoder. 
\begin{theorem}
\label{thm:Converse_NC}
Let
\begin{subequations}\label{eq:Converse_AD_NC}
\begin{align}
  \calF_{\text{U-NC}} = \left.\begin{cases}R\geq 0: \exists P_{ASUVXYZ}\in\calG_{\text{U-NC}}:\\
  R\le\bbI(U;Y) - \bbI(U;S|A)\\
  R\le\bbI(U,V;Y)-\bbI(U;S|V,A)\\
  R\le\bbI(V;A,U,Y)\\
\end{cases}\right\},\label{eq:Converse_A_NC}
\end{align}
where
\begin{align}
  &\calG_{\text{U-NC}}\triangleq\left.\begin{cases}P_{ASUVXYZ}:\\
P_{ASUVXYZ}=P_AQ_{S|A}P_{U|AS} P_{V|US}P_{X|US}W_{YZ|XS}\\
\bbI(U;Y) - \bbI(U;S|A)\ge\bbI(V;Z)-\bbI(V;S)\\
\bbI(U,V;Y)-\bbI(U;S|V,A)
\ge\bbI(V;Z)-\bbI(V;S)\\
\bbI(V;A,U,Y)\ge\bbI(V;Z)-\bbI(V;S)\\
P_Z=Q_0\\
\card{\calU}\leq\card{\calA}\card{\calS}\card{\calX}+4\\
\card{\calV}\leq\left(\card{\calA}\card{\calS}\card{\calX}+4\right)^2
\end{cases}\right\}.\label{eq:Converse_D_NC}
\end{align}
The covert capacity of the \acp{DMC} with action-dependent states, depicted in Fig.~\ref{fig:System_Model}, when the \ac{ADSI} is available non-causally at the transmitter, is upper-bounded as
\begin{align}
\textup{C}_{\mbox{\scriptsize\rm AD-NC}} \le\mbox{\rm sup}\{R:R\in\calF_{\text{U-NC}}\}.
\label{eq:Converse_NC}
\end{align}
\end{subequations}
\end{theorem}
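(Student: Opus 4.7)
The plan is to combine Fano's inequality for reliability with a Csisz\'{a}r--Narayan-style key-extraction argument driven by the covertness constraint $\bbD(P_{Z^N}\|Q_0^{\otimes N})\to 0$, and then single-letterize via the standard Gel'fand--Pinsker auxiliary $U$ together with a second auxiliary $V$ that captures the randomness the encoder must extract from the \ac{ADSI} in order to mimic $Q_0^{\otimes N}$ at the warden.

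Fix a sequence of $(2^{NR},N)$ codes satisfying \eqref{eq:PE} and \eqref{eq:Covertness}. Since $M$ is uniform on $\calM$, independent of $K$, and $\bar{R}_K^{(N)}\to 0$, Fano's inequality gives $NR=H(M|K)\le I(M;Y^N|K)+N\epsilon_N\le I(M,K;Y^N)+o(N)$. Following the non-causal state-dependent converse, I would identify $U_i\triangleq(M,K,Y^{i-1},S_{i+1}^N)$ and apply the Csisz\'{a}r sum identity; crucially, the action sequence $A^N$ is produced \emph{before} the state so that $Q_{S^N|A^N}=\prod_{i=1}^N Q_{S|A}(s_i|a_i)$, and the standard telescoping produces
\begin{align*}
NR\le\sum_{i=1}^N\bigl[I(U_i;Y_i)-I(U_i;S_i|A_i)\bigr]+o(N),
\end{align*}
which single-letterizes via a time-sharing auxiliary $T$ to the first bound $R\le I(U;Y)-I(U;S|A)$.

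For the remaining two bounds I would introduce a second auxiliary $V$ playing the role of the secret-key/state-description random variable, with $V_i$ chosen as a suitable functional of $(M,K,Z_{\sim i}^N,Y^{i-1})$ tuned so that the Markov chain $V\to(U,S)\to X$ and the factorization $P_{V|US}$ in $\calG_{\text{U-NC}}$ hold. Repeating the telescoping with $V_i$ inserted inside the conditional mutual informations yields $R\le I(U,V;Y)-I(U;S|V,A)$. The third bound $R\le I(V;A,U,Y)$ arises from a second manipulation of $I(M,K;Y^N)$ in which the total rate is bounded by the mutual information between $V$ and $(A,U,Y)$, a structure reminiscent of the secret-key upper bounds in the Csisz\'{a}r--Narayan model.

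The hard part will be translating the KL-covertness condition into the three auxiliary constraints $I(U;Y)-I(U;S|A)\ge I(V;Z)-I(V;S)$, $I(U,V;Y)-I(U;S|V,A)\ge I(V;Z)-I(V;S)$, and $I(V;A,U,Y)\ge I(V;Z)-I(V;S)$ that must hold for any admissible $V$. Using Pinsker's inequality (Lemma~\ref{lemma:KLD_TV}) to pass from KL to total variation, together with a channel-resolvability argument showing that inducing $Q_0^{\otimes N}$ on $Z^N$ forces the encoder to spend randomness of rate at least $I(V;Z)-I(V;S)$ per channel use in a form leaking to the warden, should produce the three constraints simultaneously. The factorization $P_A Q_{S|A}P_{U|AS}P_{V|US}P_{X|US}W_{YZ|XS}$ is then read off from the code structure and the identifications, the constraint $P_Z=Q_0$ follows from the covertness condition after single-letterization, and the cardinality bounds come from the Fenchel--Bunt convex-cover lemma \cite{ElGamalKim}. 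The principal technical obstacle is that the \ac{ADSI} is not \ac{iid} under the actual code because its marginal depends on the designed action sequence $A^N$, so the usual product-measure covertness and resolvability calculations must be adapted to the induced non-product distribution on $Z^N$, and $V$ must be chosen so that the Markov structure in $\calG_{\text{U-NC}}$ is preserved throughout the single-letterization.
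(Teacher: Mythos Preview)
Your upper bounds on $R$ are broadly on the right track, though your identification $U_i=(M,K,Y^{i-1},S_{i+1}^N)$ omits the action sequence. The paper takes $U_t=(M,K,Y^{t-1},A_t^N,S_{t+1}^N)$ and $V_t=(M,K,Z^{t-1},A_{t+1}^N,S_{t+1}^N)$; the inclusion of $A_t^N$ (resp.\ $A_{t+1}^N$) is what makes the key Markov chain $S_t-A_t-(M,K,S_{t+1}^N,A_{t+1}^N)$ usable to collapse $\bbH(S_t|M,K,A_t^N,S_{t+1}^N)$ to $\bbH(S_t|A_t)$ in the Gel'fand--Pinsker step. Your vague ``suitable functional of $(M,K,Z_{\sim i}^N,Y^{i-1})$'' for $V_i$ would not give the factorization $P_{V|US}$ you need; the paper's choice contains $Z^{t-1}$ and $S_{t+1}^N$ but \emph{not} $Y^{i-1}$, and the Markov structure $V-(U,S)-X$ then follows because $X_t$ is determined by $(M,K,S^N)$ and the local randomness.

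The genuine gap is in how you plan to obtain the three constraints $\bbI(\cdot)\ge \bbI(V;Z)-\bbI(V;S)$. You propose Pinsker plus a channel-resolvability argument about the encoder ``spending randomness.'' That is not how the paper proceeds, and it is unclear your route would work: resolvability is an \emph{achievability} tool, not a converse one. The paper's mechanism is much more direct. Since $\bar R_K^{(N)}\to 0$, one writes
\[
NR+\bar R_K^{(N)}=\bbH(M,K)\ge \bbI(M,K;Z^N)=\sum_{t}\bbI(M,K;Z_t|Z^{t-1}),
\]
inserts $(S_{t+1}^N,A_{t+1}^N)$, and applies the Csisz\'ar sum identity to obtain $\sum_t[\bbI(V_t;Z_t)-\bbI(V_t;S_t)]$ up to additive slack. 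The covertness constraint enters \emph{only} through Lemma~\ref{lemma:iid_Time_indep}: $\bbD(P_{Z^N}\|Q_0^{\otimes N})\le\tilde\epsilon$ implies $\sum_t\bbI(Z_t;Z^{t-1})\le\tilde\epsilon$ and $\bbI(T;Z_T)\le\tilde\epsilon/N$, which lets you pass from $\bbI(V_t;Z_t|Z^{t-1})$ to $\bbI(V_t,Z^{t-1};Z_t)$ and from $\bbI(V_T;Z_T|T)$ to $\bbI(V_T,T;Z_T)$ with negligible loss. This yields the \emph{lower} bound $R\ge \bbI(V;Z)-\bbI(V;S)-o(1)$, and combining it with each of the three upper bounds on $R$ produces the three constraints in $\calG_{\text{U-NC}}$. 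No Pinsker, no resolvability, no total variation is needed; the ``non-\ac{iid} \ac{ADSI}'' issue you flag as the principal obstacle is handled by the single conditional-independence step above and is not where the difficulty lies.
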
The proof of Theorem~\ref{thm:Converse_NC} is available in Appendix~\ref{proof:thm:Converse_NC}. 
We note that the underlying joint distribution of the involved \acp{RV} in Theorem~\ref{thm:Converse_NC} is more general compared to that in Theorem~\ref{thm:Acievability_KG}, since the auxiliary \ac{RV} $V$ in Theorem~\ref{thm:Converse_NC} is generated according to the \acp{RV} $U$ and $S$, whereas in Theorem~\ref{thm:Acievability_KG} this \ac{RV} is only correlated with the \ac{ADSI}~$S$.
\begin{remark}
When the transmitter and the receiver share a secret key of sufficient rate the conditions $\bbI(U;Y)\ge\max\big\{\bbI(U;Z),\bbI(A;Z) + \bbI(U;S|A)\big\}$ in the achievable rate in Corollary~\ref{cor:Ach_NC_Simple} will be satisfied and therefore this achievable rate meet a looser version of the upper bound in Theorem~\ref{thm:Converse_NC}.
\end{remark}
\begin{figure*}[t]
\centering
\includegraphics[width=17cm]{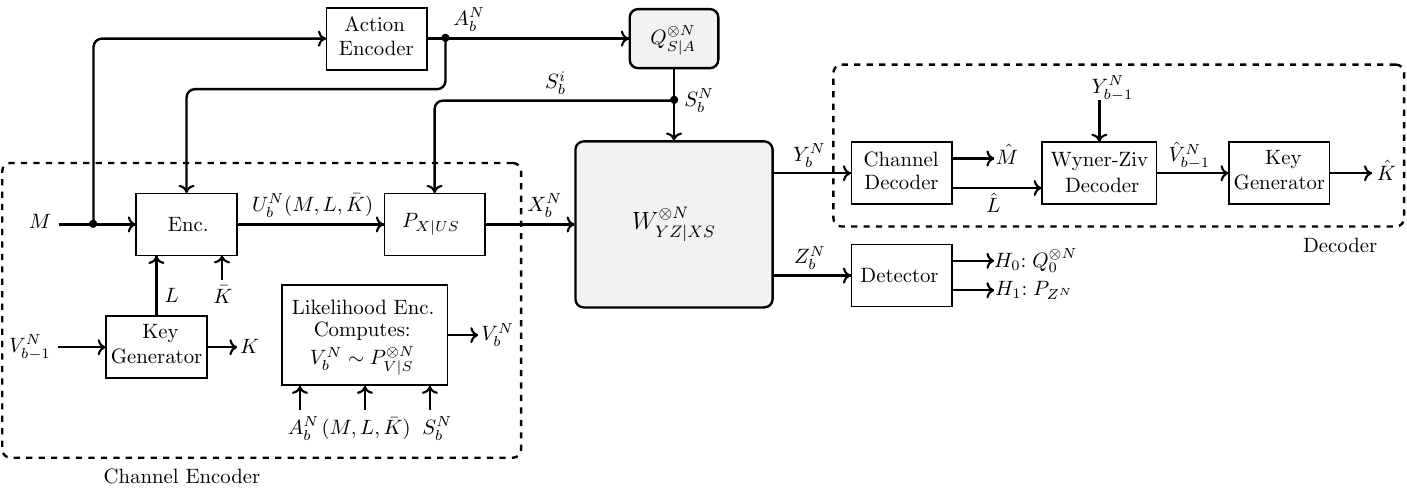}
\caption{The encoding and decoding scheme for block $b$, assuming that the encoder and the decoder have generated a secret key $\bar{K}$ in the previous block. The encoder first generates the reconciliation index $L$ and the secret key $K$ from a description of the \ac{ADSI}, i.e., $V_{b-1}^N$, generated in the previous block. To transmit the message $M$ and the reconciliation index $L$, according to the secret key $\bar{K}$, the encoder computes $U^N_b(M,L,\bar{K})$ and transmits $X_b^N$, where $X_{b,i}$ is generated by passing $U_{b,i}$ and $S_{b,i}$ through the test channel $P_{X|US}$. At the end of block $b$ accessing the action $A_b^N$, the \ac{ADSI} $S_b^N$, and the indices $(M,L,\bar{K})$ the likelihood encoder computes the sequence $V_b^N$ that is \ac{iid} with the \ac{ADSI} $S_b^N$, which will be used in the next block for the secret key generation.  
By decoding the indices $M$ and $L$ and accessing $Y_{b-1}^N$ the decoder will be able to reconstruct the sequence $V_{b-1}^N$ by using a Wyner-Ziv decoder, and therefore it can generate the secret key $K$.}
\label{fig:Scheme_C}
\end{figure*}
\subsection{\texorpdfstring{\ac{ADSI}}{ADSI} Available Causally at the Transmitter}
\label{sec:C}
We now present our results for the case of causal \ac{ADSI} at the encoder and provide some numerical examples.
\begin{theorem}
\label{thm:Acievability_KG_C}
Let
\begin{subequations}\label{eq:Achievability_AD_C}
\begin{align}
  \calF_{\text{L-C}}=\left\{
    \begin{aligned}
&R\geq 0: \exists P_{ASUVXYZ}\in\calG_{\text{L-C}}:\\
  &R< \bbI(A,U;Y) & \quad\seteqnum[a]\\
  &R< \bbI(A,U,V;Y) -  \bbI(V;S|A)& \quad\seteqnum[b]
\end{aligned}
\right\},\label{eq:Achievability_A_C}
\end{align}
where 
\begin{align}
  \calG_{\text{L-C}}\triangleq\left\{
    \begin{aligned}
&P_{ASUVXYZ}:&\quad\\
&P_{ASUVXYZ}=P_AP_{U|A}P_VQ_{S|AV}P_{X|US}W_{YZ|XS}&\quad\\
&Q_{S|A}(\cdot|a)=\sum_{v\in\calV}P_V(v)Q_{S|AV}(s|a,v)&\quad\seteqnum[c]\\
&\bbI(V;Y|A,U)>\bbI(V;Z) & \quad\seteqnum[d]\\
&\bbI(A,U,V;Y)\ge\bbI(A,U,V;Z)& \quad\seteqnum[e]\\
&\bbI(A,U,V;Y) + \bbI(V;Y|A,U)\ge\bbI(A,V;Z) + \bbI(V;S|A)& \quad\seteqnum[f]\\
&P_Z=Q_0& \quad\seteqnum[g]\\
&\card{\calU}\leq\card{\calA}\card{\calS}\card{\calX}+3& \quad\seteqnum[h]\\
&\card{\calV}\leq\left(\card{\calA}\card{\calS}\card{\calX}+3\right)\left(\card{\calA}\card{\calS}\card{\calX}+4\right)& \quad\seteqnum[i]\\
\end{aligned}
\right\}\label{eq:Achievability_D_C}
\end{align}
The covert capacity of the \acp{DMC} with action-dependent states, depicted in Fig.~\ref{fig:System_Model}, when the \ac{ADSI} is available causally at the transmitter, is lower-bounded as
\begin{align}
\textup{C}_{\mbox{\scriptsize\rm AD-C}} \ge\mbox{\rm sup}\{R:R\in\calF_{\text{L-C}}\}.
\label{eq:Achievability_C}
\end{align}
\end{subequations}
\end{theorem}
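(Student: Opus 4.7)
The proof follows the same block-Markov template as Theorem~\ref{thm:Acievability_KG}, with two principal modifications to accommodate the causal availability of the ADSI. First, Gel'fand-Pinsker encoding is replaced by Shannon strategies: the codebook $\{U^N(\bar{k}, m, \ell)\}$ (and the associated action codebook $\{A^N(\bar{k}, m, \ell)\}$) is drawn i.i.d.\ according to $P_A P_{U|A}$, and the channel input at time $i$ of block $b$ is produced on the fly as $X_{b,i} \sim P_{X|US}(\cdot \mid U_{b,i}, S_{b,i})$. Second, the auxiliary $V$ is only required to satisfy the marginal consistency $Q_{S|A}(\cdot|a) = \sum_v P_V(v) Q_{S|AV}(\cdot|a,v)$, reflecting the fact that $V$ is chosen \emph{after} each block concludes, solely for secret-key extraction, rather than being entangled with $U$ and used non-causally.

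I would organize the scheme into $B$ blocks of length $N$, seeded by an initial negligible-rate secret key. In block $b$ the encoder (i) selects the Shannon-strategy codeword $U_b^N(\bar{K}_{b-1}, M_b, L_{b-1})$, where $\bar{K}_{b-1}$ and $L_{b-1}$ were produced in the previous block's key-generation step; (ii) transmits through the channel causally via $P_{X|US}$; and (iii) at the end of the block, runs a likelihood encoder to select $V_b^N$ jointly typical with $(A_b^N, U_b^N, S_b^N)$ under the prescribed joint distribution, then Wyner-Ziv bins $V_b^N$ into indices $(K_b, L_b)$. The decoder in block $b$ jointly decodes $(M_b, L_{b-1})$ from $Y_b^N$ using $\bar{K}_{b-1}$, and then combines $L_{b-1}$ with $Y_{b-1}^N$ to reconstruct $V_{b-1}^N$ and extract $\bar{K}_b = K_{b-1}$ via a Wyner-Ziv decoder.

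Standard joint-typicality arguments yield constraint~(a) as the rate at which $(M, L, \bar{K})$ can be decoded from $Y^N$; the absence of an $I(U;S|A)$ penalty here (compared with the non-causal counterpart) is precisely the hallmark of Shannon strategies. Constraint~(b) arises from the additional burden of reconstructing $V^N$ at the legitimate decoder, where the $I(V;S|A)$ term is the Wyner-Ziv binning overhead incurred because $V$ is correlated with $S$ given $A$. The secret-key feasibility condition~(d), $\bbI(V;Y|A,U) > \bbI(V;Z)$, balances the decoder's Wyner-Ziv reconstruction against the leakage of $V^N$ to the warden through $Z^N$, while constraints~(e)--(f) guarantee that the total randomness in the joint codebook for $(A, U, V)$ is large enough to drive the warden's per-block output distribution close to $Q_0^{\otimes N}$ via soft-covering (channel resolvability).

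For covertness, the plan is to bound $\bbD(P_{Z^{NB}}\|Q_0^{\otimes NB})$ by a resolvability argument showing that, under the joint distribution induced by the random codebooks and the likelihood encoder, the per-block TV distance between the induced warden-output distribution and $Q_0^{\otimes N}$ decays exponentially provided~(e)--(f) hold. The main obstacle, as in Theorem~\ref{thm:Acievability_KG}, is chaining this per-block guarantee across the $B$ blocks while accounting for the correlations introduced when $V_b^N$ feeds $\bar{K}_{b+1}$ into the next block: one must verify that the residual block-to-block dependence does not inflate the cumulative TV distance beyond an exponentially decaying bound, and then invoke Lemma~\ref{lemma:KLD_TV} to convert exponential TV decay into a vanishing relative entropy over $NB$ channel uses. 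Finally, letting $B \to \infty$ absorbs the rate loss of the initial negligible-rate key, and a standard expurgation argument produces a deterministic codebook realizing every rate in $\calF_{\text{L-C}}$.
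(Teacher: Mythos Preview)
Your proposal is correct and follows essentially the same approach as the paper's proof: a block-Markov scheme with Shannon strategies replacing Gel'fand--Pinsker encoding, a likelihood encoder selecting $V_b^N$ from the realized ADSI for Wyner--Ziv-based key generation, and a resolvability argument for covertness combined with a cross-block dependence analysis handled via the Markov chain $Z_b^n - (K_{b-1},L_b,K_b) - Z_{b+1}^{B,n}$. The paper additionally splits the generated key into two parts $(k_{1,b},k_{2,b})$ (one indexing the action codebook and one the $U$-codebook) and obtains the final region~\eqref{eq:Achievability_AD_C} by Fourier--Motzkin elimination of the intermediate rate variables $(R_J,R_L,R_{K_1},R_{K_2},\tilde{R}_L,\tilde{R}_K)$, details you glossed over but which are routine once the scheme is set up as you describe.
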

The proof of Theorem~\ref{thm:Acievability_KG_C} uses a coding scheme similar to that of Theorem~\ref{thm:Acievability_KG}, the main difference is that instead of Gel'fand-Pinsker type of encoding for transmitting the message according to the \ac{ADSI} we use a Shannon strategy \cite{StateDepChan,Weissman10}. Theorem~\ref{thm:Acievability_KG_C} is proved in Appendix~\ref{proof_Acive_Causal}. To be more specific, the proof of Theorem~\ref{thm:Acievability_KG_C} is based on block-Markov encoding scheme as depicted in Fig.~\ref{fig:Scheme_C} for some block $b$. As seen in this figure, we generate the codebook independently of the \ac{ADSI} and we then generate the channel input based on the codeword and the \ac{ADSI}. We also use a likelihood encoder to compute a description of the \ac{ADSI}, which will be used for the secret key generation in the next block. After decoding the reconciliation information of the description of the \ac{ADSI} of the previous block, the decoder uses a Wyner-Ziv decoder to reconstruct the description of the \ac{ADSI} of the previous block, therefore, will be able to generate a secret key with the transmitter.  

Note that the difference between  Theorem~\ref{thm:Acievability_KG_C} and Theorem~\ref{thm:Acievability_KG} is that the supremum for the achievable rates in Theorem~\ref{thm:Acievability_KG} is over a larger set of distributions. Specifically, for the causal case, $U-A-S$ forms a Markov chain, and therefore $\bbI(U;S|A)=0$. 
If we choose the \ac{ADSI} to be independent of the \ac{RV} $U$ given the action $A$, i.e., $Q_{S|AUV}=Q_{S|AV}$, the achievable rate in Theorem~\ref{thm:Acievability_KG} recovers Theorem~\ref{thm:Acievability_KG_C}. Additionally, it is worth noting that the same observation as outlined in Remark~\ref{rem:Optimal_Dists} applies to Theorem~\ref{thm:Acievability_KG_C}.

Theorem~\ref{thm:Acievability_KG_C} subsumes some existing results in the literature, as elaborated by the following two remarks.
\begin{remark}[Comparison with Channels with Action-Dependent States Without Covert Constraint]
Setting $Z=\emptyset$ and $V=\emptyset$, and removing the covert constraint $P_Z=Q_0$, the achievable rate in Theorem~\ref{thm:Acievability_KG_C} recovers the capacity of the channels with action-dependent states in \cite[Theorem~2]{Weissman10}.
\end{remark}
\begin{remark}[Comparison with Covert Communications Over Channels with States]
Setting $A=\emptyset$, $Q_{S|A}=Q_S$, and $P_{U|A}=P_U$ in Theorem~\ref{thm:Acievability_KG_C} results in the achievable rate for covert communications over state-dependent channels in \cite[Theorem~7]{Keyless22}. 
\end{remark}

When we use the \ac{ADSI} only for the message transmission and not for the key generation, the following rate can be achieved by choosing $V=\emptyset$ in Theorem~\ref{thm:Acievability_KG_C} and considering Remark~\ref{rem:Optimal_Dists}. 
\begin{corollary}
\label{cor:Ach_C_Simple}
Let
\begin{subequations}
\begin{align}
  \calF_{\text{L-C}}^*= \left.\begin{cases}R\geq 0: \exists P_{ASUXYZ}\in\calG_{\text{L-C}}^*:\\
  R< \bbI(U;Y)\\
\end{cases}\right\},\label{eq:Ach_C_Simple_A}
\end{align}
where
\begin{align}
  \calG_{\text{L-C}}^*\triangleq \left.\begin{cases}P_{ASUXYZ}:\\
P_{ASUXYZ}=P_AP_{U|A}Q_{S|A}P_{X|US}W_{YZ|XS}\\
\bbI(U;Y)\ge\bbI(U;Z)\\
P_Z=Q_0\\
\card{\calU}\leq\card{\calA}\card{\calS}\card{\calX}+1
\end{cases}\right\}.\label{eq:Ach_C_Simple_D}
\end{align}
\end{subequations}
The covert capacity of the \acp{DMC} with action-dependent states, depicted in Fig.~\ref{fig:System_Model}, when the \ac{ADSI} is available causally at the encoder, is lower-bounded as
\begin{align}
\textup{C}_{\mbox{\scriptsize\rm AD-C}} \ge\mbox{\rm sup}\{R:R\in\calF_{\text{L-C}}^*\}.
\label{eq:Simple_Achi_C}
\end{align}
\end{corollary}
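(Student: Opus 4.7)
My plan is to obtain Corollary~\ref{cor:Ach_C_Simple} as a direct specialization of Theorem~\ref{thm:Acievability_KG_C}, by setting $V=\emptyset$ in the rate region $\calF_{\text{L-C}}$ and then merging the pair $(A,U)$ into a single auxiliary via Remark~\ref{rem:Optimal_Dists}. The claim of the corollary is therefore a corollary in the literal sense: no new code construction is needed, only a verification that every condition defining $\calG_{\text{L-C}}^*$ is implied by the corresponding condition of $\calG_{\text{L-C}}$ under the substitution.

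Concretely, the plan proceeds as follows. First I would substitute $V=\emptyset$ in the joint distribution~(c) of $\calG_{\text{L-C}}$, which collapses the factorization to $P_{ASUXYZ}=P_AP_{U|A}Q_{S|A}P_{X|US}W_{YZ|XS}$, matching the corollary. The rate bounds (a) and (b) both reduce to $R<\bbI(A,U;Y)$, and after relabelling $(A,U)\to U$ per Remark~\ref{rem:Optimal_Dists}, this becomes $R<\bbI(U;Y)$. The covertness feasibility condition (e) becomes $\bbI(A,U;Y)\ge\bbI(A,U;Z)$, i.e.\ $\bbI(U;Y)\ge\bbI(U;Z)$ after relabelling, exactly the one remaining inequality in $\calG_{\text{L-C}}^*$. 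Constraint (f) becomes $\bbI(A,U;Y)\ge\bbI(A;Z)$, which is automatically implied by (e) together with the chain-rule bound $\bbI(A,U;Z)\ge\bbI(A;Z)$, so it drops out. Constraint (g) $P_Z=Q_0$ is unchanged, and the cardinality bound (h) on $\card{\calU}$ specializes to $\card{\calA}\card{\calS}\card{\calX}+1$, exactly the bound in $\calG_{\text{L-C}}^*$ after the merging absorbs the extra factors.

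The only subtle point is constraint (d) of $\calG_{\text{L-C}}$, namely $\bbI(V;Y|A,U)>\bbI(V;Z)$, which formally fails under $V=\emptyset$ since both sides vanish and the inequality is strict. The hard part of the proposal is to explain that this is not a genuine obstruction. In the scheme underlying Theorem~\ref{thm:Acievability_KG_C} (see Fig.~\ref{fig:Scheme_C}), the auxiliary $V$ drives the Wyner-Ziv key-generation layer that manufactures the secret key carried across blocks; constraint (d) is precisely the feasibility condition for that layer. When $V=\emptyset$, this layer is inactive, the block-Markov chaining collapses to a single-block coding scheme, and covertness is sustained solely by the pre-shared key of negligible rate $\bar R_K^{(N)}$ permitted by Definition~\ref{defi:Code}. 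Hence (d) ceases to play any role and may simply be omitted from the specialized set $\calG_{\text{L-C}}^*$, which is why it does not appear in the statement of the corollary.

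Finally, I would indicate that this degeneration is consistent with the interpretation given after Theorem~\ref{thm:Acievability_KG}: $V$ is an auxiliary used \emph{only} for key generation, and setting $V=\emptyset$ corresponds to the code design that uses the \ac{ADSI} exclusively for transmitting the message via a Shannon-strategy-type encoder (the causal analog of Gel'fand-Pinsker). Since every remaining element of $\calG_{\text{L-C}}^*$ is either preserved or strictly implied by the corresponding element of $\calG_{\text{L-C}}$ under $V=\emptyset$ and the relabelling of Remark~\ref{rem:Optimal_Dists}, the lower bound $\textup{C}_{\mbox{\scriptsize\rm AD-C}}\ge\sup\{R:R\in\calF_{\text{L-C}}^*\}$ follows at once from~\eqref{eq:Achievability_C}.
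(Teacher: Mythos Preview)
Your proposal is correct and follows exactly the approach the paper indicates: the corollary is obtained by setting $V=\emptyset$ in Theorem~\ref{thm:Acievability_KG_C} and invoking Remark~\ref{rem:Optimal_Dists} to merge $(A,U)$ into a single auxiliary. Your treatment is in fact more explicit than the paper's, which simply states the specialization without walking through the constraints; in particular, your handling of the degenerate strict inequality~(d) and the observation that~(f) becomes redundant under~(e) are the right justifications for why those conditions disappear from $\calG_{\text{L-C}}^*$.
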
 

Similarly as before, Corollary~\ref{cor:Ach_C_Simple} can be recovered by Corollary~\ref{cor:Ach_NC_Simple}, by considering Remark~\ref{rem:Optimal_Dists} and since $\bbI(U;S|A)=0$. Corollary~\ref{cor:Ach_C_Simple} is used to study the channels with a rewrite option in Section~\ref{sec:rewrite}. 

Note that the feasible set of joint distributions in Corollary~\ref{cor:Ach_C_Simple} is a subset of the feasible set of joint distributions in Theorem~\ref{thm:Acievability_KG_C}. This is because the condition $\bbI(U;Y)\ge\bbI(U;Z)$ in \eqref{eq:Ach_C_Simple_D} is only satisfied when the legitimate receiver's channel is less noisy \ac{wrt} the warden's channel; whereas, intuitively, since the dominant condition in \eqref{eq:Achievability_D_C}, considering Remark~\ref{rem:Optimal_Dists}, is $\bbI(U,V;Y)\ge\bbI(U,V;Z)$, which might be satisfied even when the legitimate receiver's channel is \textit{not} less noisy \ac{wrt} the warden's channel; since choosing the auxiliary \ac{RV} $V$, which is correlated with the \ac{ADSI} $S$, provides some flexibility for the encoder to satisfy the constraints on the feasible set of joint distributions. When Corollary~\ref{cor:Ach_C_Simple} leads to a positive covert rate for some channels $W_{YZ|XS}$, which means that there is at least one joint distribution $P_{ASUXYZ}$ for which the constraint in \eqref{eq:Ach_C_Simple_D} is satisfied, by setting $V=\emptyset$ Theorem~\ref{thm:Acievability_KG_C} can achieve the same covert rate as Corollary~\ref{cor:Ach_C_Simple}. Setting $V=\emptyset$ means we do not use the \ac{ADSI} for the secret key generation. On the other hand, when the warden's channel is less noisy \ac{wrt} the legitimate receiver's channel, Corollary~\ref{cor:Ach_C_Simple} does not lead to a positive covert rate; however, there might be some joint distribution $P_{ASUVXYZ}$ for which the constraints in \eqref{eq:Achievability_D_C} are satisfied and as a result Theorem~\ref{thm:Acievability_KG_C} may lead to a positive covert rate. In this case, we may need to allocate part of the rate for the secret key generation scheme. In Section~\ref{sec:Binary_DFRC} we present an example in which Corollary~\ref{cor:Ach_C_Simple} fails to achieve a positive covert rate but Theorem~\ref{thm:Acievability_KG_C} leads to achieve a positive covert rate.

We now provide an upper bound on the covert capacity when the \ac{ADSI} is available causally at the encoder.
\begin{theorem}
\label{thm:Converse_C}
Let
\begin{subequations}\label{eq:Converse_AD_C}
\begin{align}\label{eq:Converse_A_C}
  \calF_{\text{U-C}}= \left.\begin{cases}R\geq 0: \exists P_{ASUVXYZ}\in\calG_{\text{U-C}}:\\
  R\le \bbI(U;Y)\\
\end{cases}\right\},
\end{align}
where
\begin{align}\label{eq:Converse_D_C}
  \calG_{\text{U-C}}\triangleq \left.\begin{cases}P_{ASUVXYZ}:\\
P_{ASUVXYZ}=P_AQ_{S|A}P_{U|A}P_{V|US} P_{X|US}W_{YZ|XS}\\
\bbI(U;Y)\ge\bbI(V;Z)\\
P_Z=Q_0\\
\card{\calU},\card{\calV}\leq\card{\calY}\\
\end{cases}\right\}.
\end{align}
\end{subequations}
The covert capacity of the \acp{DMC} with action-dependent states, depicted in Fig.~\ref{fig:System_Model}, when the \ac{ADSI} is available causally at the encoder, is upper-bounded as
\begin{align}
\textup{C}_{\mbox{\scriptsize\rm AD-C}} \le\mbox{\rm sup}\{R:R\in\calF_{\text{U-C}}\}.
\label{eq:Converse_Causal}
\end{align}
\end{theorem}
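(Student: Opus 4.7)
My plan is to adapt the standard converse technique for state-dependent covert channels \cite{Keyless22} to the causal action-dependent setting. The argument has four stages: a Fano-type bound on the message rate, single-letterization via a carefully chosen auxiliary $U_i$, use of the covertness constraint to introduce the second auxiliary $V_i$ and the cross-inequality $\bbI(U;Y)\ge\bbI(V;Z)$, and a final time-sharing step.

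First, applying Fano's inequality to \eqref{eq:PE} gives $NR\le\bbI(M;Y^N,K)+N\epsilon_N$ with $\epsilon_N\to0$. Since the shared key has negligible rate, $H(K)\le N\bar{R}_K^{(N)}=o(N)$, so this reduces to $NR\le\bbI(M,K;Y^N)+o(N)$. Including the action sequence $A^N$, which is a (stochastic) function of $(M,K)$ and local randomness, yields $NR\le\bbI(M,K,A^N;Y^N)+o(N)$. Expanding by the chain rule, $\bbI(M,K,A^N;Y^N)=\sum_{i=1}^{N}\bbI(M,K,A^N;Y_i\mid Y^{i-1})$, and defining $U_i\triangleq(M,K,A^N,Y^{i-1})$ each summand equals $\bbI(U_i;Y_i)$, so $NR\le\sum_{i=1}^{N}\bbI(U_i;Y_i)+o(N)$.

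Second, I would verify the Markov chain $U_i-A_i-S_i$, which is the distinctive structural feature of the causal case and enforces the factor $P_{U|A}$ in $\calG_{\text{U-C}}$. Because the state is generated memorylessly through $Q_{S|A}$, the conditional $P(S_i\mid A^N)$ depends only on $A_i$, and the remaining components of $U_i$ (namely $M,K,A^{i-1},A_{i+1}^N,Y^{i-1}$) are determined by quantities $(M,K,A^{i-1},S^{i-1},X^{i-1})$ that are conditionally independent of $S_i$ given $A_i$. Next, for the covertness constraint \eqref{eq:Covertness}, the chain rule for relative entropy gives $\bbD(P_{Z^N}\|Q_0^{\otimes N})\ge\sum_{i=1}^{N}\bbD(P_{Z_i}\|Q_0)$, so in the time-shared limit $P_Z=Q_0$ as required in $\calG_{\text{U-C}}$.

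Third, to obtain the constraint $\bbI(U;Y)\ge\bbI(V;Z)$, I would introduce the auxiliary $V_i\triangleq(M,K,A^N,Z^{i-1})$ and apply a Csisz\'ar sum type identity to $\sum_i\bigl[\bbI(M,K,A^N;Y_i\mid Y^{i-1})-\bbI(M,K,A^N;Z_i\mid Z^{i-1})\bigr]$. The covertness bound $\bbD(P_{Z^N}\|Q_0^{\otimes N})\to0$ would be used to show that the aggregated leakage on the warden's side, when compared to the receiver's information rate, yields the per-letter inequality after time sharing. The factorization $P_{V|US}$ in $\calG_{\text{U-C}}$ would be established by checking that once $(U_i,S_i)$ and the time index are fixed, $V_i$ is a conditional random variable of the required form.

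Finally, I would introduce a time-sharing variable $T$ uniform on $[N]$ independent of all other random variables, set $U\triangleq(U_T,T)$, $V\triangleq(V_T,T)$, and $A=A_T,S=S_T,X=X_T,Y=Y_T,Z=Z_T$, transferring the per-letter bound to $R\le\bbI(U;Y)+\epsilon_N'$ with $\bbI(U;Y)\ge\bbI(V;Z)$ and $P_Z=Q_0$. The cardinality bounds $|\calU|,|\calV|\le|\calY|$ follow from a Fenchel-Eggleston-Carath\'eodory argument preserving the marginal of $Y$ together with the relevant mutual information values. The main obstacle I anticipate is the derivation of the cross-constraint $\bbI(U;Y)\ge\bbI(V;Z)$: the covertness requirement only controls the \emph{marginal} distribution $P_{Z^N}$, not the conditionals $P_{Z^N\mid M}$, so a careful Csisz\'ar-K\"orner style manipulation, leveraging the i.i.d.\ product structure of $Q_0^{\otimes N}$, is required to convert a global statement into a usable per-letter inequality.
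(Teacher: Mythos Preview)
Your high-level outline is sound, but two of your concrete choices diverge from the paper in ways that matter.

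First, the cross-inequality $\bbI(U;Y)\ge\bbI(V;Z)$ is not obtained via a Csisz\'ar sum manipulation. The paper uses a \emph{sandwich} argument instead: it separately shows $R\le\bbI(U;Y)+\delta$ (Fano) and $R\ge\bbI(V;Z)-3\delta$, the latter coming from the trivial bound $NR+\bar{R}_K^{(N)}=H(M,K)\ge\bbI(M,K;Z^N)$ followed by single-letterization using the lemma that $\sum_t\bbI(Z^{t-1};Z_t)\le\bbD(P_{Z^N}\|Q_0^{\otimes N})$. Combining the two bounds on $R$ yields the constraint directly. This is both simpler and avoids precisely the obstacle you flagged; your Csisz\'ar-sum route, as stated, would ultimately need something like $\bbI(M,K,A^N;Y^N)\ge\bbI(M,K,A^N;Z^N)$, which covertness alone does not imply.

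Second, the paper's auxiliaries differ from yours: $U_t=(M,K,A_t,S^{t-1})$ and $V_t=(M,K,Z^{t-1})$. The replacement of $Y^{t-1}$ by $S^{t-1}$ in $U_t$---valid because $(M,K,A_t,Y^{t-1})-(M,K,A_t,S^{t-1})-Y_t$ is Markov in the causal model---is what the paper uses to certify the factorization, in particular the companion chain $V_t-(M,K,A_t,S^{t-1})-Y_t$ needed for $P_{V|US}$. Crucially, $V_t$ does \emph{not} carry $A^N$: if you include $A^N$ in $V$, the lower bound $R\ge\bbI(V;Z)$ fails whenever the action encoder is genuinely stochastic, since then $H(A^N\mid M,K)=\Theta(N)$ and $\bbI(M,K,A^N;Z^N)$ can exceed $NR$ by a linear amount.
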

The proof of Theorem~\ref{thm:Converse_C} is available in Appendix~\ref{proof:thm:Converse_C}. Similar to the non-causal case, the auxiliary \ac{RV} $V$ for the upper bound in Theorem~\ref{thm:Converse_C} depends on $U$ and $S$, but for the lower bound in Theorem~\ref{thm:Acievability_KG_C} this \ac{RV} is only correlated with \ac{ADSI} $S$.
\begin{remark}
When the legitimate receiver's channel is less noisy \ac{wrt} the warden's channel, i.e., $\bbI(U;Y)\ge\bbI(U;Z)$, the achievable rate in Corollary~\ref{cor:Ach_C_Simple} and a looser version of the upper in Theorem~\ref{thm:Converse_C}, by removing the condition $\bbI(U;Y)\ge\bbI(V;Z)$, meet. Also, when the transmitter and the receiver share a secret key of sufficient rate the condition $\bbI(U;Y)\ge\bbI(U;Z)$ in Corollary~\ref{cor:Ach_C_Simple} will be satisfied and Corollary~\ref{cor:Ach_C_Simple} meets a looser version of the upper bound in Theorem~\ref{thm:Converse_C}.
\end{remark}

When the warden and the legitimate receiver observe the same channel output, i.e., $Y=Z$, our lower bound in Corollary~\ref{cor:Ach_C_Simple} meets a looser version of Theorem~\ref{thm:Converse_C}, by ignoring the constraint $\bbI(U;Y)\ge\bbI(V;Z)$, and hence we obtain the covert capacity, as provided in the sequel.
\begin{corollary}
\label{cor:Y=Z_C}
Let
\begin{subequations}\label{eq:Converse_AD_C_Y=Z}
\begin{align}\label{eq:Converse_A_C_Y=Z}
  \calF_{\text{O-C}} = \left.\begin{cases}R\geq 0: \exists P_{ASUXY}\in\calG_{\text{O-C}}:\\
  R\le \bbI(U;Y)\\
\end{cases}\right\},
\end{align}
where
\begin{align}\label{eq:Converse_D_C_Y=Z}
  \calG_{\text{O-C}}\triangleq \left.\begin{cases}P_{ASUXY}:\\
P_{ASUXY}=P_AP_{U|A}Q_{S|A}P_{X|US}W_{Y|XS}\\
P_Y=Q_0\\
\card{\calU}\leq\card{\calY}
\end{cases}\right\}.
\end{align}
\end{subequations}
The covert capacity of the \acp{DMC} with action-dependent states, depicted in Fig.~\ref{fig:System_Model}, when the \ac{ADSI} is available causally at the encoder and $Y=Z$, is
\begin{align}
\textup{C}_{\mbox{\scriptsize\rm AD-C}} =\mbox{\rm max}\{R:R\in\calF_{\text{O-C}}\}.
\label{eq:Capacity_Causal_Y=Z}
\end{align}
\end{corollary}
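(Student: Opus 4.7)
The plan is to derive Corollary~\ref{cor:Y=Z_C} by combining the achievability in Corollary~\ref{cor:Ach_C_Simple} with the converse in Theorem~\ref{thm:Converse_C}, each specialized to the regime $Y=Z$, and then observing that the auxiliary constraint involving $V$ appearing in the upper bound is not binding once $Y=Z$.

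For achievability, I would set $Z = Y$ in Corollary~\ref{cor:Ach_C_Simple}. The covertness requirement $P_Z = Q_0$ then becomes $P_Y = Q_0$, while the inequality $\bbI(U;Y) \geq \bbI(U;Z)$ collapses to a trivial equality that is automatically satisfied for every joint distribution of the required factorization $P_A P_{U|A} Q_{S|A} P_{X|US} W_{Y|XS}$. Hence the feasible set $\calG_{\text{L-C}}^*$ reduces exactly to $\calG_{\text{O-C}}$, and every rate $R < \bbI(U;Y)$ with $P \in \calG_{\text{O-C}}$ is achievable. A standard continuity-compactness argument, relying on the finiteness of the alphabets and the cardinality bound $|\calU| \leq |\calY|$, guarantees that the supremum of $\bbI(U;Y)$ over $\calG_{\text{O-C}}$ is attained and therefore equals the maximum appearing on the right-hand side of \eqref{eq:Capacity_Causal_Y=Z}.

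For the converse, I would specialize Theorem~\ref{thm:Converse_C} by taking $Y=Z$, which yields $R \leq \bbI(U;Y)$ over joint distributions $P_{ASUVXY} \in \calG_{\text{U-C}}$ satisfying $P_Y = Q_0$ and $\bbI(U;Y) \geq \bbI(V;Y)$. Dropping this last constraint enlarges the feasible set and hence preserves the validity of the upper bound while making it directly comparable with the achievability. I would then note that any $P \in \calG_{\text{O-C}}$ lifts into the enlarged set by taking $V$ to be a deterministic (constant) random variable, for which $\bbI(V;Y) = 0 \leq \bbI(U;Y)$ holds automatically and the required Markov factorization of $\calG_{\text{U-C}}$ is retained. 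Since the objective $\bbI(U;Y)$ depends only on the $(U,Y)$-marginal, the two maximizations take the same value, so the converse meets the achievability and the covert capacity equals $\max\{R : R \in \calF_{\text{O-C}}\}$. The only delicate step is this reconciliation of the auxiliary variable $V$ between Theorem~\ref{thm:Converse_C} and the corollary statement, which is handled by the constant-$V$ construction above; I do not anticipate any further obstacle.
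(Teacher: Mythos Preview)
Your proposal is correct and follows essentially the same route as the paper: achievability from Corollary~\ref{cor:Ach_C_Simple} with the constraint $\bbI(U;Y)\ge\bbI(U;Z)$ becoming vacuous when $Y=Z$, and the converse from Theorem~\ref{thm:Converse_C} after discarding the constraint $\bbI(U;Y)\ge\bbI(V;Z)$ (which, as you note, only loosens the upper bound). The constant-$V$ lift and the compactness argument for attaining the maximum are natural refinements that the paper leaves implicit; the only small inaccuracy is the claim that $\calG_{\text{L-C}}^*$ ``reduces exactly'' to $\calG_{\text{O-C}}$, since the cardinality bounds differ ($|\calU|\le|\calA||\calS||\calX|+1$ versus $|\calU|\le|\calY|$), but this is harmless because the achievability scheme is valid for any finite $\calU$ before the cardinality reduction is applied.
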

We note that from Remark~\ref{remark:Causal_General_Converse}, the covert capacity in Corollary~\ref{cor:Y=Z_C} can also be applied to the general channels of the form $W_{YZ|ASX}$. Also note that our inner and outer bounds in Theorem~\ref{thm:Acievability_KG} and Theorem~\ref{thm:Converse_NC} do not meet when the \ac{ADSI} is available non-causally at the encoder and $Y=Z$. 
\subsection{Intuitions and Numerical Examples}
In this section, we begin with providing some intuition on why the knowledge of the \ac{ADSI} helps go beyond the square root law and then provide an example to demonstrate the advantages of our scheme with secret key generation from the \ac{ADSI}, followed by another example showing that the \ac{ADSI} results in a higher covert rate compared to state-dependent channels. 
\subsubsection{Why does ADSI help go beyond the square root law?}
In all of our achievability results discussed above, the constraints over the joint distributions could be circumvented by an external secret key shared between the legitimate terminals, except for the covertness constraint $P_Z=Q_0$. 
This section provides some insights into why it is possible to go beyond the square root law when the \ac{ADSI} is available at the transmitter. For simplicity, we ignore the role of action input here and consider the state-dependent channels, but our argument can be extended to the channels with \ac{ADSI}. Consider a \ac{DMC} $W_{Z|X,S}$, when the \ac{CSI} is {\em not} available at the transmitter the distribution induced at the output of the warden for the no-communication mode is
\begin{align}
    Q_0&\triangleq W_{Z|X=x_0}\nonumber\\
    &=\sum_{j\in[\abs{\calS}]}Q_S(s_j)W_{Z|S,X=x_0}(\cdot|s_j,x_0).\label{eq:q0_average}
\end{align}Also, the distribution induced at the output of the warden by all the channel input symbols except for the innocent symbol $x_0$ is
\begin{align}
    P_Z&\triangleq \sum_{\substack{i\in[\abs{\calX}]}}P_X(x_i)W_{Z|X}(\cdot|x_i),\label{eq:pz_p2p}
\end{align}where $W_{Z|X}(\cdot|x_i)=\sum_{j\in[\abs{\calS}]}Q_S(s_j)W_{Z|S,X}(\cdot|s_j,x_i)$, for $x_i\in\calX$, and $P_X(x_0)=0$. Therefore, $P_Z$ is in the convex hull of the set $\calH\triangleq\left\{W_{Z|X=x_i}\right\}_{i\in[|\calX|],i\ne0}$. It is possible to achieve a positive covert rate for the channel $W_{Z|X}$ when $Q_0=P_Z$, which means $Q_0$ should be in the convex hull of the set $\calH$. 

When the \ac{CSI} is available non-causally at the transmitter, the distribution induced at the output of the warden by all the channel input symbols except for the innocent symbol $x_0$ is
\begin{align}
    \bar{P}_Z&\triangleq \sum_{\substack{i\in[\abs{\calX}]}}\sum_{j\in[\abs{\calS}]}Q_S(s_j)P_{X|S}(x_i|s_j)W_{Z|SX}(\cdot|s_j,x_i),\label{eq:pz_csi}
\end{align}where $P_{X|S}(x_0|s_j)=0$ for all $j\in[\abs{\calS}]$. Therefore $\bar{P}_Z$ is in the convex hull of the set $\bar{\calH}\triangleq\left\{W_{Z|X=x_i,S=s_j}\right\}_{i\in[|\calX|],i\ne0,j\in[|\calS|]}$ and the covertness constraint $\bar{P}_Z=Q_0$ means $Q_0$ must be in the convex hull of the set $\bar{\calH}$. 
Also, since the \acp{RV} $X$ and $S$ in \eqref{eq:pz_csi} are correlated, we can rewrite \eqref{eq:pz_csi} as
\begin{align}
    \bar{P}_Z
    &=\sum_{\substack{i\in[\abs{\calX}]}}\sum_{j\in[\abs{\calS}]}P_X(x_i)P_{S|X}(s_j|x_i)W_{Z|SX}(\cdot|s_j,x_i),\label{eq:pz_csi2}
\end{align}where $P_{S|X}(x_i|s_j)=\frac{Q_S(s_j)P_{X|S}(x_i|s_j)}{\sum_{i\in[\abs{\calX}-1]}Q_S(s_j)P_{X|S}(x_i|s_j)}$. Therefore, for each $x_i\in\calX$, $i\in[|\calX|]$ and $x_i\ne x_0$, we have 
\begin{align}
    W_{Z|X=x_i}(\cdot|x_i)
    &=\sum_{j\in[\abs{\calS}]}P_{S|X}(s_j|x_i)W_{Z|SX}(\cdot|s_j,x_i).\label{eq:pz_csi3}
\end{align}This shows that $W_{Z|X=x_i}$, for $i\in[|\calX|]$ and $i\ne0$, is in the convex hull of the set $\bar{\calH}$. Therefore, the set $\calH$ is in the convex hull of the set $\bar{\calH}$, as a result, $Q_0$ might be in the convex hull of the set $\bar{\calH}$, but not the set $\calH$, potentially allowing for a positive covert rate for a larger class of channels. Similarly, one can show that the set $\bar{\calH}$ is a subset of the set of distributions induced at the output of the warden in channels with \ac{ADSI}.

\subsubsection{Example (Theorem~\ref{thm:Acievability_KG_C} improves the simple achievable rate in Corollary~\ref{cor:Ach_C_Simple})}
\label{sec:Reversely_degraded}
To show the advantages of the key generation scheme used in this paper we provide an example in which the simple scheme in Corollary~\ref{cor:Ach_C_Simple} does not lead to a positive covert rate while the secret key generation scheme in  Theorem~\ref{thm:Acievability_KG_C} leads to a positive covert rate. Consider a channel in which the warden's and the receiver's channels are \acp{BSCO}, as illustrated in Fig~\ref{fig:Example}. Let the \ac{ADSI} be $S=(S_Y,S_Z)$, where $S_Y$ is a Bernoulli \ac{RV} with parameter $\frac{1}{2}$ and independent of the action $A$, and $S_Z=A\oplus N_Z$ with $A$ as the action, which is a Bernoulli \ac{RV} with parameter $\alpha$, and $N_Z$ is a Bernoulli \ac{RV} with parameter $\lambda$. We assume that $A$, $N_Z$, $S_Y$ are mutually independent, $\oplus$ is the modulo two addition, and the innocent symbols are $a_0=x_0=0$. When $S_Z=0$ the warden's channel is a \ac{BSCO} with parameter $\epsilon_1$ and when $S_Z=1$ the warden's channel is another \ac{BSCO} with parameter $\epsilon_2$. Since the warden does not know the \ac{ADSI} $S_Z$ its channel is equivalent to a convex combination of these two \acp{BSCO} which is another \ac{BSCO} with some parameter $\epsilon^*$, as seen in Fig.~\ref{fig:Example}. Similarly, when $S_Y=0$ the receiver's channel is a \ac{BSCO} with parameter $\epsilon_3$ and when $S_Z=1$ the receiver's channel is another \ac{BSCO} with parameter $\epsilon_4$, as seen in Fig.~\ref{fig:Example}. We assume that $\epsilon_1<\epsilon_2<\epsilon_3<\epsilon_4\le\frac{1}{2}$ and therefore $\epsilon^*<\epsilon_3<\epsilon_4\le\frac{1}{2}$. Also, we assume that the receiver knows $S_Y$, i.e., $Y=(\tilde{Y},S_Y)$, where $\tilde{Y}$ is the output of the \ac{BSCO} corresponding to $S_Y$. 
Note that since $\epsilon^*<\epsilon_3<\epsilon_4\le\frac{1}{2}$, similar to the \acp{BSC}, one can show that the legitimate receiver's channel, which is a \ac{BSCO} with parameter $\epsilon_{3+S_Y}$, is degraded \ac{wrt} the warden's channel, which is a \ac{BSCO} with some parameter $\epsilon^*$. 
Therefore, the region in Corrolary~\ref{cor:Ach_C_Simple} does not lead to a positive covert rate since the condition $\bbI(U;Y)\ge\bbI(U;Z)$ is never satisfied. We now show that the region in Theorem~\ref{thm:Acievability_KG_C} leads to a positive covert rate. 
We first note that when the transmitter is not communicating with the receiver, that is the case when it is transmitting the innocent symbols $a_0=x_0=0$, the distribution observed by the warden is the uniform distribution over $\{1,-1\}$. One can show that the transmitter can induce the uniform distribution at the warden's channel output by choosing the channel input to be uniform over $\{1,-1\}$. Therefore the covert constraint $P_Z=Q_0$ is satisfied when the channel input $X$ is a uniform Bernoulli \ac{RV} over $\{1,-1\}$. 
We choose $V=S_Y$ and $U$ as a Bernoulli \ac{RV} with parameter $\frac{1}{2}$ over the set $\{1,-1\}$ independent of the action $A$ and the channel input to be $X=U$. 
Now we show that the constraints in \eqref{eq:Achievability_D_C} are satisfied. We have,
\begin{figure*}[t]
\centering
\includegraphics[width=13cm]{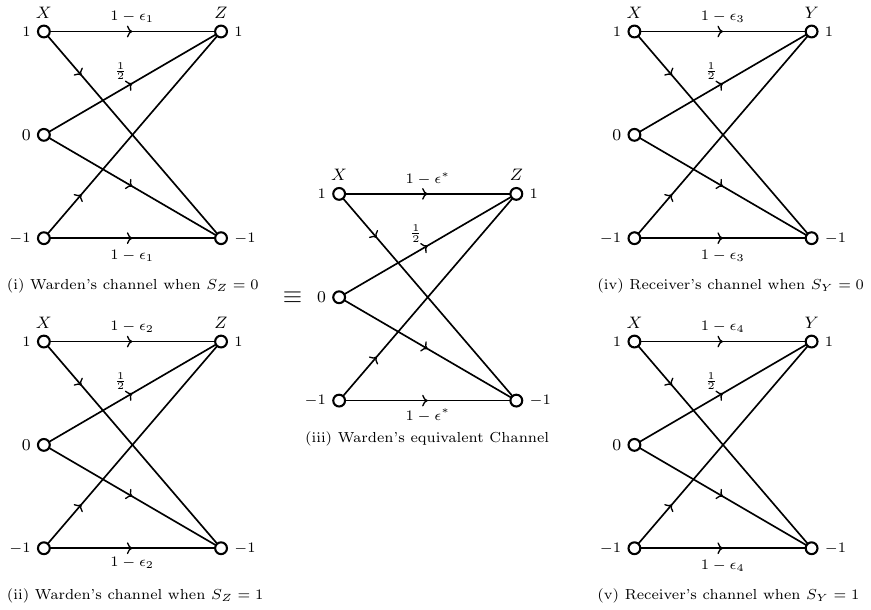}
\caption{\acp{BSCO} with \ac{ADSI}}
\label{fig:Example}
\end{figure*}
\begin{subequations}\label{eq:MI_Calculation}
    \begin{align}
        \bbI(V;Y|A,U)&=\bbI(S_Y;\tilde{Y},S_Y|A,U)\nonumber\\
        &=\bbH(S_Y|A,U)\nonumber\\
        &=\bbH(S_Y)=1,\label{eq:VYgAU}\\
        \bbI(V;Z)&=\bbI(S_Y;Z)=0,\label{eq:VZ}\\
        \bbI(V,U;Y)&=\bbI(S_Y,U;\tilde{Y},S_Y)\nonumber\\
        &=\bbI(S_Y,U;S_Y)+\bbI(U;\tilde{Y}|S_Y)\nonumber\\
        &=1+\bbI(U;\tilde{Y}|S_Y)\nonumber\\
        &=2-\bbH_b(\epsilon_{3+S_Y}),\label{eq:VUY}\\
        \bbI(V,U;Z)&=\bbI(S_Y,U;Z)=\bbI(U;Z),\label{eq:VUZ}\\
        \bbI(A,V;Z)&=\bbI(A,S_Y;Z)=\bbI(A;Z),\label{eq:AVZ}\\
        \bbI(V;S|A)&=\bbI(S_Y;S_Y,A\oplus N_Z|A)\nonumber\\
        &=\bbI(S_Y;S_Y,N_Z|A)\nonumber\\
        &=\bbH(S_Y)=1,\label{eq:VSgA}
    \end{align}where $\bbH_b(\epsilon)=-\epsilon\log(\epsilon)-(1-\epsilon)\log(1-\epsilon)$. \eqref{eq:VYgAU} and \eqref{eq:VZ} show that the condition $\bbI(V;Y|A,U)>\bbI(V;Z)$ is satisfied. Since $\bbI(U;Z)\le\bbH(U)=1$ and $\bbH_b(\epsilon_{3+S_Y})\le1$, \eqref{eq:VUY} and \eqref{eq:VUZ} show that the constraint $\bbI(V,U;Y)\ge\bbI(V,U;Z)$ is satisfied. Also, \eqref{eq:MI_Calculation} shows that the condition $\bbI(V,U;Y)+\bbI(V;Y|A,U)>\bbI(A,V;Z)+\bbI(V;S|A)$ is satisfied. We also have $\bbI(U;Y)=\bbI(U;\tilde{Y},S_Y)=\bbI(U;\tilde{Y}|S_Y)=1-\bbH_b(\epsilon_{3+S_Y})$. Therefore, considering \eqref{eq:MI_Calculation}, Theorem~\ref{thm:Acievability_KG_C} leads to achieving a positive covert rate $1-\bbH_b(\epsilon_{3+S_Y})$ bits.
\end{subequations} 
\subsubsection{Example (Achieving higher covert communication rate compared to the state-dependent channels)}
\label{sec:Binary_DFRC} 
In the state-dependent channels, the flexibility provided by the knowledge of the state $S^N$, which is generated by nature and is independent of the message $M$ \cite{LeeWang18,Keyless22}, leads to achieving a positive covert rate. However, in the channels with \ac{ADSI}, the state $S^N$ can be partially controlled by the transmitter and carries information about the message $M$. Therefore, we expect to achieve a higher covert rate in channels with \ac{ADSI} compared to the state-dependent channels. In the following, we provide an example to illustrate this intuition.

Let the channel law be,
\begin{align}
    Y&=S,\quad\quad
    Z=X\oplus S,\quad\text{and}\quad S=A\oplus N_S,\label{eq:system_Model_1}
\end{align}where $N_S$ is a Bernoulli \ac{RV} with parameter $\beta\in(0,0.5)$ and independent of the action $A$, all the \acp{RV} are binary, and the innocent symbols are $x_0=a_0=0$.

If the state $S$ does not carry information about the message, i.e., $A=\emptyset$, which corresponds to the state-dependent channels, the covert communication rate for this example is zero. Now, using the achievability result in Corollary~\ref{cor:Ach_C_Simple}, we show that it is possible to achieve a positive covert rate for channels with \ac{ADSI}. We choose the action $A$ to be a Bernoulli \ac{RV} with parameter $\alpha$ and choose the auxiliary \ac{RV} $U$ and the channel input to be equal to the action, i.e., $X=U=A$. Therefore, the warden's channel output when the transmitter is communicating with the receiver is $Z=N_S$, which is equal to the warden's channel output when the transmitter is not communicating with the receiver. Therefore, the covertness constraint is satisfied. Substituting these choices of \acp{RV} into Corollary~\ref{cor:Ach_C_Simple} leads to the communication rate $R<\dent_b(\lambda)-\dent_b(\beta)$, where $\lambda=\alpha\beta+(1-\alpha)(1-\beta)$ and $\dent_b(\lambda)=-\lambda\log(\lambda)-(1-\lambda)\log(1-\lambda)$. This is because
\begin{align*}
    \bbI(U;Y)&=\bbI(A;A\oplus N_S)\nonumber\\
        &=\bbH(A\oplus N_S)-\bbH(A\oplus N_S|A)\nonumber\\
        &=\bbH(A\oplus N_S)-\bbH(N_S)\nonumber\\
    &=\dent_b(\lambda)-\dent_b(\beta),\\
    \bbI(U;Z)&=\bbI(A;N_S)=0.
\end{align*}Note that choosing $\alpha=0.5$ leads to achieving $1-\dent_b(\beta)$ covert bits.
\subsection{More General Channels}
\label{sec:More_General_Channels}
The achievability results in Section~\ref{sec:NC} and Section~\ref{sec:C}, i.e., Theorem~\ref{thm:Acievability_KG}, Corollary~\ref{cor:Ach_NC_Simple}, Theorem~\ref{thm:Acievability_KG_C},  Corollary~\ref{cor:Ach_C_Simple}, and the optimal result in Corollary~\ref{cor:Ach_C_Simple} are applicable to general channels of the form $W_{YZ|XSA}$. In this case, all of our results provided above, except for Theorem~\ref{thm:Converse_NC} and Theorem~\ref{thm:Converse_C}, remain valid, the only change is that $X$ should be generated according to $P_{X|AUS}$ instead of $P_{X|US}$. This also follows by defining a new state $\tilde{S}=(A,S)$ and applying the characterization described above. 

As an application of this generalization, 
we show that our general problem setup can be used to study the problem of covert communication over a state-dependent \ac{MAC} channel, where the two transmitters try to transmit the same message, and states are known at one of the transmitters.  
This problem without covert constraint is motivated and studied in \cite{Cooperative_MAC}. In this problem, two transmitters aim to covertly transmit a common message $M$ to a receiver over a state-dependent channel while one of the transmitters knows the state sequence $S^N$ non-causally or causally. Denote the channel inputs of this problem by $X_1$ and $X_2$, and the distribution of the \ac{CSI} by $Q_S$, this problem can be considered as a special case of the general model depicted in Fig.~\ref{fig:System_Model} through the following associations: $A\to X_1$, $Q_{S|A}\to Q_S$, $X\to X_2$, and $W_{YZ|SXA}\to W_{YZ|SX_1X_2}$. Then Theorem~\ref{thm:Acievability_KG} and Corollary~\ref{cor:Ach_NC_Simple} in Section~\ref{sec:NC} and Theorem~\ref{thm:Acievability_KG_C},  Corollary~\ref{cor:Ach_C_Simple}, and the optimal result in Corollary~\ref{cor:Y=Z_C} in Section~\ref{sec:C} are directly applicable to this channel model. 
\section{Writing on Clean Memory}
\label{sec:rewrite}
In this section we study the problem of transmitting covert information over a channel with \textit{rewrite} option, in which the transmitter uses the channel once and observes the channel output, 
the channel output of the first stage of transmission plays the role of channel state in the second transmission stage. This problem is interpreted as the problem of writing on a computer memory while ensuring that the outcome of the writing process is indistinguishable from a computer memory that is being erased, i.e., the innocent symbol $x_0$ is being written on all its memory cells. Therefore, the distribution of a clean memory corresponds to the output distribution when the input distribution is the innocent symbol $x_0$.  
The generic framework studied in this paper can be specialized to various problems, including channels with \textit{rewrite} option. Such problems are natural to be studied in the context of the problem studied in this paper because of our two-part coding scheme.  
In \cite[Section~V]{Weissman10} it is shown that the capacity of the channels with a rewrite option is higher than the capacity of channels without a rewrite option. On the other hand, for channels without a rewrite option, the covert capacity follows the square root law. Here we show that the availability of the rewrite option leads to a positive covert rate. We refer to this problem as ``writing on clean memory," since this problem recovers as a special case, the interesting problem of recording on a computer memory while ensuring that the distribution of the outcome of the recording process is almost identical to the distribution of a clean computer memory. We consider the cases where the rewrite operation is based on noiseless and noisy observations of the output of the first round of the writing operation. We also study the problem of writing on a computer memory with defects \cite{Heegard83}, and a rewrite option. 
\subsection{Noiseless Feedback}
\label{subsec:noiseless_Feedback}
\begin{figure*}
\centering
\includegraphics[width=11.5cm]{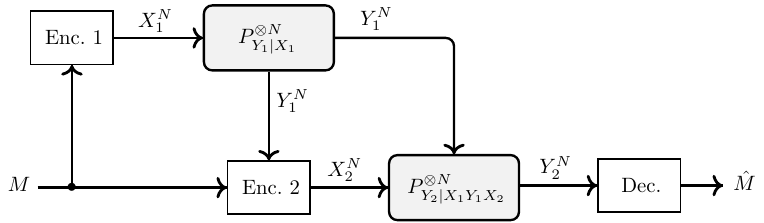}
\caption{Channels with rewrite option and noiseless feedback}
\label{fig:Writing_Memory}
\end{figure*}
Consider a \ac{DMC} characterized by $\big(\calX,\calY,W_{Y|X}\big)$, where $\calX$ and $\calY$ are the channel input and output alphabets, respectively, and $W_{Y|X}$ is the channel law. After using the channel once and observing the output of the channel in a noiseless manner, the transmitter tries to use the channel $W_{Y|X}$ one more time and rewrite on each location that it chooses such that the final output of the channel is indistinguishable from the situation in which the transmitter has transmitted the innocent symbol $x_0\in\calX$ over the channel. Therefore, when an illegitimate user observes the final channel output, it cannot decide whether the channel output is the result of transmitting information over the channel; but the legitimate user can decode the message by accessing a secret key of a negligible rate that it shares with the transmitter.

This problem is a special case of the general setup studied in Section~\ref{sec:Main_Results} if we consider
\begin{itemize}
    \item the action $A$ as the first channel input, which is denoted by $X_1$ in this section and takes values in the alphabet of the input of the channel $W_{Y|X}$, i.e., $\calX$;
    \item the \ac{ADSI} $S$ as the first channel output, which is denoted by $Y_1$ and takes values in the alphabet of the output of the channel $W_{Y|X}$, i.e., $\calY$;
    \item the channel input $X$ as the channel input of the second round, which is denoted by $X_2$ in this section and takes values in $\tilde{\calX}=\{\text{no-rewrite}\}\cup\calX$;
    \item the channel outputs $Y=Z$ as the final channel output, which are denoted by $Y_2$ in this section, and it also takes values in $\calY$;
    \item the innocent symbols $x_{1,0}=x_{2,0}=x_0\in\calX$, for the no-communication mode. Note that, for the no-communication mode, the channel input $X_2$ is always equal to $x_0$ regardless of the channel output $Y_1$, therefore, the channel output $Y_2$ can be considered as a result of one round of transmitting $x_0$.
\end{itemize} 
Therefore, the conditional distribution $Q_{S|A}$ in our general framework is considered to be $P_{Y_1|X_1}=W_{Y|X}$ and the conditional distribution $W_{Y|ASX}$ is considered to be $P_{Y_2|X_1Y_1X_2}$, where
\begin{align}
    P_{Y_2|X_1Y_1X_2}(y_2|x_1,y_1,x_2)&=P_{Y_2|Y_1X_2}(y_2|y_1,x_2)\nonumber\\
    &=\begin{cases}
\indi{1}_{\{y_2=y_1\}},&\text{if}\Squad x_2=\text{no-rewrite}\\
W_{Y|X}(y_2|x_2),&\text{otherwise}
\end{cases}.\label{eq:NL_Rewrite}
\end{align}This problem setup is illustrated in Fig.~\ref{fig:Writing_Memory}. For the non-causal case, applying Corollary~\ref{cor:Ach_NC_Simple}, with the above framework, leads to the following achievable rate.
\begin{corollary}
\label{corr:Rewrite_NC}
Let
\begin{subequations}
\begin{align}
  \calF_{\text{NL-NC}} = \left.\begin{cases}R\geq 0: \exists P_{X_1Y_1UX_2Y_2}\in\calG_{\text{NL-NC}}:\\
  R< \bbI(X_1,U;Y_2)-\bbI(U;Y_1|X_1)\\
\end{cases}\right\},
\end{align}
where
\begin{align}\label{eq:D_Noiseless_rewrite_NC}
  \calG_{\text{NL-NC}}\triangleq \left.\begin{cases}P_{X_1Y_1UX_2Y_2}:\\
P_{X_1Y_1UX_2Y_2}(x_1,y_1,u,x_2,y_2)=P_{X_1}(x_1)W_{Y|X}(y_1|x_1)\times\\
P_{U|X_1Y_1}(u|x_1,y_1)P_{X_2|UY_1X_1}(x_2|u,y_1,x_1)P_{Y_2|Y_1X_2}(y_2|y_1,x_2)\\
\bbI(U;Y_2|X_1)\ge\bbI(U;Y_1|X_1)\\
P_{Y_2}=Q_0\\
\card{\calU}\leq\card{\calX}\card{\calY}\left(\card{\calX}+1\right)+1
\end{cases}\right\},
\end{align}and $P_{Y_2|Y_1X_2}(y_2|y_1,x_2)$ is defined in \eqref{eq:NL_Rewrite}. 
The covert capacity of the channels with a noiseless rewrite option, when the input of the rewrite operation in the second round may depend non-causally on the output of the first round, is lower-bounded as
\begin{align}
\textup{C}_{\mbox{\scriptsize\rm NC}} \ge\mbox{\rm sup}\{R:R\in\calF_{\text{NL-NC}}\}.
\label{eq:Rewrite_NC}
\end{align}
\end{subequations}
\end{corollary}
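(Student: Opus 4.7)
The plan is to derive Corollary~\ref{corr:Rewrite_NC} as a direct specialization of Corollary~\ref{cor:Ach_NC_Simple}, in the extended form that allows the channel input to depend on the action (Section~\ref{sec:More_General_Channels}). First I would make the correspondence $A \leftrightarrow X_1$, $S \leftrightarrow Y_1$ with $Q_{S|A} = W_{Y|X}$, $X \leftrightarrow X_2$ with enlarged alphabet $\tilde{\calX} = \{\text{no-rewrite}\} \cup \calX$, and $Y = Z \leftrightarrow Y_2$ governed by the deterministic coupling that draws $Y_2$ from $P_{Y_2|Y_1 X_2}$ in \eqref{eq:NL_Rewrite} and copies it to both $Y$ and $Z$. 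The innocent symbol is $x_0$ at both stages. Under this identification, $Q_0(\cdot) = \sum_{y_1} W_{Y|X}(y_1|x_0)\, P_{Y_2|Y_1 X_2}(\cdot|y_1, x_0) = W_{Y|X}(\cdot|x_0)$, so ADSI covertness coincides with covertness against a memory written once with the innocent symbol.

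Next I would specialize the constraints of $\calG_{\text{L-NC}}^*$ to this mapping. The factorization, extended via Section~\ref{sec:More_General_Channels} so that $X_2$ may depend on $(U, X_1, Y_1)$, becomes exactly the one in \eqref{eq:D_Noiseless_rewrite_NC}. Because $Y = Z = Y_2$, the constraint $\bbI(U;Y) \ge \bbI(U;Z)$ is vacuous. For the remaining constraint $\bbI(U;Y) \ge \bbI(A;Z) + \bbI(U;S|A)$, I would invoke Remark~\ref{rem:Optimal_Dists} to read the ``$U$'' of Corollary~\ref{cor:Ach_NC_Simple} as the pair $(X_1, U)$. This yields $\bbI(X_1, U; Y_2) \ge \bbI(X_1; Y_2) + \bbI(U; Y_1|X_1)$, which rearranges via the chain rule to $\bbI(U; Y_2|X_1) \ge \bbI(U; Y_1|X_1)$, matching \eqref{eq:D_Noiseless_rewrite_NC}. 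The achievable rate transcribes similarly: $R < \bbI(U;Y) - \bbI(U;S|A) = \bbI(X_1, U; Y_2) - \bbI(U; Y_1|X_1)$.

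The cardinality bound requires a small extra step: the generic bound $|\calU| \le |\calA||\calS||\calX| + 2$ yields $|\calU| \le |\calX||\calY|(|\calX|+1) + 2$ after substitution; the stated bound $|\calX||\calY|(|\calX|+1) + 1$ in \eqref{eq:D_Noiseless_rewrite_NC} follows by the standard Fenchel--Eggleston/Carath\'eodory argument once one observes that the constraint $\bbI(U;Y) \ge \bbI(U;Z)$ is automatically satisfied and need not be preserved when choosing a representative of $U$, saving one auxiliary dimension.

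The main point that requires care, rather than a deep obstacle, is the translation of the covertness reference: one must verify that the ``no-rewrite'' symbol, together with the deterministic coupling $Y = Z$, is precisely what forces $Q_0$ in the ADSI framework to coincide with the clean-memory distribution $W_{Y|X}(\cdot|x_0)$, so that the ADSI covertness constraint $P_Z = Q_0$ becomes $P_{Y_2} = W_{Y|X}(\cdot|x_0)$, as intended. Once this identification is made, the achievability is a line-by-line consequence of Corollary~\ref{cor:Ach_NC_Simple} and Remark~\ref{rem:Optimal_Dists}.
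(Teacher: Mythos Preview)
Your proposal is correct and follows exactly the paper's approach: the paper derives Corollary~\ref{corr:Rewrite_NC} by specializing Corollary~\ref{cor:Ach_NC_Simple} (in the extended form of Section~\ref{sec:More_General_Channels}) to the rewrite framework with the identifications $A\leftrightarrow X_1$, $S\leftrightarrow Y_1$, $X\leftrightarrow X_2$, $Y=Z\leftrightarrow Y_2$. Your detailed verification of the constraints, including the collapse of $\bbI(U;Y)\ge\bbI(U;Z)$ under $Y=Z$ and the chain-rule reduction of the remaining constraint to $\bbI(U;Y_2|X_1)\ge\bbI(U;Y_1|X_1)$, simply makes explicit what the paper leaves implicit.
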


On the other hand, for the causal case, applying Corollary~\ref{cor:Y=Z_C}, with the above framework leads to the capacity result.
\begin{corollary}
\label{corr:Rewrite_C}
Let

\begin{subequations}
\begin{align}
  \calF_{\text{NL-C}} = \left.\begin{cases}R\geq 0: \exists P_{X_1Y_1UX_2Y_2}\in\calG_{\text{NL-C}}:\\
  R\le \bbI(U;Y_2)\\
\end{cases}\right\},
\end{align}
where
\begin{align}\label{eq:Binary_Noiseless_Rewrite_Example}
  \calG_{\text{NL-C}}\triangleq \left.\begin{cases}P_{X_1Y_1UX_2Y_2}:\\
P_{X_1Y_1UX_2Y_2}(x_1,y_1,u,x_2,y_2)=P_U(u)P_{X_1|U}(x_1|u)\times\\
W_{Y|X}(y_1|x_1)P_{X_2|UY_1}(x_2|u,y_1)P_{Y_2|Y_1X_2}(y_2|y_1,x_2)\\
P_{Y_2}=Q_0\\
\card{\calU}\leq\card{\calY}
\end{cases}\right\},
\end{align}and $P_{Y_2|Y_1X_2}(y_2|y_1,x_2)$ is defined in \eqref{eq:NL_Rewrite}. 
The covert capacity of the channels with a noiseless rewrite option, when the input of the rewrite operation in the second round depends causally on the output of the first round, is
\begin{align}
\textup{C}_{\mbox{\scriptsize\rm C}} =\mbox{\rm max}\{R:R\in\calF_{\text{NL-C}}\}.
\label{eq:Rewrite_C}
\end{align}
\end{subequations}
\end{corollary}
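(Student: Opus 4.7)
The plan is to specialize Corollary~\ref{cor:Y=Z_C} to the rewrite-channel setting via the identification $(A,S,X,Y) \leftrightarrow (X_1,Y_1,X_2,Y_2)$ listed at the start of Section~\ref{subsec:noiseless_Feedback}. Under this identification the ADSI conditional $Q_{S|A}$ becomes $W_{Y|X}$, the per-symbol channel law $W_{Y|XS}$ becomes the rewrite kernel $P_{Y_2|Y_1X_2}$ defined in \eqref{eq:NL_Rewrite}, and since the warden observes the same memory $Y_2$ as the legitimate receiver, the premise $Y=Z$ required by Corollary~\ref{cor:Y=Z_C} is satisfied. First I would check that the innocent symbol pair $(x_{1,0},x_{2,0})=(x_0,x_0)$ induces the output distribution $Q_0(\cdot)=W_{Y|X}(\cdot|x_0)$: in the no-communication mode the first stage outputs $Y_1\sim W_{Y|X}(\cdot|x_0)$, and the second stage either rewrites with $x_0$ or leaves $Y_1$ untouched, both of which yield the same $W_{Y|X}(\cdot|x_0)$ marginal at $Y_2$.

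For achievability, I would take any joint distribution in $\calG_{\text{NL-C}}$ and, using Bayes' rule, rewrite the factor $P_U(u)P_{X_1|U}(x_1|u)$ as $P_{X_1}(x_1)P_{U|X_1}(u|x_1)$, which places the distribution inside $\calG_{\text{O-C}}$ under the above identification. The covertness constraint $P_{Y_2}=Q_0$ carries over verbatim, the cardinality bound $|\calU|\leq|\calY|$ matches, and the achievable rate $\bbI(U;Y)$ from Corollary~\ref{cor:Y=Z_C} becomes $\bbI(U;Y_2)$, establishing the $\geq$ direction of \eqref{eq:Rewrite_C}.

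For the converse, I would take any reliable and covert code for the rewrite channel whose causal second-stage encoder maps the past observations $Y_1^{t}$ (together with $X_1^{N}$, $\bar{K}$, and local randomness) to $X_{2,t}$, interpret the first-stage encoder as the action encoder $\calE_{1,t}$ and the second-stage encoder as $\calE_{2,t}$ of Definition~\ref{defi:Code}, and invoke the converse half of Corollary~\ref{cor:Y=Z_C} to obtain the matching single-letter upper bound with the same factorization and cardinality constraints. Because the identification is a bijection between admissible codes on the two sides, nothing is lost in either direction.

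The main obstacle I anticipate is a bookkeeping issue rather than a substantive one: checking that enlarging the second-stage input alphabet from $\calX$ to $\tilde{\calX}=\{\text{no-rewrite}\}\cup\calX$ does not violate any assumption of Corollary~\ref{cor:Y=Z_C}. The no-rewrite symbol merely corresponds to the deterministic kernel $\indi{1}_{\{y_2=y_1\}}$, so the composed $P_{Y_2|X_1Y_1X_2}$ is a legitimate discrete memoryless transition kernel that depends on $X_1$ only through $Y_1$, matching the form $W_{Y|XS}$ required by the general framework; and the innocent second-stage symbol can be taken to be $x_0\in\calX\subset\tilde{\calX}$, which is what makes $Q_0=W_{Y|X}(\cdot|x_0)$ emerge correctly from the general definition \eqref{eq:q0}. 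Once this is checked, Corollary~\ref{corr:Rewrite_C} follows as a pure restatement of Corollary~\ref{cor:Y=Z_C}.
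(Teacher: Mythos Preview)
Your proposal is correct and follows exactly the paper's approach: the paper simply states that ``applying Corollary~\ref{cor:Y=Z_C}, with the above framework, leads to the capacity result,'' and you have spelled out the details of that specialization (the identification $(A,S,X,Y)\leftrightarrow(X_1,Y_1,X_2,Y_2)$, the $Y=Z$ hypothesis, the Bayes-rule swap $P_UP_{X_1|U}=P_{X_1}P_{U|X_1}$, and the verification that $Q_0=W_{Y|X}(\cdot|x_0)$). Your observation that the rewrite kernel \eqref{eq:NL_Rewrite} depends on $X_1$ only through $Y_1$, hence fits the form $W_{Y|XS}$, is exactly the bookkeeping check the paper leaves implicit.
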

\begin{remark}[Comparison with the Results in {\cite[Section~V]{Weissman10}}]
    By removing the covertness constraint $P_Z=Q_0$ and therefore the condition $\bbI(U;Y_2|X_1)\ge\bbI(U;Y_1|X_1)$ in Corollary~\ref{corr:Rewrite_NC}, it recovers the capacity result in \cite[Equation~(57)]{Weissman10}. Moreover, by removing the covertness constraint $P_Z=Q_0$ in Corollary~\ref{corr:Rewrite_C}, it recovers the capacity result in \cite[Equation~(55)]{Weissman10}.
\end{remark}
\subsubsection{Writing on a \texorpdfstring{\ac{BSC}}{BSC} with Causal Access to the First Round of Writing}
\label{ex:BSC_Noiseless}
When the innocent symbol $x_0=0$ and the channel $W_{Y|X}$ is \ac{BSC} with parameter $\epsilon\in\sbra{0}{0.5}$, by applying Corollary~\ref{corr:Rewrite_C}, we obtain the following result. 
\begin{corollary}
\label{cor:BSC_Writing_Causal}
The covert capacity of the \ac{BSC} with parameter $\epsilon\in\sbra{0}{0.5}$ and a noiseless rewrite option, described above, when the input of the rewrite operation in the second round depends causally on the output of the first round, is given by
\begin{align}
\textup{C}_{\mbox{\scriptsize\rm C}} =\dent_b(\epsilon)-\dent_b(\epsilon^2).
\nonumber
\end{align}
\end{corollary}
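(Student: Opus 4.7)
Feeding the innocent symbol $x_0=0$ into a BSC$(\epsilon)$ produces $\text{Bern}(\epsilon)$, so the no-communication output distribution is $Q_0=\text{Bern}(\epsilon)$. My plan is to apply Corollary~\ref{corr:Rewrite_C} and evaluate its maximization. Because the covertness constraint pins $P_{Y_2}=\text{Bern}(\epsilon)$, we have $\H{Y_2}=\dent_b(\epsilon)$, so the problem reduces to showing $\H{Y_2|U}\ge\dent_b(\epsilon^2)$ with equality attainable.

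\textbf{Achievability.} I would take binary $U\sim\text{Bern}(\mu)$, set $X_1=U$ deterministically, and use the causal rewrite rule: leave $X_2=\mathrm{nr}$ when $Y_1=X_1$, and otherwise rewrite with $X_2=U$. The intuition is that the rewrite only overrides $Y_1$ on the event that the first-round noise flipped it, so $Y_2$ disagrees with $U$ exactly on the joint event $\{N_1=1,\,N_2=1\}$, which has probability $\epsilon^2$. Hence the induced channel $U\to Y_2$ is a BSC$(\epsilon^2)$, giving $\H{Y_2|U}=\dent_b(\epsilon^2)$. Tuning $\mu$ to enforce $P(Y_2=1)=\epsilon$ yields $\mu=\epsilon(1-\epsilon)/(1-2\epsilon^2)$, which lies in $(0,1)$ for every $\epsilon\in(0,1/2)$, so the resulting rate is $\dent_b(\epsilon)-\dent_b(\epsilon^2)$.

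\textbf{Converse.} For the upper bound I would fix an arbitrary feasible joint distribution and, for each $u$, parametrize the rewrite kernel $P_{X_2|U=u,Y_1=y_1}$ by nonnegative weights $r_{y_1,0},r_{y_1,1},r_{y_1,\mathrm{nr}}$ summing to $1$. Writing $q_u\triangleq P(Y_2=1|U=u)=(1-\pi)a+\pi b$, where $\pi\triangleq P(Y_1=1|U=u)$ and $a,b$ are the conditional probabilities of $\{Y_2=1\}$ given $Y_1=0$ and $Y_1=1$ respectively, the channel structure forces $\pi\in[\epsilon,1-\epsilon]$ (from $Y_1=X_1\oplus N_1$ with $X_1\perp N_1$), $a\in[0,1-\epsilon]$, and $b\in[\epsilon,1]$, the last two ranges being read off directly from the three rewrite options. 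Minimizing gives $q_u\ge\epsilon\cdot\epsilon=\epsilon^2$, and an analogous maximization gives $q_u\le 1-\epsilon^2$. Since $\epsilon^2\le 1/2$ and $\dent_b$ is increasing on $[0,1/2]$ and symmetric about $1/2$, this implies $\dent_b(q_u)\ge\dent_b(\epsilon^2)$ for every $u$, hence $\H{Y_2|U}\ge\dent_b(\epsilon^2)$ and $\bbI(U;Y_2)\le\dent_b(\epsilon)-\dent_b(\epsilon^2)$.

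The main obstacle is the extremal bound $q_u\in[\epsilon^2,1-\epsilon^2]$, which has to rule out any adaptive rewrite that might exploit the dependence of $Y_1$ on the realized noise $N_1$ to push $q_u$ outside this window; once this is in place, the monotonicity and symmetry of $\dent_b$ together with the covertness-imposed marginal finish the argument.
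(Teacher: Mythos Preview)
Your proposal is correct and follows essentially the same approach as the paper: both apply Corollary~\ref{corr:Rewrite_C}, use the covertness constraint to fix $\bbH(Y_2)=\dent_b(\epsilon)$, bound $\bbH(Y_2|U)\ge\dent_b(\epsilon^2)$ via the range $P(Y_2=1|U=u)\in[\epsilon^2,1-\epsilon^2]$, and use the identical achievability scheme with $X_1=U$, rewrite-if-wrong, and $\mu=\epsilon(1-\epsilon)/(1-2\epsilon^2)$. The only difference is cosmetic: you spell out the extremal bound on $q_u$ explicitly, whereas the paper cites \cite[Section~V-A]{Weissman10} for the fact that $\epsilon^2$ is the minimum attainable value of $P(Y_2=1|U=u)$ over all admissible joint laws.
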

\begin{proof}
We have,
\begin{align}
    \bbI(U;Y_2)&=\bbH(Y_2)-\bbH(Y_2|U)\nonumber\\
    &\mathop=\limits^{(a)}\dent_b(\epsilon)-\bbH(Y_2|U)\nonumber\\
    &\mathop\le\limits^{(b)}\dent_b(\epsilon)-\dent_b\left(\epsilon^2\right),\nonumber
\end{align}where $(a)$ follows from the covertness constraint $P_{Y_2}=Q_0$ since $Q_0(y_2=1)=W_{Y|X=x_0}(y_2=1|x_0=0)=\epsilon$ and $\dent_b(\epsilon)\triangleq-\epsilon\log(\epsilon)-(1-\epsilon)\log(1-\epsilon)$;  
and $(b)$ follows since the lowest value for the probability of $P(Y_2=1|U=u)$, when we search among the set of all the possible joint distributions, is $\epsilon^2$ \cite[Section~V-A]{Weissman10}, therefore, $\epsilon^2$ is still the lowest value for $P(Y_2=1|U=u)$ when we add the covertness constraint $P_{Y_2}=Q_0$ on the feasible set of joint distributions and $\bbH(Y_2|U=u)\ge\dent_b\left(\epsilon^2\right)$. 
Now we show that the covert rate $\dent_b(\epsilon)-\dent_b\left(\epsilon^2\right)$ is also achievable. Let $U$ be a Bernoulli \ac{RV} with parameter $\alpha$, $X_1=U$, and $X_2$ be a function of $U$ and $Y_1$ defined as follows
\begin{align}
    X_2=f(U,Y_1)\triangleq\begin{cases}
\text{no-rewrite},&\text{if}\Squad U=Y_1\\
U,&\text{otherwise}
\end{cases}.
\end{align}Under these definitions one can verify that the upper bound presented above is achievable and therefore $\textup{C}_{\mbox{\scriptsize\rm C}}=\dent_b(\epsilon)-\dent_b\left(\epsilon^2\right)$, where the optimal $\alpha$ is $\alpha=\frac{\epsilon(1-\epsilon)}{1-2\epsilon^2}$.
\end{proof}
\subsubsection{Writing on a \texorpdfstring{\ac{BSC}}{BSC} with Non-Causal Access to the First Round of Writing}
Now, by applying Corollary~\ref{corr:Rewrite_NC}, we provide a lower bound on the covert capacity when the rewrite operation depends non-causally on the output of the first round of writing. 
\begin{corollary}
\label{cor:BSC_Writing_NonCausal}
The covert capacity of the \ac{BSC} with parameter $\epsilon\in\sbra{0}{0.5}$ and a noiseless rewrite option, described above, when the input of the rewrite operation in the second round depends non-causally on the output of the first round, is lower bounded by
\begin{align}
\textup{C}_{\mbox{\scriptsize\rm C}} \ge\max\limits_{0\le\beta\le1}\left[(1-\alpha)\left[\dent_b\left(\epsilon\right)-\dent_b(\epsilon^2)\right]+\alpha(1-\beta\epsilon)\left[\dent_b\left(\frac{(1-\beta)\epsilon}{1-\epsilon\beta}\right)-\dent_b\left(\frac{(1-\beta)\epsilon^2}{1-\epsilon\beta}\right)\right]\right],
\nonumber
\end{align}such that
\begin{align}
    \alpha\left[\dent_b\big(\epsilon^2(1-\beta)+\epsilon\beta\big)-\dent_b(\epsilon)\right]+\alpha(1-\beta\epsilon)\left[\dent_b\left(\frac{(1-\beta)\epsilon^2}{1-\epsilon\beta}\right)+\dent_b\left(\frac{(1-\beta)\epsilon}{1-\epsilon\beta}\right)\right]\ge0,\label{eq:Condition_NL_Achie_Writing}
\end{align}and $\alpha=\frac{\epsilon-\epsilon^2}{1-2\epsilon^2-\beta(\epsilon-\epsilon^2)}$.
\end{corollary}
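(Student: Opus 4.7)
The argument will be a direct application of Corollary~\ref{corr:Rewrite_NC} to the BSC$(\epsilon)$ with innocent symbol $x_0=0$, for a carefully chosen joint distribution $P_{X_1Y_1UX_2Y_2}\in\calG_{\text{NL-NC}}$ parameterized by $\alpha,\beta\in[0,1]$. The two-term structure of the bound, together with the fact that setting $\beta=0$ reduces the expression to the causal capacity $\dent_b(\epsilon)-\dent_b(\epsilon^2)$ obtained in Corollary~\ref{cor:BSC_Writing_Causal}, strongly suggests a mixed scheme in which the encoder behaves in one of two modes depending on the value of the first channel input $X_1$: a ``base'' mode that mimics the optimal causal rewrite rule, and an ``enhanced'' mode that exploits the non-causal observation of $Y_1$ via the extra parameter $\beta$.

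First I would fix $X_1\sim\text{Bern}(\alpha)$ and specify the conditional distribution $P_{UX_2\mid X_1Y_1}$ as follows. When $X_1=0$, pick $U$ and the rewrite symbol $X_2$ exactly as in the proof of Corollary~\ref{cor:BSC_Writing_Causal}, namely via a function $X_2=f(U,Y_1)$ equal to \textnormal{no-rewrite} when $U=Y_1$ and to $U$ otherwise. When $X_1=1$, let the encoder choose \textnormal{no-rewrite} with probability $\beta$ (conditioned on $Y_1$ in a way that still carries message information through $U$), and otherwise use a rewrite rule that leverages the non-causal knowledge of $Y_1$. The parameter $\beta$ interpolates between the causal scheme ($\beta=0$) and a more aggressive non-causal scheme ($\beta=1$).

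Next, I would enforce the covertness constraint $P_{Y_2}(1)=\epsilon$, which under the above construction reduces to a single linear equation in $\alpha$ for each fixed $\beta$. Solving gives $\alpha=\tfrac{\epsilon(1-\epsilon)}{1-2\epsilon^2-\beta\epsilon(1-\epsilon)}$, matching the expression stated in the corollary and recovering the optimal causal value $\alpha=\epsilon(1-\epsilon)/(1-2\epsilon^2)$ at $\beta=0$. I would then compute the Gel'fand--Pinsker rate $R=\bbI(X_1,U;Y_2)-\bbI(U;Y_1\mid X_1)$ by conditioning on $X_1$. The contribution from $X_1=0$ equals $\dent_b(\epsilon)-\dent_b(\epsilon^2)$ exactly as in the causal case. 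The contribution from $X_1=1$ decomposes into a ``no-rewrite'' event of conditional probability $\beta\epsilon$ (which carries no message information) and its complement of probability $1-\beta\epsilon$, on which the conditional probability that $Y_2=1$ is $\tfrac{(1-\beta)\epsilon}{1-\beta\epsilon}$ unconditionally and $\tfrac{(1-\beta)\epsilon^2}{1-\beta\epsilon}$ when conditioned on the correct branch of $U$. Weighting by $1-\alpha$ and $\alpha$ reproduces the stated expression.

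Finally, I would verify that the feasibility constraint $\bbI(U;Y_2\mid X_1)\ge\bbI(U;Y_1\mid X_1)$ from \eqref{eq:D_Noiseless_rewrite_NC} expands, under this same construction, into precisely \eqref{eq:Condition_NL_Achie_Writing}; then taking the supremum over $\beta\in[0,1]$ yields the desired bound. The main obstacle is the bookkeeping of mutual informations under the mixture induced by the two modes. In particular, one must carefully handle the joint distribution of $(U,Y_1,Y_2)$ conditional on $X_1=1$ in the presence of the \textnormal{no-rewrite} option, confirm that the marginal parameters collapse to the clean forms $\tfrac{(1-\beta)\epsilon}{1-\beta\epsilon}$ and $\tfrac{(1-\beta)\epsilon^2}{1-\beta\epsilon}$, and check that the Gel'fand--Pinsker penalty $\bbI(U;Y_1\mid X_1)$ contributes only to the $X_1=1$ branch in a way that produces \eqref{eq:Condition_NL_Achie_Writing} rather than a strictly stronger inequality. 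The algebra is largely mechanical once the correct conditional kernel $P_{UX_2\mid X_1Y_1}$ is identified; selecting that kernel is the single creative step.
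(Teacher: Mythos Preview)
Your overall approach—specializing Corollary~\ref{corr:Rewrite_NC} to a joint distribution $P_{X_1Y_1UX_2Y_2}$ parameterized by $(\alpha,\beta)$, enforcing $P_{Y_2}(1)=\epsilon$ to solve for $\alpha$, and then reading off the Gel'fand--Pinsker rate and the feasibility inequality—is exactly the paper's. But the step you yourself flag as the single creative one is also the one you leave unspecified: your description of the $X_1=1$ branch (``choose no-rewrite with probability $\beta$'') does not pin down the kernel $P_{UX_2\mid X_1Y_1}$ and is inconsistent with your own later claim that the relevant event has conditional probability $\beta\epsilon$.

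The paper's construction is fully explicit: set $U=0$ whenever $Y_1=X_1$ or $X_1=0$, and $U\sim\Bernoulli(\beta)$ only on the event $\{X_1=1,\,Y_1=0\}$; then $X_2=\text{no-rewrite}$ if $Y_1=X_1$ or $U=1$, and $X_2=X_1$ otherwise. The dependence of $U$ on $Y_1$ is precisely the non-causal ingredient (in the causal scheme of Corollary~\ref{cor:BSC_Writing_Causal} one has $U=X_1$, oblivious to $Y_1$). With this choice, $P(U=1\mid X_1=1)=\beta\epsilon$; on $\{X_1=1,U=1\}$ one has $Y_2=Y_1=0$ deterministically, so this event contributes zero to both $\bbH(Y_2\mid X_1,U)$ and $\bbH(Y_1\mid X_1,U)$; and on $\{X_1=1,U=0\}$ one gets $P(Y_1=0\mid\cdot)=\tfrac{(1-\beta)\epsilon}{1-\beta\epsilon}$ and $P(Y_2=0\mid\cdot)=\tfrac{(1-\beta)\epsilon^2}{1-\beta\epsilon}$. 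These are the parameters you quote (though you attribute the first of them to $Y_2$ rather than to $Y_1$). Once this $U$ is fixed, the covertness equation, the rate expression, and the feasibility constraint \eqref{eq:Condition_NL_Achie_Writing} follow by direct entropy computations, and your outline becomes a complete proof identical to the paper's.
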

\begin{proof}
Consider the joint distribution in Corollary~\ref{corr:Rewrite_NC} with the channel input of the first round $X_1$ as a Bernoulli \ac{RV} with parameter $\alpha$, the \ac{RV} $U$ as a deterministic function of the channel input and channel output of the first round of writing as
\begin{align}
U&=\begin{cases}
0,&\text{if}\Squad Y_1=X_1\Squad\text{or}\Squad{X_1=0}\\
\text{Bernoulli}(\beta),&\text{otherwise}
\end{cases},
\end{align}
where the condition $Y_1=X_1$ accounts for reliability and the condition $X_1=0$ accounts for covertness. Also, let the channel input of the second round of writing be a deterministic function of $U,X_1$ and $Y_1$ as
\begin{align}
    X_2&=f(U,Y_1,X_1)
    =\begin{cases}
\text{no-rewrite},&\text{if}\Squad Y_1=X_1\Squad\text{or}\Squad{U=1}\\
X_1,&\text{otherwise}
\end{cases},
\end{align}and $Y_2$ be the channel output of the second round of writing. Under these definitions, one can compute
\begin{subequations}
\begin{align}
    P_{Y_2}(y_2=1)&=\epsilon^2-2\alpha\epsilon^2-\alpha\beta\epsilon+\alpha\beta\epsilon^2+\alpha,\label{eq:Y2_Probability}\\
    \bbH(Y_1|X_1)&=\dent_b(\epsilon),\label{eq:Ent_Y1_given_X1}\\
    \bbH(Y_1|U,X_1)&=(1-\alpha)\dent_b(\epsilon)+\alpha(1-\beta\epsilon)\dent_b\left(\frac{(1-\beta)\epsilon}{1-\epsilon\beta}\right),\label{eq:Ent_Y1_given_UX1}\\
    \bbH(Y_2|U,X_1)&=(1-\alpha)\dent_b(\epsilon^2)+\alpha(1-\beta\epsilon)\dent_b\left(\frac{(1-\beta)\epsilon^2}{1-\epsilon\beta}\right),\label{eq:Ent_Y2_given_UX1}\\
    \bbH(Y_2|X_1)&=(1-\alpha)\dent_b(\epsilon^2)+\alpha\dent_b\left(\epsilon^2(1-\beta)+\epsilon\beta\right).\label{eq:Ent_Y2_given_X1}
\end{align}
\end{subequations}Therefore,
\begin{subequations}
\begin{align}
    \bbI(U;Y_1|X_1)&=\bbH(Y_1|X_1)-\bbH(Y_1|U,X_1)\nonumber\\
    &=\dent_b(\epsilon)-(1-\alpha)\dent_b(\epsilon)-\alpha(1-\beta\epsilon)\dent_b\left(\frac{(1-\beta)\epsilon}{1-\epsilon\beta}\right)\nonumber\\
    &=\alpha\dent_b(\epsilon)-\alpha(1-\beta\epsilon)\dent_b\left(\frac{(1-\beta)\epsilon}{1-\epsilon\beta}\right),\label{eq:IUtoY1givenX1}\\
    \bbI(U;Y_2|X_1)&=\bbH(Y_2|X_1)-\bbH(Y_2|X_1,U)\nonumber\\
    &=(1-\alpha)\dent_b(\epsilon^2)+\alpha\dent_b(\epsilon^2(1-\beta)+\epsilon\beta)-(1-\alpha)\dent_b(\epsilon^2)+\alpha(1-\beta\epsilon)\dent_b\left(\frac{(1-\beta)\epsilon^2}{1-\epsilon\beta}\right)\nonumber\\
    &=\alpha\dent_b\big(\epsilon^2(1-\beta)+\epsilon\beta\big)+\alpha(1-\beta\epsilon)\dent_b\left(\frac{(1-\beta)\epsilon^2}{1-\epsilon\beta}\right),\label{eq:IUtoY2givenX1}\\
   \bbI(U,X_1;Y_2)&=\bbH(Y_2)-\bbH(Y_2|U,X_1)\nonumber\\
    &=\dent_b\left(\epsilon^2-2\alpha\epsilon^2-\alpha\beta\epsilon+\alpha\beta\epsilon^2+\alpha\right)-(1-\alpha)\dent_b(\epsilon^2)-\alpha(1-\beta\epsilon)\dent_b\left(\frac{(1-\beta)\epsilon^2}{1-\epsilon\beta}\right)\nonumber\\
    &\mathop=\limits^{(a)}\dent_b\left(\epsilon\right)-(1-\alpha)\dent_b(\epsilon^2)-\alpha(1-\beta\epsilon)\dent_b\left(\frac{(1-\beta)\epsilon^2}{1-\epsilon\beta}\right),\label{eq:IUX1toY2}
\end{align}where $(a)$ follows from the covertness constraint $P_Z=Q_0$ and
\begin{align}
    &Q_0(y_2=1)=W_{Y|X=x_0}(y_2=1|0)=\epsilon,\nonumber\\
    &\Rightarrow \epsilon^2-2\alpha\epsilon^2-\alpha\beta\epsilon+\alpha\beta\epsilon^2+\alpha=\epsilon,\nonumber\\
    &\Rightarrow \alpha=\frac{\epsilon-\epsilon^2}{1-2\epsilon^2-\beta(\epsilon-\epsilon^2)}.
    \label{eq:Cov_Cons_RoM_NC}
\end{align}
\end{subequations}Therefore from \eqref{eq:IUtoY1givenX1} and \eqref{eq:IUX1toY2},
\begin{align}
    \bbI(U,X_1;Y_2)-\bbI(U;Y_1|X_1)
    &=(1-\alpha)\left[\dent_b\left(\epsilon\right)-\dent_b(\epsilon^2)\right]\nonumber\\
    &\qquad+\alpha(1-\beta\epsilon)\left[\dent_b\left(\frac{(1-\beta)\epsilon}{1-\epsilon\beta}\right)-\dent_b\left(\frac{(1-\beta)\epsilon^2}{1-\epsilon\beta}\right)\right].\nonumber
\end{align}
Also the condition $\bbI(U;Y_2|X_1)\ge\bbI(U;Y_1|X_1)$ in \eqref{eq:D_Noiseless_rewrite_NC} is,
\begin{align}
\Rightarrow
&\alpha\dent_b\big(\epsilon^2(1-\beta)+\epsilon\beta\big)+\alpha(1-\beta\epsilon)\dent_b\left(\frac{(1-\beta)\epsilon^2}{1-\epsilon\beta}\right)\ge\alpha\dent_b(\epsilon)-\alpha(1-\beta\epsilon)\dent_b\left(\frac{(1-\beta)\epsilon}{1-\epsilon\beta}\right).\nonumber
\end{align}
\end{proof}
\begin{remark}[Non-Causal Access to the First Round of Writing May Increase the Covert Rate]
A straightforward calculation indicates that when $\epsilon=\frac{1}{2}$ then we have $\alpha=\frac{1}{2-\beta}$ and Corollary~\ref{cor:BSC_Writing_NonCausal} leads to $\textup{C}_{\mbox{\scriptsize\rm NC}}\geq0.200094$, where the maximum is achieved at $\beta=0.36064$, as compared to the covert capacity under the causality constraint given by Corollary~\ref{cor:BSC_Writing_Causal} is $\textup{C}_{\mbox{\scriptsize\rm C}}=0.188722$. Therefore, when the \ac{BSC} parameter is $\frac{1}{2}$, the non-causality constraint increases the covert capacity by at least 5 percent. 
\end{remark}
\subsection{Noisy Feedback}
\label{subsec:noisy_Feedback}
We now consider a more realistic scenario, where a noisy version of the output of the first round of writing is available for the rewrite operation. Consider a \ac{DMC} characterized by $\big(\calX,W_{Y|X},\calY\big)$, where $\calX$ and $\calY$ are the channel input and output alphabets, respectively, and $W_{Y|X}$ is the channel law, we refer to $W_{Y|X}$ as the forward channel. Also, consider a \ac{DMC} characterized by $\big(\calY,Q_{T|Y},\calT\big)$, where $\calT$ is the corrupted version of $\calY$, through the \ac{DMC} $Q_{T|Y}$ as the channel law, we refer to $Q_{T|Y}$ as the backward channel. The channel output of the first round of transmitting over the forward channel $W_{Y|X}$ is observed through the backward channel $Q_{T|Y}$ by the rewrite encoder. After using the forward channel once and observing the output through the backward channel, the transmitter tries to use the channel $W_{Y|X}$ one more time and rewrite on each location that it chooses such that the final output of the channel is indistinguishable from the situation in which the transmitter has transmitted the innocent symbol $x_0\in\calX$ over the channel. Therefore, when an illegitimate user observes the final channel output, it cannot decide whether the channel output is the result of writing information over the channel, but the legitimate user can decode the message by accessing a secret key of a negligible rate that it shares with the transmitter.

This problem is also a special case of the general setup studied in Section~\ref{sec:Main_Results} if we consider,
\begin{itemize}
\item the action $A$ as the first channel input, which is denoted by $X_1$ in this section and takes values in $\calX$, $X_1$ passes through the forward channel $W_{Y|X}$ and the output is denoted by $Y_1$ and takes values in the alphabet of the output of the channel $W_{Y|X}$, i.e., $\calY$; 

\item the \ac{ADSI} $S$ as the corrupted version of the first channel output $Y_1$, which is denoted by $T$ and takes values in the alphabet of the output of the channel $Q_{T|Y}$, i.e., $\calT$, therefore $X_1-Y_1-T$ forms a Markov chain; 

\item the channel input $X$ as the channel input of the second round, which is denoted by $X_2$ in this section and takes values in $\calX_2=\{\text{no-rewrite}\}\cup\calX$;

\item the channel output $Y=Z$ as the final channel output, which is denoted by $Y_2$ in this section, and it also takes values in $\calY$, therefore, the conditional distribution $Q_{S|A}$ in the original problem setup is $P_{T|X}(t|x_1)\triangleq\sum_{y_1}W_{Y|X}(y_1|x_1)Q_{T|Y}(t|y_1)$ and the conditional distribution $W_{Y|ASX}$ is considered to be $P_{Y_2|X_1TX_2}$, which is defined as
\begin{subequations}\label{eq:NL_Rewrite_Noisy}
\begin{align}
    P_{Y_2|X_1TX_2}(y_2|x_1,t,x_2)&\triangleq\begin{cases}
P_{Y|XT}(y_2|x_1,t),&\text{if}\Squad x_2=\text{no-rewrite}\\
W_{Y|X}(y_2|x_2),&\text{otherwise}
\end{cases},
\end{align}where 
\begin{align}
    P_{Y|XT}(y|x,t)\triangleq\frac{W_{Y|X}(y|x)Q_{T|Y}(t|y)}{\sum\limits_{\tilde{y}}W_{Y|X}(\tilde{y}|x)Q_{T|Y}(t|\tilde{y})}.
\end{align}
\end{subequations}
\item The innocent symbols $x_{1,0}=x_{2,0}=x_0\in\calX$, for the no-communication mode. Note that, similar to the noiseless feedback, for the no-communication mode, the channel input $X_2$ is always equal to $x_0$ regardless of the channel output $Y_1$, therefore, the channel output $Y_2$ can be considered as the result of one round of transmitting $x_0$;
\end{itemize}

For the non-causal case, applying Corollary~\ref{cor:Ach_NC_Simple}, with the above framework, leads to the following result.
\begin{corollary}
\label{corr:Rewrite_Noisy_NC}
Let
\begin{subequations}
\begin{align}
  \calF_{\text{N-NC}} = \left.\begin{cases}R\geq 0: \exists P_{X_1Y_1TUX_2Y_2}\in\calG_{\text{N-NC}}:\\
  R< \bbI(X_1,U;Y_2)-\bbI(U;T|X_1)\\
\end{cases}\right\},
\end{align}
where
\begin{align}\label{eq:D_Noisey_rewrite_NC}
  \calG_{\text{N-NC}}\triangleq \left.\begin{cases}P_{X_1Y_1TUX_2Y_2}:\\
P_{X_1Y_1TUX_2Y_2}(x_1,y_1,t,u,x_2,y_2)=P_{X_1}(x_1)W_{Y|X}(y_1|x_1)P_{T|Y}(t|y_1)\times\\
P_{U|X_1T}(u|x_1,t)P_{X_2|UTX_1}(x_2|u,t,x_1)P_{Y_2|X_1TX_2}(y_2|y_1,x_2)\\
\bbI(U;Y_2|X_1)\ge\bbI(U;T|X_1)\\
P_{Y_2}=Q_0\\
\card{\calU}\leq\card{\calX}\left(\card{\calX}+1\right)\card{\calZ}+1
\end{cases}\right\},
\end{align}and $P_{X_2|UTX_1}$ is defined in \eqref{eq:NL_Rewrite_Noisy}. The covert capacity of the channels with a noisy rewrite option, when the input of the rewrite operation in the second round may depend non-causally on the output of the first round, is lower-bounded as
\begin{align}
\textup{C}_{\mbox{\scriptsize\rm NC}} \ge\mbox{\rm sup}\{R:R\in\calF_{\text{N-NC}}\}.
\label{eq:Rewrite_Noisy_NC}
\end{align}
\end{subequations}
\end{corollary}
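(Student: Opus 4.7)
The plan is to derive Corollary~\ref{corr:Rewrite_Noisy_NC} as a direct specialization of Corollary~\ref{cor:Ach_NC_Simple} under the correspondence laid out in the bulleted identification just above the statement. Concretely, I would set $A \leftrightarrow X_1$, $S \leftrightarrow T$, $X \leftrightarrow X_2$, and $Y = Z \leftrightarrow Y_2$; realize the ADSI law via the cascade $Q_{S|A}(\cdot|x_1) = \sum_{y_1} W_{Y|X}(y_1|x_1) Q_{T|Y}(\cdot|y_1)$; and realize the forward channel law as $P_{Y_2|X_1 T X_2}$ given in \eqref{eq:NL_Rewrite_Noisy}. Since the resulting channel has the form $W_{Y|XSA}$ rather than $W_{Y|XS}$, I would first invoke the extension described in Section~\ref{sec:More_General_Channels}, under which Corollary~\ref{cor:Ach_NC_Simple} still applies with $X_2$ generated by $P_{X_2|U T X_1}$ in place of $P_{X|US}$.

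Next, I would use Remark~\ref{rem:Optimal_Dists} to absorb the action into the auxiliary, writing $\tilde{U} \triangleq (X_1, U)$ in Corollary~\ref{cor:Ach_NC_Simple}. The rate bound $\bbI(\tilde{U};Y) - \bbI(\tilde{U}; S|A)$ then simplifies: because $\bbI((X_1, U); T|X_1) = \bbI(U; T|X_1)$, the achievable rate becomes $\bbI(X_1, U; Y_2) - \bbI(U; T|X_1)$, matching the statement. For the covertness-feasibility conditions, the first constraint $\bbI(U;Y) \geq \bbI(U;Z)$ is trivial since $Y = Z = Y_2$. The second constraint $\bbI(\tilde{U};Y) \geq \bbI(A;Z) + \bbI(\tilde{U}; S|A)$ expands as $\bbI(X_1, U; Y_2) \geq \bbI(X_1; Y_2) + \bbI(U; T|X_1)$, which by the chain rule $\bbI(X_1, U; Y_2) = \bbI(X_1; Y_2) + \bbI(U; Y_2|X_1)$ collapses to $\bbI(U; Y_2|X_1) \geq \bbI(U; T|X_1)$, exactly the constraint listed in \eqref{eq:D_Noisey_rewrite_NC}. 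The joint distribution factorization in \eqref{eq:D_Noisey_rewrite_NC} and the cardinality bound then follow from those in Corollary~\ref{cor:Ach_NC_Simple} after the substitutions, with the Markov chain $X_1 - Y_1 - T$ baked into the cascade definition of $P_{T|X_1}$.

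Finally, I would verify that the covertness target is correctly identified. In the general framework, $Q_0(\cdot) = \sum_s Q_{S|A}(s|a_0) W_{Z|XS}(\cdot|x_0, s)$. Under our mapping with innocent symbols $x_{1,0} = x_{2,0} = x_0$ and the rule that $X_2 = \text{no-rewrite}$ in innocent mode triggers $Y_2 = Y_1$, a short computation using the definition of $P_{Y|XT}$ in \eqref{eq:NL_Rewrite_Noisy} shows that the induced $Q_0$ coincides with the marginal of $Y$ under a single use of $W_{Y|X}$ at input $x_0$. So the covertness constraint $P_{Y_2} = Q_0$ in the corollary is the faithful translation of $P_Z = Q_0$ from Corollary~\ref{cor:Ach_NC_Simple}.

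The main delicate step is the third one: confirming that $P_{Y_2|X_1 T X_2}$ as defined via Bayes' rule in \eqref{eq:NL_Rewrite_Noisy} is the correct specialization of $W_{YZ|XSA}$, and in particular that the ``no-rewrite'' branch $P_{Y|XT}(y|x_1, t)$ correctly represents the conditional law of the genuine first-round output $Y_1$ given the pair $(X_1, T)$. This is needed both for the achievability to transport (since the coding scheme must output samples from this channel) and for the covertness target $Q_0$ to coincide with the marginal induced by the innocent mode. Once this representation is justified, the rest of the proof is a mechanical rewriting of the expressions in Corollary~\ref{cor:Ach_NC_Simple} under the $Y = Z$ identification.
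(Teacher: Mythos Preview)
Your proposal is correct and follows exactly the paper's approach: the paper's entire argument is the one-line statement ``applying Corollary~\ref{cor:Ach_NC_Simple}, with the above framework, leads to the following result,'' and you have simply unpacked that application in detail, including the use of Section~\ref{sec:More_General_Channels} to handle the $W_{Y|XSA}$ dependence and Remark~\ref{rem:Optimal_Dists} to absorb $X_1$ into $U$. Your verification that the two feasibility constraints collapse (the first trivially from $Y=Z$, the second to $\bbI(U;Y_2|X_1)\ge\bbI(U;T|X_1)$ via the chain rule) and your check of the covertness target are exactly the routine calculations the paper leaves implicit.
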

Also, for the causal case, applying Corollary~\ref{cor:Y=Z_C}, with the above framework, leads to the following result.
\begin{corollary}
\label{corr:Rewrite_Noisy_C}
Let

\begin{subequations}
\begin{align}
  \calF_{\text{N-C}}= \left.\begin{cases}R\geq 0: \exists P_{UX_1Y_1TX_2Y_2}\in\calG_{\text{N-C}}:\\
  R\le \bbI(U;Y_2)\\
\end{cases}\right\},
\end{align}
where
\begin{align}\label{eq:Binary_Noisy_Rewrite_Example}
  \calG_{\text{N-C}}\triangleq \left.\begin{cases}P_{UX_1Y_1TX_2Y_2}:\\
P_{UX_1Y_1TX_2Y_2}(u,x_1,y_1,t,x_2,y_2)=P_U(u)P_{X_1|U}(x_1|u)\times\\
W_{Y|X}(y_1|x_1)Q_{T|Y}(t|y_1)P_{X_2|UT}(x_2|u,t)P_{Y_2|X_1TX_2}(y_2|x_1,t,x_2)\\
P_{Y_2}=Q_0\\
\card{\calU}\leq\card{\calY}
\end{cases}\right\},
\end{align}and $P_{Y_2|X_1TX_2}$ is defined in \eqref{eq:NL_Rewrite_Noisy}. 
The covert capacity of the channels with a noisy rewrite option, when the input of the rewrite operation in the second round depends causally on the output of the first round, is
\begin{align}
\textup{C}_{\mbox{\scriptsize\rm C}} =\mbox{\rm max}\{R:R\in\calF_{\text{N-C}}\}.
\label{eq:Rewrite_Noisy_C}
\end{align}
\end{subequations}
\end{corollary}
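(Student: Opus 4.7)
The plan is to recognize Corollary~\ref{corr:Rewrite_Noisy_C} as a specialization of the causal-\ac{ADSI} capacity result in Corollary~\ref{cor:Y=Z_C} under the dictionary $A \leftrightarrow X_1$, $S \leftrightarrow T$ with $Q_{S|A}(t|x_1) = \sum_{y_1} W_{Y|X}(y_1|x_1)\, Q_{T|Y}(t|y_1)$, $X \leftrightarrow X_2$, $Y = Z \leftrightarrow Y_2$, and innocent symbols $a_0 = x_0 = x_{1,0} = x_{2,0}$, so that $Q_0(\cdot) = W_{Y|X}(\cdot|x_0)$ represents the distribution of a clean memory cell. The Markov chain $X_1 - Y_1 - T$ is enforced by the definition of the backward channel $Q_{T|Y}$, exactly matching the causal-\ac{ADSI} structure, and the fact that the warden and the receiver both observe the final memory value justifies invoking the $Y=Z$ corollary.

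Because the effective transition $P_{Y_2|X_1 T X_2}$ defined in \eqref{eq:NL_Rewrite_Noisy} depends on the action $X_1$ as well as on $(X_2, T)$, I would first invoke the extension from Section~\ref{sec:More_General_Channels}: introduce an augmented state $\tilde{S} \triangleq (X_1, T)$, under which the channel reads $W_{Y_2|X_2 \tilde{S}}$ and fits the form required by Corollary~\ref{cor:Y=Z_C}. The joint distribution $P_A P_{U|A} Q_{\tilde{S}|A} P_{X|U\tilde{S}} W_{Y|X\tilde{S}}$ of Corollary~\ref{cor:Y=Z_C} then specializes---after re-factoring $P_{X_1} P_{U|X_1}$ as $P_U P_{X_1|U}$ and expanding $Q_{T|X_1}$ through the forward-backward cascade $W_{Y|X} Q_{T|Y}$---to exactly the factorization given in $\calG_{\text{N-C}}$. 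Achievability then follows from Corollary~\ref{cor:Ach_C_Simple} in its $W_{YZ|XSA}$-generalized form, where the constraint $\bbI(U;Y) \ge \bbI(U;Z)$ holds trivially with equality because $Y = Z$. The converse follows from Theorem~\ref{thm:Converse_C} under the same augmented-state reduction: with $Y = Z$ the constraint $\bbI(U;Y) \ge \bbI(V;Z)$ is made automatic by taking $V = U$, so the rate bound collapses to $R \le \bbI(U;Y_2)$. The cardinality bound $|\calU| \le |\calY|$ is inherited verbatim.

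The main obstacle is reconciling the factor $P_{X_2|U\tilde{S}} = P_{X_2|U X_1 T}$ produced by the augmented-state reduction with the more restricted form $P_{X_2|UT}$ displayed in $\calG_{\text{N-C}}$. For achievability this is harmless, since every $P_{X_2|UT}$ is a valid choice of $P_{X_2|UX_1T}$. For the converse, one must verify that the supremum over the a priori larger class $P_{X_2|UX_1T}$ is attained within the subclass $P_{X_2|UT}$; this follows because, conditioned on $(U,T)$, both the objective $\bbI(U;Y_2)$ and the covertness constraint $P_{Y_2} = Q_0$ depend on $P_{X_2|UX_1T}$ only through its $(U,T)$-marginal obtained by averaging over $X_1$, a standard smoothing step that parallels the cardinality-reduction arguments used elsewhere in the paper. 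Once this reduction is in place, the remaining covertness, reliability, block-Markov chaining, and cardinality verifications are inherited without modification from the proof of Corollary~\ref{cor:Y=Z_C}, establishing both directions and yielding the claimed equality.
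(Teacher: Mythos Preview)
Your overall plan matches the paper's proof: the corollary is obtained by instantiating Corollary~\ref{cor:Y=Z_C} under the noisy-feedback dictionary ($A\leftrightarrow X_1$, $S\leftrightarrow T$, $X\leftrightarrow X_2$, $Y=Z\leftrightarrow Y_2$), invoking the general-channel extension of Section~\ref{sec:More_General_Channels} since $P_{Y_2|X_1TX_2}$ depends on the action $X_1$. That part is correct and is exactly what the paper does.

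There is, however, a genuine gap in your reduction of $P_{X_2|UX_1T}$ to $P_{X_2|UT}$ on the converse side. Your claim that $\bbI(U;Y_2)$ and $P_{Y_2}=Q_0$ depend on $P_{X_2|UX_1T}$ only through its $(U,T)$-average is false: because the channel $P_{Y_2|X_1TX_2}$ depends on $X_1$, one has
\[
P(Y_2=y_2\mid U=u)=\sum_{x_1,t}P(x_1\mid u)P(t\mid x_1)\sum_{x_2}P(x_2\mid u,x_1,t)\,P(y_2\mid x_1,t,x_2),
\]
and replacing $P(x_2\mid u,x_1,t)$ by $\sum_{x_1'}P(x_1'\mid u,t)P(x_2\mid u,x_1',t)$ does \emph{not} leave this expression invariant, since the inner factor $P(y_2\mid x_1,t,x_2)$ still carries $x_1$. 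So the smoothing you describe changes both $H(Y_2\mid U)$ and $P_{Y_2}$ in general.

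The correct reduction is structural rather than averaging. In the causal converse (Appendix~\ref{proof:thm:Converse_C}, and Remark~\ref{remark:Causal_General_Converse} for the general-channel form), the auxiliary is $U_t=(M,K,A_t,S^{t-1})$; since $A_t$ is a function of $(M,K)$, the action $A=X_1$ is already a \emph{deterministic function of $U$}. Hence for the single-letter distribution produced by the converse, $P_{X_2|UX_1T}=P_{X_2|UT}$ automatically, and the resulting joint is a special case of the factorization $P_U P_{X_1|U}\cdots P_{X_2|UT}$ displayed in $\calG_{\text{N-C}}$. No averaging is needed. With this fix in place, the rest of your argument (achievability from Corollary~\ref{cor:Ach_C_Simple}, trivial satisfaction of $\bbI(U;Y)\ge\bbI(U;Z)$ when $Y=Z$, and the inherited cardinality bound) goes through and coincides with the paper's proof.
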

\begin{remark}[Comparison with the Results in {\cite[Section~V]{Weissman10}}]
    By removing the covertness constraint $P_Z=Q_0$ and the inequality conditions in Corollary~\ref{corr:Rewrite_Noisy_NC} and Corollary~\ref{corr:Rewrite_Noisy_C}, they recover the corresponding results in \cite[Section~V.B]{Weissman10}.
\end{remark}
\subsubsection{Writing on a \texorpdfstring{\ac{BSC}}{BSC} with Causal Access to the First Round of Writing}
\label{ex:BSC_Noisy}
Let the innocent symbol $x_0=0$, the forward channel $W_{Y|X}$ be a \ac{BSC} with parameter $\epsilon\in\sbra{0}{0.5}$, and the backward channel $Q_{Y|X}$ be a \ac{BSC} with parameter $\delta\in\sbra{0}{0.5}$. 
\begin{figure}
\centering
\includegraphics[width=9cm]{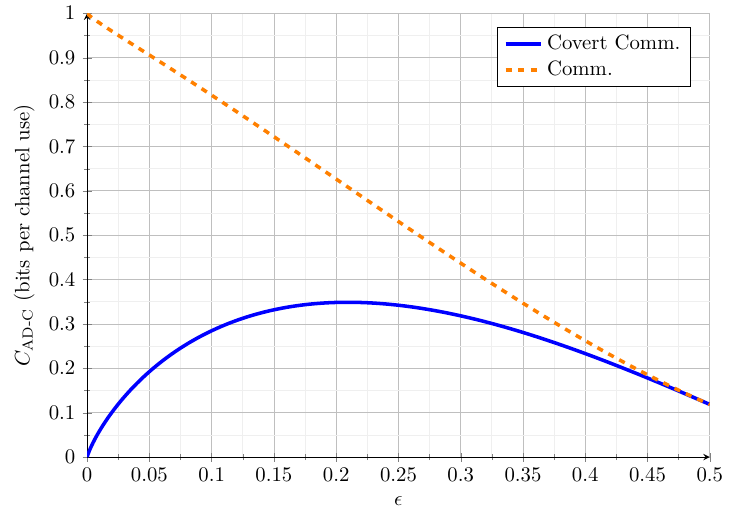}
\caption{Capacity and covert capacity with rewrite option, when a noisy version of the output of the first round of writing is available causally for the second round. Depicted for $\delta=0.1$ as a function of $\epsilon$.}
\label{fig:Causal_epsilon}
\end{figure}
\begin{corollary}
\label{cor:BSC_Writing_Causal_Noisy}
The covert capacity of the \ac{BSC} with a noisy rewrite option, described above, when the input of the rewrite operation in the second round depends causally on the noisy version of the output of the first round, is
\begin{align}
\textup{C}_{\mbox{\scriptsize\rm C}} =\dent_b(\epsilon)-\dent_b\left(\epsilon\delta(2-\epsilon)+\epsilon^2(1-\delta)\right).
\nonumber
\end{align}
\end{corollary}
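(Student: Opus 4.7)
The plan is to apply Corollary~\ref{corr:Rewrite_Noisy_C}, which reduces the covert capacity to $\max\bbI(U;Y_2)$ over admissible joint distributions subject to $P_{Y_2}=Q_0$. For the forward BSC with innocent symbol $x_0=0$, $Q_0$ is Bernoulli$(\epsilon)$, so the covertness constraint forces $\bbP(Y_2=1)=\epsilon$ and $\bbH(Y_2)=\dent_b(\epsilon)$. It therefore suffices to characterize $\min\bbH(Y_2\mid U)$ over all admissible distributions; abbreviate $p_0\triangleq\epsilon\delta(2-\epsilon)+\epsilon^2(1-\delta)$, so the target identity becomes $\min\bbH(Y_2\mid U)=\dent_b(p_0)$.

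For the converse, I would first establish the pointwise inclusion $\bbP(Y_2=1\mid U=u)\in[p_0,1-p_0]$ for every admissible distribution and every $u$. After conditioning on both $U=u$ and $X_1=x_1$, the encoder's residual freedom is the (possibly randomized) rewrite rule $P_{X_2\mid UT}(\cdot\mid u,\cdot)$, so minimizing $\bbP(Y_2=1\mid U=u, X_1=x_1)$ reduces to a standard Weissman-type noisy-feedback rewrite problem on BSC$(\epsilon)$. A direct evaluation of the optimal deterministic policy for $x_1=0$, namely ``no-rewrite when $T=0$ and rewrite with $X_2=0$ when $T=1$,'' yields exactly $p_0$, while a short calculation shows the analogous minimum for $x_1=1$ exceeds $p_0$ by a slack $(1-2\epsilon)[\delta(1-2\epsilon)+\epsilon]\geq 0$. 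Averaging over $P_{X_1\mid U}(\cdot\mid u)$ therefore preserves the lower bound $p_0$, and the label-flip symmetry of the BSC gives the upper bound $1-p_0$ by the same argument applied to $\bbP(Y_2=0\mid U=u)$. Since $\dent_b$ is concave on $[0,1]$ and symmetric about $1/2$, its restriction to $[p_0,1-p_0]$ attains its minimum $\dent_b(p_0)$ at both endpoints, so pointwise $\dent_b(\bbP(Y_2=1\mid U=u))\geq\dent_b(p_0)$, yielding $\bbH(Y_2\mid U)\geq\dent_b(p_0)$ and hence $\bbI(U;Y_2)\leq\dent_b(\epsilon)-\dent_b(p_0)$.

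For achievability, I would instantiate the optimization with a binary $U\sim\text{Bern}(\alpha)$, set $X_1=U$, and define $X_2$ as the deterministic function of $(U,T)$ that symmetrizes the converse-optimal policy: no-rewrite when $T=U$, and rewrite with $X_2=U$ otherwise. A direct computation gives $\bbP(Y_2=1\mid U=0)=p_0$, and by the $0\leftrightarrow 1$ symmetry of the construction, $\bbP(Y_2=1\mid U=1)=1-p_0$, so $\bbH(Y_2\mid U)=\dent_b(p_0)$ independently of $\alpha$. Choosing $\alpha=(\epsilon-p_0)/(1-2p_0)$, which lies in $[0,1]$ because $p_0\leq\epsilon\leq 1/2$ whenever $\delta\leq 1/2$, enforces the covertness constraint $\bbP(Y_2=1)=\epsilon$ and delivers $\bbI(U;Y_2)=\dent_b(\epsilon)-\dent_b(p_0)$, matching the converse.

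The main technical obstacle is the uniform lower bound $\bbP(Y_2=1\mid U=u)\geq p_0$ across \emph{all} admissible joint distributions, including those in which $P_{X_1\mid U}$ and $P_{X_2\mid UT}$ are arbitrary randomized kernels correlated through $U$. The split-on-$X_1$ reduction above is the cleanest route: once $X_1=x_1$ is fixed, the residual optimization decouples from the choice of $X_1$ and from $U$, so the two sub-problems corresponding to $x_1\in\{0,1\}$ are independently lower-bounded by $p_0$ (with equality only at $x_1=0$), after which taking the mixture over $P_{X_1\mid U}$ is automatic. The remaining steps---the concavity argument reducing $\bbH(Y_2\mid U)$ to $\dent_b(p_0)$ and the matching symmetric achievability construction---are direct analogs of the noiseless-feedback treatment in Corollary~\ref{cor:BSC_Writing_Causal}.
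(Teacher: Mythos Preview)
Your proposal is correct and follows essentially the same approach as the paper: both obtain $\bbH(Y_2)=\dent_b(\epsilon)$ from the covertness constraint, invoke the pointwise bound $\bbP(Y_2=1\mid U=u)\geq p_0$ for the converse, and use the symmetric scheme $X_1=U\sim\text{Bern}(\alpha)$ with the ``no-rewrite if $T=U$, else rewrite $X_2=U$'' policy and $\alpha=(\epsilon-p_0)/(1-2p_0)$ for achievability. The only difference is that the paper outsources the pointwise lower bound to \cite[Section~V-B]{Weissman10}, whereas you supply a self-contained split-on-$X_1$ argument; note that your stated slack $(1-2\epsilon)[\delta(1-2\epsilon)+\epsilon]$ corresponds to the specific policy ``no-rewrite if $T=0$, rewrite-$0$ if $T=1$'' for $x_1=1$, which is not always the optimal one (for some $(\epsilon,\delta)$ always-rewrite-$0$ is better, with slack $\epsilon-p_0=\epsilon(1-\epsilon)(1-2\delta)$), but both candidate optima are $\geq p_0$, so the conclusion stands.
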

\begin{proof}
When a noisy version of the first round of writing is available \textit{causally} at the encoder, the arguments of the achievability proof in Corollary~\ref{cor:BSC_Writing_Causal} carry over to this noisy case and the capacity achieving choices of the channel inputs $X_1$ and $X_2$ are the same as those in Corollary~\ref{cor:BSC_Writing_Causal} for the causal case. The converse proof is also similar to that of Corollary~\ref{cor:BSC_Writing_Causal} by considering $U\sim\text{Bernoulli}(\alpha)$ we have
\begin{align}
    \bbI(U;Y_2)&=\bbH(Y_2)-\bbH(Y_2|U)\nonumber\\
    &\mathop=\limits^{(a)}\dent_b(\epsilon)-\bbH(Y_2|U)\nonumber\\
    &\mathop\le\limits^{(b)}\dent_b(\epsilon)-\dent_b\left(\epsilon\delta(2-\epsilon)+\epsilon^2(1-\delta)\right),\label{eq:Capacity_Causal_Noisy_Ex}
\end{align}where $(a)$ follows from the covertness constraint $P_{Y_2}=Q_0$ since $Q_0(Y_2=1)=W_{Y|X=x_0}(y_2=1|x_0=0)=\epsilon$ and $\dent_b(\epsilon)\triangleq-\epsilon\log(\epsilon)-(1-\epsilon)\log(1-\epsilon)$; and $(b)$ follows since the lowest value for the probability of $P(Y_2=1|U=u)$, when we search among the set of all the possible joint distributions, is $\epsilon\delta(2-\epsilon)+\epsilon^2(1-\delta)$ \cite[Section~V-B]{Weissman10}, therefore, $\epsilon\delta(2-\epsilon)+\epsilon^2(1-\delta)$ is still the lowest value for $P(Y_2=1|U=u)$ when we add the covertness constraint $P_{Y_2}=Q_0$ on the feasible set of joint distributions and $\bbH(Y_2|U=u)\ge\dent_b\left(\epsilon\delta(2-\epsilon)+\epsilon^2(1-\delta)\right)$. Also, one can show that the optimal choice of the parameter $\alpha$ is $\alpha=\frac{\epsilon-\left[\epsilon\delta(2-\epsilon)+\epsilon^2(1-\delta)\right]}{1-2\left[\epsilon\delta(2-\epsilon)+\epsilon^2(1-\delta)\right]}$. 
\end{proof}
\begin{figure}
\centering
\includegraphics[width=9cm]{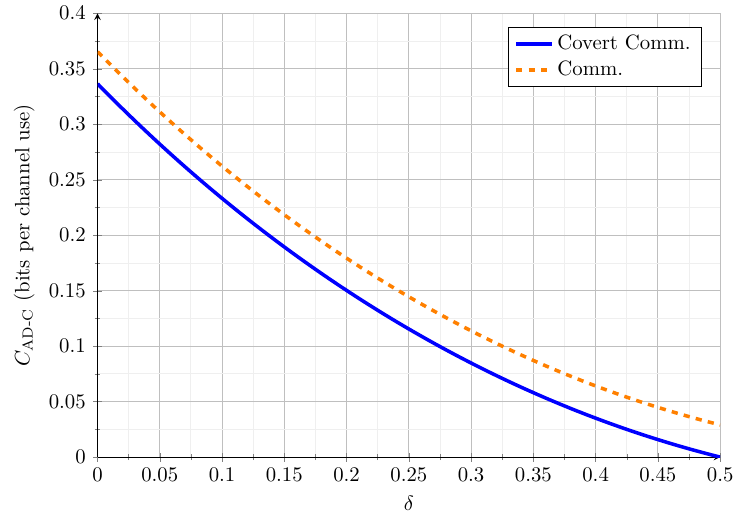}
\caption{Capacity and covert capacity with rewrite option, when a noisy version of the output of the first round of writing is available causally for the second round. Depicted for $\epsilon=0.4$ as a function of $\delta$.}
\label{fig:Causal_delta}
\end{figure}
The covert capacity in Corollary~\ref{cor:BSC_Writing_Causal_Noisy} is plotted in Fig.~\ref{fig:Causal_epsilon} alongside the capacity provided in \cite[Eq.~(59)]{Weissman10}, as a function of $\epsilon$ for $\delta=0.1$. Note, in particular, the two extreme cases. First when $\epsilon=0$, the covert capacity follows the square-root law which goes to zero when the block length grows, but the capacity is maximum and equal to one. This is because when $\epsilon=0$ the writing operation is noiseless and the distribution induced on the output when $x=x_0=0$ is $Q_0\sim\text{Bernoulli}(0)$; therefore, it is not possible to transmit any covert information over the channel. Second, when $\epsilon=\frac{1}{2}$, the channel output is independent of the channel input and therefore the covert capacity and the capacity are equal, but the capacity is at its minimum.  Fig.~\ref{fig:Causal_delta} illustrates the covert capacity in Corollary~\ref{cor:BSC_Writing_Causal_Noisy} alongside the capacity provided in \cite[Eq.~(59)]{Weissman10}, as a function of $\delta$ for $\epsilon=0.4$. For a fixed $\epsilon$, the difference between the capacity and the covert capacity is $1-\dent_b(\epsilon)$, which is constant and is seen in this figure. Also, when $\delta=0$, the covert capacity and the capacity are at their maximum. This is because, in this case, the encoder has access to a noiseless version of the output of the first round of writing. And when $\delta=\frac{1}{2}$, the covert capacity is zero because, in this case, the encoder does not have any information about the output of the first round of writing operation and therefore the problem reduces to a point-to-point channel, and the capacity is also at its minimum. 
\subsubsection{Writing on a \texorpdfstring{\ac{BSC}}{BSC} with Non-Causal Access to the First Round of Writing}
Now we provide a lower bound on the covert capacity, based on Corollary~\ref{corr:Rewrite_Noisy_NC}, when a noisy version of the output of the first round of writing operation is available non-causally at the encoder of the second round of writing operation. 
\begin{corollary}
\label{cor:BSC_Writing_Causal_Noisy_NC}
The covert capacity of the \ac{BSC} with parameter $\epsilon\in\sbra{0}{0.5}$ and a noisy rewrite option, described above, when the input of the rewrite operation in the second round depends non-causally on the noisy version of the output of the first round, is lower bounded as
\begin{align}
\textup{C}_{\mbox{\scriptsize\rm NC}} &\ge\max\limits_{0\le\beta\le1}\Big[\dent_b\big(\epsilon\big)-(1-\alpha)\dent_b\big((1-\epsilon)\left[1-\delta+(\epsilon*\delta)\right]\big)\nonumber\\
    &\qquad-\alpha\left[1-(\epsilon*\delta)\beta\right]\dent_b\left(\frac{(\epsilon*\delta)(1-\beta)\epsilon+\epsilon\delta}{1-(\epsilon*\delta)\beta}\right)-\alpha(\epsilon*\delta)\beta \dent_b\left(\frac{(1-\epsilon)\delta}{\epsilon*\delta}\right)\nonumber\\
    &\qquad\left.-\alpha \dent_b(\epsilon*\delta)+\alpha[1-(\epsilon*\delta)\beta]\dent_b\left(\frac{(\epsilon*\delta)(1-\beta)}{1-(\epsilon*\delta)\beta}\right)\right],
\nonumber
\end{align}such that
\begin{align}
&\alpha \dent_b\big((\epsilon*\delta)(1-\beta)\epsilon+\beta\epsilon(1-\delta)+\epsilon\delta\big)-\alpha\left[1-(\epsilon*\delta)\beta\right]\dent_b\left(\frac{(\epsilon*\delta)(1-\beta)\epsilon+\epsilon\delta}{1-(\epsilon*\delta)\beta}\right)\nonumber\\
&\qquad-\alpha(\epsilon*\delta)\beta \dent_b\left(\frac{(1-\epsilon)\delta}{\epsilon*\delta}\right)\ge\alpha \dent_b(\epsilon*\delta)-\alpha[1-(\epsilon*\delta)\beta]\dent_b\left(\frac{(\epsilon*\delta)(1-\beta)}{1-(\epsilon*\delta)\beta}\right),\nonumber
\end{align}where $\alpha=\frac{\epsilon(1-\epsilon)(1-\delta)}{1-2\epsilon+(2-\beta)\epsilon(1-2\delta)}$ and $\epsilon*\delta=\epsilon(1-\delta)+(1-\epsilon)\delta$.
\end{corollary}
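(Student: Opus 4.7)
The plan is to specialize Corollary~\ref{corr:Rewrite_Noisy_NC} to the BSC setting by choosing the same family of auxiliary random variables used in the noiseless case of Corollary~\ref{cor:BSC_Writing_NonCausal}, with the noisy feedback variable $T$ playing the role previously played by $Y_1$. Concretely I would set $X_1 \sim \text{Bernoulli}(\alpha)$, take $U = 0$ whenever $T = X_1$ or $X_1 = 0$ and $U \sim \text{Bernoulli}(\beta)$ otherwise, and let $X_2 = \text{no-rewrite}$ whenever $T = X_1$ or $U = 1$ and $X_2 = X_1$ otherwise. Since $T$ is only a noisy observation of $Y_1$, the condition ``$T = X_1$'' is now an informed guess rather than a certainty, which is precisely the source of the extra posterior terms that make the bound in Corollary~\ref{cor:BSC_Writing_Causal_Noisy_NC} more intricate than in the noiseless case.

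First, I would record the induced first-order statistics. Cascading $W_{Y|X}$ and $Q_{T|Y}$ gives $T = X_1 \oplus N$ with $N \sim \text{Bernoulli}(\epsilon \ast \delta)$, so that $P_{T|X_1}$ is a BSC with parameter $\epsilon \ast \delta$; this immediately determines $\bbI(U; T|X_1)$ in analogy with \eqref{eq:IUtoY1givenX1}. The only genuinely new object is the Bayesian posterior
\[
P_{Y_1 \mid X_1, T}(y_1 \mid x_1, t) = \frac{W_{Y|X}(y_1 \mid x_1)\, Q_{T|Y}(t \mid y_1)}{\sum_{y'} W_{Y|X}(y' \mid x_1)\, Q_{T|Y}(t \mid y')},
\]
which dictates the law of $Y_2$ on the no-rewrite branch and produces the mixed entropy term $\dent_b\!\big((1-\epsilon)[1-\delta+\epsilon\ast\delta]\big)$ together with its companions $\dent_b\!\big((1-\epsilon)\delta/(\epsilon\ast\delta)\big)$ in the final expression.

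Next, I would enumerate the four atoms $(X_1,U) \in \{0,1\}^2$ with their respective probabilities $(1-\alpha)$, $\alpha[1-(\epsilon\ast\delta)\beta]$, and $\alpha(\epsilon\ast\delta)\beta$, and evaluate $\bbH(Y_2\mid U,X_1)$ and $\bbH(Y_2\mid X_1)$ branch-by-branch using $P_{Y_2 \mid X_1, T, X_2}$ from \eqref{eq:NL_Rewrite_Noisy}. Imposing $P_{Y_2} = Q_0$ forces $\Pr(Y_2=1) = \epsilon$, which after algebraic simplification yields the stated formula $\alpha = \epsilon(1-\epsilon)(1-\delta) / [1 - 2\epsilon + (2-\beta)\epsilon(1-2\delta)]$ in a manner that generalizes \eqref{eq:Cov_Cons_RoM_NC}. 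Substituting into $\bbI(X_1, U; Y_2) = \bbH(Y_2) - \bbH(Y_2 \mid X_1, U) = \dent_b(\epsilon) - \bbH(Y_2 \mid X_1, U)$ and subtracting the already-computed $\bbI(U; T \mid X_1)$ produces the announced lower bound on the covert rate; the monotonicity constraint $\bbI(U;Y_2\mid X_1) \ge \bbI(U;T\mid X_1)$ in \eqref{eq:D_Noisey_rewrite_NC} unfolds directly into the inequality in the statement.

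The main obstacle is bookkeeping rather than insight: the conditional entropies must aggregate three disjoint contributions, namely no-rewrite forced by $T=X_1$, no-rewrite chosen when $U=1$, and an actual rewrite with $X_2 = X_1$. Each contributes a different binary-entropy term with arguments of the form $\frac{(\epsilon\ast\delta)(1-\beta)\epsilon + \epsilon\delta}{1-(\epsilon\ast\delta)\beta}$ or $\frac{(\epsilon\ast\delta)(1-\beta)}{1-(\epsilon\ast\delta)\beta}$, and the nominal two-dimensional optimization over $(\alpha,\beta)$ is reduced to a single maximization over $\beta$ by the covertness constraint that pins down $\alpha$. Apart from this detailed accounting, every step mirrors the noiseless computation of Corollary~\ref{cor:BSC_Writing_NonCausal} verbatim.
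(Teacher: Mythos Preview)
Your proposal is correct and follows essentially the same approach as the paper: the paper also specializes Corollary~\ref{corr:Rewrite_Noisy_NC} with exactly the choices $X_1\sim\text{Bernoulli}(\alpha)$, $U=0$ if $T=X_1$ or $X_1=0$ and $U\sim\text{Bernoulli}(\beta)$ otherwise, and $X_2=\text{no-rewrite}$ if $T=X_1$ or $U=1$ and $X_2=X_1$ otherwise, then computes the posterior $P_{Y|XT}$, the entropies $\bbH(T|X_1)$, $\bbH(T|U,X_1)$, $\bbH(Y_2|U,X_1)$, $\bbH(Y_2|X_1)$ over the three non-degenerate atoms of $(X_1,U)$, and fixes $\alpha$ via the covertness constraint $P_{Y_2}(1)=\epsilon$. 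Your description of the bookkeeping and of how the constraint in \eqref{eq:D_Noisey_rewrite_NC} unfolds matches the paper's derivation step for step.
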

\begin{proof} 
Consider the joint distribution in Corollary~\ref{corr:Rewrite_Noisy_NC} with the channel input of the first round $X_1$ as a Bernoulli \ac{RV} with parameter $\alpha$, the \ac{RV} $U$ as a deterministic function of the channel input $X_1$ and a noisy version of the channel output $Y_1$, i.e., $T$, as
\begin{align}
U&=\begin{cases}
0,&\text{if}\Squad T=X_1\Squad\text{or}\Squad{X_1=0}\\
\text{Bernoulli}(\beta),&\text{otherwise}
\end{cases},
\end{align}
where the condition $T=X_1$ accounts for reliability and the condition $X_1=0$ accounts for covertness. Also, let the channel input of the second round of writing be a deterministic function of $U,X_1$, and $T$ as
\begin{align}
    X_2&=f(U,Y_1,X_1)
    =\begin{cases}
\text{no-rewrite},&\text{if}\Squad T=X_1\Squad\text{or}\Squad{U=1}\\
X_1,&\text{otherwise}
\end{cases}.
\end{align}Under these definitions, one can compute
\begin{subequations}
\begin{align}
      P_{Y|XT}(y|x,t)&\triangleq\frac{W_{Y|X}(y|x)Q_{T|Y}(t|y)}{\sum\limits_{\tilde{y}}W_{Y|X}(\tilde{y}|x)Q_{T|Y}(t|\tilde{y})}&&\nonumber\\
      &=\begin{cases}
\frac{(1-\epsilon)(1-\delta)}{1-\epsilon*\delta},&\text{if}\Squad (x_1,t,y_2)=(0,0,0)\Squad\text{or}\Squad(x_1,t,y_2)=(1,1,1)\\
1-\frac{(1-\epsilon)(1-\delta)}{1-\epsilon*\delta},&\text{if}\Squad (x_1,t,y_2)=(0,0,1)\Squad\text{or}\Squad(x_1,t,y_2)=(1,1,0)\\
\frac{(1-\epsilon)\delta}{\epsilon*\delta},&\text{if}\Squad (x_1,t,y_2)=(0,1,0)\Squad\text{or}\Squad(x_1,t,y_2)=(1,0,1)\\
1-\frac{(1-\epsilon)\delta}{\epsilon*\delta},&\text{if}\Squad (x_1,t,y_2)=(0,1,1)\Squad\text{or}\Squad(x_1,t,y_2)=(1,0,0)
\end{cases},\nonumber\\
    P_{Y_2}(y_2=1)&=(1-\alpha)+\left[\alpha(1-\beta)-(1-\alpha)\right](1-\epsilon)\left(\epsilon*\delta\right)+\alpha\beta(1-\epsilon)\delta+\nonumber\\
    &\qquad\left(2\alpha-1\right)(1-\epsilon)(1-\delta),\label{eq:Y2_Probability_Noisy}\\
    \bbH(T|X_1)&=\dent_b(\epsilon*\delta),\label{eq:Ent_Y1_given_X1_Noisy}\\
    \bbH(T|U,X_1)&=(1-\alpha)\dent_b\left(\epsilon*\delta\right)+\alpha[1-(\epsilon*\delta)\beta]\dent_b\left(\frac{(\epsilon*\delta)(1-\beta)}{1-(\epsilon*\delta)\beta}\right),\label{eq:Ent_Y1_given_UX1_Noisy}\\
    \bbH(Y_2|U,X_1)&=(1-\alpha)\dent_b\big((1-\epsilon)\left[1-\delta+(\epsilon*\delta)\right]\big)\nonumber\\
    &\qquad+\alpha\left[1-(\epsilon*\delta)\beta\right]\dent_b\left(\frac{(\epsilon*\delta)(1-\beta)\epsilon+\epsilon\delta}{1-(\epsilon*\delta)\beta}\right)+\alpha(\epsilon*\delta)\beta \dent_b\left(\frac{(1-\epsilon)\delta}{\epsilon*\delta}\right),\label{eq:Ent_Y2_given_UX1_Noisy}\\
    \bbH(Y_2|X_1)&=(1-\alpha)\dent_b\big((1-\epsilon)\left(1-\delta+\epsilon*\delta\right)\big)+\alpha \dent_b\big((\epsilon*\delta)(1-\beta)\epsilon+\beta\epsilon(1-\delta)+\epsilon\delta\big),\label{eq:Ent_Y2_given_X1_Noisy}
\end{align}where $\epsilon*\delta=\epsilon(1-\delta)+(1-\epsilon)\delta$. 
\end{subequations}Therefore,
\begin{subequations}
\begin{align}
    \bbI(U;T|X_1)&=\bbH(T|X_1)-\bbH(T|U,X_1)\nonumber\\
    &=\alpha\dent_b(\epsilon*\delta)-\alpha[1-(\epsilon*\delta)\beta]\dent_b\left(\frac{(\epsilon*\delta)(1-\beta)}{1-(\epsilon*\delta)\beta}\right),\label{eq:IUtoY1givenX1_Noisy}\\
    \bbI(U;Y_2|X_1)&=\bbH(Y_2|X_1)-\bbH(Y_2|X_1,U)\nonumber\\
    &=\alpha\dent_b\big((\epsilon*\delta)(1-\beta)\epsilon+\beta\epsilon(1-\delta)+\epsilon\delta\big)\nonumber\\
    &\quad-\alpha\left[1-(\epsilon*\delta)\beta\right]\dent_b\left(\frac{(\epsilon*\delta)(1-\beta)\epsilon+\epsilon\delta}{1-(\epsilon*\delta)\beta}\right)-\alpha(\epsilon*\delta)\beta \dent_b\left(\frac{(1-\epsilon)\delta}{\epsilon*\delta}\right),\label{eq:IUtoY2givenX1_Noisy}\\
   \bbI(U,X_1;Y_2)&=H(Y_2)-H(Y_2|U,X_1)\nonumber\\
    &=\dent_b\big((1-\alpha)+\left[\alpha(1-\beta)-(1-\alpha)\right](1-\epsilon)\left(\epsilon*\delta\right)+\alpha\beta(1-\epsilon)\delta\nonumber\\
    &\qquad+\left[2\alpha-1\right](1-\epsilon)(1-\delta)\big)-(1-\alpha)\dent_b\big((1-\epsilon)\left[1-\delta+(\epsilon*\delta)\right]\big)\nonumber\\
    &\qquad-\alpha\left[1-(\epsilon*\delta)\beta\right]\dent_b\left(\frac{(\epsilon*\delta)(1-\beta)\epsilon+\epsilon\delta}{1-(\epsilon*\delta)\beta}\right)-\alpha(\epsilon*\delta)\beta \dent_b\left(\frac{(1-\epsilon)\delta}{\epsilon*\delta}\right)\nonumber\\
    &\mathop=\limits^{(a)}\dent_b\big(\epsilon\big)-(1-\alpha)\dent_b\big((1-\epsilon)\left[1-\delta+(\epsilon*\delta)\right]\big)\nonumber\\
    &\qquad-\alpha\left[1-(\epsilon*\delta)\beta\right]\dent_b\left(\frac{(\epsilon*\delta)(1-\beta)\epsilon+\epsilon\delta}{1-(\epsilon*\delta)\beta}\right)-\alpha(\epsilon*\delta)\beta \dent_b\left(\frac{(1-\epsilon)\delta}{\epsilon*\delta}\right),\label{eq:Final_Covert_Rate_Noisy_NC}
\end{align}where $(a)$ follows from the covertness constraint $P_Z=Q_0$ and
\begin{align}
    &Q_0(y_2=1)=W_{Y|X=x_0}(y_2=1|0)=\epsilon\nonumber\\
    &\Rightarrow (1-\alpha)+\left[\alpha(1-\beta)-(1-\alpha)\right](1-\epsilon)\left(\epsilon*\delta\right)+\alpha\beta(1-\epsilon)\delta+\left[2\alpha-1\right](1-\epsilon)(1-\delta)=\epsilon,\nonumber\\
    &\Rightarrow \alpha=\frac{\epsilon(1-\epsilon)(1-\delta)}{1-2\epsilon+(2-\beta)\epsilon(1-2\delta)},\label{eq:Cov_Cons_RoM_Noiasy_NC}
\end{align}where \eqref{eq:Cov_Cons_RoM_Noiasy_NC} follows by some algebraic manipulations.
\end{subequations}
\begin{figure}
\centering
\includegraphics[width=9cm]{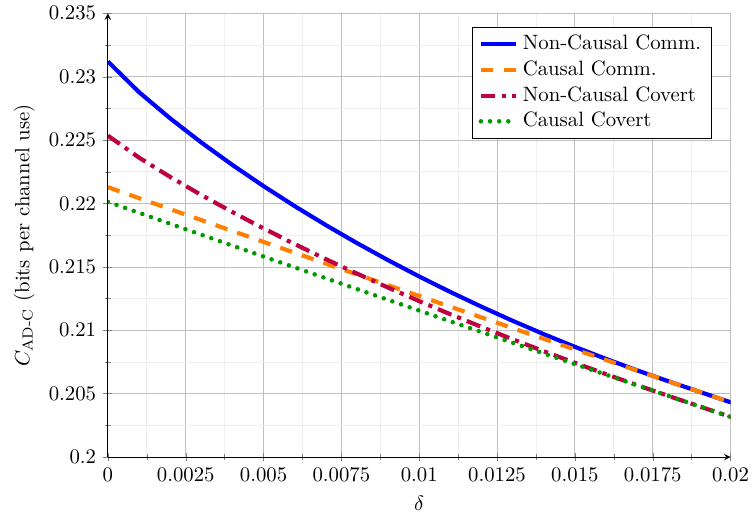}
\caption{Capacity and Covert capacity, in bits per channel use, with rewrite based on corrupted feedback, when the output of the first round of writing operation is available causally for the second round, and the achievable rate and the achievable covert rate when the output of the first round of writing operation is available non-causally for the second round. Depicted for $\epsilon=0.48$ as a function of $\delta$.}
\label{fig:NC_vs_Causal}
\end{figure}
Therefore,
\begin{align}
    \textup{C}_{\mbox{\scriptsize\rm NC}} &\ge\max\left[\bbI(U,X_1;Y_2)-\bbI(U;T|X_1)\right]\nonumber\\
    &=\max\left[\dent_b\big(\epsilon\big)-(1-\alpha)\dent_b\big((1-\epsilon)\left[1-\delta+(\epsilon*\delta)\right]\big)\right.\nonumber\\
    &\qquad-\alpha\left[1-(\epsilon*\delta)\beta\right]\dent_b\left(\frac{(\epsilon*\delta)(1-\beta)\epsilon+\epsilon\delta}{1-(\epsilon*\delta)\beta}\right)-\alpha(\epsilon*\delta)\beta \dent_b\left(\frac{(1-\epsilon)\delta}{\epsilon*\delta}\right)\nonumber\\
    &\qquad\left.-\alpha \dent_b(\epsilon*\delta)+\alpha[1-(\epsilon*\delta)\beta]\dent_b\left(\frac{(\epsilon*\delta)(1-\beta)}{1-(\epsilon*\delta)\beta}\right)\right],\label{eq:RoM_Covert_Rate_Noisy_NC}
\end{align}where the maximum in \eqref{eq:RoM_Covert_Rate_Noisy_NC} is over $\beta$ such that \eqref{eq:Cov_Cons_RoM_Noiasy_NC} holds. Also,
\begin{align}
&\bbI(U;Y_2|X_1)\ge\bbI(U;T|X_1)\nonumber\\
\Rightarrow
&\alpha \dent_b\big((\epsilon*\delta)(1-\beta)\epsilon+\beta\epsilon(1-\delta)+\epsilon\delta\big)-\alpha\left[1-(\epsilon*\delta)\beta\right]\dent_b\left(\frac{(\epsilon*\delta)(1-\beta)\epsilon+\epsilon\delta}{1-(\epsilon*\delta)\beta}\right)\nonumber\\
&\qquad-\alpha(\epsilon*\delta)\beta \dent_b\left(\frac{(1-\epsilon)\delta}{\epsilon*\delta}\right)\ge\alpha \dent_b(\epsilon*\delta)-\alpha[1-(\epsilon*\delta)\beta]\dent_b\left(\frac{(\epsilon*\delta)(1-\beta)}{1-(\epsilon*\delta)\beta}\right).\nonumber
\end{align}
\end{proof}
Fig.~\ref{fig:NC_vs_Causal} illustrates the causal covert capacity in Corollary~\ref{cor:BSC_Writing_Causal_Noisy}, the non-causal achievable covert rate in Corollary~\ref{cor:BSC_Writing_Causal_Noisy_NC}, the causal capacity in \cite[Equation~(59)]{Weissman10}, and the non-causal achievable rate in \cite[Equation~(68)]{Weissman10}, when the forward channel is a \ac{BSC} with parameter $\epsilon=0.48$ and the backward channel is a \ac{BSC} with parameter $\delta$. As seen in this figure, the non-causal access to a noisy version of the first round of writing operation may increase the covert capacity compared to the causal case. More interestingly, as seen in Fig.~\ref{fig:NC_vs_Causal}, the non-causal access to a noisy version of the first round of writing operation can lead to a higher covert capacity as compared to the causal capacity \textit{without} covertness constraint.

\subsection{Computer Memory with Defects and a Rewrite Option}
\label{subsec:Memory_with_Defects}
Consider a computer memory with defects, where we denote the distribution of the state of each memory cell by $P_D$, and the channel law by $W_{Y|XD}$. Consider a two-round coding scheme, where the state of the memory is not known at the encoder and the decoder. After writing on the memory once and observing the output of the writing operation in a noiseless manner, the encoder starts writing on the memory one more time, where it may rewrite on each memory cell that it chooses. Note that the state of each memory cell remains the same regardless of the writing operation.

This problem is a special case of the general setup studied in Section~\ref{sec:Main_Results} if we consider
\begin{itemize}
    \item the action $A$ as the first channel input, which is denoted by $X_1$ in this section and takes values in the alphabet of the input of the channel $W_{Y|XD}$, i.e., $\calX$;
    \item the \ac{ADSI} $S$ as the first channel output, which is denoted by $Y_1$ and takes values in the alphabet of the output of the channel $W_{Y|XD}$,, i.e., $\calY$;
    \item the channel input $X$ as the channel input of the second round, which is denoted by $X_2$ in this section and takes values in $\tilde{\calX}=\{\text{no-rewrite}\}\cup\calX$;
    \item the channel outputs $Y=Z$ as the final channel output, which is denoted by $Y_2$ in this section, and it also takes values in $\calY$.
\end{itemize}Therefore, $Q_{S|A}$ is equal to $P_{Y_1|X_1}$ induced by the original channel, that is
\begin{align}
    P_{Y_1|X_1}=\sum\limits_{d}P_D(d)W_{Y|XD}(y_1|x_1,d),\label{eq:First_Channel}
\end{align}and $W_{Y|ASX}$ is equal to
\begin{align}
    P_{Y_2|X_1Y_1X_2}=\begin{cases}
\indi{1}_{\{y_2=y_1\}},&\text{if}\Squad X_2=\text{no-rewrite}\\
\sum\limits_dW_{Y|XD}(y_2|x_2,d)P_{D|XY}(d|x_1,y_1),&\text{otherwise}
\end{cases}\label{eq:Conditional_Y2_C}
\end{align}where
\begin{align}
    P_{D|XY}(d|x_1,y_1)&=\frac{P_X(x_1)P_D(d)W_{Y|XD}(y_1|x_1,d)}{P_X(x_1)P_{Y_1|X_1}(y_1|x_1)}\nonumber\\
    &=\frac{P_D(d)W_{Y|XD}(y_1|x_1,d)}{\sum\limits_{d'}P_D(d')W_{Y|XD}(y|x_1,d')}.\label{eq:D_given_Y1X1}
\end{align}Note that the knowledge of $Y_1$ provides information about the state $D$ and therefore affects the conditional distribution of the second round of writing. 

For the non-causal case, applying Corollary~\ref{cor:Ach_NC_Simple}, with the above framework, leads to the following achievable rate.
\begin{corollary}
\label{corr:Memory_with_Defects_NC}
Let
\begin{subequations}
\begin{align}
  \calF_{\text{D-NC}}= \left.\begin{cases}R\geq 0: \exists P_{X_1Y_1UX_2Y_2}\in\calG_{\text{D-NC}}:\\
  R< \bbI(X_1,U;Y_2)-\bbI(U;Y_1|X_1)\\
\end{cases}\right\},
\end{align}
where
\begin{align}\label{eq:D_Memory_with_Defects_NC}
  \calG_{\text{D-NC}}\triangleq \left.\begin{cases}P_{X_1Y_1UX_2Y_2}:\\
P_{X_1Y_1UX_2Y_2}(x_1,y_1,u,x_2,y_2)=P_{X_1}(x_1)P_{Y_1|X_1}(y_1|x_1)\times\\
P_{U|X_1Y_1}(u|x_1,y_1)P_{X_2|UX_1Y_1}(x_2|u,x_1,y_1)P_{Y_2|X_1Y_1X_2}(y_2|x_1,y_1,x_2)\\
\bbI(U;Y_2|X_1)\ge\bbI(U;Y_1|X_1)\\
P_{Y_2}=Q_0\\
\card{\calU}\leq\card{\calX}\card{\calY}\left(\card{\calX}+1\right)+1
\end{cases}\right\},
\end{align}and $P_{Y_1|X_1}$ and $P_{Y_2|X_1Y_1X_2}$ are defined in \eqref{eq:First_Channel} and \eqref{eq:Conditional_Y2_C}, respectively. 
The covert capacity of the channels with defects and rewrite option, when the input of the rewrite operation in the second round depends non-causally on the output of the first round, is lower-bounded as
\begin{align}
\textup{C}_{\mbox{\scriptsize\rm NC}} \ge\mbox{\rm sup}\{R:R\in\calF_{\text{D-NC}}\}.
\label{eq:Capacity_Memory_with_Defects_NC}
\end{align}
\end{subequations}
\end{corollary}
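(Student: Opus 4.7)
The plan is to derive Corollary~\ref{corr:Memory_with_Defects_NC} as a direct specialization of Corollary~\ref{cor:Ach_NC_Simple}, combined with the extension to general channels of the form $W_{YZ|XSA}$ mentioned in Section~\ref{sec:More_General_Channels}. The crucial observation is that although the defect sequence persists across the two writing rounds, from the perspective of the second round the pair $(X_1,Y_1)$ already encodes all useful information about the defect via the posterior $P_{D|X_1Y_1}$ in \eqref{eq:D_given_Y1X1}. Thus the two-round memory with defects and a rewrite option is equivalent, on a per-cell basis, to a channel with ADSI under the identification $A\leftrightarrow X_1$, $S\leftrightarrow Y_1$ with conditional law $Q_{S|A}\leftrightarrow P_{Y_1|X_1}$ from \eqref{eq:First_Channel}, channel input $X\leftrightarrow X_2$ over the enlarged alphabet $\{\text{no-rewrite}\}\cup\calX$, channel law $W_{Y|ASX}\leftrightarrow P_{Y_2|X_1Y_1X_2}$ from \eqref{eq:Conditional_Y2_C}, and $Y=Z\leftrightarrow Y_2$.

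First I would verify this equivalence by checking that the joint distribution of $(X_1,Y_1,X_2,Y_2)$ induced by the physical memory-with-defects model coincides with the one produced by the above mapping: when $X_2\neq\text{no-rewrite}$, marginalizing the defect $D$ against its posterior $P_{D|X_1Y_1}$ yields exactly the second branch of \eqref{eq:Conditional_Y2_C}, and when $X_2=\text{no-rewrite}$ the deterministic branch $Y_2=Y_1$ is immediate. Next I would instantiate Corollary~\ref{cor:Ach_NC_Simple} under this identification: using Remark~\ref{rem:Optimal_Dists} to group $A$ and $U$ together, the rate bound $\bbI(U;Y)-\bbI(U;S|A)$ becomes $\bbI(X_1,U;Y_2)-\bbI(U;Y_1|X_1)$; the factorization in \eqref{eq:Ach_NC_Simple_D} translates into that in \eqref{eq:D_Memory_with_Defects_NC}; and the covert constraint $P_Z=Q_0$ matches $P_{Y_2}=Q_0$. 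Because $Y=Z=Y_2$, the two side conditions $\bbI(U;Y)\ge\bbI(U;Z)$ and $\bbI(U;Y)\ge\bbI(A;Z)+\bbI(U;S|A)$ collapse, respectively, to a tautology and to the single inequality $\bbI(U;Y_2|X_1)\ge\bbI(U;Y_1|X_1)$ stated in \eqref{eq:D_Memory_with_Defects_NC}. The cardinality bound is then inherited from that in Corollary~\ref{cor:Ach_NC_Simple}, with the product $|\calA||\calS||\calX|$ becoming $|\calX|\cdot|\calY|\cdot(|\calX|+1)$ on account of the extra ``no-rewrite'' symbol in the second-round alphabet.

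The main obstacle, and the only step that requires care rather than bookkeeping, is justifying that the framework of Section~\ref{sec:More_General_Channels} genuinely applies here: that framework postulates an ADSI that is produced i.i.d.\ through $Q_{S|A}$ at every channel use, whereas in our memory model the underlying physical randomness is the shared defect sequence. The i.i.d.\ structure is recovered because the defects across memory cells are themselves i.i.d., so per-cell the composite two-step channel $(P_{Y_1|X_1},P_{Y_2|X_1Y_1X_2})$ inherits exactly the product structure required by the general model, and the block-Markov achievability analysis from Theorem~\ref{thm:Acievability_KG} carries through unchanged. Once this equivalence is checked, Corollary~\ref{corr:Memory_with_Defects_NC} follows immediately as a corollary of Corollary~\ref{cor:Ach_NC_Simple} applied to the effective channel, with no further probabilistic argument required.
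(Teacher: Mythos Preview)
Your proposal is correct and follows essentially the same approach as the paper: the paper simply states that Corollary~\ref{corr:Memory_with_Defects_NC} follows by ``applying Corollary~\ref{cor:Ach_NC_Simple}, with the above framework,'' i.e., under the identification $A\leftrightarrow X_1$, $S\leftrightarrow Y_1$, $X\leftrightarrow X_2$, $Y=Z\leftrightarrow Y_2$ laid out in the bullet points of Section~\ref{subsec:Memory_with_Defects}. Your write-up supplies the bookkeeping details (including the collapse of the side constraints when $Y=Z$ and the per-cell i.i.d.\ justification) that the paper leaves implicit, but the derivation is the same.
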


For the causal case, applying Corollary~\ref{cor:Y=Z_C}, with the above framework, leads to the following capacity result.
\begin{corollary}
\label{corr:Memory_with_Defects_C}
Let
\begin{subequations}
\begin{align}
  \calF_{\text{D-C}}= \left.\begin{cases}R\geq 0: \exists P_{UX_1Y_1X_2Y_2}\in\calG_{\text{D-C}}:\\
  R\le \bbI(U;Y_2)\\
\end{cases}\right\},
\end{align}
where
\begin{align}\label{eq:Memory_with_Defects_C}
  \calG_{\text{D-C}}\triangleq \left.\begin{cases}P_{UX_1Y_1X_2Y_2}:\\
P_{UX_1Y_1X_2Y_2}(u,x_1,y_1,x_2,y_2)=P_U(u)P_{X_1|U}(x_1|u)\times\\
P_{Y_1|X_1}(y_1|x_1)P_{X_2|UY_1}(x_2|u,y_1)P_{Y_2|X_1Y_1X_2}(y_2|x_1,ty_1,x_2)\\
P_{Y_2}=Q_0\\
\card{\calU}\leq\card{\calY}.
\end{cases}\right\},
\end{align}and $P_{Y_1|X_1}$ and $P_{Y_2|X_1Y_1X_2}$ are defined in \eqref{eq:First_Channel} and \eqref{eq:Conditional_Y2_C}, respectively. 
The covert capacity of the channels with defects and rewrite option, when the input of the rewrite operation in the second round depends causally on the output of the first round, is
\begin{align}
\textup{C}_{\mbox{\scriptsize\rm C}} =\mbox{\rm max}\{R:R\in\calF_{\text{D-C}}\}.
\label{eq:Capacity_Memory_with_Defects_C}
\end{align}
\end{subequations}
\end{corollary}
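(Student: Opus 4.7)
The plan is to recognize the memory-with-defects-and-causal-rewrite problem as a special instance of the general channel with action-dependent states introduced in Section~\ref{sec:Problem}, and then invoke the exact result of Corollary~\ref{cor:Y=Z_C} (the causal $Y=Z$ case) under the following identification: the action $A$ plays the role of the first-round input $X_1$, the action-dependent state $S$ plays the role of the first-round output $Y_1$, the channel input $X$ plays the role of the second-round (possibly no-rewrite) input $X_2$, and both the legitimate receiver and the warden observe the final memory content $Y_2$, so that $Y=Z=Y_2$. Under this identification, the no-communication mode corresponds to writing $x_0$ in both rounds, which makes $Q_0$ the marginal of $Y_2$ under $X_1=x_0$, $X_2=x_0$.

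The first step is to verify that the induced conditional $Q_{S|A}$ matches \eqref{eq:First_Channel}: since the defect $D$ is drawn i.i.d.\ from $P_D$ per memory cell and is independent of the first-round input $X_1$, marginalizing $D$ in $W_{Y|XD}$ yields exactly $P_{Y_1|X_1}(y_1|x_1)=\sum_d P_D(d)W_{Y|XD}(y_1|x_1,d)$, as needed. The second step is to verify that the effective second-round channel, seen by the encoder as a function of $(X_1,Y_1,X_2)$, coincides with \eqref{eq:Conditional_Y2_C}. Here the key observation is that although $D$ persists across the two writes on the same cell, once we condition on $(X_1,Y_1)$ the posterior of $D$ is given by \eqref{eq:D_given_Y1X1} by Bayes' rule, and a fresh use of the channel with the same $D$ produces $Y_2$ distributed as $\sum_d W_{Y|XD}(y_2|x_2,d) P_{D|XY}(d|x_1,y_1)$ when rewriting, while $Y_2=Y_1$ deterministically when the encoder chooses no-rewrite. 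This yields precisely the kernel in \eqref{eq:Conditional_Y2_C} and confirms that, cell by cell, the composite object is a DMC of the required form $W_{Y_2|X_1 Y_1 X_2}$ that fits the general setup of Section~\ref{sec:Problem}.

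With these identifications in place, applying Corollary~\ref{cor:Y=Z_C} produces the covert capacity characterization
\begin{align*}
\textup{C}_{\mbox{\scriptsize\rm C}}=\max\,\{R:\,R\le \bbI(U;Y_2),\; P_{Y_2}=Q_0\},
\end{align*}
over joint distributions of the form $P_U P_{X_1|U} P_{Y_1|X_1} P_{X_2|U Y_1} P_{Y_2|X_1 Y_1 X_2}$, which is exactly the region in \eqref{eq:Memory_with_Defects_C}. The cardinality bound $|\mathcal{U}|\leq|\mathcal{Y}|$ also carries over from Corollary~\ref{cor:Y=Z_C} without modification, since the standard Fenchel--Eggleston--Carath\'eodory argument used there only depends on $|\mathcal{Y}_2|=|\mathcal{Y}|$ and on preserving the covertness marginal.

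The main obstacle is the careful bookkeeping around the persistent defect $D$: because $D$ is not an i.i.d.\ state across the two rounds but is instead shared between them, one must be explicit that after marginalizing $D$ out using its posterior given $(X_1,Y_1)$, the cell-wise channel $(X_1,Y_1,X_2)\mapsto Y_2$ is memoryless across cells and satisfies the Markov structure required by Definition~\ref{defi:Code} and by the converse in Corollary~\ref{cor:Y=Z_C}. Once this memorylessness is checked (it follows from the i.i.d.\ nature of $D$ and the per-cell operation of the channel), no further work is required: the achievability follows from Corollary~\ref{cor:Y=Z_C} by using the causal Shannon-strategy encoder on $(U,Y_1)\mapsto X_2$, and the converse follows by specializing the converse of Corollary~\ref{cor:Y=Z_C} to this particular $Q_{S|A}$ and $W_{Y|ASX}$.
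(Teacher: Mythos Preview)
Your proposal is correct and follows essentially the same route as the paper: the paper simply states that the result follows by ``applying Corollary~\ref{cor:Y=Z_C}, with the above framework,'' where the framework is precisely the identification $A\leftrightarrow X_1$, $S\leftrightarrow Y_1$, $X\leftrightarrow X_2$, $Y=Z\leftrightarrow Y_2$ together with \eqref{eq:First_Channel}--\eqref{eq:D_given_Y1X1}. Your additional care about the persistent defect $D$ and the resulting dependence of the second-round channel on $X_1$ (not just on $Y_1$ and $X_2$) is well placed; in the paper this is handled implicitly via Section~\ref{sec:More_General_Channels} and Remark~\ref{remark:Causal_General_Converse}, which extend Corollary~\ref{cor:Y=Z_C} to channels of the form $W_{YZ|ASX}$.
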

\begin{remark}
    The achievable covert rate for the non-causal access to the noisy feedback in Corollary~\ref{corr:Memory_with_Defects_NC} and the covert capacity for the causal access to the noisy feedback in Corollary~\ref{corr:Memory_with_Defects_C} recover as special case the corresponding capacity results for the channels with defects and rewrite option in \cite[Section~V.C]{Weissman10}.
\end{remark}
\section{Gaussian Channels}
\label{sec:Gaussian}
In this section, we study the \ac{AWGN} channel when the \ac{ADSI} is available non-causally at the encoder and provide a lower and an upper bound on the covert capacity first, and then we provide an example in which user cooperation leads to a positive covert rate.
\subsection{\texorpdfstring{\ac{AWGN}}{AWGN} Channel}
\label{sec:NC_AWGN}
Consider an \ac{AWGN} channel with \ac{ADSI} in which the channel output of the legitimate receiver and the warden are as follows (see also Fig.~\ref{fig:SYstem_AWGN}),
\begin{align}
    Y&=X+S+N_Y,\quad\text{and}\quad
    Z=X+S+N_Z,\label{eq:system_Model}
\end{align}
where $N_Y\sim\calN(0,\sigma_Y^2)$, and $N_Z\sim\calN(0,\sigma_Z^2)$ are independent. We assume that the test channel $Q_{S|A}$ is also an \ac{AWGN} channel and $S=A+N_S$, where the output of the action encoder $A$ and the additive noise $N_S\sim\calN(0,T)$ are independent, and they are also independent of $N_Y$ and $N_Z$. Therefore,
\begin{align}
    Y&=A+X+N_S+N_Y,\quad\text{and}\quad
    Z=A+X+N_S+N_Z.\label{eq:system_Model_Final}
\end{align}We assume that the action $A^N$ and the channel input $X^N$ are restricted to 
\begin{align}
    \expec\left[\frac{1}{N}\sum\limits_{t=1}^NA_t^2\right]&\leq P_A,\quad\text{and}\quad
    \expec\left[\frac{1}{N}\sum\limits_{t=1}^NX_t^2\right]\leq P_X.\label{eq:Channel_Input_Power}
\end{align}
The innocent symbols are $a_0=x_0=0$ therefore when the transmitter is not communicating with the receiver, the distribution induced on the warden's observation is $Q_0^{\otimes N}$ where $Q_0\sim\calN(0,T+\sigma_Z^2)$. Here, the covertness constraint is \eqref{eq:Covertness}, and the covert capacity is defined as the supremum of all achievable covert rates and is denoted by $\textup{C}_{\mbox{\scriptsize\rm GAD-NC}}$. Now we present the main results of this section, which are a lower and an upper bound on the covert capacity $\textup{C}_{\mbox{\scriptsize\rm GAD-NC}}$.
Let,
\begin{subequations}\label{eq:Gaussian_Rate_Defi}
    \begin{align}
    R_\A&\triangleq\frac{1}{2}\log\left(1+\frac{4T\sigma_Y^2(P_X+P_A)+4T^2P_X-(T+\sigma_Y^2)(P_X+P_A)^2}{4T\sigma_Y^2\big(T-P_A+\sigma_Y^2\big)}\right),\\
    R_\B&\triangleq\frac{1}{2}\log\left(1+\frac{2\sqrt{TP_X}-P_X}{T+\sigma_Y^2+P_X-2\sqrt{TP_X}}\right),\\
    R_\Ci&\triangleq\frac{1}{2}\log\left(1+\frac{T}{\sigma_Y^2}\right).
    \end{align}
\end{subequations}Also, for a fixed $P_X$, $P_A$, and $T$ consider the following three cases 
\begin{subequations}\label{eq:Different_Cases}
\begin{itemize}
    \item Case A:
    \begin{align}
        P_A\le2\sqrt{TP_X}-P_X\,\,\,\text{and}\,\,\, P_X+P_A<2T.\label{eq:Case_A}
    \end{align}
    \item Case B:
    \begin{align}
        P_A>2\sqrt{TP_X}-P_X\,\,\,\text{and}\,\,\, P_X<T.\label{eq:Case_B}
    \end{align}
    \item Case C:
    \begin{align}
        P_X+P_A\ge2\sqrt{TP_X}-P_X\,\,\,\text{and}\,\,\, P_X\ge T.\label{eq:Case_C}
    \end{align}
\end{itemize}
\end{subequations}
\begin{theorem}
\label{thm:Gaussian_Converse}
The covert capacity of the \ac{AWGN} channel with \ac{ADSI} defined in \eqref{eq:system_Model_Final} is upper bounded~as,
\begin{itemize}
    \item For the Case A in \eqref{eq:Case_A}, $\textup{C}_{\mbox{\scriptsize\rm GAD-NC}}\le R_\A$;
    \item For the Case B in \eqref{eq:Case_B}, $\textup{C}_{\mbox{\scriptsize\rm GAD-NC}}\le R_\B$;
    \item For the Case C in \eqref{eq:Case_C}, $\textup{C}_{\mbox{\scriptsize\rm GAD-NC}}\le R_\Ci$.
\end{itemize}
\end{theorem}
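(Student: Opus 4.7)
The plan is to specialise the non-causal converse of Theorem~\ref{thm:Converse_NC} to the Gaussian setting and to combine it with entropy bounds extracted directly from the covertness constraint. The three auxiliary-variable configurations, together with the feasibility region carved out by the power constraints and the variance-matching condition induced by $P_Z=Q_0$, will yield the three rate expressions $R_\A$, $R_\B$, and $R_\Ci$ in their respective regimes.

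First I would convert the covertness constraint into a second-moment condition. Pinsker's inequality applied to $\bbD(P_{Z^N}\|Q_0^{\otimes N})\to 0$ yields $\bbV(P_{Z^N},Q_0^{\otimes N})\to 0$, and the input power constraints make $Z^2$ uniformly integrable, so convergence in total variation transfers to convergence of second moments: the average per-letter variance of $X+S$ converges to $T$. Since $Y=(X+S)+N_Y$ with $N_Y$ independent of $X+S$, Gaussian maximisation of differential entropy gives $h(Y^N)\le\frac{N}{2}\log(2\pi e(T+\sigma_Y^2))+o(N)$, whereas $h(Y^N|M,K)\ge h(N_Y^N)=\frac{N}{2}\log(2\pi e\sigma_Y^2)$. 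Fano's inequality then delivers the universal bound $R\le\frac{1}{2}\log(1+T/\sigma_Y^2)=R_\Ci$. This is the active bound in Case~C, where $P_X\ge T$ means that the additional mutual-information constraints in Theorem~\ref{thm:Converse_NC} cannot tighten it.

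Next I would extend Theorem~\ref{thm:Converse_NC} to the power-constrained Gaussian setting by a standard truncation-and-quantisation argument, so that the single-letter bound $R\le\bbI(U;Y)-\bbI(U;S|A)$ applies under both power and covertness constraints. Restricting the supremum to jointly Gaussian $(A,S,U,X)$, write $A\sim\calN(0,P_A')$ with $P_A'\le P_A$, $X=\lambda A+\mu N_S+W$ with $W\sim\calN(0,P_W)$ independent of $(A,N_S)$, and $U$ as an affine function of $(A,X,S)$ plus an independent Gaussian component. The power constraints become $\lambda^2 P_A'+\mu^2 T+P_W\le P_X$, and the covertness condition sharpens to the equality $(1+\lambda)^2 P_A'+(1+\mu)^2 T+P_W=T$. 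A Gaussian-extremality argument via conditional entropy power inequalities, in the spirit of Costa's dirty-paper converse, shows that the jointly Gaussian ansatz exhausts the supremum for this channel. A closed-form maximisation of $\bbI(U;Y)-\bbI(U;S|A)$ over $(\lambda,\mu,P_W,P_A')$ subject to the two constraints above yields $R_\A$ in Case~A, where the action-power constraint $P_A'\le P_A$ is active at the optimum, and $R_\B$ in Case~B, where the optimum lies in the interior of that constraint; the threshold $P_A=2\sqrt{TP_X}-P_X$ is precisely where the Lagrangian first-order conditions switch.

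The main obstacle I anticipate is the Gaussian-extremality step, since the covertness equality $P_Z=Q_0$ couples the input law with the auxiliary variables in a non-convex way and a direct application of the entropy power inequality is not immediate. I would address this by expressing the rate functional as a difference of conditional differential entropies whose conditioners are Gaussian, then invoking a conditional EPI together with trace constraints on the covariance matrices. The three-region partition in~\eqref{eq:Different_Cases} then emerges from the KKT conditions of the resulting finite-dimensional concave programme, each regime corresponding to a different configuration of active constraints on $P_A$, $P_X$, and~$T$.
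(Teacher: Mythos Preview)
Your Case~C argument is sound and close to what the paper does implicitly. The real issue is Cases~A and~B. Your plan keeps the auxiliary $U$ in the optimisation and then appeals to a Gaussian-extremality/EPI step that you yourself flag as the main obstacle; that step is not standard here, because the functional $\bbI(U;Y)-\bbI(U;S|A)$ under the covertness equality is a difference of mutual informations with an abstract auxiliary, and there is no off-the-shelf conditional EPI that pins down the optimiser.

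The paper avoids this obstacle entirely by eliminating $U$ \emph{before} any entropy maximisation. Starting from $R\le\bbI(A,U;Y)-\bbI(U;S|A)$, it uses the chain
\begin{align*}
\bbI(A,U;Y)-\bbI(U;S|A)
&=\bbI(A;Y)+\bbI(U;Y|A)-\bbI(U;S|A)\\
&\le\bbI(A;Y)+\bbI(U;Y|A,S)\\
&\le\bbI(A;Y)+\bbI(X;Y|A,S),
\end{align*}
the last step using the Markov chain $U-(X,S)-Y$. Rewriting with $S=A+N_S$ and $N_S\perp A$ gives
\[
R\le h(Y)+h(N_S\mid A,Y)-h(N_S)-h(N_Y),
\]
which involves only second moments of $(A,X,N_S)$. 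Now the maximum differential entropy lemma applies directly to $h(Y)$ and to $h(N_S\mid A,Y)$ (via the conditional variance $T-\Sigma_1\Sigma_3^{-1}\Sigma_2$), with no need for an extremality argument over auxiliary laws. The covertness constraint becomes the single second-moment identity $\tilde P_X+\tilde P_A+2\sqrt{\tilde P_X\tilde P_A}\,\rho_{XA}+2\sqrt{T\tilde P_X}\,\rho_{XS}=0$; one checks $\rho_{XA}=0$ is optimal, which forces $\rho_{XS}=-(\tilde P_X+\tilde P_A)/(2\sqrt{T\tilde P_X})$, and the three regimes in~\eqref{eq:Different_Cases} fall out of maximising over $(\tilde P_X,\tilde P_A)$ subject to $-1\le\rho_{XS}\le 0$ and the power limits. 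This is the missing idea in your proposal: bound $U$ out first, then use max-entropy on a covariance-only expression, rather than trying to prove Gaussian optimality of $U$.
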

Theorem~\ref{thm:Gaussian_Converse} is proved in Appendix~\ref{proof:thm:Converse_AWGN}.
\begin{remark}[Intuitions]
As a consequence of the covertness constraint, the sum of the action input power and the channel input power can not exceed the interference power, i.e., $T$, therefore, the highest covert capacity that one can achieve in the \ac{AWGN} channel defined above is $\frac{1}{2}\log\left(1+\frac{T}{\sigma_Y^2}\right)$. This rate can be achieved when the action encoder and the encoder have enough power to pre-subtract the interference, i.e., $N_S$, and substitute it with signals that carry information about the message, this regime corresponds to the last bullet point of Theorem~\ref{thm:Gaussian_Converse}. Note that the encoder knows the action input and the \ac{ADSI} and therefore it can recover the interference $N_S$. When the interference is strong and the action encoder and the encoder do not have enough power to pre-subtract the entire interference, they can pre-subtract part of it and as a result, the covert capacity that they can achieve will be smaller than $\frac{1}{2}\log\left(1+\frac{T}{\sigma_Y^2}\right)$, this can be seen in the first two bullet points of Theorem~\ref{thm:Gaussian_Converse}.
\end{remark}
\begin{figure*}
\centering
\includegraphics[width=11cm]{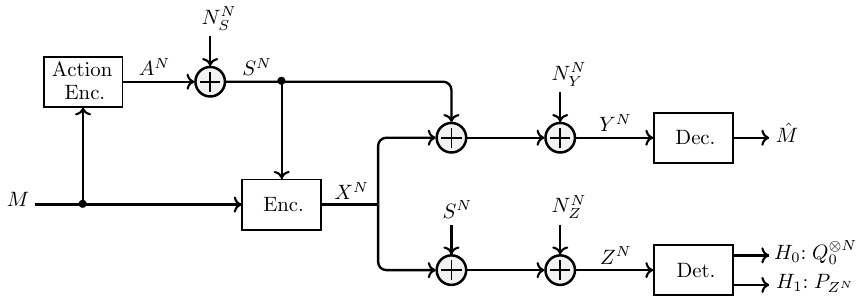}
\caption{The \ac{AWGN} channel with \ac{ADSI}}
\label{fig:SYstem_AWGN}
\end{figure*}
Let $R_\A$, $R_\B$, $R_\Ci$ be defined as \eqref{eq:Gaussian_Rate_Defi}, then the following theorem provides an achievable covert communication rate.
\begin{theorem}
\label{thm:Gaussian_Inner}
The covert capacity of the Gaussian channel with \ac{ADSI} defined in \eqref{eq:system_Model_Final} is lower bounded~as,
\begin{itemize}
    \item For the Case A in \eqref{eq:Case_A}, $\textup{C}_{\mbox{\scriptsize\rm GAD-NC}}\ge R_\A$ when
    \begin{align}
        &\frac{1}{2}\log\left(\frac{P_X^*(T+\sigma_Y^2)}{\sigma_Y^2\left(P_X^*+\sigma_Y^2\right)\big(T-P_A+\sigma_Y^2\big)}\right)\nonumber\\
        &\ge\max\left\{\frac{1}{2}\log\left(\frac{(T+\sigma_Z^2)}{(1-\alpha)^2T\big(P_X^*\sigma_Z^2+\sigma_1^4\big)+\sigma_Z^2\left(P_X^*+\sigma_Y^2\right)^2}\right),\frac{1}{2}\log\left(\frac{(T+\sigma_Z^2)}{\left(T+\sigma_Z^2-P_A\right)\left(P_X^*+\sigma_Y^2\right)^2}\right)\right\},\label{eq:Condion_Gauusian}
    \end{align}where $\alpha\triangleq\frac{P_X+P_A}{2T},T^*\triangleq(1-\alpha)^2T$, and $P_X^*+P_A=T-T^*$;
    \item For the Case B in \eqref{eq:Case_B}, $\textup{C}_{\mbox{\scriptsize\rm GAD-NC}}\ge R_\B$ when $\sigma_Y^2\le\sigma_Z^2$;
    \item For the Case C in \eqref{eq:Case_C}, $\textup{C}_{\mbox{\scriptsize\rm GAD-NC}}\ge R_\Ci$ when $\sigma_Y^2\le\sigma_Z^2$.
\end{itemize}
\end{theorem}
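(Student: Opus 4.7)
The plan is to apply Corollary~\ref{cor:Ach_NC_Simple} with a jointly Gaussian choice of auxiliary random variables, adapting Costa's dirty-paper coding to the action-dependent setting. Since Corollary~\ref{cor:Ach_NC_Simple} is stated for \acp{DMC}, I would first invoke the standard extension to continuous Gaussian channels via quantization and power-typicality arguments, and observe that the covertness constraint $P_Z=Q_0$ with $Q_0\sim\calN(0,T+\sigma_Z^2)$, combined with the Gaussianity of the coding distribution, reduces to a single moment-matching condition $\text{Var}(A+X+N_S)=T$ on the induced law at the warden.

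Next I would parameterize $A\sim\calN(0,P_A)$ and $X=\beta N_S+\gamma A+X'$, where $X'\sim\calN(0,P')$ is independent of $(A,N_S)$, and take a Costa-type auxiliary $U=X'+\alpha N_S$ with scalar $\alpha$ to be optimized; conditioning on $A$ in the Gel'fand--Pinsker step reduces the problem to dirty-paper coding against the residual state $N_S$ using an effective input of power $P'$. Substituting into the covertness identity yields
\begin{align*}
(1+\gamma)^2 P_A+(1+\beta)^2 T+P'=T,
\end{align*}
which, together with the power budget $\beta^2T+\gamma^2P_A+P'\le P_X$, admits a solution with $\gamma=0$ and $P'=0$ precisely when $P_A\le2\sqrt{TP_X}-P_X$, giving Case~A; when $P_A$ exceeds this threshold, a nonzero (negative) correlation $\gamma$ between $A$ and $X$ is required to absorb the excess action power while keeping the warden's output Gaussian with variance $T+\sigma_Z^2$, corresponding to Case~B; and when $P_X\ge T$, near-complete state cancellation ($\beta\to -1$) becomes feasible, driving the achievable rate up to the interference-free ceiling $R_\Ci$ of Case~C.

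Having fixed the parameterization for each regime, I would compute $\bbI(U;Y)$ and $\bbI(U;S|A)$ in closed form via standard Gaussian mutual-information identities, optimize over $\alpha$ (and any remaining slack in $\beta,\gamma$) and verify that the result simplifies to $R_\A$, $R_\B$, $R_\Ci$ respectively. It then remains to check the side constraints in \eqref{eq:Ach_NC_Simple_D}: the inequality \eqref{eq:Condion_Gauusian} is precisely the Gaussian translation of $\bbI(U;Y)\ge\max\{\bbI(U;Z),\,\bbI(A;Z)+\bbI(U;S|A)\}$ for the Case~A parameter choice; in Cases~B and~C, the hypothesis $\sigma_Y^2\le\sigma_Z^2$ makes $\bbI(U;Y)\ge\bbI(U;Z)$ automatic by a degradedness argument, and the second inequality follows from the structure of the chosen joint law together with the small residual signal at the warden.

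The main obstacle will be twofold. First, handling the non-discrete alphabets requires a covertness analysis showing that Gaussian moment matching is enough to force $\bbD\bigl(P_{Z^N}\Vert Q_0^{\otimes N}\bigr)\to 0$; this needs to be combined with a likelihood-encoder/resolvability argument as in the proof of Theorem~\ref{thm:Acievability_KG}, but the infinite-alphabet continuous setting demands a careful truncation/quantization step. Second, the algebraic task of verifying that the optimized Costa parameter yields exactly the stated closed-form rates is nontrivial; in particular, showing that the partition into Cases~A, B, C is not merely a sufficient condition but precisely traces the feasibility boundary of the coupled covertness--power--rate system is where most of the bookkeeping will concentrate.
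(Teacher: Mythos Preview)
Your overall plan---apply Corollary~\ref{cor:Ach_NC_Simple} with a Costa-style Gaussian auxiliary, reduce covertness to moment matching, and handle continuous alphabets by truncation/quantization---is exactly the paper's route, and your description of Cases~A and~C and of the origin of condition~\eqref{eq:Condion_Gauusian} is accurate. The slip ``$\gamma=0$ and $P'=0$'' in Case~A should read ``$\gamma=0$ and $P'\ge 0$'' (the boundary $P_A=2\sqrt{TP_X}-P_X$ is where $P'$ hits zero), but that is cosmetic.

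The substantive gap is in Case~B. You fix the action at full power $A\sim\calN(0,P_A)$ and propose absorbing the excess via a correlation term $\gamma A$ inside $X$. This cannot reach $R_\B$: cancelling action through the channel input costs $\gamma^2 P_A$ of the $P_X$ budget, and a short computation shows that after paying this cost there is not enough power left for the state-cancellation term $\beta^2 T$ that $R_\B$ requires (you would need $\beta^2 T + \gamma^2 P_A\le P_X$ with $\beta^2 T = P_X$, forcing $\gamma=0$, a contradiction). The paper's fix is simpler and is the missing idea: the action encoder is free to use \emph{less} than $P_A$, so it sets $A\sim\calN(0,\beta^2 P_A)$ with $\beta=\sqrt{(2\sqrt{TP_X}-P_X)/P_A}<1$, spends the entire channel-input budget on partial state cancellation $X=-\alpha N_S$ with $\alpha=\sqrt{P_X/T}$ (so $P_X^*=0$), and carries all information through the action codebook alone, giving $I(A;Y)=R_\B$. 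Incorporate a variable action power $\tilde P_A\le P_A$ into your parameterization and Case~B goes through; the rest of your outline is sound.
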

\begin{proof}
We prove the achievable rate in Theorem~\ref{thm:Gaussian_Inner} in three different cases.
\subsubsection{When \texorpdfstring{$P_A\le2\sqrt{TP_X}-P_X$}{} and \texorpdfstring{$P_X+P_A<2T$}{} }
\label{sec:Low_Power_Both}
Let, 
\begin{subequations}\label{eq:Parameter_Definitions}
    \begin{align}
        \alpha&\triangleq\frac{P_X+P_A}{2T},\label{eq:alpha_G}\\
        \beta&\triangleq1,\label{eq:beta_G}\\
        T^*&\triangleq(1-\alpha)^2T,\label{eq:Power_Tstar}\\
        P_X^*+\beta^2P_A&=T-T^*.\label{eq:Power_Constraint_Covertness}
    \end{align}
\end{subequations}
We modify the achievability proof in Corollary~\ref{cor:Ach_NC_Simple}, such that it also applies to the case where the channel is \ac{AWGN} with channel input constraints. Roughly speaking, the idea is to quantize only at the decoder. 
The main idea for achievability is to pre-subtract part of the \ac{ADSI} power to make room for the action and the channel input. Here, we set the action as $A\sim\calN\left(0,\beta^2P_A\right)$, where $\beta=1$, and we set the channel input to be $X=X^*-\alpha N_S$, where $X^*$ is independent of $N_S$ and $A$ and is going to be the channel input of the channel with reduced interference power $T^*\triangleq(1-\alpha)^2T$. Now, the covertness constraint implies that $P_X^*+\beta^2P_A=T-T^*$. 
Note that the power constraint on the action input is satisfied since we choose the action power to be $P_A$ and the channel input power is also satisfied since
\begin{align}
    \bbE\left[X^2\right]&=\bbE\left[X^{*2}\right]+\alpha^2T\nonumber\\
    &=P_X^*+\alpha^2T\nonumber\\
    &\mathop=\limits^{(a)} T-(1-\alpha)^2T-\beta^2P_A+\alpha^2T,\nonumber\\
    &=2\alpha T-\beta^2P_A\nonumber\\
    &\mathop=\limits^{(b)} P_X,
\end{align}where $(a)$ follows from the covertness constraint $P_X^*+\beta^2P_A=T-T^*$; and $(b)$ follows from \eqref{eq:alpha_G} and \eqref{eq:beta_G}. 
Now we choose the auxiliary \ac{RV} $U$, the conditional distribution $P_{U|SA}$, and $P_{X|US}$ in Corollary~\ref{cor:Ach_NC_Simple} as,
\begin{subequations}\label{eq:UXX*_No_Power_Constraint}
\begin{align}
    X^*&\sim\calN(0,P_X^*),\label{eq:PX*_No_Power_Constraint}\\
    U&=X^*+\frac{P_X^*}{P_X^*+\sigma_Y^2}(1-\alpha)N_S,\label{eq:U_No_Power_Constraint}\\
    X&=U-\frac{P_X^*+\alpha\sigma_Y^2}{P_X^*+\sigma_Y^2}N_S.\label{eq:X_No_Power_Constraint}
\end{align}where the last equality follows since $X=X^*-\alpha N_S$.
\end{subequations}Then, we employ the same coding scheme as the coding scheme in Corollary~\ref{cor:Ach_NC_Simple} with the difference that \acp{PMF} are substituted with \acp{PDF}. Now, as an additional step for the encoder, for a small and fixed $\epsilon$, we show that the achievability scheme in Corollary~\ref{cor:Ach_NC_Simple} can be adapted such that the resulting input sequences $X^N$ and $A^N$ satisfy the power constraints,
\begin{align}
    \frac{1}{N}\sum\limits_{t=1}^NX_t^2&\le P_X+\epsilon,\quad\text{and}\quad\frac{1}{N}\sum\limits_{t=1}^NA_t^2\le P_A+\epsilon.\label{eq:Power_Constraints_t}
\end{align}
Therefore we modify the codebook generation in the achievability scheme for Corollary~\ref{cor:Ach_NC_Simple} such that the power constraints are satisfied for all the codewords in our codebooks and show that this modification does not affect our covertness analysis. Let $\calE^*$ denote the error event that \eqref{eq:Power_Constraints_t} is not satisfied. When $\calE^*$ occurs, we replace the input sequence $X^N$ and the action $A^N$ with the all-zero codewords. Similar to \cite{LeeWang18,ZivBC17}, for a fixed $\epsilon>0$, one can show that the probability of $\calE^*$ vanishes when $N$ grows. For a small $\epsilon$, the coding scheme described above satisfies the power constraints in \eqref{eq:Power_Constraints_t}.

Now we show that the coding scheme described above does not affect our covert analysis. Let $\hat{P}_{Z^N}$ denote the distribution induced at the warden's channel output by the coding scheme described above and $\bar{P}_{Z^N}$ denote the distribution induced at the warden's channel output by the coding scheme described above but without the additional step of replacing the codewords that violate the power constraints in \eqref{eq:Power_Constraints_t} with the all-zero codeword. The covertness analysis in the proof of Corollary~\ref{cor:Ach_NC_Simple} is valid when we use \acp{PDF} instead of \acp{PMF} without any power constraint, therefore we have
\begin{align}
    \bbE_{C^{(N)}}\bbV\Big(\bar{P}_{Z^N|C^{(N)}},Q_0^{\otimes N}\Big)\xrightarrow[]{N\to\infty}0.\label{eq:Vanishing_Pbar_TV}
\end{align}
For a fixed codebook $\calC^{(N)}$,
\begin{subequations}\label{eq:Pbar_hat_Gaussian}
\begin{align}
    \bar{P}_{Z^N|\calC^{(N)}}&=P_{\calE^*|\calC^{(N)}}P^*+\big(1-P_{\calE^*|\calC^{(N)}}\big)P'\label{eq:Pbar_Gaussian}\\
    \hat{P}_{Z^N|\calC^{(N)}}&=P_{\calE^*|\calC^{(N)}}Q_0^{\otimes N}+\big(1-P_{\calE^*|\calC^{(N)}}\big)P'\label{eq:Phat_Gaussian}
\end{align}
\end{subequations}where $P^*$ denotes the distribution $\bar{P}_{Z^N|\calC^{(N)}}$ condition on the error event $\calE^*$ and $P'$ denotes the distribution $\bar{P}_{Z^N|\calC^{(N)}}$ condition on the complement of the error event $\calE^*$. Therefore, 
\begin{align}
    \bbV\Big(\hat{P}_{Z^N|\calC^{(N)}},Q_0^{\otimes N}\Big)&=\big(1-P_{\calE^*|\calC^{(N)}}\big)\bbV\Big(P',Q_0^{\otimes N}\Big)\nonumber\\
    &\le\bbV\Big(P',Q_0^{\otimes N}\Big).\label{eq:Phat_Q0}
\end{align}We also have,
\begin{align}
    \bbV\Big(\bar{P}_{Z^N|\calC^{(N)}},Q_0^{\otimes N}\Big)&\mathop\ge\limits^{(a)}\big(1-P_{\calE^*|\calC^{(N)}}\big)\bbV\Big(P',Q_0^{\otimes N}\Big)-P_{\calE^*|\calC^{(N)}}\bbV\Big(P^*,Q_0^{\otimes N}\Big)\nonumber\\
    &\mathop\ge\limits^{(b)}\bbV\Big(P',Q_0^{\otimes N}\Big)-2P_{\calE^*|\calC^{(N)}}\label{eq:Bounding_TV_G}
\end{align}where $(a)$ follows from \eqref{eq:Pbar_Gaussian} and since for $a,b\in\bbR$ we have $|a+b|\ge|a|-|b|$; and $(b)$ follows since the total variation distance is less than 2. Now combining \eqref{eq:Phat_Q0} and \eqref{eq:Bounding_TV_G} leads to
\begin{align}
    \bbE_{C^{(N)}}\bbV\Big(\hat{P}_{Z^N|C^{(N)}},Q_0^{\otimes N}\Big)\le\bbE_{C^{(N)}}\bbV\Big(\bar{P}_{Z^N|C^{(N)}},Q_0^{\otimes N}\Big)+2P_{\calE^*|C^{(N)}}.\label{eq:Average_Phat_TV}
\end{align}From \eqref{eq:Vanishing_Pbar_TV} and the fact that $P_{\calE^*|C^{(N)}}$ goes to zero when $N$ grows, the \ac{RHS} of \eqref{eq:Average_Phat_TV} vanishes when $N$ grows. Since $\hat{P}_{Z^N|C^{(N)}}$ is absolutely continuous \ac{wrt} $Q_0^{\otimes N}$ \eqref{eq:Average_Phat_TV} implies that $\bbE_{C^{(N)}}\bbD\Big(\hat{P}_{Z^N|C^{(N)}}\big|\big|Q_0^{\otimes N}\Big)\xrightarrow[]{n\to\infty}0$.

Similar to \cite[Section~8.6]{Cover_Book} and \cite[Section~VI.A]{LeeWang18} by quantizing the involved \acp{RV} one can show that the achievable rate in Corollary~\ref{cor:Ach_NC_Simple} is achievable. Finally, computing the mutual information and entropy terms in Corollary~\ref{cor:Ach_NC_Simple} by considering \eqref{eq:Parameter_Definitions} and  \eqref{eq:UXX*_No_Power_Constraint} completes the proof of Theorem~\ref{thm:Gaussian_Inner}.

\subsubsection{When \texorpdfstring{$P_X\ge T$}{} and \texorpdfstring{$P_X+P_A\ge2T$}{} }
\label{sec:High_Power}
The achievability proof for this regime is similar to that in Section~\ref{sec:Low_Power_Both}, the only difference is that since we have enough channel input power and action power we choose the auxiliary \ac{RV} $U$, the conditional distribution $P_{U|SA}$, and $P_{X|US}$ in Corollary~\ref{cor:Ach_NC_Simple} as,
    \begin{align*}
    X^*&\sim\calN(0,P_X^*),\quad U=X^*,\quad
    X=U-N_S.
    \end{align*}

\subsubsection{When \texorpdfstring{$P_X+P_A>2\sqrt{TP_X}-P_X$}{} and \texorpdfstring{$P_X<T$}{} }
\label{sec:High_Power_Only_For_Action}
The achievability proof for this regime is also similar to that in Section~\ref{sec:Low_Power_Both}, the only difference is that we choose the auxiliary \ac{RV} $U$, the conditional distribution $P_{U|SA}$, and $P_{X|US}$ in Corollary~\ref{cor:Ach_NC_Simple} as,
    \begin{align*}
    X^*&\sim\calN(0,P_X^*),\quad U=X^*+\frac{P_X^*}{P_X^*+\sigma_Y^2}(1-\alpha)N_S,\\
    X&=X^*-\alpha N_S\Rightarrow X=U-\frac{P_X^*+\alpha\sigma_Y^2}{P_X^*+\sigma_Y^2}N_S,
    \end{align*}where
    \begin{align*}
    \alpha&\triangleq\sqrt{\frac{P_X}
    {T}},\quad\beta\triangleq\sqrt{\frac{2\sqrt{TP_X}-P_X}{P_A}},\quad T^*\triangleq(1-\alpha)^2T,\quad P_X^*+\beta^2P_A=T-T^*,
    \end{align*}and $X^*$, $N_S$, and $A$ are mutually independent.
\end{proof}
\begin{remark}[Comparing the Lower and Upper Bound]
    When the transmitter and the receiver share a secret key of sufficient rate such the conditions \eqref{eq:Condion_Gauusian} and $\sigma_Y^2\le\sigma_Z^2$ are satisfied the lower and upper bounds in Theorem~\ref{thm:Gaussian_Converse} and Theorem~\ref{thm:Gaussian_Inner} meet.
\end{remark}

\subsection{Cooperative Gaussian Channel}
\label{sec:Cooperation}
\begin{figure*}
\centering
\includegraphics[width=11cm]{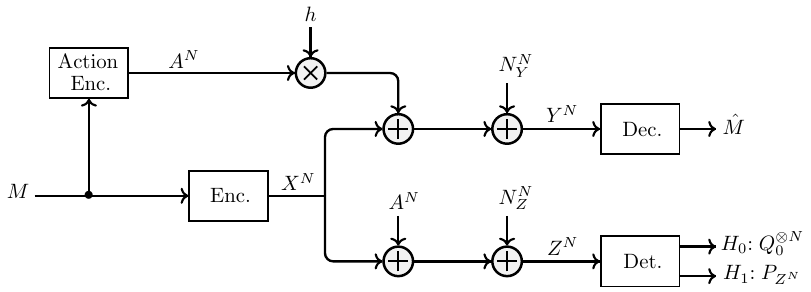}
\caption{Cooperation for covert communication over a Gaussian channel}
\label{fig:SYstem_Gaussian}
\end{figure*}
In the state-dependent channels when the randomness generated by nature is not present, i.e., $S=\emptyset$, the problem reduces to the point-to-point channel \cite{LeeWang18,Keyless22}, in which the covert capacity is shown to follow the square root law and as a result the covert capacity vanishes when the block length $N$ grows. Also, in the \ac{AWGN} channel studied in Section~\ref{sec:NC_AWGN}, since the channel outputs $Y$ and $Z$ are symmetric \ac{wrt} the action $A$ and the channel input $X$, the only resource that the transmitter can exploit to achieve a positive covert rate is the randomness introduced by nature, i.e.,  $N_S$, and when this randomness is absent, i.e., $T=0$, the covert capacity is zero, as seen in Theorem~\ref{thm:Gaussian_Converse}. In this section, we show that when the randomness generated by nature is not present but the channel outputs $Y$ and $Z$ are non-symmetric \ac{wrt} the channel inputs $A$ and $X$, it is still possible to achieve a positive covert rate through user cooperation. 

In the \ac{AWGN} problem discussed in Section~\ref{sec:NC_AWGN} let $N_S=\emptyset$ and the channel law be as (see Fig.~\ref{fig:SYstem_Gaussian}),
\begin{align}
    Y&=X+hA+N_Y,\quad\text{and}\quad
    Z=X+A+N_Z,\label{eq:system_Model_Gauusian}
\end{align}
where $h\in\bbR$ is a constant and known by all the terminals, $N_Y\sim\calN(0,\sigma_Y^2)$, and $N_Z\sim\calN(0,\sigma_Z^2)$ are independent Gaussian \acp{RV}, and $A$ is the output of the action encoder and $X$ is the channel input, which are restricted as \eqref{eq:Channel_Input_Power}. 
Similar to the previous section, the innocent symbols are $a_0=x_0=0$ therefore when the transmitter is not communicating with the receiver, the distribution induced on the warden's observation is $Q_0^{\otimes N}$, where $Q_0\sim\calN(0,\sigma_Z^2)$. This problem can be seen as covert communication over a two-user \ac{MAC}, where both transmitters try to transmit the same message $M$ to a legitimate receiver. Also, this problem can be seen as covert communication over a transmitter with two antennas and a receiver and a warden each with one antenna. Here we assume that the transmitters use a deterministic encoder, as a result since each transmitter knows the message it also knows the other transmitter's channel input. Here, the covertness constraint is \eqref{eq:Covertness}, and the covert capacity is defined as the supremum of all achievable covert rates and is denoted by $\textup{C}_{\mbox{\scriptsize\rm GAD}}$. 
\begin{theorem}
\label{thm:Cooperative_Capacity}
The covert capacity of the Gaussian cooperative \ac{MAC} with one message, defined in \eqref{eq:system_Model_Gauusian}, is
\begin{align*}
    \textup{C}_{\mbox{\scriptsize\rm GAD}}=\frac{1}{2}\log\left(1+\frac{(h-1)^2\beta^2P_A}{\sigma_Y^2}\right),
\end{align*}where $\beta\triangleq\min\left\{1,\sqrt{\frac{P_X}{P_A}}\right\}$.
\end{theorem}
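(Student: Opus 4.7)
For achievability I would exploit the fact that $Q_0=\calN(0,\sigma_Z^2)$ coincides with the warden's noise marginal, so perfect covertness is attainable by engineering $X+A\equiv 0$ at the warden. Concretely, draw the codewords $A^N(m)\sim\calN(0,\beta^2 P_A)^{\otimes N}$ with $\beta$ as in the theorem statement and set $X^N(m)=-A^N(m)$. Both power constraints then hold with $\bbE[A_i^2]=\bbE[X_i^2]=\beta^2 P_A$, since $\beta^2\le 1$ and $\beta^2\le P_X/P_A$; the warden observes $Z^N=N_Z^N\sim Q_0^{\otimes N}$ so that $\bbD(P_{Z^N}\|Q_0^{\otimes N})=0$ exactly; and the legitimate receiver sees the scalar AWGN channel $Y_i=(h-1)A_i+N_{Y,i}$, whose capacity $\tfrac{1}{2}\log\bigl(1+(h-1)^2\beta^2 P_A/\sigma_Y^2\bigr)$ is achievable by Shannon's theorem.

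For the converse I would start from Fano's inequality and the deterministic-encoder identity $h(Y^N\mid M)=\tfrac{N}{2}\log(2\pi e\sigma_Y^2)$ to obtain $NR\le h(Y^N)-\tfrac{N}{2}\log(2\pi e\sigma_Y^2)+N\epsilon_N$, then single-letterize via the Gaussian maximum-entropy bound $h(Y^N)\le\sum_i\tfrac{1}{2}\log(2\pi e\,\bbE[Y_i^2])$. Writing $U_i\triangleq X_i+A_i$ so that $Y_i=(h-1)A_i+U_i+N_{Y,i}$ and $Z_i=U_i+N_{Z,i}$, the problem reduces to controlling $\bbE[Y_i^2]$ and in particular the second moments of $U_i$ and $A_i$.

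The main technical obstacle is showing that the covertness constraint forces $\tfrac{1}{N}\sum_i\bbE[U_i^2]\to 0$. I would tensorize via $\bbD(P_{Z^N}\|Q_0^{\otimes N})\ge\sum_i\bbD(P_{Z_i}\|Q_0)$ (chain rule plus convexity of KL) and apply Gaussian maximum entropy to $Z_i=U_i+N_{Z,i}$ to obtain the per-letter lower bound $\bbD(P_{Z_i}\|Q_0)\ge\tfrac{1}{2}\bigl(x_i-\ln(1+x_i)\bigr)+\tfrac{\mu_i^2}{2\sigma_Z^2}$, where $x_i=\Var{U_i}/\sigma_Z^2$ and $\mu_i=\bbE[U_i]$. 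Since the input powers are uniformly bounded and hence so is $x_i$, the elementary inequality $x-\ln(1+x)\ge x^2/(2(1+x))$ combined with a Cauchy--Schwarz step across the $N$ letters yields $\tfrac{1}{N}\sum_i\bbE[U_i^2]=O\bigl(\sqrt{\bbD(P_{Z^N}\|Q_0^{\otimes N})/N}\bigr)\to 0$, as required.

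Finally, Cauchy--Schwarz applied to the cross term $\bbE[A_iU_i]$ gives $\tfrac{1}{N}\sum_i\bbE[Y_i^2]\le(h-1)^2\tfrac{1}{N}\sum_i\bbE[A_i^2]+\sigma_Y^2+o(1)$, and the identity $\bbE[X_i^2]-\bbE[A_i^2]=\bbE[U_i^2]-2\bbE[A_iU_i]$ forces $\bigl|\tfrac{1}{N}\sum_i\bbE[X_i^2]-\tfrac{1}{N}\sum_i\bbE[A_i^2]\bigr|\to 0$, so that $\tfrac{1}{N}\sum_i\bbE[A_i^2]\le\min(P_A,P_X)+o(1)=\beta^2 P_A+o(1)$. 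Applying Jensen's inequality to the concave $\log$ then delivers $\tfrac{1}{N}\bbI(M;Y^N)\le\tfrac{1}{2}\log\bigl(1+(h-1)^2\beta^2 P_A/\sigma_Y^2\bigr)+o(1)$, which matches the achievability bound as $N\to\infty$.
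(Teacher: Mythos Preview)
Your achievability is the same idea as the paper's: force $X=-A$ so the warden sees pure noise, then Gaussian coding over the resulting AWGN channel $Y=(h-1)A+N_Y$ at power $\beta^2 P_A=\min(P_A,P_X)$. The paper routes this through its general achievability (Corollary~\ref{cor:Ach_NC_Simple}) with the identification $U=A$, $S=A$, but the content is identical.

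Your converse is correct but takes a genuinely different route. The paper single-letterizes via time-sharing ($A=(A_T,T)$, etc.), invokes the constraint $P_Z=Q_0$ to get the exact second-moment identity $\tilde P_X+\tilde P_A+2\bbE[AX]=0$, and then observes that the implied correlation coefficient $\rho_{AX}=-(\tilde P_X+\tilde P_A)/(2\sqrt{\tilde P_A\tilde P_X})$ satisfies $\rho_{AX}\ge -1$ only when $\tilde P_X=\tilde P_A$ and $\rho_{AX}=-1$, forcing $X=-A$ at the single-letter level. You instead stay at the $N$-letter level, lower-bound each $\bbD(P_{Z_i}\|Q_0)$ via the Gaussian maximum-entropy inequality, and use the tensorization $\bbD(P_{Z^N}\|Q_0^{\otimes N})\ge\sum_i\bbD(P_{Z_i}\|Q_0)$ to drive $\tfrac1N\sum_i\bbE[U_i^2]\to 0$ quantitatively. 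Your approach is more explicit about the asymptotics (you get the $O(\sqrt{\tilde\epsilon/N})$ rate), whereas the paper's single-letter argument tacitly assumes the exact constraint $P_Z=Q_0$ and would need an $\epsilon$-rate wrapper (as in Appendix~\ref{proof:thm:Converse_NC}) to be fully rigorous. One small wording issue: you write ``the input powers are uniformly bounded and hence so is $x_i$,'' but under an \emph{average} power constraint the individual $x_i$ need not be bounded; fortunately your Cauchy--Schwarz step only requires $\tfrac1N\sum_i(1+x_i)=O(1)$, which does follow from the average power constraint, so the argument goes through.
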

\begin{proof}
We first prove the achievability by using the achievable rate in Corollary~\ref{cor:Ach_NC_Simple}, by setting $S=A$. Similar to the achievability proof of the Theorem~\ref{thm:Gaussian_Inner} in Section~\ref{sec:NC_AWGN}, one can show that the covertness analysis in the achievability proof of Corollary~\ref{cor:Ach_NC_Simple} can be extended to the problem setup studied in this section. We now choose the auxiliary \ac{RV} $U$ in Corollary~\ref{cor:Ach_NC_Simple} as $U=A$ and to satisfy the covertness constraint, we choose the channel input as $X=-A$. This ensures that the channel input $X$ will cancel the effect of the action input $A$ at the warden's channel output, and as a result the warden only observes $N_Z$. Note that when $P_A>P_X$ the action encoder needs to choose the action $A$ such that the encoder has enough power to choose $X=-A$. Therefore, we choose the action input as $A\sim\calN(0,\beta^2P_A)$, where $\beta\triangleq\min\left\{1,\sqrt{\frac{P_X}{P_A}}\right\}$. Note that, $\beta=1$ is equivalent to $P_A\le P_X$ and the action input will be $A\sim\calN(0,P_A)$. In this case, since $P_A\le P_X$ the encoder will have enough power to choose $X=-A$. Also, $\beta=\sqrt{\frac{P_X}{P_A}}$ is equivalent to $P_A>P_X$ and as a result, the action input will be $A\sim\calN(0,P_X)$, which is a valid choice for both the action encoder and the encoder. One can show that by choosing the \acp{RV} as described above, the achievable rate in Corollary~\ref{cor:Ach_NC_Simple} reduces to the covert rate provided in Theorem~\ref{thm:Cooperative_Capacity}.

We now present the converse proof. Let $\tilde{P}_A\triangleq\bbE[A^2]\le P_A$ and $\tilde{P}_X\triangleq\bbE[X^2]\le P_X$. Now we have
\begin{align}
    NR&=H(M)\nonumber\\
    &\mathop\le\limits^{(a)}\bbI(M;Y^N)+N\epsilon\nonumber\\
    &\mathop\le\limits^{(b)}\bbI\big(M,A^N(M),X^N(M);Y^N\big)+N\epsilon\nonumber\\
    &\mathop=\limits^{(c)}\bbI\big(A^N(M),X^N(M);Y^N\big)+N\epsilon\nonumber\\
    &\mathop\le\limits^{(d)}\sum_{t=1}^N\bbI(A_t,X_t;Y_t)+N\epsilon\nonumber\\
    &=\sum_{t=1}^N\bbI(A_T,X_T;Y_T|T=t)+N\epsilon\nonumber\\
    &=N\bbI(A_T,X_T;Y_T|T)+N\epsilon\nonumber\\
    &\le N\bbI(A_T,T,X_T;Y_T)+N\epsilon\nonumber\\
    &\mathop=\limits^{(e)}N\bbI(A,X;Y)+N\epsilon\nonumber\\
    &=N[\dent(Y)-\dent(Y|A,X)]+N\epsilon\nonumber\\
    &\mathop\le\limits^{(f)}N\left[\frac{1}{2}\log2\pi e\left(\tilde{P}_X+2h\bbE[AX]+h^2\tilde{P}_A+\sigma_Y^2+\sigma_Y^2\right)-\frac{1}{2}\log2\pi e\sigma_Y^2\right]+N\epsilon\label{eq:Rate_Cons_Cooperative}
\end{align}where
\begin{itemize}
    \item[$(a)$] follows from Fano's inequality;
    \item[$(b)$] follows since $A^N$ and $X^N$ are functions of the message $M$\footnote{Note that the assumption that $A^N$ and $X^N$ are functions of the message $M$ is consistent with the achievability scheme in Corollary~\ref{cor:Ach_NC_Simple} since a stochastic encoder is not needed in the achievability proof of Corollary~\ref{cor:Ach_NC_Simple}.};
    \item[$(c)$] follows since $M-\big(A^N,X^N\big)-Y^N$ forms a Markov chain;
    \item[$(d)$] follows since the conditioning does not increase the entropy and the fact that the channel is memoryless;
    \item[$(e)$] follows by defining $A\triangleq(A,T)$, $X\triangleq X_T$, and $Y\triangleq Y_T$;
    \item[$(f)$] follows from the maximum differential entropy lemma \cite{ElGamalKim}.
\end{itemize}Now the covertness constraint $P_Z=Q_0$, where $Q_0\sim\calN\big(0,\sigma_Z^2\big)$, is $\tilde{P}_X+2\bbE[AX]+\tilde{P}_A+\sigma_Z^2=\sigma_Z^2$ which leads to $\bbE[AX]=-\frac{\tilde{P}_X+\tilde{P}_A}{2}$. Therefore, the correlation coefficient is $\rho_{AX}=\frac{\bbE[AX]}{\sqrt{\tilde{P}_A\tilde{P}_X}}=-\frac{\tilde{P}_X+\tilde{P}_A}{2\sqrt{\tilde{P}_A\tilde{P}_X}}$. Note that the correlation coefficient is $-1\le\rho_{AX}\le1$ and since $\tilde{P}_X\ge0$ and $\tilde{P}_A\ge0$ we have $-1\le\rho_{AX}\le0$, and $\rho_{AX}\le0$ is always satisfied. Therefore, we must have $-1\le-\frac{\tilde{P}_X+\tilde{P}_A}{2\sqrt{\tilde{P}_A\tilde{P}_X}}$, which is equivalent to $\tilde{P}_X+\tilde{P}_A\le2\sqrt{\tilde{P}_A\tilde{P}_X}\Leftrightarrow\big(\sqrt{\tilde{P}_X}-\sqrt{\tilde{P}_A}\big)^2\le0$, this inequality is satisfied only when $\tilde{P}_X=\tilde{P}_A$ and therefore $\rho_{AX}=-1$. This means that $X=-A$ and therefore the bound in \eqref{eq:Rate_Cons_Cooperative} reduces to
\begin{align}
    R\le\frac{1}{2}\log\left(1+\frac{(h-1)^2\tilde{P}_A}{\sigma_Y^2}\right).\label{eq:Rate_Cons_Cooperative_1}
\end{align}
Since $X=-A$, the power constraints $0\le\tilde{P}_A\le P_A$ and $0\le\tilde{P}_X\le P_X$ reduces to $0\le\tilde{P}_A\le\min\{P_A,P_X\}$ and therefore
\begin{align}
    \argmax_{\substack{0\le\tilde{P}_A\le P_A\\0\le\tilde{P}_X\le P_X\\X=-A}}\left(\frac{(h-1)^2\tilde{P}_A}{\sigma_Y^2}\right)&=\argmax_{\substack{0\le\tilde{P}_A\le\min\{P_A,P_X\}}}\left(\frac{(h-1)^2\tilde{P}_A}{\sigma_Y^2}\right)\nonumber\\
    &=\min\{P_A,P_X\}.\label{eq:argmax_Cooperative}
\end{align}Substituting \eqref{eq:argmax_Cooperative} in \eqref{eq:Rate_Cons_Cooperative_1} completes the converse proof.
\end{proof}
\section{Conclusions}
\label{sec:Conclusion}
This paper studies asymptotically keyless covert communication over channels with \ac{ADSI} when the \ac{ADSI} is known either non-causally or causally at the transmitter but unknown at the warden. Our results show the feasibility of asymptotically keyless covert communication with a positive rate in \acp{DMC} with action-dependent states, \ac{AWGN} channels with action-dependent states, cooperative Gaussian channels, and channels with a rewrite option. The results in this paper can also be used to study covert communication over \ac{MAC} with a common message and \ac{CSI} at one transmitter.

\begin{appendices}

\section{Proof of Theorem~\ref{thm:Acievability_KG}}
\label{proof:thm:Acievability_KG}
Our achievability proof is based on a block Markov coding scheme where we transmit $B$ independent messages over $B$ blocks each of length $n$ and $N=nB$. Therefore, the warden's observation is $Z^N=(Z_1^n,Z_2^n,\dots,Z_B^n)$. The distribution induced by our coding scheme at the warden's channel output observation is denoted by $P_{Z^N}\triangleq P_{Z_1^n,Z_2^n,\dots,Z_B^n}$, and the target distribution at the warden's channel output observation is $Q_0^{\otimes N}\triangleq\prod_{b = 1}^BQ_0^{\otimes n}$. Hence,
\begin{align}
    \kd{P_{Z^N}||}{Q_0^{\otimes N}}&=\kd{P_{Z_1^n,Z_2^n,\dots,Z_B^n}}{\big|\big|Q_0^{\otimes nB}}\nonumber\\
    &=\sum\limits_{b=1}^B\kd{P_{Z_b^n|Z_{b+1}^{B,n}}||}{Q_0^{\otimes n}\big|P_{Z_{b+1}^{B,n}}}\nonumber\\
    &=\sum_{b=1}^B\left[\kd{P_{Z_b^n}||}{Q_0^{\otimes n}}+\kd{P_{Z_b^n|Z_{b+1}^{B,n}}\big|\big|}{P_{Z_b^n}\big|P_{Z_{b+1}^{B,n}}}\right]\nonumber\\
    &=\sum_{b=1}^B\left[\kd{P_{Z_b^n}||}{Q_0^{\otimes n}}+\bbI\left(Z_b^n;Z_{b+1}^{B,n}\right)\right]\label{eq:Cov_Comm_Constraint_NC}
\end{align}where the second inequality follows by defining $Z_{b+1}^{B,n}\triangleq\left(Z_{b+1}^n,Z_{b+2}^n,\dots,Z_B^n\right)$ and the chain rule \cite[Theorem~2.15]{YuryWu_Book}. Thus, $\kd{P_{Z^N}||}{Q_0^{\otimes N}}\xrightarrow[]{n\to\infty}0$ is equivalent to,
\begin{subequations}
\begin{align}
    \kd{P_{Z^n}||}{Q_0^{\otimes n}}&\xrightarrow[]{n\to\infty}0,\label{eq:Eblevel_iid_NC}\\
    \bbI\left(Z_b^n;Z_{b+1}^{B,n}\right)&\xrightarrow[]{n\to\infty}0,\label{eq:Block_Indep_NC}
\end{align}for each $b\in[B]$.
\end{subequations}This requires a coding scheme which induces $Q_0^{\otimes n}$ at the warden's channel output for each block $b\in[B]$ while eliminating the dependencies across the blocks formed as a result of the block-Markov encoding scheme. 
Now fix $\epsilon>0$, $P_A$, $P_{V|S}$, $P_{U|AS}$, and $P_{X|US}$. 

\subsection{Codebook Generation}
\subsubsection{Codebooks for Key Generation}
For each block $b\in[B]$, let $C_{V_b}^{(n)}\triangleq\big(V^n(j_b)\big)_{j_b\in\calJ}$, where $\calJ\triangleq\brk{2^{nR_J}}$, be a set of random codewords generated \ac{iid} according to $P_V$, where $P_V=\sum_{a\in\calA}\sum_{s\in\calS}P_A(a)Q_{S|A}(s|a)P_{V|S}(v|s)$. A realization of $C_{V_b}^{(n)}$ is denoted by $\calC_{V_b}^{(n)}\triangleq\big(v^n(j_b)\big)_{j_b\in\calJ}$. For each block $b\in[B]$, partition the indices $j_b\in\calJ$ into bins $\calB(\ell_b)$, where $\ell_b\in[2^{n\tilde{R}_L}]$, by applying function $\Psi_L:v^n(j_b)\mapsto[2^{n\tilde{R}_L}]$ via random binning by selecting $\Psi_L\big(v^n(j_b)\big)$ independently and uniformly at random for every $v^n(j_b)\in\calV^n$. For each block $b\in[B]$, create a function $\Psi_K:v^n(j_b)\mapsto[2^{n\tilde{R}_K}]$ via random binning by selecting $\Psi_K\big(v^n(j_b)\big)$ independently and uniformly at random for every $v^n(j_b)\in\calV^n$. The transmitter split the secret key $k_b=\Psi_K\big(v^n(j_b)\big)$ generated in the block $b\in[B]$ from the description of the \ac{ADSI} $v^n(j_b)$ into two independent parts $k_{1,b}$, with the rate $\tilde{R}_{K_1}$, and $k_{2,b}$, with the rate $\tilde{R}_{K_2}$, where $\tilde{R}_K=\tilde{R}_{K_1}+\tilde{R}_{K_2}$ \cite[Section~4.3*]{ElGamalKim}, and these secret keys will be used to help the transmitter and the receiver in the block~$b+2$. 

For each block $b\ge3$, we denote the secret key generated in the previous blocks between the transmitter and the receiver by $R_K=R_{K_1}+R_{K_2}$ while the secret that will be generated in the current block $b$ is denoted by $\tilde{R}_K=\tilde{R}_{K_1}+\tilde{R}_{K_2}$, therefore, we should have $\tilde{R}_K\ge R_K$. 
\subsubsection{Action Codebooks}For each block $b\in[B]$, let $C_{A_b}^{(n)}\triangleq\big(A^n(m_b,k_{1,b-2})\big)_{(m_b,k_{1,b-2})\in\calM\times\calK_1}$, where $\calM\triangleq\big[2^{nR}\big]$ and $\calK_1\triangleq\big[2^{nR_{K_1}}\big]$, be a set of random codewords generated \ac{iid} according to $P_A$. We denote a realization of $C_{A_b}^{(n)}$ by $\calC_{A_b}^{(n)}\triangleq\big(a^n(m_b,k_{1,b-2})\big)_{(m_b,k_{1,b-2})\in\calM\times\calK_1}$. 
\subsubsection{Codebooks for Message Transmission}
For each block $b\in[B]$ and for each $(m_b,,k_{1,b-2})\in\calM\times\calK_1$, let $C_{U_b}^{(n)}\triangleq\big(U^n(m_b,k_{1,b-2},\ell_{b-1},k_{2,b-2},i_b)\big)_{(m_b,k_{1,b-2},\ell_{b-1},k_{2,b-2},i_b)\in\calM\times\calK_1\times\calL\times\calK_2\times\calI}$, where $\calL\triangleq\big[2^{nR_L}\big]$,  $\calK_2\triangleq\big[2^{nR_{K_2}}\big]$, and $\calI\triangleq\big[2^{nR_I}\big]$, be a random codebook generated \ac{iid} according to $\prod\nolimits_{i=1}^n P_{U|A}\big(\cdot|A_i(m_b,k_{1,b-2})\big)$, where $P_{U|A}(\cdot|a)=\sum_{s\in\calS}Q_{S|A}(s|a)P_{U|AS}(u|a,s)$. The indices $(\ell_{b-1},k_{b-2},i_b)$ can also be interpreted as three layer random binning. A realization of $C_{U_b}^{(n)}$ is denoted by $\calC_{U_b}^{(n)}\triangleq\big(u^n(m_b,k_{1,b-2},\ell_{b-1},k_{2,b-2},i_b)\big)_{(m_b,k_{1,b-1},\ell_{b-1},k_{2,b-2},i_b)\in\calM\times\calK_1\times\calL\times\calK_2\times\calI}$. Let $C_b^{(n)}\triangleq\left(C_{A_b}^{(n)},C_{V_b}^{(n)},C_{U_b}^{(n)}\right)$, $\calC_b^{(n)}\triangleq\left(\calC_{A_b}^{(n)},\calC_{V_b}^{(n)},\calC_{U_b}^{(n)}\right)$, $C_N\triangleq\left(C_b^{(n)}\right)_{b\in[B]}$, and~$\calC_N\triangleq\left(\calC_b^{(n)}\right)_{b\in[B]}$. Hereafter, the distributions induced by a fixed codebook $\calC_b^{(n)}$ are denoted by $P_{\cdot|\calC_b^{(n)}}$, and the distributions induced by a random codebook $C_b^{(n)}$ are denoted by $P_{\cdot|C_b^{(n)}}$. For $b\in[B]$, to facilitate the analysis, we consider the following joint \ac{PMF} for a fixed codebook $\calC_b^{(n)}$ in which all the indices $m_b$, $k_{1,b-2}$, $j_b$, $\ell_{b-1}$, $k_{2,b-2}$, and $i_b$ are chosen uniformly at random,
\begin{subequations}\label{eq:Encoding_Ideal_PMF_NC}
\begin{align}
    &\Gamma_{M_bK_{1,b-2}A^nJ_bV^nL_{b-1}K_{2,b-2}I_bU^nS_b^nY_b^nZ_b^nK_{b-1}L_bK_b|\calC_b^{(n)}}(m_b,k_{1,b-2},\tilde{a}^n,j_b,\tilde{v}^n,\ell_{b-1},k_{2,b-2},i_b,\tilde{u}^n,s_b^n,\nonumber\\
    &\qquad y_b^n,z_b^n,k_{b-1},\ell_b,k_b)=\frac{1}{\abs{\calM}\abs{\calK_1}\abs{\calJ}\abs{\calL}\abs{\calK_2}\abs{\calI}}\indi{1}_{\{a^n(m_b,k_{1,b-2})=\tilde{a}^n\}\cap\{v^n(j_b)=\tilde{v}^n\}}\times\nonumber\\
    &\qquad\indi{1}_{\{u^n(m_b,k_{1,b-2},\ell_{b-1},k_{2,b-2},i_b)=\tilde{u}^n\}}Q_{S|AVU}^{\otimes n}\big(s^n|\tilde{a}^n,\tilde{v}^n,\tilde{u}^n\big) W_{YZ|SU}^{\otimes n}(y_b^n,z_b^n|s^n,\tilde{u}^n)\times\nonumber\\
    &\qquad\frac{1}{\abs{\calK_1}\abs{\calK_2}}\indi{1}_{\big\{\ell_b=\Psi_L\big(v^n(j_b)\big)\big\}\bigcap\big\{k_b=\Psi_K\big(v^n(j_b)\big)\big\}},\label{eq:Encoding_Ideal_Joint_NC}
\end{align}where $W_{YZ|SU}(y,z|s,u)=\sum_{x\in\calX}P_{X|US}(x|u,s)W_{YZ|SX}(y,z|s,x)$, and
\begin{align}
    Q_{S|AVU}(s|a,v,u)&=\frac{P_A(a)Q_{S|A}(s|a)P_{V|S}(v|s)P_{U|AS}(u|a,s)}{\sum_{s\in\calS} P_A(a)Q_{S|A}(s|a)P_{V|S}(v|s)P_{U|AS}(u|a,s)}.\label{eq:Ideal_PMF_S_Dist_NC}
\end{align}
\end{subequations}

\subsection{Encoding}To initiate the key generation process, the transmitter and the receiver are assumed to share $k_{-1}\triangleq(k_{1,-1},k_{2,-1})$ and $k_0\triangleq(k_{1,0},k_{2,0})$ to be used in blocks $b=1$ and $b=2$, respectively. After the block $b=2$, the transmitter and the receiver use the secret key that they generate from the \ac{ADSI}.
\subsubsection{Encoding Scheme for the First Block}
Given the key $k_{-1}\triangleq(k_{1,-1},k_{2,-1})$, to transmit the message $m_1\in\calM$, the encoder first chooses the action sequence $a^n(m_1,k_{1,-1})$, then, the nature chooses the channel state $s_1^n$ according to the action sequence $a^n(m_1,k_{1,-1})$. Next, the encoder chooses the reconciliation index $\ell_0$ uniformly at random. Note that, the reconciliation index $\ell_0$ does not convey any information about the channel state. Now given the key $k_{-1}$, the encoder chooses the indices $i_1$ and $j_1$ according to the following distribution with $b=1$,
\begin{align}
&g_{\text{\tiny{LE}}}\big(i_b,j_b|m_b,\ell_{b-1},k_{b-2},a^n(m_b,k_{1,b-2}),s_b^n\big)\triangleq\nonumber\\
&\quad\frac{Q_{S|AVU}^{\otimes n}\big(s_b^n|a^n(m_b,k_{1,b-2}),v^n(j_b),u^n(m_b,k_{1,b-2},\ell_{b-1},k_{2,b-2},i_b)\big)}{\sum_{i'_b\in\calI}\sum_{j'_b\in\calJ}Q_{S|AVU}^{\otimes n}\big(s_b^n|a^n(m_b,k_{1,b-2}),v^n(j'_b),u^n(m_b,k_{1,b-2},\ell_{b-1},k_{2,b-2},i'_b)\big)},\label{eq:LE_NC}
\end{align}where $k_{b-2}\triangleq(k_{1,b-2},k_{2,b-2})$. 
Then, based on $(m_1,k_{1,-1},\ell_0,k_{2,-1},i_1)$ and, $j_1$ the encoder computes $u^n(m_1,k_{1,-1},\ell_0,k_{2,-1},i_1)$ and $v^n(j_1)$ and transmits $x_1^n$, where $x_{1,i}$ is generated by passing $u_i(m_1,k_{1,-1},\ell_0,k_{2,-1},i_1)$ and $s_{1,i}$ through the test channel $P_{X|US}\big(x_{1,i}|u_i(m_1,k_{1,-1},\ell_0,k_{2,-1},i_1),s_i\big)$. Simultaneously, the encoder generates a reconciliation index $\ell_1$ and a key $k_1$, which will be split into two independent parts $k_{1,1}$ and $k_{2,1}$, from the description of the \ac{ADSI} $v^n(j_1)$ to be transmitted in the second and the third blocks, respectively. 
\subsubsection{Encoding Scheme for the Second Block}
Similarly, given the key $k_0\triangleq(k_{1,0},k_{2,0})$, to transmit the message $m_2\in\calM$, the encoder first chooses the action sequence $a^n(m_2,k_{1,0})$, then, the nature chooses the channel state $s_2^n$ according to the action sequence $a^n(m_2,k_{1,0})$. Now given the key $k_0$ and the reconciliation information $\ell_1$, generated in the previous block, the encoder chooses the indices $i_2$ and $j_2$ according to \eqref{eq:LE_NC} with $b=2$. Then, based on $(m_2,k_{1,0},\ell_1,k_{2,0},i_2)$ and, $j_2$ the encoder computes $u^n(m_2,k_{1,0},\ell_1,k_{2,0},i_2)$ and $v^n(j_2)$ and transmits $x_2^n$, where $x_{2,i}$ is generated by passing $u_i(m_2,k_{1,0},\ell_1,k_{2,0},i_2)$ and $s_{2,i}$ through the test channel $P_{X|US}\big(x_{2,i}|u_i(m_2,k_{1,0},\ell_1,k_{2,0},i_2),s_{2,i}\big)$. Simultaneously, the encoder generates a reconciliation index $\ell_2$ and a key $k_2$, which will be split in two independent parts $k_{1,2}$ and $k_{2,2}$, from the description of the \ac{ADSI} $v^n(j_2)$ to be transmitted in the third and the fourth blocks, respectively.
\subsubsection{Encoding Scheme for the Block \texorpdfstring{$b\in\sbra{3}{B}$}{Lg}}
Similarly, given the key $k_{b-2}\triangleq(k_{1,b-2},k_{2,b-2})$, to transmit the message $m_b\in\calM$, the encoder first chooses the action sequence $a^n(m_b,k_{1,b-2})$, then, the nature chooses the channel state $s_b^n$ according to the action sequence $a^n(m_b,k_{1,b-2})$. Now given the key $k_{b-2}$, generated in the block $b-2$, and the reconciliation information $\ell_{b-1}$, generated in the previous block, the encoder chooses the indices $i_b$ and $j_b$ according to \eqref{eq:LE_NC}. Then, based on $(m_b,k_{1,b-2},\ell_{b-1},k_{2,b-2},i_b)$ and, $j_b$ the encoder computes $u^n(m_b,k_{1,b-2},\ell_{b-1},k_{2,b-2},i_b)$ and $v^n(j_b)$ and transmits $x_b^n$, where $x_{b,i}$ is generated by passing $u_i(m_b,k_{1,b-2},\ell_{b-1},k_{2,b-2},i_b)$ and $s_{b,i}$ through the test channel $P_{X|US}\big(x_{b,i}|u_i(m_b,k_{1,b-2},\ell_{b-1},k_{2,b-2},i_b),s_{b,i}\big)$. Simultaneously, the encoder generates a reconciliation index $\ell_b$ and a key $k_b\triangleq(k_{1,b},k_{2,b})$, which will be split in two independent parts $k_{1,b}$ and $k_{2,b}$, from the description of the \ac{ADSI} $v^n(j_b)$ to be transmitted in block $b+1$ and block $b+2$, respectively. 
Note that, in this scheme, the description of the \ac{ADSI} is used only for the key generation, not the message transmission.

Therefore, noting that our coding scheme ensures that the reconciliation index $L_{b-1}$ and the keys $K_{b-1}$ and $K_b$ are (arbitrarily) nearly uniformly distributed, the joint \ac{PMF} between the involved \acp{RV} is given by,
\begin{align}
    &P_{M_bK_{1,b-2}L_{b-1}K_{2,b-2}A^nS_b^nI_bJ_bU^nV^nX^nY_b^nZ_b^nK_{b-1}L_bK_b|\calC_b^{(n)}}(m_b,k_{1,b-2},\ell_{b-1},k_{2,b-2},\tilde{a}^n,s_b^n,i_b,j_b,\tilde{u}^n,\tilde{v}^n\nonumber\\
    &\quad,\tilde{x}^n,y_b^n,z_b^n,k_{b-1},\ell_b,k_b)=\frac{1}{\abs{\calM}\abs{\calK_1}\abs{\calL}\abs{\calK_2}}\indi{1}_{\{a^n(m_b,k_{1,b-2})=\tilde{a}^n\}}Q_{S|A}^{\otimes n}\big(s_b^n|\tilde{a}^n\big)\nonumber\\
    &\quad\times g_{\text{\tiny{LE}}}\big(i_b,j_b|m_b,\ell_{b-1},k_{b-2},\tilde{a}^n,s_b^n\big)\indi{1}_{\{u^n(m_b,k_{1,b-2},\ell_{b-1},k_{2,b-2},i_b)=\tilde{u}^n\}\cap\{v^n(j_b)=\tilde{v}^n\}}\nonumber\\
    &\quad\times P_{X|US}^{\otimes n}\left(\tilde{x}^n|u^n(m_b,k_{1,b-2},\ell_{b-1},k_{2,b-2},i_b),s_b^n\right)W_{YZ|SX}^{\otimes n}\big(y_b^n,z_b^n|s_b^n,\tilde{x}^n\big)\nonumber\\
    &\quad\times\frac{1}{\abs{\calK_1}\abs{\calK_2}}\indi{1}_{\big\{\ell_b=\Psi_L\big(v^n(j_b)\big)\big\}\bigcap\big\{k_b=\Psi_K\big(v^n(j_b)\big)\big\}}.\label{eq:Encoding_Joint_NC}
\end{align}
\begin{figure*}
\centering
\includegraphics[width=6.0in]{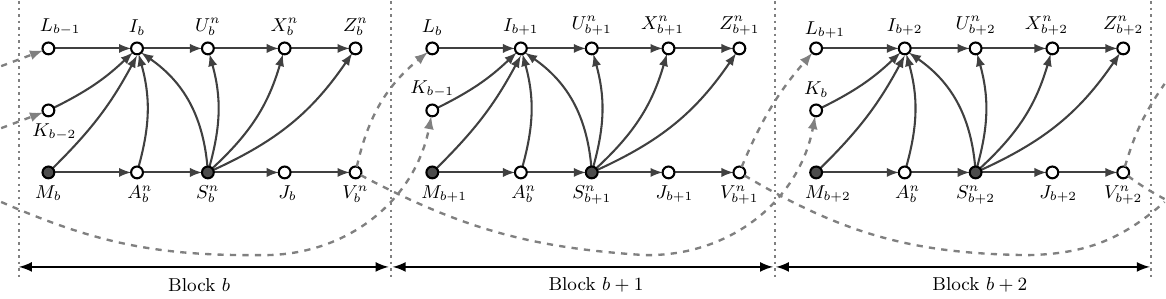}
\caption{Functional dependency graph for all the involved \acp{RV}.}
\label{fig:Dependency_NC}
\end{figure*}
\subsection{Covert Analysis}To prove that our coding scheme is covert, we show that $\bbE_{C_N}\bbD\left(P_{Z^N|C_N}||Q_Z^{\otimes N}\right)\xrightarrow[]{N\to\infty}0$, where
\begin{align}
    Q_Z(\cdot)&\triangleq\sum_{a\in\calA}\sum_{v\in\calV}\sum_{u\in\calU}\sum_{s\in\calS}\sum_{x\in\calX}P_A(a)P_V(v)P_{U|A}(u|a)Q_{S|AVU}(s|a,v,u)P_{X|US}(x|u,s)W_{Z|XS}(\cdot|x,s),\nonumber
\end{align}such that,
\begin{align}
    &\sum_{v\in\calV}\sum_{u\in\calU}P_V(v)P_{U|A}(u|a)Q_{S|AVU}(s|a,v,u)=Q_{S|A}(\cdot|a).\nonumber
\end{align}
Then we choose the \acp{PMF} $P_A$, $P_{V|S}$, $P_{U|SA}$, and $P_{X|US}$ such that $Q_0=Q_Z$. Now, for $b\in[B]$
\begin{subequations}\label{eq:Dep_Graph_equ2_NC}
\begin{align}
   \bbI\left(Z_b^n;Z_{b+1}^{B,n}\right)&\le\bbI\left(Z_b^n;K_{b-1},L_b,K_b,Z_{b+1}^{B,n}\right)\nonumber\\
   &=\bbI\left(Z_b^n;K_{b-1},L_b,K_b\right),\label{eq:Dep_Graph_equ_NC}
\end{align}where \eqref{eq:Dep_Graph_equ_NC} follows since $Z_b^n-\left(K_{b-1},L_b,K_b\right)-Z_{b+1}^{B,n}$ forms a Markov chain, as illustrated in the functional dependency graph in Fig.~\ref{fig:Dependency_NC}. Now,
\begin{align}
    \bbI\left(Z_b^n;K_{b-1},L_b,K_b\right)&=\bbD\left(P_{Z_b^nK_{b-1}L_bK_b|\calC_b^{(n)}}\big|\big|P_{Z_b^n|\calC_b^{(n)}}P_{K_{b-1}L_bK_b|\calC_b^{(n)}}\right)\nonumber\\
    &\le\bbD\left(P_{Z_b^nK_{b-1}L_bK_b|\calC_b^{(n)}}\big|\big|Q_Z^{\otimes n}P_{K_{b-1}}^UP_{L_b}^UP_{K_b}^U\right),\label{eq:Dep_Graph_equ21_NC}
\end{align}where $P_{K_{b-1}}^UP_{L_b}^UP_{K_b}^U$ is the uniform distribution over $\Big[2^{nR_K}\Big]\times\left[2^{n\tilde{R}_L}\right]\times\left[2^{n\tilde{R}_K}\right]$ and \eqref{eq:Dep_Graph_equ21_NC} holds since
\begin{align}
    &\bbD\left(P_{Z_b^nK_{b-1}L_bK_b|\calC_b^{(n)}}\big|\big|P_{Z_b^n|\calC_b^{(n)}}P_{K_{b-1}L_bK_b|\calC_b^{(n)}}\right)=\bbD\left(P_{Z_b^nK_{b-1}L_bK_b|\calC_b^{(n)}}\big|\big|Q_Z^{\otimes n}P_{K_{b-1}}^UP_{L_b}^UP_{K_b}^U\right)-\nonumber\\
    &\quad\bbD\left(P_{Z_b^n|\calC_b^{(n)}}\big|\big|Q_Z^{\otimes n}\right)-\bbD\left(P_{K_{b-1}L_bK_b|\calC_b^{(n)}}\big|\big|P_{K_{b-1}}^UP_{L_b}^UP_{K_b}^U\right).\label{eq:Dep_Graph_equ22_NC}
\end{align}
\end{subequations}Therefore, combining \eqref{eq:Dep_Graph_equ2_NC} with the expansion in  \eqref{eq:Cov_Comm_Constraint_NC}, by substituting $Q_0$ with $Q_Z$ and considering the expectation over the random codebook construction, results to
\begin{align}
    \bbE_{C_N}\kd{P_{Z^N|C_N}||}{Q_0^{\otimes N}}&\le 2\sum\limits_{b=1}^B\bbE_{C_b^{(n)}}\bbD\left(P_{Z_b^nK_{b-1}L_bK_b|C_b^{(n)}}\big|\big|Q_Z^{\otimes n}P_{K_{b-1}}^UP_{L_b}^UP_{K_b}^U\right).\label{eq:Cov_Cons_General_NC}
\end{align}To bound the \ac{RHS} of \eqref{eq:Cov_Cons_General_NC}, using \eqref{eq:Reverse_Pinsker} in Lemma~\ref{lemma:KLD_TV}, it is sufficient to bound,
\begin{align}
    &\ToV\left(P_{Z_b^nK_{b-1}L_bK_b|\calC_b^{(n)}},Q_Z^{\otimes n}P_{K_{b-1}}^UP_{L_b}^UP_{K_b}^U\right)\nonumber\\
    &\le\ToV\left(P_{Z_b^nK_{b-1}L_bK_b|\calC_b^{(n)}},\Gamma_{Z_b^nK_{b-1}L_bK_b|\calC_b^{(n)}}\right)+\ToV\left(\Gamma_{Z_b^nK_{b-1}L_bK_b|\calC_b^{(n)}},Q_Z^{\otimes n}P_{K_{b-1}}^UP_{L_b}^UP_{K_b}^U\right),\label{eq:General_KLD_NC}
\end{align}where the inequality follows from the triangle inequality. 
Note that, from \eqref{eq:Encoding_Ideal_PMF_NC} and \eqref{eq:Encoding_Joint_NC} we have the following marginal distributions,
\begin{align}
    &\Gamma_{Z_b^nK_{b-1}L_bK_b|\calC_b^{(n)}}(z^n,k_{b-1},\ell_b,k_b)=\sum\limits_{m_b}\sum\limits_{k_{1,b-2}}\sum\limits_{j_b}\sum\limits_{\ell_{b-1}}\sum\limits_{k_{2,b-2}}\sum\limits_{i_b}\sum\limits_{s_b^n}\frac{1}{2^{n(R+R_J+R_L+R_I+2R_K)}}\nonumber\\
    &\qquad\times Q_{S|AVU}^{\otimes n}\Big(s^n|a^n(m_b,k_{1,b-2}),v^n(j_b),u^n\big(m_b,k_{1,b-2},\ell_{b-1},k_{2,b-2},i_b\big)\Big)\nonumber\\
    &\qquad\times W_{Z|SU}^{\otimes n}\Big(z_b^n|s_b^n,u^n(m_b,k_{1,b-2},\ell_{b-1},k_{2,b-2},i_b)\Big)\indi{1}_{\big\{\ell_b=\Psi_L\big(v^n(j_b)\big)\big\}\bigcap\big\{k_b=\Psi_K\big(v^n(j_b)\big)\big\}}\nonumber\\
    &=\sum\limits_{m_b}\sum\limits_{k_{1,b-2}}\sum\limits_{j_b}\sum\limits_{\ell_{b-1}}\sum\limits_{k_{2,b-2}}\sum\limits_{i_b}\frac{1}{2^{n(R+R_J+R_L+R_I+2R_K)}}\nonumber\\ 
    &\qquad\times W_{Z|AVU}^{\otimes n}\Big(z_b^n|a^n(m_b,k_{1,b-2}),v^n(j_b),u^n(m_b,k_{1,b-2},\ell_{b-1},k_{2,b-2},i_b)\Big)\nonumber\\
    &\qquad\times\indi{1}_{\big\{\ell_b=\Psi_L\big(v^n(j_b)\big)\big\}\bigcap\big\{k_b=\Psi_K\big(v^n(j_b)\big)\big\}},\label{eq:Encoding_GZ_NC}
\end{align}where
\begin{subequations}
\begin{align}
    W_{Z|SU}(z|s,u)&\triangleq\sum_{y\in\calY}\sum_{x\in\calX}P_{X|US}(x|u,s)W_{YZ|SX}(y,z|s,x),\nonumber\\
    W_{Z|AVU}(z|a,v,u)&\triangleq\sum_{y\in\calY}\sum_{x\in\calX}\sum_{s\in\calS}Q_{S|AVU}(s|a,v,u)P_{X|US}(x|u,s)W_{YZ|SX}(y,z|s,x).\nonumber
\end{align}
\end{subequations}To bound the first term on the \ac{RHS} of \eqref{eq:General_KLD_NC} we have
\begin{align}
    &\bbE_{C_b^{(n)}}\ToV\left(P_{Z_b^nK_{b-1}L_bK_b|C_b^{(n)}},\Gamma_{Z_b^nK_{b-1}L_bK_b|C_b^{(n)}}\right)\nonumber\\
    &\le\bbE_{C_b^{(n)}}\ToV\left(P_{M_bK_{1,b-2}L_{b-1}K_{2,b-2}A^nS_b^nI_bJ_bU^nV^nX^nY_b^nZ_b^nK_{b-1}L_bK_b|C_b^{(n)}},\right.\nonumber\\
    &\qquad\left.\Gamma_{M_bK_{1,b-2}L_{b-1}K_{2,b-2}A^nS_b^nI_bJ_bU^nV^nX^nY_b^nZ_b^nK_{b-1}L_bK_b|C_b^{(n)}}\right)\nonumber\\
    &\mathop=\limits^{(a)}\bbE_{C_b^{(n)}}\ToV\left(P_{M_bL_{b-1}K_{1,b-2}K_{2,b-2}A^nS_b^n|C_b^{(n)}},\Gamma_{M_bL_{b-1}K_{1,b-2}K_{2,b-2}A^nS_b^n|C_b^{(n)}}\right)\nonumber\\
    &\mathop=\limits^{(b)}\bbE_{C_b^{(n)}}\ToV\left(P_{A^nS_b^n|M_b=1,L_{b-1}=1,K_{1,b-2}=1,K_{2,b-2}=1,C_b^{(n)}},\Gamma_{A^nS_b^n|M_b=1,L_{b-1}=1,K_{1,b-2}=1,K_{2,b-2}=1,C_b^{(n)}}\right)\nonumber\\
    &\mathop=\limits^{(c)}\bbE_{C_{A_b}^{(n)}}\left[\bbE_{C_{U_b}^{(n)},C_{V_b}^{(n)}|C_{A_b}^{(n)}}\left[\ToV\left(P_{A^nS_b^n|M_b=1,L_{b-1}=1,K_{1,b-2}=1,K_{2,b-2}=1,C_b^{(n)}},\right.\right.\right.\nonumber\\
    &\qquad\left.\left.\left.\Gamma_{A^nS_b^n|M_b=1,L_{b-1}=1,K_{1,b-2}=1,K_{2,b-2}=1,C_b^{(n)}}\right)\left|C_{A_b}^{(n)}\right.\right]\right],\label{eq:General_TV_NC}
\end{align}where
\begin{itemize}
    \item[$(a)$] follows since
    \begin{subequations}
\begin{align}
    &P_{I_bJ_b|M_bL_{b-1}K_{1,b-2}K_{2,b-2}A^nS_b^n\calC_b^{(n)}}=g_{\text{\tiny{LE}}}\big(I_b,J_b|M_b,K_{1,b-2},L_{b-1},K_{2,b-2},a^n(M_b,K_{1,b-2}),S_b^n\big)\nonumber\\
    &\qquad=\Gamma_{I_bJ_b|M_bL_{b-1}K_{1,b-2}K_{2,b-2}A^nS_b^n\calC_b^{(n)}}\label{eq:PGNC_3},\\
    &P_{U^n|M_bL_{b-1}K_{1,b-2}K_{2,b-2}A^nS_b^nI_bJ_b\calC_b^{(n)}}=\indi{1}_{\{U^n=u^n(M_b,K_{1,b-2},L_{b-1},K_{2,b-2},I_b)\}}\nonumber\\
    &\qquad=\Gamma_{U^n|M_bL_{b-1}K_{1,b-2}K_{2,b-2}A^nS_b^nI_bJ_b\calC_b^{(n)}}\label{eq:PGNC_4},\\
    &P_{V^n|M_bL_{b-1}K_{1,b-2}K_{2,b-2}A^nS_b^nI_bJ_bU^n\calC_b^{(n)}}=\indi{1}_{\{V^n=v^n(J_b)\}}=\Gamma_{V^n|M_bL_{b-1}K_{1,b-2}K_{2,b-2}A^nS_b^nI_bJ_bU^n\calC_b^{(n)}}\label{eq:PGNC_5},\\
    &P_{X^n|M_bL_{b-1}K_{1,b-2}K_{2,b-2}A^nS_b^nI_bJ_bU^nV^n\calC_b^{(n)}}=P_{X|US}^{\otimes n}=\Gamma_{X^n|M_bL_{b-1}K_{1,b-2}K_{2,b-2}A^nS_b^nI_bJ_bU^nV^n\calC_b^{(n)}}\label{eq:PGNC_6},\\
    &P_{Y^nZ^n|M_bL_{b-1}K_{1,b-2}K_{2,b-2}A^nS_b^nI_bJ_bU^nV^nX^n\calC_b^{(n)}}=W_{YZ|SX}^{\otimes n}\nonumber\\
    &\qquad=\Gamma_{Y^nZ^n|M_bL_{b-1}K_{1,b-2}K_{2,b-2}A^nS_b^nI_bJ_bU^nV^nX^n\calC_b^{(n)}}\label{eq:PGNC_7},\\
    &P_{L_bK_b|M_bL_{b-1}K_{1,b-2}K_{2,b-2}A^nS_b^nI_bJ_bU^nV^nX^nY^nZ^n\calC_b^{(n)}}=\indi{1}_{\big\{\ell_b=\Psi_L\big(v^n(j_b)\big)\big\}\bigcap\big\{k_b=\Psi_K\big(v^n(j_b)\big)\big\}}\nonumber\\
    &\qquad=\Gamma_{L_bK_b|M_bL_{b-1}K_{1,b-2}K_{2,b-2}A^nS_b^nI_bJ_bU^nV^nX^nY^nZ^n\calC_b^{(n)}}\label{eq:PGNC_8},
\end{align}
\end{subequations}while \eqref{eq:PGNC_3} follows since
\begin{align}
    &\Gamma_{I_bJ_b|M_bL_{b-1}K_{1,b-2}K_{2,b-2}A^nS_b^n\calC_b^{(n)}}\nonumber\\
    &=\frac{\Gamma_{I_bJ_bM_bL_{b-1}K_{1,b-2}K_{2,b-2}A^nS_b^n|\calC_b^{(n)}}}{\Gamma_{M_bL_{b-1}K_{1,b-2}K_{2,b-2}A^nS_b^n|\calC_b^{(n)}}}\nonumber\\
    &=\frac{Q_{S|AVU}^{\otimes n}\big(s^n|a^n(m_b,k_{1,b-2}),v^n(j_b),u^n(m_b,k_{1,b-2},\ell_{b-1},k_{2,b-2},i_b)\big)}{\sum_{i_b'}\sum_{j_b'}Q_{S|AVU}^{\otimes n}\big(s^n|a^n(m_b,k_{1,b-2}),v^n(j'_b),u^n(m_b,k_{1,b-2},\ell_{b-1},k_{2,b-2},i'_b)\big)}\nonumber\\
    &=g_{\text{\tiny{LE}}}\big(i_b,j_b|m_b,\ell_{b-1},k_{b-2},a^n(m_b,k_{1,b-2}),s_b^n\big);
\end{align}
    \item[$(b)$] follows from since
    \begin{align}
        P_{M_bL_{b-1}K_{1,b-2}K_{2,b-2}|\calC_b^{(n)}}&=\frac{1}{\abs{\calM}\abs{\calL}\abs{\calK_1}\abs{\calK_2}}=\Gamma_{M_bL_{b-1}K_{1,b-2}K_{2,b-2}|\calC_b^{(n)}},\nonumber
    \end{align}
    the codebook $C_N$ is independent of $(M_b,L_{b-1},K_{1,b-2},K_{2,b-2})$, and the symmetry of the codebook construction \ac{wrt} $M_b$, $L_{b-1}$, $K_{1,b-2}$, and $K_{2,b-2}$;
    \item[$(c)$] follows from the law of total expectation.
    \end{itemize}
Now fix the codebook $C_{A_b}^{(n)}=\calC_{A_b}^{(n)}$ and consider the following quantity,
\begin{align}
    &\bbE_{C_{U_b}^{(n)},C_{V_b}^{(n)}|C_{A_b}^{(n)}=\calC_{A_b}^{(n)}}\left[\ToV\left(P_{A^nS_b^n|M_b=1,L_{b-1}=1,K_{b-2}=1,C_b^{(n)}},\Gamma_{A^nS_b^n|M_b=1,L_{b-1}=1,K_{b-2}=1,C_b^{(n)}}\right)\left|C_{A_b}^{(n)}=\calC_{A_b}^{(n)}\right.\right]\nonumber\\
    &\quad\mathop=\limits^{(a)}\bbE_{C_{U_b}^{(n)},C_{V_b}^{(n)}|C_{A_b}^{(n)}=\calC_{A_b}^{(n)}}\left[\ToV\left(P_{S_b^n|M_b=1,L_{b-1}=1,K_{b-2}=1,a^n(1,1),C_b^{(n)}}\right.\right.\nonumber\\
    &\qquad\left.\left.,\Gamma_{S_b^n|M_b=1,L_{b-1}=1,K_{b-2}=1,a^n(1,1),C_b^{(n)}}\right)\left|C_{A_b}^{(n)}=\calC_{A_b}^{(n)}\right.\right]\nonumber\\
    &\quad=\bbE_{C_{U_b}^{(n)},C_{V_b}^{(n)}|C_{A_b}^{(n)}=\calC_{A_b}^{(n)}}\left[\ToV\left(Q_{S|A}^{\otimes n}\big(\cdot|a^n(1,1)\big),\Gamma_{S^n|a^n(1,1),C_b^{(n)}}\right)\left|C_{A_b}^{(n)}=\calC_{A_b}^{(n)}\right.\right],\label{eq:First_Level_TV_NC}
\end{align}where $(a)$ follows since,
\begin{align}
    P_{A^n|M_bL_{b-1}K_{1,b-2}K_{2,b-2}\calC_b^{(n)}}&=\indi{1}_{\{A^n=a^n(M_b,K_{1,b-2})\}}=\Gamma_{A^n|M_bL_{b-1}K_{1,b-2}K_{2,b-2}\calC_b^{(n)}}.\nonumber
\end{align}
By \cite[Theorem~1]{Frey18}, the \ac{RHS} of \eqref{eq:First_Level_TV_NC} vanishes when $n$ grows if
\begin{subequations}\label{eq:Conditional_SCL_NC}
    \begin{align}
    R_I&>\bbI(U;S|A),\label{eq:Conditional_SCL1_NC}\\
    R_J&>\bbI(V;S|A),\label{eq:Conditional_SCL2_NC}\\
    R_I+R_J&>\bbI(U,V;S|A).\label{eq:Conditional_SCL3_NC}
\end{align}
\end{subequations}
Since for every $C_{A_b}^{(n)}=\calC_{A_b}^{(n)}$ the \ac{RHS} of \eqref{eq:General_TV_NC} vanishes when $n$ grows and the total variation distance is non-negative, the expectation over $C_{A_b}^{(n)}$, in \eqref{eq:General_TV_NC}, also vanishes when $n\to\infty$. 
To bound the second term on the \ac{RHS} of \eqref{eq:General_KLD_NC}, by using Pinker's inequality in  Lemma~\ref{lemma:KLD_TV} and considering \eqref{eq:Encoding_GZ_NC}, we have
\begin{align}
    &\bbE_{C_b^{(n)}}\bbD\left(\Gamma_{Z_b^nK_{b-1}L_bK_b|\calC_b^{(n)}}||Q_Z^{\otimes n}P_{K_{b-1}}^UP_{L_b}^UP_{K_b}^U\right)\nonumber\\
    &=\bbE_{C_b^{(n)}}\left[\sum\limits_{(z^n,k_{b-1},\ell_b,k_b)}\Gamma_{Z_b^nK_{b-1}L_bK_b|\calC_b^{(n)}}(z^n,k_{b-1},\ell_b,k_b)\log\frac{\Gamma_{Z_b^nK_{b-1}L_bK_b|\calC_b^{(n)}}(z^n,k_{b-1},\ell_b,k_b)}{Q_Z^{\otimes n}(z^n)P_{K_{b-1}}^U(k_{b-1})P_{L_b}^U(\ell_b)P_{K_b}^U(k_b)}\right]\nonumber\\
    &=\bbE_{C_b^{(n)}}\left[\sum\limits_{(z^n,k_{b-1},\ell_b,k_b)}\sum\limits_{m_b}\sum\limits_{k_{1,b-2}}\sum\limits_{\ell_{b-1}}\sum\limits_{k_{2,b-2}}\sum\limits_{i_b}\sum\limits_{j_b}\frac{1}{2^{n(R+R_J+R_L+R_I+2R_K)}}\times\right.\nonumber\\
    &\quad W_{Z|AVU}^{\otimes n}\Big(z_b^n|a^n(m_b,k_{1,b-2}),v^n(j_b),u^n(m_b,k_{1,b-2},\ell_{b-1},k_{2,b-2},i_b)\Big)\times\nonumber\\
    &\quad\indi{1}_{\big\{\ell_b=\Psi_L\big(v^n(j_b)\big)\big\}\bigcap\big\{k_b=\Psi_K\big(v^n(j_b)\big)\big\}}\times\nonumber\\
    &\quad\left.\log\left[\frac{1}{2^{n(R+R_J+R_L+R_I+R_K-\tilde{R}_L-\tilde{R}_K)}Q_Z^{\otimes n}(z)}\sum\limits_{\tilde{m}_b}\sum\limits_{\tilde{k}_{1,b-2}}\sum\limits_{\tilde{\ell}_{b-1}}\sum\limits_{\tilde{k}_{2,b-2}}\sum\limits_{\tilde{i}_b}\sum\limits_{\tilde{j}_b}\right.\right.\nonumber\\
    &\quad W_{Z|AVU}^{\otimes n}\Big(z_b^n|A^n(\tilde{m}_b,\tilde{k}_{1,b-2}),V^n(\tilde{j}_b),U^n(\tilde{m}_b,\tilde{k}_{1,b-2},\tilde{\ell}_{b-1},\tilde{k}_{2,b-2},\tilde{i}_b)\Big)\times\nonumber\\
    &\quad \left.\left.\indi{1}_{\big\{\ell_b=\Psi_L\big(V^n(\tilde{j}_b)\big)\big\}\bigcap\big\{k_b=\Psi_K\big(V^n(\tilde{j}_b)\big)\big\}}\right]\right]\nonumber\\
    &\mathop\le\limits^{(a)}\frac{1}{2^{n(R+R_J+R_L+R_I+2R_K)}}\sum\limits_{(z^n,k_{b-1},\ell_b,k_b)}\sum\limits_{m_b}\sum\limits_{k_{1,b-2}}\sum\limits_{\ell_{b-1}}\sum\limits_{k_{2,b-2}}\sum\limits_{i_b}\sum\limits_{j_b}\sum\limits_{(a^n,v^n,u^n)}\nonumber\\
    &\quad\Gamma_{ZAVU}^{\otimes n}\Big(z_b^n,a^n(m_b,k_{1,b-2}),v^n(j_b),u^n(m_b,k_{1,b-2},\ell_{b-1},k_{2,b-2},i_b)\Big)\times\nonumber\\
    &\quad\bbE_{\Psi_L\big(v^n(j_b)\big)}\left[\indi{1}_{\big\{\ell_b=\Psi_L\big(v^n(j_b)\big)\big\}}\right]\bbE_{\Psi_K\big(v^n(j_b)\big)}\left[\indi{1}_{\big\{k_b=\Psi_K\big(v^n(j_b)\big)\big\}}\right]\times\nonumber\\
    &\quad\log\bbE_{\mathop {\backslash (m_b,k_{1,b-2},\ell_{b-1},k_{2,b-2},i_b,j_b),}\limits_{\backslash(\Psi_L (v^n(j_b)),\Psi_K (v^n(j_b)))} }\left[\frac{1}{2^{n(R+R_J+R_L+R_I+R_K-\tilde{R}_L-\tilde{R}_K)}Q_Z^{\otimes n}(z)}\times\right.\nonumber\\
    &\quad\sum\limits_{\tilde{m}_b}\sum\limits_{\tilde{k}_{1,b-2}}\sum\limits_{\tilde{\ell}_{b-1}}\sum\limits_{\tilde{k}_{2,b-2}}\sum\limits_{\tilde{i}_b}\sum\limits_{\tilde{j}_b}W_{Z|AVU}^{\otimes n}\Big(z_b^n|A^n(\tilde{m}_b,\tilde{k}_{1,b-2}),V^n(\tilde{j}_b),U^n(\tilde{m}_b,\tilde{k}_{1,b-2},\tilde{\ell}_{b-1},\tilde{k}_{2,b-2},\tilde{i}_b)\Big)\times\nonumber\\
    &\left.\quad\indi{1}_{\big\{\ell_b=\Psi_L\big(V^n(\tilde{j}_b)\big)\big\}\bigcap\big\{k_b=\Psi_K\big(V^n(\tilde{j}_b)\big)\big\}}\right]\nonumber\\
    &\mathop\le\limits^{(b)}\frac{1}{2^{n(R+R_J+R_L+R_I+2R_K)}}\sum\limits_{(z^n,k_{b-1},\ell_b,k_b)}\sum\limits_{m_b}\sum\limits_{k_{1,b-2}}\sum\limits_{\ell_{b-1}}\sum\limits_{k_{2,b-2}}\sum\limits_{i_b}\sum\limits_{j_b}\sum\limits_{(a^n,v^n,u^n)}\nonumber\\
    &\quad\Gamma_{ZAVU}^{\otimes n}\Big(z_b^n,a^n(m_b,k_{1,b-2}),v^n(j_b),u^n(m_b,k_{1,b-2},\ell_{b-1},k_{2,b-2},i_b)\Big)\frac{1}{2^{n(\tilde{R}_L+\tilde{R}_K)}}\times\nonumber\\
    &\quad\log\frac{1}{2^{n(R+R_J+R_L+R_I+R_K-\tilde{R}_L-\tilde{R}_K)}Q_Z^{\otimes n}(z)}\bbE_{\mathop {\backslash (m_b,k_{1,b-2},\ell_{b-1},k_{2,b-2},i_b,j_b),}\limits_{\backslash(\Psi_L (v^n(j_b)),\Psi_K (v^n(j_b)))} }\Bigg[\nonumber\\
    &\quad\left. W_{Z|AVU}^{\otimes n}\Big(z_b^n|a^n(m_b,k_{1,b-2}),v^n(j_b),u^n(m_b,k_{1,b-2},\ell_{b-1},k_{2,b-2},i_b)\Big)\right.+\nonumber\\
    &\quad\sum\limits_{\tilde{j}_b\ne j_b}W_{Z|AVU}^{\otimes n}\Big(z_b^n|a^n(m_b,k_{1,b-2}),V^n(\tilde{j}_b),u^n(m_b,k_{1,b-2},\ell_{b-1},k_{2,b-2},i_b)\Big)\times\nonumber\\
    &\quad\indi{1}_{\big\{\ell_b=\Psi_L\big(V^n(\tilde{j}_b)\big)\big\}\bigcap\big\{k_b=\Psi_K\big(V^n(\tilde{j}_b)\big)\big\}}+\nonumber\\
    &\quad\sum\limits_{(\tilde{\ell}_{b-1},\tilde{k}_{2,b-2},\tilde{i}_b)\ne(\ell_{b-1},k_{2,b-2},i_b)}W_{Z|AVU}^{\otimes n}\Big(z_b^n|a^n(m_b,k_{1,b-2}),v^n(j_b),U^n(m_b,k_{1,b-2},\tilde{\ell}_{b-1},\tilde{k}_{2,b-2},\tilde{i}_b)\Big)+\nonumber\\
    &\quad\sum\limits_{(\tilde{\ell}_{b-1},\tilde{k}_{2,b-2},\tilde{i}_b,\tilde{j}_b)\ne(\ell_{b-1},k_{2,b-2},i_b,j_b)}W_{Z|AVU}^{\otimes n}\Big(z_b^n|a^n(m_b,k_{1,b-2}),V^n(\tilde{j}_b),U^n(m_b,k_{1,b-2},\tilde{\ell}_{b-1},\tilde{k}_{2,b-2},\tilde{i}_b)\Big)\times\nonumber\\
    &\indi{1}_{\big\{\ell_b=\Psi_L\big(V^n(\tilde{j}_b)\big)\big\}\bigcap\big\{k_b=\Psi_K\big(V^n(\tilde{j}_b)\big)\big\}}+\nonumber\\
    &\quad\sum\limits_{(\tilde{m}_b,\tilde{k}_{1,b-2})\ne (m_b,k_{1,b-2})}\Squad\sum\limits_{(\tilde{\ell}_{b-1},\tilde{k}_{2,b-2},\tilde{i}_b)}W_{Z|AVU}^{\otimes n}\Big(z_b^n|A^n(\tilde{m}_b,\tilde{k}_{1,b-2}),v^n(j_b),U^n(\tilde{m}_b,\tilde{k}_{1,b-2},\tilde{\ell}_{b-1},\tilde{k}_{2,b-2},\tilde{i}_b)\Big)+\nonumber\\
    &\sum\limits_{(\tilde{m}_b,\tilde{k}_{1,b-2},\tilde{j}_b)\ne (m_b,k_{1,b-2},j_b)}\Squad\sum\limits_{(\tilde{\ell}_{b-1},\tilde{k}_{2,b-2},\tilde{i}_b)}\hspace{-5mm}W_{Z|AVU}^{\otimes n}\Big(z_b^n|A^n(\tilde{m}_b,\tilde{k}_{1,b-2}),V^n(\tilde{j}_b),U^n(\tilde{m}_b,\tilde{k}_{1,b-2},\tilde{\ell}_{b-1},\tilde{k}_{2,b-2},\tilde{i}_b)\Big)\times\nonumber\\
    &\quad\left.\indi{1}_{\big\{\ell_b=\Psi_L\big(V^n(\tilde{j}_b)\big)\big\}\bigcap\big\{k_b=\Psi_K\big(V^n(\tilde{j}_b)\big)\big\}}\right]\nonumber\\
    &\le\frac{1}{2^{n(R+R_J+R_L+R_I+2R_K+\tilde{R}_L+\tilde{R}_K)}}\sum\limits_{(z^n,k_{b-1},\ell_b,k_b)}\sum\limits_{m_b}\sum\limits_{k_{1,b-2}}\sum\limits_{\ell_{b-1}}\sum\limits_{k_{2,b-2}}\sum\limits_{i_b}\sum\limits_{j_b}\sum\limits_{(a^n,v^n,u^n)}\nonumber\\
    &\quad\Gamma_{ZAVU}^{\otimes n}\Big(z_b^n,a^n(m_b,k_{1,b-2}),v^n(j_b),u^n(m_b,k_{1,b-2},\ell_{b-1},k_{2,b-2},i_b)\Big)\times\nonumber\\
    &\quad\log\frac{1}{2^{n(R+R_J+R_L+R_I+R_K-\tilde{R}_L-\tilde{R}_K)}Q_Z^{\otimes n}(z)}\Bigg[\nonumber\\
    &\quad W_{Z|AVU}^{\otimes n}\Big(z_b^n|a^n(m_b,k_{1,b-2}),v^n(j_b),u^n(m_b,k_{1,b-2},\ell_{b-1},k_{2,b-2},i_b)\Big)+\nonumber\\
    &\quad\sum\limits_{\tilde{j}_b\ne j_b}W_{Z|AU}^{\otimes n}\Big(z_b^n|a^n(m_b,k_{1,b-2}),u^n(m_b,k_{1,b-2},\ell_{b-1},k_{2,b-2},i_b)\Big)2^{-n(\tilde{R}_L+\tilde{R}_K)}+\nonumber\\
    &\quad\sum\limits_{(\tilde{\ell}_{b-1},\tilde{k}_{2,b-2},\tilde{i}_b)\ne(\ell_{b-1},k_{2,b-2},i_b)}W_{Z|AV}^{\otimes n}\Big(z_b^n|a^n(m_b,k_{1,b-2}),v^n(j_b)\Big)+\nonumber\\
    &\quad\sum\limits_{(\tilde{\ell}_{b-1},\tilde{k}_{2,b-2},\tilde{i}_b,\tilde{j}_b)\ne(\ell_{b-1},k_{2,b-2},i_b,j_b)}W_{Z|A}^{\otimes n}\Big(z_b^n|a^n(m_b,k_{1,b-2})\Big)2^{-n(\tilde{R}_L+\tilde{R}_K)}+\nonumber\\
    &\quad\left.\sum\limits_{(\tilde{m}_b,\tilde{k}_{1,b-2})\ne (m_b,k_{1,b-2})}\Squad\sum\limits_{(\tilde{\ell}_{b-1},\tilde{k}_{2,b-2},\tilde{i}_b)}W_{Z|V}^{\otimes n}\Big(z_b^n|v^n(j_b)\Big)+1\right]\nonumber\\
    &\le\frac{1}{2^{n(R+R_J+R_L+R_I+2R_K+\tilde{R}_L+\tilde{R}_K)}}\sum\limits_{(z^n,k_{b-1},\ell_b,k_b)}\sum\limits_{m_b}\sum\limits_{k_{1,b-2}}\sum\limits_{\ell_{b-1}}\sum\limits_{k_{2,b-2}}\sum\limits_{i_b}\sum\limits_{j_b}\sum\limits_{(a^n,v^n,u^n)}\nonumber\\
    &\quad\Gamma_{ZAVU}^{\otimes n}\Big(z_b^n,a^n(m_b,k_{1,b-2}),v^n(j_b),u^n(m_b,k_{1,b-2},\ell_{b-1},k_{2,b-2},i_b)\Big)\times\nonumber\\
    &\quad\log\frac{1}{2^{n(R+R_J+R_L+R_I+R_K-\tilde{R}_L-\tilde{R}_K)}Q_Z^{\otimes n}(z)}\Bigg[\nonumber\\
    &\quad W_{Z|AVU}^{\otimes n}\Big(z_b^n|a^n(m_b,k_{1,b-2}),v^n(j_b),u^n(m_b,k_{1,b-2},\ell_{b-1},k_{2,b-2},i_b)\Big)+\nonumber\\
    &\quad 2^{n(R_J-\tilde{R}_L-\tilde{R}_K)}W_{Z|AU}^{\otimes n}\Big(z_b^n|a^n(m_b,k_{1,b-2}),u^n(m_b,k_{1,b-2},\ell_{b-1},k_{2,b-2},i_b)\Big)+\nonumber\\
    &\quad2^{n(R_L+R_{K_2}+R_I)}W_{Z|AV}^{\otimes n}\Big(z_b^n|a^n(m_b,k_{1,b-2}),v^n(j_b)\Big)+\nonumber\\
    &\quad2^{n(R_L+R_{K_2}+R_I+R_J-\tilde{R}_L-\tilde{R}_K)}W_{Z|A}^{\otimes n}\Big(z_b^n|a^n(m_b,k_{1,b-2})\Big)+\nonumber\\
    &\quad2^{n(R+R_{K_1}+R_L+R_{K_2}+R_I)}W_{Z|V}^{\otimes n}\Big(z_b^n|v^n(j_b)\Big)+1\Bigg]\nonumber\\
    &\le\frac{1}{2^{n(R+R_J+R_L+R_I+2R_K+\tilde{R}_L+\tilde{R}_K)}}\sum\limits_{(z^n,k_{b-1},\ell_b,k_b)}\sum\limits_{m_b}\sum\limits_{k_{1,b-2}}\sum\limits_{\ell_{b-1}}\sum\limits_{k_{2,b-2}}\sum\limits_{i_b}\sum\limits_{j_b}\sum\limits_{(a^n,v^n,u^n)}\nonumber\\
    &\quad\Gamma_{ZAVU}^{\otimes n}\Big(z_b^n,a^n(m_b,k_{1,b-2}),v^n(j_b),u^n(m_b,k_{1,b-2},\ell_{b-1},k_{2,b-2},i_b)\Big)\times\nonumber\\
    &\quad\log\left[\frac{W_{Z|AVU}^{\otimes n}\Big(z_b^n|a^n(m_b,k_{1,b-2}),v^n(j_b),u^n(m_b,k_{1,b-2},\ell_{b-1},k_{2,b-2},i_b)\Big)}{2^{n(R+R_J+R_L+R_I+R_K-\tilde{R}_L-\tilde{R}_K)}Q_Z^{\otimes n}(z)}\right.+\nonumber\\
    &\quad\frac{W_{Z|AU}^{\otimes n}\Big(z_b^n|a^n(m_b,k_{1,b-2}),u^n(m_b,k_{1,b-2},\ell_{b-1},k_{2,b-2},i_b)\Big)}{2^{n(R+R_L+R_I+R_K)}Q_Z^{\otimes n}(z)}+\nonumber\\
    &\quad\left.\frac{W_{Z|AV}^{\otimes n}\Big(z_b^n|a^n(m_b,k_{1,b-2}),v^n(j_b)\Big)}{2^{n(R+R_J+R_{K_1}-\tilde{R}_L-\tilde{R}_K)}Q_Z^{\otimes n}(z)}+\frac{W_{Z|A}^{\otimes n}\Big(z_b^n|a^n(m_b,k_{1,b-2})\Big)}{2^{n(R+R_{K_1})}Q_Z^{\otimes n}(z)}+\frac{W_{Z|V}^{\otimes n}\Big(z_b^n|v^n(j_b)\Big)}{2^{n(R_J-\tilde{R}_L-\tilde{R}_K)}Q_Z^{\otimes n}(z)}+1\right]\nonumber\\
    &\mathop=\limits^{(c)}\Delta_1+\Delta_2\label{eq:Bounding_General_KLD_NC}
\end{align}where
\begin{itemize}
    \item[$(a)$] follows from Jensen's inequality;
    \item[$(b)$] follows by expanding the summation in the argument of the $\log$ function and considering $\indi{1}_{\{\cdot\}}\le1$;
    \item[$(c)$] follows by defining $\Delta_1$ and $\Delta_2$ as
\end{itemize}
\begin{align}
    \Delta_1&\triangleq\frac{1}{2^{n(R+R_J+R_L+R_I+2R_K+\tilde{R}_L+\tilde{R}_K)}}\sum\limits_{(k_{b-1},\ell_b,k_b)}\sum\limits_{m_b}\sum\limits_{k_{1,b-2}}\sum\limits_{\ell_{b-1}}\sum\limits_{k_{2,b-2}}\sum\limits_{i_b}\sum\limits_{j_b}\nonumber\\
    &\sum\limits_{(z^n,a^n(m_b,k_{1,b-2}),v^n(j_b),u^n(m_b,k_{1,b-2},\ell_{b-1},k_{2,b-2},i_b))\in\calT_\epsilon^{(n)}}\hspace{-25mm}\Gamma_{ZAVU}^{\otimes n}\Big(z_b^n,a^n(m_b,k_{1,b-2}),v^n(j_b),u^n(m_b,k_{1,b-2},\ell_{b-1},k_{2,b-2},i_b)\Big)\times\nonumber\\
    &\quad\log\left[\frac{W_{Z|AVU}^{\otimes n}\Big(z_b^n|a^n(m_b,k_{1,b-2}),v^n(j_b),u^n(m_b,k_{1,b-2},\ell_{b-1},k_{2,b-2},i_b)\Big)}{2^{n(R+R_J+R_L+R_I+R_K-\tilde{R}_L-\tilde{R}_K)}Q_Z^{\otimes n}(z)}\right.+\nonumber\\
    &\quad\frac{W_{Z|AU}^{\otimes n}\Big(z_b^n|a^n(m_b,k_{1,b-2}),u^n(m_b,k_{1,b-2},\ell_{b-1},k_{2,b-2},i_b)\Big)}{2^{n(R+R_L+R_I+R_K)}Q_Z^{\otimes n}(z)}+\nonumber\\
    &\quad\left.\frac{W_{Z|AV}^{\otimes n}\Big(z_b^n|a^n(m_b,k_{1,b-2}),v^n(j_b)\Big)}{2^{n(R+R_J+R_{K_1}-\tilde{R}_L-\tilde{R}_K)}Q_Z^{\otimes n}(z)}+\frac{W_{Z|A}^{\otimes n}\Big(z_b^n|a^n(m_b,k_{1,b-2})\Big)}{2^{n(R+R_{K_1})}Q_Z^{\otimes n}(z)}+\frac{W_{Z|V}^{\otimes n}\Big(z_b^n|v^n(j_b)\Big)}{2^{n(R_J-\tilde{R}_L-\tilde{R}_K)}Q_Z^{\otimes n}(z)}+1\right]\nonumber\\
    &\le\frac{1}{2^{n(R+R_J+R_L+R_I+2R_K+\tilde{R}_L+\tilde{R}_K)}}\sum\limits_{(k_{b-1},\ell_b,k_b)}\sum\limits_{m_b}\sum\limits_{k_{1,b-2}}\sum\limits_{\ell_{b-1}}\sum\limits_{k_{2,b-2}}\sum\limits_{i_b}\sum\limits_{j_b}\nonumber\\
    &\sum\limits_{(z^n,a^n(m_b,k_{1,b-2}),v^n(j_b),u^n(m_b,k_{1,b-2},\ell_{b-1},k_{2,b-2},i_b))\in\calT_\epsilon^{(n)}}\hspace{-25mm}\Gamma_{ZAVU}^{\otimes n}\Big(z_b^n,a^n(m_b,k_{1,b-2}),v^n(j_b),u^n(m_b,k_{1,b-2},\ell_{b-1},k_{2,b-2},i_b)\Big)\times\nonumber\\
    &\quad\log\left[\frac{2^{-n(1-\epsilon)\bbH(Z|A,V,U)}}{2^{n(R+R_J+R_L+R_I+R_K-\tilde{R}_L-\tilde{R}_K)}2^{-n(1+\epsilon)\bbH(Z)}}\right.+\frac{2^{-n(1-\epsilon)\bbH(Z|A,U)}}{2^{n(R+R_L+R_I+R_K)}2^{-n(1+\epsilon)\bbH(Z)}}+\nonumber\\
    &\quad\left.\frac{2^{-n(1-\epsilon)\bbH(Z|A,V)}}{2^{n(R+R_J+R_{K_1}-\tilde{R}_L-\tilde{R}_K)}2^{-n(1+\epsilon)\bbH(Z)}}+\frac{2^{-n(1-\epsilon)\bbH(Z|A)}}{2^{n(R+R_{K_1})}2^{-n(1+\epsilon)\bbH(Z)}}+\frac{2^{-n(1-\epsilon)\bbH(Z|V)}}{2^{n(R_J-\tilde{R}_L-\tilde{R}_K)}2^{-n(1+\epsilon)\bbH(Z)}}+1\right],\label{eq:Si1_NC}\\
    \Delta_2&\triangleq\frac{1}{2^{n(R+R_J+R_L+R_I+2R_K+\tilde{R}_L+\tilde{R}_K)}}\sum\limits_{(k_{b-1},\ell_b,k_b)}\sum\limits_{m_b}\sum\limits_{k_{1,b-2}}\sum\limits_{\ell_{b-1}}\sum\limits_{k_{2,b-2}}\sum\limits_{i_b}\sum\limits_{j_b}\nonumber\\
    &\sum\limits_{(z^n,a^n(m_b,k_{1,b-2}),v^n(j_b),u^n(m_b,k_{1,b-2},\ell_{b-1},k_{2,b-2},i_b))\notin\calT_\epsilon^{(n)}}\hspace{-25mm}\Gamma_{ZAVU}^{\otimes n}\Big(z_b^n,a^n(m_b,k_{1,b-2}),v^n(j_b),u^n(m_b,k_{1,b-2},\ell_{b-1},k_{2,b-2},i_b)\Big)\times\nonumber\\
    &\quad\log\left[\frac{W_{Z|AVU}^{\otimes n}\Big(z_b^n|a^n(m_b,k_{1,b-2}),v^n(j_b),u^n(m_b,k_{1,b-2},\ell_{b-1},k_{2,b-2},i_b)\Big)}{2^{n(R+R_J+R_L+R_I+R_K-\tilde{R}_L-\tilde{R}_K)}Q_Z^{\otimes n}(z)}\right.+\nonumber\\
    &\quad\frac{W_{Z|AU}^{\otimes n}\Big(z_b^n|a^n(m_b,k_{1,b-2}),u^n(m_b,k_{1,b-2},\ell_{b-1},k_{2,b-2},i_b)\Big)}{2^{n(R+R_L+R_I+R_K)}Q_Z^{\otimes n}(z)}+\nonumber\\
    &\quad\left.\frac{W_{Z|AV}^{\otimes n}\Big(z_b^n|a^n(m_b,k_{1,b-2}),v^n(j_b)\Big)}{2^{n(R+R_J+R_{K_1}-\tilde{R}_L-\tilde{R}_K)}Q_Z^{\otimes n}(z)}+\frac{W_{Z|A}^{\otimes n}\Big(z_b^n|a^n(m_b,k_{1,b-2})\Big)}{2^{n(R+R_{K_1})}Q_Z^{\otimes n}(z)}+\frac{W_{Z|V}^{\otimes n}\Big(z_b^n|v^n(j_b)\Big)}{2^{n(R_J-\tilde{R}_L-\tilde{R}_K)}Q_Z^{\otimes n}(z)}+1\right]\nonumber\\
    &\le2\abs{A}\abs{U}\abs{V}\abs{Z}e^{-n\epsilon^2\mu_{A,V,U,Z}}n\log\left(\frac{5}{\mu_Z}+1\right),
\end{align}where $\mu_{A,U,V,Z}=\min\limits_{(a,u,v,z)\in(\calA,\calU,\calV,\calZ)}\Gamma_{AUVZ}(a,u,v,z)$ and $\mu_Z=\min\limits_{z\in\calZ}\Gamma_{Z}(z)$.
When $n\to\infty$ then $\Delta_2\to 0$, and $\Delta_1\to 0$ when
\begin{subequations}\label{eq:resol_2}
\begin{align}
    R+R_J+R_L+R_I+R_K-\tilde{R}_L-\tilde{R}_K&>\bbI(A,V,U;Z),\\
    R+R_L+R_I+R_K&>\bbI(A,U;Z),\\
    R+R_J+R_{K_1}-\tilde{R}_L-\tilde{R}_K&>\bbI(A,V;Z),\\
    R+R_{K_1}&>\bbI(A;Z),\\
    R_J-\tilde{R}_L-\tilde{R}_K&>\bbI(V;Z).
\end{align}
\end{subequations}

\subsection{Decoding}
The following lemma is essential to analyze the probability of error.
\begin{lemma}[Typicality]
\label{lemma:Typicaity}
If $(R_K,R_L,R_I,R_J)\in\bbR_+^4$ satisfy the constraint in \eqref{eq:Conditional_SCL_NC} then for any $m_b\in\calM$ and $\epsilon>0$ we have
\begin{align}
    \bbE_{C_b^{(n)}}\bbP_P\big[\big(A^n(m_b,k_{1,b-2}),S_b^n,U^n(m_b,k_{1,b-2},L_{b-1},K_{2,b-2},I_b),V^n(J_b)\big)\notin\calT_\epsilon^{(n)}\big]\xrightarrow[]{n\to\infty}0.\label{eq:lemma_typicality_NC}
\end{align}
\end{lemma}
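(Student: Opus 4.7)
The plan is to reduce the typicality statement to a conditional soft-covering (resolvability) argument applied to the likelihood encoder \eqref{eq:LE_NC} used to choose $(I_b,J_b)$. By the symmetry of the random codebook construction and the fact that $(M_b,K_{1,b-2},L_{b-1},K_{2,b-2})$ are uniformly distributed and independent of the codebook $C_b^{(n)}$, without loss of generality I will condition on $M_b=K_{1,b-2}=L_{b-1}=K_{2,b-2}=1$, so that the action is $A^n(1,1)$ and only the indices $(I_b,J_b)$ remain to be generated. Under the true induced \ac{PMF}, $S_b^n$ is drawn from $Q_{S|A}^{\otimes n}(\cdot|A^n(1,1))$ and then $(I_b,J_b)$ is drawn from $g_{\text{\tiny{LE}}}$; whereas under the target $\Gamma$-\ac{PMF} in \eqref{eq:Encoding_Ideal_PMF_NC}, $(I_b,J_b)$ are independent and uniform and then $S_b^n\sim Q_{S|AUV}^{\otimes n}$.

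First, I would invoke the conditional channel-resolvability result \cite[Theorem~1]{Frey18} exactly as was already used to bound \eqref{eq:First_Level_TV_NC}, which gives
\begin{align*}
\bbE_{C_b^{(n)}}\bbV\!\left(P_{A^nS_b^nU^nV^n|C_b^{(n)}},\,\Gamma_{A^nS_b^nU^nV^n|C_b^{(n)}}\right)\xrightarrow[]{n\to\infty}0
\end{align*}
provided the three rate conditions in \eqref{eq:Conditional_SCL_NC} hold. The point is that these are precisely the covering conditions so that the $(I_b,J_b)$-indexed codewords $(U^n,V^n)$ drawn i.i.d.\ from $P_{U|A}^{\otimes n}\cdot P_V^{\otimes n}$ (conditionally on $A^n$) can reproduce the conditional joint distribution of $(S^n,U^n,V^n)$ given $A^n$ with vanishing total variation.

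Next, under the $\Gamma$-distribution the tuple $\big(A^n(1,1),S_b^n,U^n(1,1,1,1,I_b),V^n(J_b)\big)$ is i.i.d.\ according to $P_AQ_{S|A}P_{U|AS}P_{V|S}$, so by the standard weak law of large numbers for strongly typical sequences,
\begin{align*}
\bbP_\Gamma\!\left[\big(A^n,S_b^n,U^n,V^n\big)\notin\calT_\epsilon^{(n)}\right]\xrightarrow[]{n\to\infty}0.
\end{align*}
Combining this with the total-variation bound above via the elementary inequality $|P(E)-\Gamma(E)|\leq \bbV(P,\Gamma)$ for any event $E$ yields \eqref{eq:lemma_typicality_NC}, after taking expectation over the codebook.

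The main obstacle is verifying that the hypotheses of \cite[Theorem~1]{Frey18} apply in the \emph{conditional} form required here: the codebook $C_{V_b}^{(n)}$ is drawn from the marginal $P_V$ (not $P_{V|A}$), while $C_{U_b}^{(n)}$ is drawn conditionally on $A^n$, and the target kernel $Q_{S|AUV}$ must satisfy the consistency condition $(c)$ in \eqref{eq:Achievability_D_NC}, i.e.\ the induced marginal of $S$ given $A$ matches $Q_{S|A}$. This consistency is built into the distribution choice in \eqref{eq:Ideal_PMF_S_Dist_NC}, so the conditional resolvability bound applies and yields exactly the rate region \eqref{eq:Conditional_SCL_NC}.
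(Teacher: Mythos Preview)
Your proposal is correct and follows essentially the same route as the paper's proof in Appendix~\ref{app:Typicality_Proof}: establish typicality under the idealized $\Gamma$-measure via the law of large numbers, transfer to the true induced measure $P$ using the total-variation closeness already obtained from the conditional soft-covering argument \eqref{eq:General_TV_NC}--\eqref{eq:First_Level_TV_NC}, and combine via $|P(E)-\Gamma(E)|\le\bbV(P,\Gamma)$. One small slip: under $\Gamma$ (averaged over the random codebook) the tuple $(A,S,U,V)$ is i.i.d.\ according to $P_A P_{U|A} P_V Q_{S|AUV}$, not $P_A Q_{S|A} P_{U|AS} P_{V|S}$ as you wrote; these differ because $A$ and $V$ are marginally independent under $\Gamma$ but not under the latter factorization, and the target typical set in Theorem~\ref{thm:Acievability_KG} is with respect to the former.
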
Lemma~\ref{lemma:Typicaity} is proved in Appendix~\ref{app:Typicality_Proof}. After receiving $Y_b^n$, given the shared key $k_{b-2}\triangleq(k_{1,b-2},k_{2,b-2})$, the decoder looks for the smallest value of $(\hat{m}_b,\hat{\ell}_{b-1},\hat{j}_b)$ for which there exist $\hat{i}_b$ such that $\left(A^n(\hat{m}_b,k_{1,b-2}),U^n\big(\hat{m}_b,k_{1,b-2},\hat{\ell}_{b-1},k_{2,b-2},\hat{i}_b\big),V^n(\hat{j}_b),Y_b^n\right)\in\calT_{\epsilon}^{(n)}(P_{AUVY})$ and choose $(\hat{M}_b,\hat{L}_{b-1},\hat{J}_b)=(1,1,1)$ if such a $(\hat{m}_b,\hat{\ell}_{b-1},\hat{j}_b)$ does not exist. Let,
\begin{subequations}
    \begin{align}
    \calE&\triangleq\left\{M\ne\hat{M}\right\},\label{eq:General_NC_Pe}\\
    \calE_b&\triangleq\left\{M_b\ne\hat{M}_b\right\},\label{eq:General_NC_Pe_b}\\
    \calE_{1,b}&\triangleq\left\{\left(A^n(\hat{m}_b,k_{1,b-2}),U^n\big(\hat{m}_b,k_{1,b-2},\hat{\ell}_{b-1},k_{2,b-2},\hat{i}_b\big),S_b^n\right)\notin\calT_{\epsilon}^{(n)}(P_{AUS}),\Squad\text{for all}\Squad\hat{i}_b\right\},\label{eq:General_NC_Enc}\\
    \calE_{2,b}&\triangleq\left\{\left(A^n(\hat{m}_b,k_{1,b-2}),U^n\big(\hat{m}_b,k_{1,b-2},\hat{\ell}_{b-1},k_{2,b-2},\hat{i}_b\big),Y_b^n\right)\notin\calT_{\epsilon}^{(n)}(P_{AUY})\right\},\label{eq:General_NC_Dec1}\\
    \calE_{3,b}&\triangleq\left\{\left(A^n(\hat{m}_b,k_{1,b-2}),U^n\big(\hat{m}_b,k_{1,b-2},\hat{\ell}_{b-1},k_{2,b-2},\hat{i}_b\big),Y_b^n\right)\in\calT_{\epsilon}^{(n)}(P_{AUY}),\Squad\text{for some}\right.\nonumber\\
    &\qquad\qquad\left.\big(\hat{m}_b,\hat{\ell}_b\big)\ne (M_b,L_b)\Squad\text{and}\Squad \hat{i}_b\in\calI\right\}.\label{eq:General_NC_Dec2}
\end{align}
\end{subequations}Now, we bound the probability of error as
\begin{align}
    P_e^{(N)}=\bbP(\calE)=\bbP\left(\bigcup_{b=1}^B\calE_b\right)\le\sum\limits_{b=1}^B\bbP(\calE_b),\label{eq:Total_PE_NC}
\end{align}where the inequality follows from the union bound. Next for each $b\in[B]$ we bound $\bbP(\calE_b)$ as follows,
\begin{align}
    \bbP(\calE_b)\le\bbP(\calE_{1,b})+\bbP(\calE_{1,b}^c\cap\calE_{2,b})+\bbP(\calE_{1,b}^c\cap\calE_{2,b}^c\cap\calE_{3,b}),\label{eq:Union_Bound_NC}
\end{align}from Lemma~\ref{lemma:Typicaity} the first term on the \ac{RHS} of \eqref{eq:Union_Bound_NC} vanishes when $n$ grows to infinity, the second on the \ac{RHS} of \eqref{eq:Union_Bound_NC} vanishes when $n$ grows by the law of large numbers, by the union bound and \cite[Theorem~1.3]{Kramer_Book}, the last term on the \ac{RHS} of \eqref{eq:Union_Bound_NC} vanishes when $n$ grows if,
\begin{align}
    R+R_L+R_I<\bbI(U;Y).\label{eq:Dec_Constraint_NC}
\end{align}
Next, we bound the error probability for the key generation. Let $L_{b-1}$ and $J_{b-1}$ denote the indices that are chosen by the transmitter, and $\hat{L}_{b-1}$ and $\hat{J}_{b-1}$ denote the estimate of these indices at the receiver. The receiver by decoding $U^n$ at the end of block $b$ knows the index $\hat{L}_{b-1}$. 
Let,
\begin{subequations}
\begin{align}
    \tilde{\calE}&\triangleq\left\{\left(V^n(\hat{J}_{b-1}),S_{b-1}^n,A_{b-1}^n,U_{b-1}^n,Y_{b-1}^n\right)\notin\calT_\epsilon^{(n)}\right\},\label{eq:Key_Gen_Err1_NC}\\
    \tilde{\calE}_1&\triangleq\left\{\left(V^n(j_{b-1}),A_{b-1}^n,S_{b-1}^n\right)\notin\calT_{\tilde{\epsilon}}^{(n)},\quad\text{for all}\Squad j_{b-1}\in\left[2^{nR_J}\right]\right\},\label{eq:Key_Gen_Err2_NC}\\
    \tilde{\calE}_2&\triangleq\left\{\left(V^n(J_{b-1}),S_{b-1}^n,A_{b-1}^n,U_{b-1}^n,Y_{b-1}^n\right)\notin\calT_{\tilde{\epsilon}}^{(n)}\right\},\label{eq:Key_Gen_Err3_NC}\\
    \tilde{\calE}_3&\triangleq\left\{\left(V^n(\hat{j}_{b-1}),A_{b-1}^n,U_{b-1}^n,Y_{b-1}^n\right)\in\calT_{\tilde{\epsilon}}^{(n)},\Squad\text{for some}\Squad\hat{j}_b\in\calB(\hat{L}_{b-1}),\hat{j}_b\ne J_{b-1}\right\}.\label{eq:Key_Gen_Err4_NC}
\end{align}where $\epsilon>\tilde{\epsilon}>0$. 
\end{subequations}Now by the union bound, 
\begin{align}
    \bbP\big(\tilde{\calE}\big)\le\bbP\big(\tilde{\calE}_1\big)+\bbP\big(\tilde{\calE}_1^c\cap\tilde{\calE}_2\big)+\bbP\big(\tilde{\calE}_3\big).\label{eq:Key_Gen_UB_NC}
\end{align}By Lemma~\ref{lemma:Typicaity} the first term on the \ac{RHS} of \eqref{eq:Key_Gen_UB_NC} vanishes when $n$ grows to infinity if \eqref{eq:Conditional_SCL_NC} holds, similar to \cite[Sec.~11.3.1]{ElGamalKim} the second and the third terms on the \ac{RHS} of \eqref{eq:Key_Gen_UB_NC} go to zero when $n\to\infty$ if,
\begin{subequations}\label{eq:WZ_Dec_NC}
\begin{align}
    R_J&>\bbI(V;S|A),\label{eq:WZ_1_NC}\\
    R_J-R_L&<\bbI(V;A,U,Y),\label{eq:WZ_2_NC}
\end{align}
\end{subequations}
Applying Fourier-Motzkin elimination procedure \cite{ElGamalKim,FMEIT} to eliminate $(R_I,R_J,R_L,R_{K_1},R_{K_2},R_K,\tilde{R}_L,\tilde{R}_K)$, in \eqref{eq:Conditional_SCL_NC}, \eqref{eq:resol_2}, \eqref{eq:Dec_Constraint_NC}, and \eqref{eq:WZ_Dec_NC} and considering $\tilde{R}_L+\tilde{R}_K\ge R_L+R_K$ leads to
\begin{subequations}\label{eq:FME_Output}
\begin{align}
R&>\bbI(A;Z)+\bbI(V;Z)-\bbI(V;A,U,Y),\label{eq:FME_Output_1}\\
R&>\bbI(A,V;Z)-\bbI(V;A,U,Y),\label{eq:FME_Output_2}\\
0&>\bbI(V;Z)-\bbI(V;A,U,Y),\label{eq:FME_Output_3}\\
0&>\bbI(A,U;Z)+\bbI(V;Z)-\bbI(A,U;Y)-\bbI(V;A,U,Y),\label{eq:FME_Output_4}\\
0&>\bbI(A,V,U;Z)-\bbI(A,U;Y)-\bbI(V;A,U,Y),\label{eq:FME_Output_5}\\
R&<-\bbI(U,V;S|A)+\bbI(A,U;Y)+\bbI(V;A,U,Y),\label{eq:FME_Output_6}\\
R&<-\bbI(U;S|A)+\bbI(A,U;Y),\label{eq:FME_Output_7}\\
R&<-\bbI(U;S|A)-\bbI(V;S|A)+\bbI(A,U;Y)+\bbI(V;A,U,Y),\label{eq:FME_Output_8}
\end{align}
where, since $\bbI(U,V;S|A)\ge\bbI(U;S|A)+\bbI(V;S|A)$, \eqref{eq:FME_Output_8} is redundant because of \eqref{eq:FME_Output_6}, since $\bbI(A,U,V;Z)\ge\bbI(A,U;Z)+\bbI(V;Z)$, \eqref{eq:FME_Output_4} is redundant because of \eqref{eq:FME_Output_5}, and since $\bbI(A,V;Z)\ge\bbI(A;Z)+\bbI(V;Z)$, \eqref{eq:FME_Output_1} is redundant because of \eqref{eq:FME_Output_2}. Therefore, one can rewrite the rate constraints in \eqref{eq:FME_Output} as the rate constraints in Theorem~\ref{thm:Acievability_KG}. 
\end{subequations}

\section{Proof of Lemma~\ref{lemma:Typicaity}}
\label{app:Typicality_Proof}
Fix $\epsilon>0$ and consider the distribution $\Gamma$ that we considered in \eqref{eq:Encoding_Ideal_Joint_NC}. Then we have
\begin{align}
    &\bbE_{C_b^{(n)}}\bbP_\Gamma\big[\big(A^n(m_b,k_{1,b-2}),S_b^n,U^n(m_b,k_{1,b-2},L_{b-1},K_{2,b-2},I_b),V^n(J_b)\big)\notin\calT_\epsilon^{(n)}\big]
    \xrightarrow[]{n\to\infty}0,\label{eq:lemma_typicality_Gamma_NC}
\end{align}this follows from the law of large numbers, since $S_b^n$ is the output of a \ac{DMC} with inputs $A^n(m_b,k_{1,b-2})$, $V^n(J_b)$, and $U^n(m_b,k_{1,b-2},L_{b-1},K_{2,b-2},I_b)$. Furthermore, from \eqref{eq:General_TV_NC} we have
\begin{align}
    &\bbE_{C_b^{(n)}}\ToV\left(P_{Z^n|C_b^{(n)}},\Gamma_{Z^n|C_b^{(n)}}\right)\le\nonumber\\
    &\bbE_{C_b^{(n)}}\ToV\left(P_{M_bK_{1,b-2}L_{b-1}K_{2,b-2}A^nS_b^nI_bJ_bU^nV^nX^nY_b^nZ_b^nK_{b-1}L_bK_b|C_b^{(n)}},\right.\nonumber\\
    &\left.\Gamma_{M_bK_{1,b-2}L_{b-1}K_{2,b-2}A^nS_b^nI_bJ_bU^nV^nX^nY_b^nZ_b^nK_{b-1}L_bK_b|C_b^{(n)}}\right)\xrightarrow[]{n\to\infty}0.\label{eq:General_Typicality_Distance_NC}
\end{align}Now we define $F_n:\calA^n\times\calS^n\times\calU^n\times\calV^n\to\bbR$ as $F_n(A^n,S^n,U^n,V^n)\triangleq\indi{1}_{\big\{\big(A^n(m_b,k_{1,b-2}),S_b^n,U^n(m_b,k_{1,b-1},L_{b-1},K_{2,b-1},I_b),V^n(J_b)\big)\notin}$ $_{\calT_\epsilon^{(n)}(P_{ASUV})\big\}}$ and consider
\begin{align}
    &\bbE_{C_b^{(n)}}\bbP_P\big[\big(A^n(m_b,k_{1,b-1}),S_b^n,U^n(m_b,k_{1,b-2},L_{b-1},K_{2,b-1},I_b),V^n(J_b)\big)\notin\calT_\epsilon^{(n)}\big]\nonumber\\
    &=\bbE_{C_b^{(n)}}\bbP_P\left[F_n\big(A^n,S^n,U^n,V^n\big)\right]\nonumber\\
    &\le\bbE_{C_b^{(n)}}\bbP_\Gamma\left[F_n\big(A^n,S^n,U^n,V^n\big)\right]+\bbE_{C_b^{(n)}}\Big|\bbP_P\left[F_n\big(A^n,S^n,U^n,V^n\big)\right]-\bbP_\Gamma\left[F_n\big(A^n,S^n,U^n,V^n\big)\right]\Big|\nonumber\\
    &\mathop\le\limits^{(a)}\bbE_{C_b^{(n)}}\bbP_\Gamma\left[F_n\big(A^n,S^n,U^n,V^n\big)\right]+\bbE_{C_b^{(n)}}\ToV\left(P_{M_bK_{1,b-2}L_{b-1}K_{2,b-2}A^nS_b^nI_bJ_bU^nV^nX^nY_b^nZ_b^nK_{b-1}L_bK_b|C_b^{(n)}}\right.\nonumber\\
    &\left.,\Gamma_{M_bK_{1,b-2}L_{b-1}K_{2,b-2}A^nS_b^nI_bJ_bU^nV^nX^nY_b^nZ_b^nK_{b-1}L_bK_b|C_b^{(n)}}\right),\label{eq:Final_Typicality_Term_NC}
\end{align}where $(a)$ follows from \cite[Property~1]{Likelihood_Encoder}, by choosing the constant coefficient to be equal to 1, for any $n\in\bbN_*$. From \eqref{eq:lemma_typicality_Gamma_NC} and \eqref{eq:General_Typicality_Distance_NC}, the first and the second term on the \ac{RHS} of \eqref{eq:Final_Typicality_Term_NC}, respectively, vanishes as $n$ grows.

\section{Proof of Theorem~\ref{thm:Converse_NC}}
\label{proof:thm:Converse_NC}
Consider any sequence of codes with length $N$ for channels with \ac{ADSI} when the state is available non-causally at the transmitter such that $P_e^{(N)}\le\epsilon_N$, $\bbD\left(P_{Z^N}||Q_0^{\otimes N}\right)\le\tilde{\epsilon}$, and $\bar{R}_K^{(N)}/N\triangleq\eta_N$, where $\epsilon_N\xrightarrow[]{N\to\infty}0$ and $\eta_N\xrightarrow[]{N\to\infty}0$. The following lemma from \cite[Lemma~3]{Keyless22} is essential in the converse proof.
\begin{lemma}{(\hspace{-0.15mm}\cite[Lemma~3]{Keyless22})}
\label{lemma:iid_Time_indep}
If $\bbD(P_{Z^N}||Q_0^{\otimes N})\le\epsilon$, then $\sum_{t=1}^N\bbI(Z_t;Z^{t-1})\le\epsilon$ and $\sum_{t=1}^N\bbI(Z_t;Z_{t+1}^N)\le\epsilon$. Also, when $T\in[N]$ is a uniformly distributed \ac{RV} and independent of the other involved \acp{RV}, then $\bbI(T;Z_T)\le\gamma$, where $\gamma\triangleq\frac{\epsilon}{N}$.
\end{lemma}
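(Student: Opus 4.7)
The plan is to prove all three assertions via a single chain-rule decomposition of $\bbD(P_{Z^N}\|Q_0^{\otimes N})$ and then exploit the non-negativity of each summand. Specifically, I will first write
\begin{align*}
\bbD\!\left(P_{Z^N}\big\|Q_0^{\otimes N}\right)
&=\sum_{t=1}^N \bbD\!\left(P_{Z_t|Z^{t-1}}\big\|Q_0\,\big|\,P_{Z^{t-1}}\right),
\end{align*}
and then split each conditional KL term by introducing the marginal $P_{Z_t}$ as an intermediate:
\begin{align*}
\bbD\!\left(P_{Z_t|Z^{t-1}}\big\|Q_0\,\big|\,P_{Z^{t-1}}\right)
&=\bbI(Z_t;Z^{t-1})+\bbD\!\left(P_{Z_t}\big\|Q_0\right).
\end{align*}
Summing in $t$ yields $\bbD(P_{Z^N}\|Q_0^{\otimes N})=\sum_t \bbI(Z_t;Z^{t-1})+\sum_t \bbD(P_{Z_t}\|Q_0)$, and since both sums are non-negative, each is at most $\epsilon$. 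This gives the first inequality; the second follows by running the chain rule in the reverse time order, which produces the analogous identity with $Z_{t+1}^N$ in place of $Z^{t-1}$.

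For the last assertion, I will use the standard mixture identity for a uniform time-sharing variable $T$:
\begin{align*}
\bbI(T;Z_T)=\frac{1}{N}\sum_{t=1}^N \bbD\!\left(P_{Z_t}\big\|P_{Z_T}\right),
\end{align*}
where $P_{Z_T}(\cdot)=\frac{1}{N}\sum_t P_{Z_t}(\cdot)$. Adding and subtracting $\log Q_0$ inside the KL and collecting terms gives the Pythagorean-style identity
\begin{align*}
\frac{1}{N}\sum_{t=1}^N \bbD\!\left(P_{Z_t}\big\|Q_0\right)=\bbI(T;Z_T)+\bbD\!\left(P_{Z_T}\big\|Q_0\right),
\end{align*}
so that $\bbI(T;Z_T)\le \frac{1}{N}\sum_t \bbD(P_{Z_t}\|Q_0)\le \frac{\epsilon}{N}=\gamma$, using the already-established bound on $\sum_t \bbD(P_{Z_t}\|Q_0)$.

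There is no real obstacle here; the content is entirely bookkeeping with KL divergence. The only place that requires a moment of care is the decomposition $\bbD(P_{Z_t|Z^{t-1}}\|Q_0\,|\,P_{Z^{t-1}})=\bbI(Z_t;Z^{t-1})+\bbD(P_{Z_t}\|Q_0)$, which must be justified by expanding the logarithm $\log \tfrac{P_{Z_t|Z^{t-1}}}{Q_0}=\log\tfrac{P_{Z_t|Z^{t-1}}}{P_{Z_t}}+\log\tfrac{P_{Z_t}}{Q_0}$ and taking expectation under $P_{Z^{t}}$; this is standard but worth stating explicitly so that the positivity of each summand, on which the whole argument rests, is transparent.
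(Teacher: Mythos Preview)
Your proof is correct. The paper does not supply its own proof of this lemma; it simply cites \cite[Lemma~3]{Keyless22}. Your argument is the standard one and almost certainly coincides with what appears in the cited reference: the chain-rule decomposition $\bbD(P_{Z^N}\|Q_0^{\otimes N})=\sum_t\bbI(Z_t;Z^{t-1})+\sum_t\bbD(P_{Z_t}\|Q_0)$ together with non-negativity, and the mixture/Pythagorean identity for the time-sharing variable.
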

\subsection{Epsilon Rate}We first define $\calF^{(\epsilon)}_{\text{U-NC}}$ for $\epsilon>0$ which expands the rate defined in \eqref{eq:Converse_AD_NC} as
\begin{subequations}\label{eq:Converse_AD_epsilon_NC}
\begin{align}
  \calF^{(\epsilon)}_{\text{U-NC}} = \left.\begin{cases}R\geq 0: \exists P_{ASUVXYZ}\in\calG^{(\epsilon)}_{\text{U-NC}}:\\
  R\le\min\big\{\bbI(U;Y) - \bbI(U;S|A),\bbI(U,V;Y)-\bbI(U;S|V,A),\bbI(V;A,U,Y)\big\}+\epsilon\\
\end{cases}\right\},\label{eq:Converse_A_epsilon_NC}
\end{align}
where
\begin{align}
  \calG^{(\epsilon)}_{\text{U-NC}}\triangleq \left.\begin{cases}P_{ASUVXYZ}:\\
P_{ASUVXYZ}=P_AQ_{S|A}P_{U|AS}P_{V|US}P_{X|US}W_{YZ|XS}\\
\min\big\{\bbI(U;Y) - \bbI(U;S|A),\bbI(U,V;Y)-\bbI(U;S|V,A),\bbI(V;A,U,Y)\big\}\ge\\
\quad\bbI(V;Z)-\bbI(V;S)-\epsilon\\
\bbD\big(P_Z||Q_0\big)\le\epsilon
\end{cases}\right\}.\label{eq:Converse_D_epsilon_NC}
\end{align}
\end{subequations}Now we show that any achievable rate $R$ belongs to $\calF^{(\epsilon)}_{\text{U-NC}}$, i.e., $R\in\calF^{(\epsilon)}_{\text{U-NC}}$. 
For any $\gamma>0$, and $\epsilon_N>0$ we have,
\begin{align}
    NR&=\bbH(M)\nonumber\\
    &=\bbH(M|K)\nonumber\\
    &\mathop\le\limits^{(a)}\bbI\big(M;Y^N|K\big)+N\epsilon_N\nonumber\\
    &=\sum\limits_{t=1}^N\big[\bbI\big(M;Y_t|K,Y^{t-1}\big)\big]+N\epsilon_N\nonumber\\
    &\le\sum\limits_{t=1}^N\big[\bbI\big(M,K,A_t^N;Y_t|Y^{t-1}\big)\big]+N\epsilon_N\nonumber\\
    &=\sum\limits_{t=1}^N\Big[\bbI\big(M,K,A_t^N,S_{t+1}^N;Y_t|Y^{t-1}\big)-\bbI\big(S_{t+1}^N;Y_t|M,K,A_t^N,Y^{t-1}\big)\Big]+N\epsilon_N\nonumber\\
    &\mathop=\limits^{(b)}\sum\limits_{t=1}^N\Big[\bbI\big(M,K,A_t^N,S_{t+1}^N;Y_t|Y^{t-1}\big)-\bbI\big(Y^{t-1};S_t|M,K,A_t^N,S_{t+1}^N\big)\Big]+N\epsilon_N\nonumber\\
    &\le\sum\limits_{t=1}^N\Big[\bbI\big(M,K,A_t^N,S_{t+1}^N,Y^{t-1};Y_t\big)-\bbH\big(S_t|M,K,A_t^N,S_{t+1}^N\big)+\nonumber\\
    &\qquad\bbH\big(S_t|M,K,A_t^N,S_{t+1}^N,Y^{t-1}\big)\Big]+N\epsilon_N\nonumber\\
    &\mathop=\limits^{(c)}\sum\limits_{t=1}^N\Big[\bbI\big(M,K,A_t^N,S_{t+1}^N,Y^{t-1};Y_t\big)-\bbH\big(S_t|A_t\big)+\bbH\big(S_t|M,K,A_t^N,S_{t+1}^N,Y^{t-1}\big)\Big]+N\epsilon_N\nonumber\\
    &\mathop=\limits^{(d)}\sum\limits_{t=1}^N\big[\bbI(U_t;Y_t)-\bbI(U_t;S_t|A_t)\big]+N\epsilon_N\nonumber\\
    &=N\sum\limits_{t=1}^N\bbP(T=t)\big[\bbI(U_T;Y_T|T=t)-\bbI(U_T;S_T|A_T,T=t)\big]+N\epsilon_N\nonumber\\
    &=N\big[\bbI(U_T;Y_T|T)-\bbI(U_T;S_T|A_T,T)\big]+N\epsilon_N\nonumber\\
    &\le N\big[\bbI(U_T,T;Y_T)-\bbI(U_T;S_T|A_T,T)\big]+N\epsilon_N\nonumber\\
    &\mathop=\limits^{(e)} N\big[\bbI(U;Y)-\bbI(U;S|A)\big]+N\epsilon_N\nonumber\\
    &\mathop\le\limits^{(f)} N\big[\bbI(U;Y)-\bbI(U;S|A)\big]+N\delta,\label{eq:Traditianl_Converse}
\end{align}where
\begin{itemize}
    \item[$(a)$] follows from Fano's inequality;
    \item[$(b)$] follows from Csisz\'{a}r-K\"{o}rner-Marton sum identity \cite[Lemma~7]{BCC:IT78};
    \item[$(c)$] follows since $S_t-A_t-\big(M,K,S_{t+1}^N,A_{t+1}^N\big)$ forms a Markov chain;
    \item[$(d)$] follows by defining $U_t\triangleq\big(M,K,Y^{t-1},A_t^N,S_{t+1}^N\big)$;
    \item[$(e)$] follows by defining $U\triangleq(U_T,T)$, $Y\triangleq Y_T$, $S\triangleq S_T$, and $A\triangleq(A_T,T)$;
    \item[$(f)$] follows by defining $\delta\triangleq\max\{\epsilon_N,\eta_N,\gamma\}$.
\end{itemize}
We also have,
\begin{align}
    NR&=\bbH(M)\nonumber\\
    &=\bbH(M|K)\nonumber\\
    &\mathop\le\limits^{(a)}\bbI\big(M;Y^N|K\big)+N\epsilon_N\nonumber\\
    &=\sum\limits_{t=1}^N\big[\bbI\big(M;Y_t|K,Y^{t-1}\big)\big]+N\epsilon_N\nonumber\\
    &\le\sum\limits_{t=1}^N\big[\bbI\big(M,A_t^N,Z^{t-1};Y_t|K,Y^{t-1}\big)\big]+N\epsilon_N\nonumber\\
    &=\sum\limits_{t=1}^N\Big[\bbI\big(M,A_t^N,Z^{t-1},S_{t+1}^N;Y_t|K,Y^{t-1}\big)-\bbI\big(S_{t+1}^N;Y_t|M,K,A_t^N,Z^{t-1},Y^{t-1}\big)\Big]+N\epsilon_N\nonumber\\
    &\mathop=\limits^{(b)}\sum\limits_{t=1}^N\Big[\bbI\big(M,A_t^N,Z^{t-1},S_{t+1}^N;Y_t|K,Y^{t-1}\big)-\bbI\big(Y^{t-1};S_t|M,K,A_t^N,Z^{t-1},S_{t+1}^N\big)\Big]+N\epsilon_N\nonumber\\
    &\mathop\le\limits^{(c)}\sum\limits_{t=1}^N\Big[\bbI\big(M,K,A_t^N,Z^{t-1},S_{t+1}^N,Y^{t-1};Y_t\big)-\bbI\big(Y^{t-1};S_t|M,K,A_t^N,Z^{t-1},S_{t+1}^N\big)\Big]+N\epsilon_N\nonumber\\
    &\mathop=\limits^{(d)}\sum\limits_{t=1}^N\big[\bbI(U_t,V_t;Y_t)-\bbI(U_t;S_t|V_t,A_t)\big]+N\epsilon_N\nonumber\\
    &=N\sum\limits_{t=1}^N\bbP(T=t)\big[\bbI(U_T,V_T;Y_T|T=t)-\bbI(U_T;S_T|V_T,A_T,T=t)\big]+N\epsilon_N\nonumber\\
    &=N\big[\bbI(U_T,V_T;Y_T|T)-\bbI(U_T;S_T|V_T,A_T,T)\big]+N\epsilon_N\nonumber\\
    &\le N\big[\bbI(U_T,V_T,T;Y_T)-\bbI(U_T;S_T|V_T,A_T,T)\big]+N\epsilon_N\nonumber\\
    &\mathop=\limits^{(e)} N\big[\bbI(U,V;Y)-\bbI(U;S|V,A)\big]+N\epsilon_N\nonumber\\
    &\mathop\le\limits^{(f)} N\big[\bbI(U,V;Y)-\bbI(U;S|V,A)\big]+N\delta\label{eq:Non-Traditianl_Converse}
\end{align}where
\begin{itemize}
    \item[$(a)$] follows from Fano's inequality;
    \item[$(b)$] follows from Csisz\'{a}r-K\"{o}rner-Marton sum identity \cite[Lemma~7]{BCC:IT78};
    \item[$(c)$] follows from the chain rule and the non-negativity of the mutual information;
    \item[$(d)$] follows by defining $U_t\triangleq\big(M,K,Y^{t-1},A_t^N,S_{t+1}^N\big)$ and $V_t\triangleq\big(M,K,Z^{t-1},A_{t+1}^N,S_{t+1}^N\big)$;
    \item[$(e)$] follows by defining $U\triangleq(U_T,T)$, $V\triangleq(V_T,T)$, $Y\triangleq Y_T$, $S\triangleq S_T$, and $A\triangleq(A_T,T)$;
    \item[$(f)$] follows by defining $\delta\triangleq\max\{\epsilon_N,\eta_N,\gamma\}$.
\end{itemize}
Also,
\begin{align}
    NR&=\bbH(M)\nonumber\\
    &=\bbH(M|K)\nonumber\\
    &\mathop\le\limits^{(a)}\bbI\big(M;Y^N|K\big)+N\epsilon_N\nonumber\\
    &=\sum\limits_{t=1}^N\big[\bbI\big(M;Y_t|K,Y^{t-1}\big)\big]+N\epsilon_N\nonumber\\
    &\le\sum\limits_{t=1}^N\big[\bbI\big(M,K,Z^{t-1},A_{t+1}^N,S_{t+1}^N;M,K,Y^{t-1},A_t^N,S_{t+1}^N,Y_t\big)\big]+N\epsilon_N\nonumber\\
    &\mathop=\limits^{(b)}\sum\limits_{t=1}^N\big[\bbI(V_t;A_t,U_t,Y_t)\big]+N\epsilon_N\nonumber\\
    &=N\sum\limits_{t=1}^N\bbP(T=t)\big[\bbI(V_t;A_t,U_t,Y_t|T=t)\big]+N\epsilon_N\nonumber\\
    &=N\bbI(V_T;A_T,U_T,Y_T|T)+N\epsilon_N\nonumber\\
    &\le N\bbI(V_T,T;A_T,U_T,T,Y_T)+N\epsilon_N\nonumber\\
    &\mathop=\limits^{(c)} N\bbI(V;A,U,Y)+N\epsilon_N\nonumber\\
    &\mathop\le\limits^{(d)} N\bbI(V;A,U,Y)+N\delta\label{eq:Non-Traditianl3_Converse}
\end{align}where
\begin{itemize}
    \item[$(a)$] follows from Fano's inequality;
    \item[$(b)$] follows by defining $U_t\triangleq\big(M,K,Y^{t-1},A_t^N,S_{t+1}^N\big)$ and $V_t\triangleq\big(M,K,Z^{t-1},A_{t+1}^N,S_{t+1}^N\big)$;
    \item[$(c)$] follows by defining $U\triangleq(U_T,T)$, $V\triangleq(V_T,T)$, $Y\triangleq Y_T$, $S\triangleq S_T$, and $A\triangleq(A_T,T)$;
    \item[$(d)$] follows by defining $\delta\triangleq\max\{\epsilon_N,\eta_N,\gamma\}$.
\end{itemize}
We can also lower bound the rate $NR+\bar{R}_K^{(N)}$ as follows,
\begin{align}
    NR+\bar{R}_K^{(N)}&=\bbH(M,K)\nonumber\\
    &\ge\bbI\big(M,K;Z^N\big)\nonumber\\
    &=\sum\limits_{t=1}^N\big[\bbI\big(M,K;Z_t|Z^{t-1}\big)\big]\nonumber\\
    &=\sum\limits_{t=1}^N\big[\bbI\big(M,K,S_{t+1}^N,A_{t+1}^N;Z_t|Z^{t-1}\big)-\bbI\big(S_{t+1}^N,A_{t+1}^N;Z_t|M,K,Z^{t-1}\big)\big]\nonumber\\
    &\mathop=\limits^{(a)}\sum\limits_{t=1}^N\big[\bbI\big(M,K,S_{t+1}^N,A_{t+1}^N;Z_t|Z^{t-1}\big)-\bbI\big(S_{t+1}^N;Z_t|M,K,Z^{t-1}\big)\big]\nonumber\\
    &\mathop=\limits^{(b)}\sum\limits_{t=1}^N\big[\bbI\big(M,K,S_{t+1}^N,A_{t+1}^N;Z_t|Z^{t-1}\big)-\bbI\big(Z^{t-1};S_t|M,K,S_{t+1}^N\big)\big]\nonumber\\
    &\ge\sum\limits_{t=1}^N\big[\bbI\big(M,K,S_{t+1}^N,A_{t+1}^N;Z_t|Z^{t-1}\big)-\bbI\big(M,K,S_{t+1}^N,A_{t+1}^N,Z^{t-1};S_t\big)\big]\nonumber\\
    &\mathop\ge\limits^{(c)}\sum\limits_{t=1}^N\big[\bbI\big(M,K,S_{t+1}^N,A_{t+1}^N,Z^{t-1};Z_t\big)-\bbI\big(M,K,S_{t+1}^N,A_{t+1}^N,Z^{t-1};S_t\big)\big]-\tilde{\epsilon}\nonumber\\
    &\mathop=\limits^{(d)}\sum\limits_{t=1}^N\big[\bbI(V_t;Z_t)-\bbI(V_t;S_t)\big]-\tilde{\epsilon}\nonumber\\
    &=N\sum\limits_{t=1}^N\bbP(T=t)\big[\bbI(V_T;Z_T|T=t)-\bbI(V_T;S_T|T=t)\big]-\tilde{\epsilon}\nonumber\\
    &=N\big[\bbI(V_T;Z_T|T)-\bbI(V_T;S_T|T)\big]-\tilde{\epsilon}\nonumber\\
    &\mathop\ge\limits^{(e)} N\big[\bbI(V_T,T;Z_T)-\bbI(V_T;S_T|T)\big]-2\tilde{\epsilon}\nonumber\\
    &= N\big[\bbI(V_T,T;Z_T)-\bbI(V_T,T;S_T)\big]-2\tilde{\epsilon}\nonumber\\
    &\mathop=\limits^{(f)} N\big[\bbI(V;Z)-\bbI(V;S)\big]-2\tilde{\epsilon}\label{eq:Non-Resolvability1_Converse}
\end{align}where
\begin{itemize}
    \item[$(a)$] follows since $Z_t-\big(M,K,S_{t+1}^N,Z^{t-1}\big)-A_{t+1}^N$ forms a Markov chain;
    \item[$(b)$] follows from Csisz\'{a}r-K\"{o}rner-Marton sum identity \cite[Lemma~7]{BCC:IT78};
    \item[$(c)$] and $(e)$ follow from Lemma~\ref{lemma:iid_Time_indep};
    \item[$(d)$] follows by defining  $V_t\triangleq\big(M,K,S_{t+1}^N,A_{t+1}^N,Z^{t-1}\big)$;
    \item[$(f)$] follows by defining $U\triangleq(U_T,T)$, $V\triangleq(V_T,T)$, $Y\triangleq Y_T$, $S\triangleq S_T$, and $A\triangleq(A_T,T)$.
\end{itemize}
For any $\gamma>0$, selecting $N$ large enough ensures that,
\begin{align}
    R+\frac{\bar{R}_K^{(N)}}{N}\ge\bbI(V;Z)-\bbI(V;S)-2\gamma.
\end{align}Hence,
\begin{align}
    R&\ge\bbI(V;Z)-\bbI(V;S)-2\gamma-\frac{\bar{R}_K^{(N)}}{N}\nonumber\\
    &=\bbI(V;Z)-\bbI(V;S)-2\gamma-\eta_N\nonumber\\
    &\ge\bbI(V;Z)-\bbI(V;S)-3\delta,\label{eq:Non-Resolvability_Converse}
\end{align}where the last inequality follows by defining $\delta\triangleq\max\{\epsilon_N,\eta_N,\gamma\}$. 
To prove that $\bbD(P_Z||Q_0)\le\epsilon$, for $N$ large enough we have
\begin{align}
    \bbD(P_Z||Q_0)&=\bbD(P_{Z_T}||Q_0)=\bbD\left(\frac{1}{N}\sum\limits_{t=1}^NP_{Z_t}\Big|\Big|Q_0\right)\le\frac{1}{N}\sum\limits_{t=1}^N\bbD\left(P_{Z_t}\Big|\Big|Q_0\right)\nonumber\\
    &\le\frac{1}{N}\bbD\left(P_{Z^N}\Big|\Big|Q_0^{\otimes N}\right)\le\frac{\tilde{\epsilon}}{N}\le\gamma\le\delta.\nonumber
\end{align}Combining \eqref{eq:Traditianl_Converse}, \eqref{eq:Non-Traditianl_Converse}, \eqref{eq:Non-Traditianl3_Converse}, and \eqref{eq:Non-Resolvability_Converse} proves that $\forall\epsilon_N,\tilde{\epsilon},\eta_N$, $R\le\max\left\{R:R\in\calF^{(\epsilon)}_{\text{U-NC}}\right\}$. Hence,
\begin{align}
    R\le\max\left\{R:R\in\bigcap\limits_{\epsilon>0}\calF^{(\epsilon)}_{\text{U-NC}}\right\}.\nonumber
\end{align}
\subsection{Proof for Continuity at Zero}The proof follows the similar lines as the proof for continuity at zero in \cite[Appendix~F]{Keyless22}.

\section{Proof of Theorem~\ref{thm:Acievability_KG_C}}
\label{proof_Acive_Causal}
Similar to the achievability proof of Theorem~\ref{thm:Acievability_KG}, our achievability proof is based on a block Markov coding scheme where we transmit $B$ independent messages over $B$ blocks each of length $n$ and $N=nB$. Therefore, the warden's observation is $Z^N=(Z_1^n,Z_2^n,\dots,Z_B^n)$. The distribution induced by our coding scheme at the warden's channel output observation is denoted by $P_{Z^N}\triangleq P_{Z_1^n,Z_2^n,\dots,Z_B^n}$, and the target distribution at the warden's channel output observation is $Q_0^{\otimes N}\triangleq\prod_{b = 1}^BQ_0^{\otimes n}$ and the expansion in \eqref{eq:Cov_Comm_Constraint_NC} continues to be valid. Now fix $\epsilon>0$, $P_A$, $P_{V|S}$, $P_{U|A}$, and $P_{X|US}$.

\subsection{Codebook Generation}
\subsubsection{Codebooks for Key Generation}
For each block $b\in[B]$, let $C_{V_b}^{(n)}\triangleq\big(V^n(j_b)\big)_{j_b\in\calJ}$, where $\calJ\triangleq\brk{2^{nR_J}}$, be a set of random codewords generated \ac{iid} according to $P_V$, where $P_V=\sum_{a\in\calA}\sum_{s\in\calS}P_A(a)Q_{S|A}(s|a)P_{V|S}(v|s)$. A realization of $C_{V_b}^{(n)}$ is denoted by $\calC_{V_b}^{(n)}\triangleq\big(v^n(j_b)\big)_{j_b\in\calJ}$. For each block $b\in[B]$, partition the indices $j_b\in\calJ$ into bins $\calB(\ell_b)$, where $\ell_b\in[2^{n\tilde{R}_L}]$, by applying function $\Psi_L:v^n(j_b)\mapsto[2^{n\tilde{R}_L}]$ via random binning by selecting $\Psi_L\big(v^n(j_b)\big)$ independently and uniformly at random for every $v^n(j_b)\in\calV^n$. For each block $b\in[B]$, create a function $\Psi_K:v^n(j_b)\mapsto[2^{n\tilde{R}_K}]$ via random binning by selecting $\Psi_K\big(v^n(j_b)\big)$ independently and uniformly at random for every $v^n(j_b)\in\calV^n$. The key $k_b=\Psi_K\big(v^n(j_b)\big)$ generated in the block $b\in[B]$ from the description of the \ac{ADSI} $v^n(j_b)$ is split into two independent parts $k_{1,b}$, with rate $R_{K_1}$, and $k_{2,b}$, with rate $R_{K_2}$, where $R_K=R_{K_1}+R_{K_2}$, and will be used to help the transmitter and the receiver in the block~$b+2$.
\subsubsection{Action Codebooks}For each block $b\in[B]$, let $C_{A_b}^{(n)}\triangleq\big(A^n(m_b,k_{1,b-2})\big)_{(m_b,k_{1,b-2})\in\calM\times\calK_1}$, where $\calM\triangleq\big[2^{nR}\big]$ and $\calK_1\triangleq\big[2^{nR_{K_1}}\big]$, be a set of random codewords generated \ac{iid} according to $P_A$. We denote a realization of $C_{A_b}^{(n)}$ by $\calC_{A_b}^{(n)}\triangleq\big(a^n(m_b,k_{1,b-2})\big)_{(m_b,k_{1,b-2})\in\calM\times\calK_1}$. 
\subsubsection{Codebooks for Message Transmission}
For each block $b\in[B]$ and for each $(m_b,,k_{1,b-2})\in\calM\times\calK_1$, let $C_{U_b}^{(n)}\triangleq\big(U^n(m_b,k_{1,b-2},\ell_{b-1},k_{2,b-2})\big)_{(m_b,k_{1,b-2},\ell_{b-1},k_{2,b-2})\in\calM\times\calK_1\times\calL\times\calK_2}$, where $\calL\triangleq\big[2^{nR_L}\big]$, and  $\calK_2\triangleq\big[2^{nR_{K_2}}\big]$, be a random codebook generated \ac{iid} according to $\prod\nolimits_{i=1}^n P_{U|A}\big(\cdot|A_i(m_b,k_{1,b-2})\big)$. The indices $(\ell_{b-1},k_{b-2})$ can also be interpreted as two layer random binning. A realization of $C_{U_b}^{(n)}$ is denoted by $\calC_{U_b}^{(n)}\triangleq\big(u^n(m_b,k_{1,b-2},\ell_{b-1},k_{2,b-2})\big)_{(m_b,k_{1,b-1},\ell_{b-1},k_{2,b-2})\in\calM\times\calK_1\times\calL\times\calK_2}$. Let $C_b^{(n)}\triangleq\left(C_{A_b}^{(n)},C_{V_b}^{(n)},C_{U_b}^{(n)}\right)$, $\calC_b^{(n)}\triangleq\left(\calC_{A_b}^{(n)},\calC_{V_b}^{(n)},\calC_{U_b}^{(n)}\right)$, $C_N\triangleq\left(C_b^{(n)}\right)_{b\in[B]}$, and~$\calC_N\triangleq\left(\calC_b^{(n)}\right)_{b\in[B]}$. For $b\in[B]$, to facilitate the analysis, we consider the following joint \ac{PMF} for a fixed codebook $\calC_b^{(n)}$ in which all the indices $m_b$, $k_{1,b-2}$, $j_b$, $\ell_{b-1}$, and $k_{2,b-2}$, are chosen uniformly at random,
\begin{align}
    &\Gamma_{M_bK_{1,b-2}A^nJ_bV^nL_{b-1}K_{2,b-2}U^nS_b^nY_b^nZ_b^nK_{b-1}L_bK_b|\calC_b^{(n)}}(m_b,k_{1,b-2},\tilde{a}^n,j_b,\tilde{v}^n,\ell_{b-1},k_{2,b-2},\tilde{u}^n,s_b^n,y_b^n,z_b^n,k_{b-1},\ell_b,k_b)\nonumber\\
    &=\frac{1}{\abs{\calM}\abs{\calK_1}\abs{\calJ}\abs{\calL}\abs{\calK_2}}\indi{1}_{\{a^n(m_b,k_{1,b-2})=\tilde{a}^n\}\cap\{v^n(j_b)=\tilde{v}^n\}\cap\{u^n(m_b,k_{1,b-2},\ell_{b-1},k_{2,b-2})=\tilde{u}^n\}}\nonumber\\
    &\Squad\times Q_{S|AV}^{\otimes n}\big(s^n|\tilde{a}^n,\tilde{v}^n\big) W_{YZ|SU}^{\otimes n}(y_b^n,z_b^n|s^n,\tilde{u}^n)\frac{1}{\abs{\calK_1}\abs{\calK_2}}\indi{1}_{\big\{\ell_b=\Psi_L\big(v^n(j_b)\big)\big\}\bigcap\big\{k_b=\Psi_K\big(v^n(j_b)\big)\big\}},\label{eq:Encoding_Ideal_PMF_C}
\end{align}where $W_{YZ|SU}(y,z|s,u)=\sum_{x\in\calX}P_{X|US}(x|u,s)W_{YZ|SX}(y,z|s,x)$, and
\begin{align}
    Q_{S|AV}(s|a,v)&=\frac{P_A(a)Q_{S|A}(s|a)P_{V|S}(v|s)}{\sum_{s\in\calS} P_A(a)Q_{S|A}(s|a)P_{V|S}(v|s)}.\nonumber
\end{align}

\subsection{Encoding}To initiate the key generation process, the transmitter, and the receiver are assumed to share $k_{-1}\triangleq(k_{1,-1},k_{2,-1})$ and $k_0\triangleq(k_{1,0},k_{2,0})$ to be used in blocks $b=1$ and $b=2$, respectively. After the block $b=2$, the transmitter and the receiver use the secret key that they generate from the \ac{ADSI}.
\subsubsection{Encoding Scheme for the First Block}
Given the key $k_{-1}\triangleq(k_{1,-1},k_{2,-1})$, to transmit the message $m_1\in\calM$, the encoder first chooses the action sequence $a^n(m_1,k_{1,-1})$, then, the nature chooses the channel state $s_1^n$ according to the action sequence $a^n(m_1,k_{1,-1})$. Next, the encoder chooses the reconciliation index $\ell_0$ uniformly at random. Note that, the reconciliation index $\ell_0$ does not convey any information about the channel state. Now given the key $k_{-1}$, the encoder chooses the index $j_1$ according to the following distribution with $b=1$,
\begin{align}
g_{\text{\tiny{LE}}}\big(j_b|m_b,\ell_{b-1},k_{b-2},a^n(m_b,k_{1,b-2}),s_b^n\big)\triangleq\frac{Q_{S|AV}^{\otimes n}\big(s_b^n|a^n(m_b,k_{1,b-2}),v^n(j_b)\big)}{\sum_{j'_b\in\calJ}Q_{S|AV}^{\otimes n}\big(s_b^n|a^n(m_b,k_{1,b-2}),v^n(j'_b)\big)},\label{eq:LE_C}
\end{align}where $k_{b-2}\triangleq(k_{1,b-2},k_{2,b-2})$. 
Then, based on $(m_1,k_{1,-1},\ell_0,k_{2,-1})$ and, $j_1$ the encoder computes $u^n(m_1,k_{1,-1},\ell_0,k_{2,-1})$ and $v^n(j_1)$ and transmits $x_1^n$, where $x_{1,i}$ is generated by passing $u_i(m_1,k_{1,-1},\ell_0,k_{2,-1})$ and $s_{1,i}$ through the test channel $P_{X|US}\big(x_{1,i}|u_i(m_1,k_{1,-1},\ell_0,k_{2,-1}),s_i\big)$. Simultaneously, the encoder generates a reconciliation index $\ell_1$ and a key $k_1$, which will be split in two independent parts $k_{1,1}$ and $k_{2,1}$, from the description of the \ac{ADSI} $v^n(j_1)$ to be transmitted in the second and the third blocks, respectively. 
\subsubsection{Encoding Scheme for the Second Block}
Similarly, given the key $k_0\triangleq(k_{1,0},k_{2,0})$, to transmit the message $m_2\in\calM$, the encoder first chooses the action sequence $a^n(m_2,k_{1,0})$, then, the nature chooses the channel state $s_2^n$ according to the action sequence $a^n(m_2,k_{1,0})$. Now given the key $k_0$ and the reconciliation information $\ell_1$, generated in the previous block, the encoder chooses the index $j_2$ according to \eqref{eq:LE_C} with $b=2$. Then, based on $(m_2,k_{1,0},\ell_1,k_{2,0})$ and, $j_2$ the encoder computes $u^n(m_2,k_{1,0},\ell_1,k_{2,0})$ and $v^n(j_2)$ and transmits $x_2^n$, where $x_{2,i}$ is generated by passing $u_i(m_2,k_{1,0},\ell_1,k_{2,0})$ and $s_{2,i}$ through the test channel $P_{X|US}\big(x_{2,i}|u_i(m_2,k_{1,0},\ell_1,k_{2,0}),s_{2,i}\big)$. Simultaneously, the encoder generates a reconciliation index $\ell_2$ and a key $k_2$, which will be split in two independent parts $k_{1,2}$ and $k_{2,2}$, from the description of the \ac{ADSI} $v^n(j_2)$ to be transmitted in the third and the fourth blocks, respectively.
\subsubsection{Encoding Scheme for the Block \texorpdfstring{$b\in\sbra{3}{B}$}{Lg}}
Similarly, given the key $k_{b-2}\triangleq(k_{1,b-2},k_{2,b-2})$, to transmit the message $m_b\in\calM$, the encoder first chooses the action sequence $a^n(m_b,k_{1,b-2})$, then, the nature chooses the channel state $s_b^n$ according to the action sequence $a^n(m_b,k_{1,b-2})$. Now given the key $k_{b-2}$, generated in the block $b-2$, and the reconciliation information $\ell_{b-1}$, generated in the previous block, the encoder chooses the index $j_b$ according to \eqref{eq:LE_NC}. Then, based on $(m_b,k_{1,b-2},\ell_{b-1},k_{2,b-2})$ and, $j_b$ the encoder computes $u^n(m_b,k_{1,b-2},\ell_{b-1},k_{2,b-2})$ and $v^n(j_b)$ and transmits $x_b^n$, where $x_{b,i}$ is generated by passing $u_i(m_b,k_{1,b-2},\ell_{b-1},k_{2,b-2})$ and $s_{b,i}$ through the test channel $P_{X|US}\big(x_{b,i}|u_i(m_b,k_{1,b-2},\ell_{b-1},k_{2,b-2}),s_{b,i}\big)$. Simultaneously, the encoder generates a reconciliation index $\ell_b$ and a key $k_b\triangleq(k_{1,b},k_{2,b})$, which will be split in two independent parts $k_{1,b}$ and $k_{2,b}$, from the description of the \ac{ADSI} $v^n(j_b)$ to be transmitted in block $b+1$ and block $b+2$, respectively. 
Note that, in this scheme, the description of the \ac{ADSI} is used only for the key generation, not the message transmission.

Therefore, noting that our coding scheme ensures that the reconciliation index $L_{b-1}$ and the keys $K_{b-1}$ and $K_b$ are (arbitrarily) nearly uniformly distributed, the joint \ac{PMF} between the involved \acp{RV} is given by,
\begin{align}
    &P_{M_bK_{1,b-2}L_{b-1}K_{2,b-2}A^nS_b^nJ_bU^nV^nX^nY_b^nZ_b^nK_{b-1}L_bK_b|\calC_b^{(n)}}(m_b,k_{1,b-2},\ell_{b-1},k_{2,b-2},\tilde{a}^n,s_b^n,j_b,\tilde{u}^n,\tilde{v}^n,\tilde{x}^n,y_b^n,z_b^n,k_{b-1},\ell_b,k_b)\nonumber\\
    &\Squad=\frac{1}{\abs{\calM}\abs{\calK_1}\abs{\calL}\abs{\calK_2}}\indi{1}_{\{a^n(m_b,k_{1,b-2})=\tilde{a}^n\}}Q_{S|A}^{\otimes n}\big(s_b^n|\tilde{a}^n\big)\nonumber\\
    &\Squad\times g_{\text{\tiny{LE}}}\big(j_b|m_b,\ell_{b-1},k_{b-2},\tilde{a}^n,s_b^n\big)\indi{1}_{\{u^n(m_b,k_{1,b-2},\ell_{b-1},k_{2,b-2})=\tilde{u}^n\}\cap\{v^n(j_b)=\tilde{v}^n\}}\nonumber\\
    &\Squad\times P_{X|US}^{\otimes n}\left(\tilde{x}^n|u^n(m_b,k_{1,b-2},\ell_{b-1},k_{2,b-2}),s_b^n\right)W_{YZ|SX}^{\otimes n}\big(y_b^n,z_b^n|s_b^n,\tilde{x}^n\big)\nonumber\\
    &\Squad\times\frac{1}{\abs{\calK_1}\abs{\calK_2}}\indi{1}_{\big\{\ell_b=\Psi_L\big(v^n(j_b)\big)\big\}\bigcap\big\{k_b=\Psi_K\big(v^n(j_b)\big)\big\}}.\label{eq:Encoding_Joint_C}
\end{align}
\subsection{Covert Analysis}
\begin{figure*}
\centering
\includegraphics[width=6.0in]{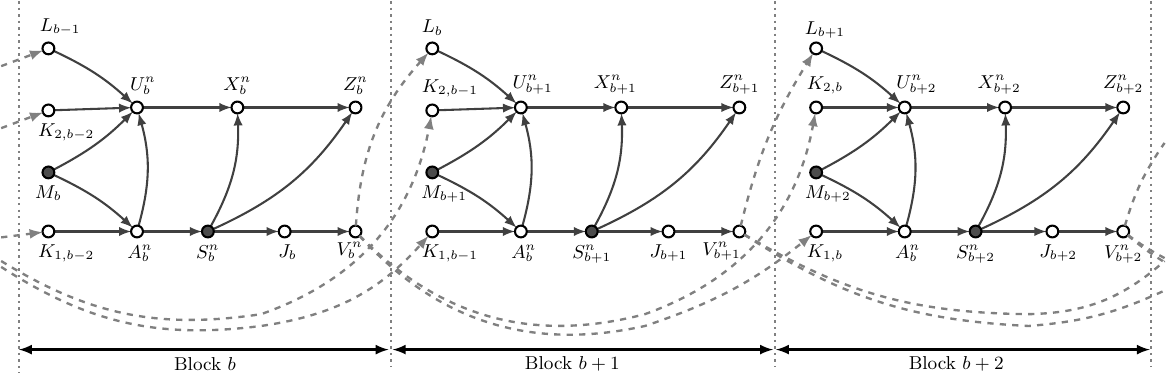}
\caption{Functional dependency graph for all the involved \acp{RV}.}
\label{fig:Dependency_C}
\end{figure*}
To prove that our coding scheme is covert, we show that $\bbE_{C_N}\bbD\left(P_{Z^N|C_N}||Q_Z^{\otimes N}\right)\xrightarrow[]{N\to\infty}0$, where
\begin{subequations}\label{eq:Distribution_Conditions_C}
\begin{align}
    Q_Z(\cdot)&\triangleq\sum_{a\in\calA}\sum_{v\in\calV}\sum_{u\in\calU}\sum_{s\in\calS}\sum_{x\in\calX}P_A(a)P_V(v)P_{U|A}(u|a)Q_{S|AV}(s|a,v)P_{X|US}(x|u,s)W_{Z|XS}(\cdot|x,s),\label{eq:QZ_Defi_C}
\end{align}such that,
\begin{align}
    \sum_{v\in\calV}P_{V}(v)Q_{S|AV}(s|a,v)=Q_{S|A}(\cdot|a).\label{eq:Dist_Condition_C}
\end{align}
\end{subequations}
Then we choose the \acp{PMF} $P_A$, $P_{V|S}$, $P_{U|A}$, and $P_{X|US}$ such that $Q_0=Q_Z$. Now, for $b\in[B]$
\begin{subequations}\label{eq:Dep_Graph_equ2_C}
\begin{align}
   \bbI\left(Z_b^n;Z_{b+1}^{B,n}\right)&\le\bbI\left(Z_b^n;K_{b-1},L_b,K_b,Z_{b+1}^{B,n}\right)\nonumber\\
   &=\bbI\left(Z_b^n;K_{b-1},L_b,K_b\right),\label{eq:Dep_Graph_equ_C}
\end{align}where \eqref{eq:Dep_Graph_equ_C} follows since $Z_b^n-\left(K_{b-1},L_b,K_b\right)-Z_{b+1}^{B,n}$ forms a Markov chain, as illustrated in the functional dependency graph in Fig.~\ref{fig:Dependency_C}. Now,
\begin{align}
    \bbI\left(Z_b^n;K_{b-1},L_b,K_b\right)&=\bbD\left(P_{Z_b^nK_{b-1}L_bK_b|\calC_b^{(n)}}\big|\big|P_{Z_b^n|\calC_b^{(n)}}P_{K_{b-1}L_bK_b|\calC_b^{(n)}}\right)\nonumber\\
    &\le\bbD\left(P_{Z_b^nK_{b-1}L_bK_b|\calC_b^{(n)}}\big|\big|Q_Z^{\otimes n}P_{K_{b-1}}^UP_{L_b}^UP_{K_b}^U\right),\label{eq:Dep_Graph_equ21_C}
\end{align}where $P_{K_{b-1}}^UP_{L_b}^UP_{K_b}^U$ is the uniform distribution over $\Big[2^{nR_K}\Big]\times\left[2^{n\tilde{R}_L}\right]\times\left[2^{n\tilde{R}_K}\right]$ and \eqref{eq:Dep_Graph_equ21_C} holds since
\begin{align}
    &\bbD\left(P_{Z_b^nK_{b-1}L_bK_b|\calC_b^{(n)}}\big|\big|P_{Z_b^n|\calC_b^{(n)}}P_{K_{b-1}L_bK_b|\calC_b^{(n)}}\right)=\bbD\left(P_{Z_b^nK_{b-1}L_bK_b|\calC_b^{(n)}}\big|\big|Q_Z^{\otimes n}P_{K_{b-1}}^UP_{L_b}^UP_{K_b}^U\right)\nonumber\\
    &\quad-\bbD\left(P_{Z_b^n|\calC_b^{(n)}}\big|\big|Q_Z^{\otimes n}\right)-\bbD\left(P_{K_{b-1}L_bK_b|\calC_b^{(n)}}\big|\big|P_{K_{b-1}}^UP_{L_b}^UP_{K_b}^U\right).\label{eq:Dep_Graph_equ22_C}
\end{align}
\end{subequations}
Therefore, combining \eqref{eq:Dep_Graph_equ2_C} with the expansion in  \eqref{eq:Cov_Comm_Constraint_NC}, by substituting $Q_0$ with $Q_Z$ and considering the expectation over the random codebook construction, results to
\begin{align}
    \bbE_{C_N}\kd{P_{Z^N|C_N}||}{Q_0^{\otimes N}}&\le 2\sum_{b=1}^B\bbE_{C_b^{(n)}}\bbD\left(P_{Z_b^nK_{b-1}L_bK_b|C_b^{(n)}}\big|\big|Q_Z^{\otimes n}P_{K_{b-1}}^UP_{L_b}^UP_{K_b}^U\right).\label{eq:Cov_Cons_General_C}
\end{align}To bound the \ac{RHS} of \eqref{eq:Cov_Cons_General_C}, using \eqref{eq:Reverse_Pinsker} in Lemma~\ref{lemma:KLD_TV}, it is sufficient to bound,
\begin{align}
    &\ToV\left(P_{Z_b^nK_{b-1}L_bK_b|\calC_b^{(n)}},Q_Z^{\otimes n}P_{K_{b-1}}^UP_{L_b}^UP_{K_b}^U\right)\nonumber\\
    &\le\ToV\left(P_{Z_b^nK_{b-1}L_bK_b|\calC_b^{(n)}},\Gamma_{Z_b^nK_{b-1}L_bK_b|\calC_b^{(n)}}\right)+\ToV\left(\Gamma_{Z_b^nK_{b-1}L_bK_b|\calC_b^{(n)}},Q_Z^{\otimes n}P_{K_{b-1}}^UP_{L_b}^UP_{K_b}^U\right),\label{eq:General_KLD_C}
\end{align}where the inequality follows from the triangle inequality.

Note that, from \eqref{eq:Encoding_Ideal_PMF_C} and \eqref{eq:Encoding_Joint_C} we have the following marginal distributions,
\begin{align}
    &\Gamma_{Z_b^nK_{b-1}L_bK_b|\calC_b^{(n)}}(z^n,k_{b-1},\ell_b,k_b)=\sum\limits_{m_b}\sum\limits_{k_{1,b-2}}\sum\limits_{j_b}\sum\limits_{\ell_{b-1}}\sum\limits_{k_{2,b-2}}\sum\limits_{s_b^n}\frac{Q_{S|AV}^{\otimes n}\Big(s^n|a^n(m_b,k_{1,b-2}),v^n(j_b)\Big)}{2^{n(R+R_J+R_L+2R_K)}}\times\nonumber\\
    &\qquad W_{Z|SU}^{\otimes n}\Big(z_b^n|s_b^n,u^n(m_b,k_{1,b-2},\ell_{b-1},k_{2,b-2})\Big)\indi{1}_{\big\{\ell_b=\Psi_L\big(v^n(j_b)\big)\big\}\bigcap\big\{k_b=\Psi_K\big(v^n(j_b)\big)\big\}}\nonumber\\
    &=\sum\limits_{m_b}\sum\limits_{k_{1,b-2}}\sum\limits_{j_b}\sum\limits_{\ell_{b-1}}\sum\limits_{k_{2,b-2}}\frac{1}{2^{n(R+R_J+R_L+2R_K)}}\times\nonumber\\
    &W_{Z|AVU}^{\otimes n}\Big(z_b^n|a^n(m_b,k_{1,b-2}),v^n(j_b),u^n(m_b,k_{1,b-2},\ell_{b-1},k_{2,b-2})\Big)\indi{1}_{\big\{\ell_b=\Psi_L\big(v^n(j_b)\big)\big\}\bigcap\big\{k_b=\Psi_K\big(v^n(j_b)\big)\big\}},\label{eq:Encoding_GZ_C}
\end{align}where
\begin{subequations}
\begin{align}
    W_{Z|SU}(z|s,u)&\triangleq\sum_{y\in\calY}\sum_{x\in\calX}P_{X|US}(x|u,s)W_{YZ|SX}(y,z|s,x),\nonumber\\
    W_{Z|AVU}(z|a,v,u)&\triangleq\sum_{y\in\calY}\sum_{x\in\calX}\sum_{s\in\calS}Q_{S|AVU}(s|a,v,u)P_{X|US}(x|u,s)W_{YZ|SX}(y,z|s,x).\nonumber
\end{align}
\end{subequations}To bound the first term on the \ac{RHS} of \eqref{eq:General_KLD_C} we have
\begin{align}
    &\bbE_{C_b^{(n)}}\ToV\left(P_{Z_b^nK_{b-1}L_bK_b|C_b^{(n)}},\Gamma_{Z_b^nK_{b-1}L_bK_b|C_b^{(n)}}\right)\nonumber\\
    &\le\bbE_{C_b^{(n)}}\ToV\left(P_{M_bK_{1,b-2}L_{b-1}K_{2,b-2}A^nS_b^nJ_bU^nV^nX^nY_b^nZ_b^nK_{b-1}L_bK_b|C_b^{(n)}},\right.\nonumber\\
    &\quad\left.\Gamma_{M_bK_{1,b-2}L_{b-1}K_{2,b-2}A^nS_b^nJ_bU^nV^nX^nY_b^nZ_b^nK_{b-1}L_bK_b|C_b^{(n)}}\right)\nonumber\\
    &\mathop=\limits^{(a)}\bbE_{C_b^{(n)}}\ToV\left(P_{M_bL_{b-1}K_{1,b-2}K_{2,b-2}A^nS_b^n|C_b^{(n)}},\Gamma_{M_bL_{b-1}K_{1,b-2}K_{2,b-2}A^nS_b^n|C_b^{(n)}}\right)\nonumber\\
    &\mathop=\limits^{(b)}\bbE_{C_b^{(n)}}\ToV\left(P_{A^nS_b^n|M_b=1,L_{b-1}=1,K_{1,b-2}=1,K_{2,b-2}=1,C_b^{(n)}},\Gamma_{A^nS_b^n|M_b=1,L_{b-1}=1,K_{1,b-2}=1,K_{2,b-2}=1,C_b^{(n)}}\right)\nonumber\\
    &\mathop=\limits^{(c)}\bbE_{C_{A_b}^{(n)}}\left[\bbE_{C_{U_b}^{(n)}C_{V_b}^{(n)}|C_{A_b}^{(n)}}\left[\ToV\left(P_{A^nS_b^n|M_b=1,L_{b-1}=1,K_{1,b-2}=1,K_{2,b-2}=1,C_b^{(n)}},\right.\right.\right.\nonumber\\
    &\quad\left.\left.\left.\Gamma_{A^nS_b^n|M_b=1,L_{b-1}=1,K_{1,b-2}=1,K_{2,b-2}=1,C_b^{(n)}}\right)\left|C_{A_b}^{(n)}\right.\right]\right],\label{eq:General_TV_C}
\end{align}where
\begin{itemize}
    \item[$(a)$] follows since
    \begin{subequations}
\begin{align}
    &P_{U^n|M_bL_{b-1}K_{1,b-2}K_{2,b-2}A^nS_b^n\calC_b^{(n)}}=\indi{1}_{\{U^n=u^n(M_b,K_{1,b-2},L_{b-1},K_{2,b-2})\}}=\nonumber\\
    &\qquad\Gamma_{U^n|M_bL_{b-1}K_{1,b-2}K_{2,b-2}A^nS_b^n\calC_b^{(n)}}\label{eq:PGC_3},\\
    &P_{J_b|M_bL_{b-1}K_{1,b-2}K_{2,b-2}A^nS_b^nU_b^n\calC_b^{(n)}}=g_{\text{\tiny{LE}}}\big(J_b|M_b,K_{1,b-2},L_{b-1},K_{2,b-2},a^n(M_b,K_{1,b-2}),S_b^n\big)=\nonumber\\
    &\qquad\Gamma_{J_b|M_bL_{b-1}K_{1,b-2}K_{2,b-2}A^nS_b^nU_b^n\calC_b^{(n)}}\label{eq:PGC_4},\\
    &P_{V^n|M_bL_{b-1}K_{1,b-2}K_{2,b-2}A^nS_b^nU^nJ_b\calC_b^{(n)}}=\indi{1}_{\{V^n=v^n(J_b)\}}=\Gamma_{V^n|M_bL_{b-1}K_{1,b-2}K_{2,b-2}A^nS_b^nU^nJ_b\calC_b^{(n)}}\label{eq:PGC_5},\\
    &P_{X^n|M_bL_{b-1}K_{1,b-2}K_{2,b-2}A^nS_b^nJ_bU^nV^n\calC_b^{(n)}}=P_{X|US}^{\otimes n}=\Gamma_{X^n|M_bL_{b-1}K_{1,b-2}K_{2,b-2}A^nS_b^nU^nJ_bV^n\calC_b^{(n)}}\label{eq:PGC_6},\\
    &P_{Y^nZ^n|M_bL_{b-1}K_{1,b-2}K_{2,b-2}A^nS_b^nU^nJ_bV^nX^n\calC_b^{(n)}}=W_{YZ|SX}^{\otimes n}=\nonumber\\
    &\qquad\Gamma_{Y^nZ^n|M_bL_{b-1}K_{1,b-2}K_{2,b-2}A^nS_b^nU^nJ_bV^nX^n\calC_b^{(n)}}\label{eq:PGC_7},\\
    &P_{L_bK_b|M_bL_{b-1}K_{1,b-2}K_{2,b-2}A^nS_b^nI_bJ_bU^nV^nX^nY^nZ^n\calC_b^{(n)}}=\indi{1}_{\big\{\ell_b=\Psi_L\big(v^n(j_b)\big)\big\}\bigcap\big\{k_b=\Psi_K\big(v^n(j_b)\big)\big\}}=\nonumber\\
    &\qquad\Gamma_{L_bK_b|M_bL_{b-1}K_{1,b-2}K_{2,b-2}A^nS_b^nI_bJ_bU^nV^nX^nY^nZ^n\calC_b^{(n)}}\label{eq:PGC_8},
\end{align}
\end{subequations}while \eqref{eq:PGC_3} follows since
\begin{align}
    &\Gamma_{J_b|M_bL_{b-1}K_{1,b-2}K_{2,b-2}A^nS_b^n\calC_b^{(n)}}\nonumber\\
    &=\frac{\Gamma_{J_bM_bL_{b-1}K_{1,b-2}K_{2,b-2}A^nS_b^n|\calC_b^{(n)}}}{\Gamma_{M_bL_{b-1}K_{1,b-2}K_{2,b-2}A^nS_b^n|\calC_b^{(n)}}}\nonumber\\
    &=\frac{Q_{S|AVU}^{\otimes n}\big(s^n|a^n(m_b,k_{1,b-2}),v^n(j_b),u^n(m_b,k_{1,b-2},\ell_{b-1},k_{2,b-2})\big)}{\sum_{j_b'}Q_{S|AVU}^{\otimes n}\big(s^n|a^n(m_b,k_{1,b-2}),v^n(j'_b),u^n(m_b,k_{1,b-2},\ell_{b-1},k_{2,b-2})\big)}\nonumber\\
    &=g_{\text{\tiny{LE}}}\big(j_b|m_b,\ell_{b-1},k_{b-2},a^n(m_b,k_{1,b-2}),s_b^n\big);
\end{align}
    \item[$(b)$] follows from since
    \begin{align}
        P_{M_bL_{b-1}K_{1,b-2}K_{2,b-2}|\calC_b^{(n)}}&=\frac{1}{\abs{\calM}\abs{\calL}\abs{\calK_1}\abs{\calK_2}}=\Gamma_{M_bL_{b-1}K_{1,b-2}K_{2,b-2}|\calC_b^{(n)}},\nonumber
    \end{align}
    the codebook $C_N$ is independent of $(M_b,L_{b-1},K_{1,b-2},K_{2,b-2})$, and the symmetry of the codebook construction \ac{wrt} $M_b$, $L_{b-1}$, $K_{1,b-2}$, and $K_{2,b-2}$;
    \item[$(c)$] follows from the law of total expectation.
    \end{itemize}
Now fix the codebook $C_{A_b}^{(n)}=\calC_{A_b}^{(n)}$ and consider the following quantity,
\begin{align}
    &\bbE_{C_{U_b}^{(n)},C_{V_b}^{(n)}|C_{A_b}^{(n)}=\calC_{A_b}^{(n)}}\left[\ToV\left(P_{A^nS_b^n|M_b=1,L_{b-1}=1,K_{b-2}=1,C_b^{(n)}},\Gamma_{A^nS_b^n|M_b=1,L_{b-1}=1,K_{b-2}=1,C_b^{(n)}}\right)\left|C_{A_b}^{(n)}=\calC_{A_b}^{(n)}\right.\right]\nonumber\\
    &\quad\mathop=\limits^{(a)}\bbE_{C_{U_b}^{(n)},C_{V_b}^{(n)}|C_{A_b}^{(n)}=\calC_{A_b}^{(n)}}\left[\ToV\left(P_{S_b^n|M_b=1,L_{b-1}=1,K_{b-2}=1,a^n(1,1),C_b^{(n)}},\right.\right.\nonumber\\
    &\left.\left.\qquad\Gamma_{S_b^n|M_b=1,L_{b-1}=1,K_{b-2}=1,a^n(1,1),C_b^{(n)}}\right)\left|C_{A_b}^{(n)}=\calC_{A_b}^{(n)}\right.\right]\nonumber\\
    &\quad=\bbE_{C_{U_b}^{(n)},C_{V_b}^{(n)}|C_{A_b}^{(n)}=\calC_{A_b}^{(n)}}\left[\ToV\left(Q_{S|A}^{\otimes n}\big(\cdot|a^n(1,1)\big),\Gamma_{S^n|a^n(1,1),C_b^{(n)}}\right)\left|C_{A_b}^{(n)}=\calC_{A_b}^{(n)}\right.\right],\label{eq:First_Level_TV_C}
\end{align}where $(a)$ follows since,
\begin{align}
    P_{A^n|M_bL_{b-1}K_{1,b-2}K_{2,b-2}\calC_b^{(n)}}&=\indi{1}_{\{A^n=a^n(M_b,K_{1,b-2})\}}=\Gamma_{A^n|M_bL_{b-1}K_{1,b-2}K_{2,b-2}\calC_b^{(n)}}.\nonumber
\end{align}
By \cite[Corollary~VII.5]{Cuff13}, the \ac{RHS} of \eqref{eq:First_Level_TV_C} vanishes when $n$ grows if
    \begin{align}
    R_J&>\bbI(V;S|A).\label{eq:Conditional_SCL_C}
\end{align}
Since for every $C_{A_b}^{(n)}=\calC_{A_b}^{(n)}$ the \ac{RHS} of \eqref{eq:General_TV_C} vanishes when $n$ grows and the total variation distance is non-negative, the expectation over $C_{A_b}^{(n)}$, in \eqref{eq:General_TV_C}, also vanishes when $n\to\infty$. 
To bound the second term on the \ac{RHS} of \eqref{eq:General_KLD_C}, by using Pinker's inequality in 
Lemma~\ref{lemma:KLD_TV} and considering \eqref{eq:Encoding_GZ_C}, we have
\begin{align}
    &\bbE_{C_b^{(n)}}\bbD\left(\Gamma_{Z_b^nK_{b-1}L_bK_b|\calC_b^{(n)}}||Q_Z^{\otimes n}P_{K_{b-1}}^UP_{L_b}^UP_{K_b}^U\right)\nonumber\\
    &=\bbE_{C_b^{(n)}}\left[\sum\limits_{(z^n,k_{b-1},\ell_b,k_b)}\Gamma_{Z_b^nK_{b-1}L_bK_b|\calC_b^{(n)}}(z^n,k_{b-1},\ell_b,k_b)\log\frac{\Gamma_{Z_b^nK_{b-1}L_bK_b|\calC_b^{(n)}}(z^n,k_{b-1},\ell_b,k_b)}{Q_Z^{\otimes n}(z^n)P_{K_{b-1}}^U(k_{b-1})P_{L_b}^U(\ell_b)P_{K_b}^U(k_b)}\right]\nonumber\\
    &=\bbE_{C_b^{(n)}}\left[\sum\limits_{(z^n,k_{b-1},\ell_b,k_b)}\sum\limits_{m_b}\sum\limits_{k_{1,b-2}}\sum\limits_{\ell_{b-1}}\sum\limits_{k_{2,b-2}}\sum\limits_{j_b}\frac{1}{2^{n(R+R_J+R_L+2R_K)}}\times\right.\nonumber\\
    &\quad W_{Z|AVU}^{\otimes n}\Big(z_b^n|a^n(m_b,k_{1,b-2}),v^n(j_b),u^n(m_b,k_{1,b-2},\ell_{b-1},k_{2,b-2})\Big)\indi{1}_{\big\{\ell_b=\Psi_L\big(v^n(j_b)\big)\big\}\bigcap\big\{k_b=\Psi_K\big(v^n(j_b)\big)\big\}}\nonumber\\    
    &\quad\log\left[\frac{1}{2^{n(R+R_J+R_L+R_K-\tilde{R}_L-\tilde{R}_K)}Q_Z^{\otimes n}(z)}\right.\times\nonumber\\
    &\quad\sum\limits_{\tilde{m}_b}\sum\limits_{\tilde{k}_{1,b-2}}\sum\limits_{\tilde{\ell}_{b-1}}\sum\limits_{\tilde{k}_{2,b-2}}\sum\limits_{\tilde{j}_b}W_{Z|AVU}^{\otimes n}\Big(z_b^n|A^n(\tilde{m}_b,\tilde{k}_{1,b-2}),V^n(\tilde{j}_b),U^n(\tilde{m}_b,\tilde{k}_{1,b-2},\tilde{\ell}_{b-1},\tilde{k}_{2,b-2})\Big)\times\nonumber\\
    &\quad\left.\left.\indi{1}_{\big\{\ell_b=\Psi_L\big(V^n(\tilde{j}_b)\big)\big\}\bigcap\big\{k_b=\Psi_K\big(V^n(\tilde{j}_b)\big)\big\}}\right]\right]\nonumber\\
    &\mathop\le\limits^{(a)}\frac{1}{2^{n(R+R_J+R_L+2R_K)}}\sum\limits_{(z^n,k_{b-1},\ell_b,k_b)}\sum\limits_{m_b}\sum\limits_{k_{1,b-2}}\sum\limits_{\ell_{b-1}}\sum\limits_{k_{2,b-2}}\sum\limits_{j_b}\sum\limits_{(a^n,v^n,u^n)}\nonumber\\
    &\quad\Gamma_{ZAVU}^{\otimes n}\Big(z_b^n,a^n(m_b,k_{1,b-2}),v^n(j_b),u^n(m_b,k_{1,b-2},\ell_{b-1},k_{2,b-2})\Big)\times\nonumber\\
    &\quad\bbE_{\Psi_L\big(v^n(j_b)\big)}\left[\indi{1}_{\big\{\ell_b=\Psi_L\big(v^n(j_b)\big)\big\}}\right]\bbE_{\Psi_K\big(v^n(j_b)\big)}\left[\indi{1}_{\big\{k_b=\Psi_K\big(v^n(j_b)\big)\big\}}\right]\times\nonumber\\
    &\quad\log\bbE_{\mathop {\backslash (m_b,k_{1,b-2},\ell_{b-1},k_{2,b-2},j_b),}\limits_{\backslash(\Psi_L (v^n(j_b)),\Psi_K (v^n(j_b)))} }\left[\frac{1}{2^{n(R+R_J+R_L+R_K-\tilde{R}_L-\tilde{R}_K)}Q_Z^{\otimes n}(z)}\times\right.\nonumber\\
    &\quad\sum\limits_{\tilde{m}_b}\sum\limits_{\tilde{k}_{1,b-2}}\sum\limits_{\tilde{\ell}_{b-1}}\sum\limits_{\tilde{k}_{2,b-2}}\sum\limits_{\tilde{j}_b}W_{Z|AVU}^{\otimes n}\Big(z_b^n|A^n(\tilde{m}_b,\tilde{k}_{1,b-2}),V^n(\tilde{j}_b),U^n(\tilde{m}_b,\tilde{k}_{1,b-2},\tilde{\ell}_{b-1},\tilde{k}_{2,b-2})\Big)\times\nonumber\\
    &\left.\quad\indi{1}_{\big\{\ell_b=\Psi_L\big(V^n(\tilde{j}_b)\big)\big\}\bigcap\big\{k_b=\Psi_K\big(V^n(\tilde{j}_b)\big)\big\}}\right]\nonumber\\
    &\mathop\le\limits^{(b)}\frac{1}{2^{n(R+R_J+R_L+2R_K)}}\sum\limits_{(z^n,k_{b-1},\ell_b,k_b)}\sum\limits_{m_b}\sum\limits_{k_{1,b-2}}\sum\limits_{\ell_{b-1}}\sum\limits_{k_{2,b-2}}\sum\limits_{j_b}\sum\limits_{(a^n,v^n,u^n)}\nonumber\\
    &\quad\Gamma_{ZAVU}^{\otimes n}\Big(z_b^n,a^n(m_b,k_{1,b-2}),v^n(j_b),u^n(m_b,k_{1,b-2},\ell_{b-1},k_{2,b-2})\Big)\frac{1}{2^{n(\tilde{R}_L+\tilde{R}_K)}}\times\nonumber\\
    &\quad\log\frac{1}{2^{n(R+R_J+R_L+R_K-\tilde{R}_L-\tilde{R}_K)}Q_Z^{\otimes n}(z)}\bbE_{\mathop {\backslash (m_b,k_{1,b-2},\ell_{b-1},k_{2,b-2},j_b),}\limits_{\backslash(\Psi_L (v^n(j_b)),\Psi_K (v^n(j_b)))} }\Big[\nonumber\\
    &\quad W_{Z|AVU}^{\otimes n}\Big(z_b^n|a^n(m_b,k_{1,b-2}),v^n(j_b),u^n(m_b,k_{1,b-2},\ell_{b-1},k_{2,b-2})\Big)+\nonumber\\
    &\quad\sum\limits_{\tilde{j}_b\ne j_b}W_{Z|AVU}^{\otimes n}\Big(z_b^n|a^n(m_b,k_{1,b-2}),V^n(\tilde{j}_b),u^n(m_b,k_{1,b-2},\ell_{b-1},k_{2,b-2})\Big)\times\nonumber\\
    &\quad\indi{1}_{\big\{\ell_b=\Psi_L\big(V^n(\tilde{j}_b)\big)\big\}\bigcap\big\{k_b=\Psi_K\big(V^n(\tilde{j}_b)\big)\big\}}+\nonumber\\
    &\quad\sum\limits_{(\tilde{\ell}_{b-1},\tilde{k}_{2,b-2})\ne(\ell_{b-1},k_{2,b-2})}W_{Z|AVU}^{\otimes n}\Big(z_b^n|a^n(m_b,k_{1,b-2}),v^n(j_b),U^n(m_b,k_{1,b-2},\tilde{\ell}_{b-1},\tilde{k}_{2,b-2})\Big)+\nonumber\\
    &\quad\sum\limits_{(\tilde{\ell}_{b-1},\tilde{k}_{2,b-2},\tilde{j}_b)\ne(\ell_{b-1},k_{2,b-2},j_b)}W_{Z|AVU}^{\otimes n}\Big(z_b^n|a^n(m_b,k_{1,b-2}),V^n(\tilde{j}_b),U^n(m_b,k_{1,b-2},\tilde{\ell}_{b-1},\tilde{k}_{2,b-2})\Big)\times\nonumber\\
    &\quad\indi{1}_{\big\{\ell_b=\Psi_L\big(V^n(\tilde{j}_b)\big)\big\}\bigcap\big\{k_b=\Psi_K\big(V^n(\tilde{j}_b)\big)\big\}}+\nonumber\\
    &\quad\sum\limits_{(\tilde{m}_b,\tilde{k}_{1,b-2})\ne (m_b,k_{1,b-2})}\Squad\sum\limits_{(\tilde{\ell}_{b-1},\tilde{k}_{2,b-2})}W_{Z|AVU}^{\otimes n}\Big(z_b^n|A^n(\tilde{m}_b,\tilde{k}_{1,b-2}),v^n(j_b),U^n(\tilde{m}_b,\tilde{k}_{1,b-2},\tilde{\ell}_{b-1},\tilde{k}_{2,b-2})\Big)+\nonumber\\
    &\quad\sum\limits_{(\tilde{m}_b,\tilde{k}_{1,b-2},\tilde{j}_b)\ne (m_b,k_{1,b-2},j_b)}\Squad\sum\limits_{(\tilde{\ell}_{b-1},\tilde{k}_{2,b-2})}\hspace{-2mm}W_{Z|AVU}^{\otimes n}\Big(z_b^n|A^n(\tilde{m}_b,\tilde{k}_{1,b-2}),V^n(\tilde{j}_b),U^n(\tilde{m}_b,\tilde{k}_{1,b-2},\tilde{\ell}_{b-1},\tilde{k}_{2,b-2})\Big)\times\nonumber\\
    &\quad\indi{1}_{\big\{\ell_b=\Psi_L\big(V^n(\tilde{j}_b)\big)\big\}\bigcap\big\{k_b=\Psi_K\big(V^n(\tilde{j}_b)\big)\big\}}\Big]\nonumber\\
    &\le\frac{1}{2^{n(R+R_J+R_L+2R_K+\tilde{R}_L+\tilde{R}_K)}}\sum\limits_{(z^n,k_{b-1},\ell_b,k_b)}\sum\limits_{m_b}\sum\limits_{k_{1,b-2}}\sum\limits_{\ell_{b-1}}\sum\limits_{k_{2,b-2}}\sum\limits_{j_b}\sum\limits_{(a^n,v^n,u^n)}\nonumber\\
    &\quad\Gamma_{ZAVU}^{\otimes n}\Big(z_b^n,a^n(m_b,k_{1,b-2}),v^n(j_b),u^n(m_b,k_{1,b-2},\ell_{b-1},k_{2,b-2})\Big)\times\nonumber\\
    &\quad\log\frac{1}{2^{n(R+R_J+R_L+R_K-\tilde{R}_L-\tilde{R}_K)}Q_Z^{\otimes n}(z)}\left[W_{Z|AVU}^{\otimes n}\Big(z_b^n|a^n(m_b,k_{1,b-2}),v^n(j_b),u^n(m_b,k_{1,b-2},\ell_{b-1},k_{2,b-2})\Big)\right.+\nonumber\\
    &\quad\sum\limits_{\tilde{j}_b\ne j_b}W_{Z|AU}^{\otimes n}\Big(z_b^n|a^n(m_b,k_{1,b-2}),u^n(m_b,k_{1,b-2},\ell_{b-1},k_{2,b-2})\Big)2^{-n(\tilde{R}_L+\tilde{R}_K)}+\nonumber\\
    &\quad\sum\limits_{(\tilde{\ell}_{b-1},\tilde{k}_{2,b-2})\ne(\ell_{b-1},k_{2,b-2})}W_{Z|AV}^{\otimes n}\Big(z_b^n|a^n(m_b,k_{1,b-2}),v^n(j_b)\Big)+\nonumber\\
    &\quad\sum\limits_{(\tilde{\ell}_{b-1},\tilde{k}_{2,b-2},\tilde{j}_b)\ne(\ell_{b-1},k_{2,b-2},j_b)}W_{Z|A}^{\otimes n}\Big(z_b^n|a^n(m_b,k_{1,b-2})\Big)2^{-n(\tilde{R}_L+\tilde{R}_K)}+\nonumber\\
    &\quad\left.\sum\limits_{(\tilde{m}_b,\tilde{k}_{1,b-2})\ne (m_b,k_{1,b-2})}\Squad\sum\limits_{(\tilde{\ell}_{b-1},\tilde{k}_{2,b-2})}W_{Z|V}^{\otimes n}\Big(z_b^n|v^n(j_b)\Big)+1\right]\nonumber\\
    &\le\frac{1}{2^{n(R+R_J+R_L+2R_K+\tilde{R}_L+\tilde{R}_K)}}\sum\limits_{(z^n,k_{b-1},\ell_b,k_b)}\sum\limits_{m_b}\sum\limits_{k_{1,b-2}}\sum\limits_{\ell_{b-1}}\sum\limits_{k_{2,b-2}}\sum\limits_{j_b}\sum\limits_{(a^n,v^n,u^n)}\nonumber\\
    &\quad\Gamma_{ZAVU}^{\otimes n}\Big(z_b^n,a^n(m_b,k_{1,b-2}),v^n(j_b),u^n(m_b,k_{1,b-2},\ell_{b-1},k_{2,b-2})\Big)\times\nonumber\\
    &\quad\log\frac{1}{2^{n(R+R_J+R_L+R_K-\tilde{R}_L-\tilde{R}_K)}Q_Z^{\otimes n}(z)}\Big[\nonumber\\
    &\quad W_{Z|AVU}^{\otimes n}\Big(z_b^n|a^n(m_b,k_{1,b-2}),v^n(j_b),u^n(m_b,k_{1,b-2},\ell_{b-1},k_{2,b-2})\Big)+\nonumber\\
    &\quad 2^{n(R_J-\tilde{R}_L-\tilde{R}_K)}W_{Z|AU}^{\otimes n}\Big(z_b^n|a^n(m_b,k_{1,b-2}),u^n(m_b,k_{1,b-2},\ell_{b-1},k_{2,b-2})\Big)+\nonumber\\
    &\quad2^{n(R_L+R_{K_2})}W_{Z|AV}^{\otimes n}\Big(z_b^n|a^n(m_b,k_{1,b-2}),v^n(j_b)\Big)+\nonumber\\
    &\quad\left.2^{n(R_L+R_{K_2}+R_J-\tilde{R}_L-\tilde{R}_K)}W_{Z|A}^{\otimes n}\Big(z_b^n|a^n(m_b,k_{1,b-2})\Big)+2^{n(R+R_{K_1}+R_L+R_{K_2})}W_{Z|V}^{\otimes n}\Big(z_b^n|v^n(j_b)\Big)+1\right]\nonumber\\
    &\le\frac{1}{2^{n(R+R_J+R_L+2R_K+\tilde{R}_L+\tilde{R}_K)}}\sum\limits_{(z^n,k_{b-1},\ell_b,k_b)}\sum\limits_{m_b}\sum\limits_{k_{1,b-2}}\sum\limits_{\ell_{b-1}}\sum\limits_{k_{2,b-2}}\sum\limits_{j_b}\sum\limits_{(a^n,v^n,u^n)}\nonumber\\
    &\quad\Gamma_{ZAVU}^{\otimes n}\Big(z_b^n,a^n(m_b,k_{1,b-2}),v^n(j_b),u^n(m_b,k_{1,b-2},\ell_{b-1},k_{2,b-2})\Big)\times\nonumber\\
    &\quad\log\left[\frac{W_{Z|AVU}^{\otimes n}\Big(z_b^n|a^n(m_b,k_{1,b-2}),v^n(j_b),u^n(m_b,k_{1,b-2},\ell_{b-1},k_{2,b-2})\Big)}{2^{n(R+R_J+R_L+R_K-\tilde{R}_L-\tilde{R}_K)}Q_Z^{\otimes n}(z)}\right.+\nonumber\\
    &\quad\frac{W_{Z|AU}^{\otimes n}\Big(z_b^n|a^n(m_b,k_{1,b-2}),u^n(m_b,k_{1,b-2},\ell_{b-1},k_{2,b-2})\Big)}{2^{n(R+R_L+R_K)}Q_Z^{\otimes n}(z)}+\nonumber\\
    &\quad\left.\frac{W_{Z|AV}^{\otimes n}\Big(z_b^n|a^n(m_b,k_{1,b-2}),v^n(j_b)\Big)}{2^{n(R+R_J+R_{K_1}-\tilde{R}_L-\tilde{R}_K)}Q_Z^{\otimes n}(z)}+\frac{W_{Z|A}^{\otimes n}\Big(z_b^n|a^n(m_b,k_{1,b-2})\Big)}{2^{n(R+R_{K_1})}Q_Z^{\otimes n}(z)}+\frac{W_{Z|V}^{\otimes n}\Big(z_b^n|v^n(j_b)\Big)}{2^{n(R_J-\tilde{R}_L-\tilde{R}_K)}Q_Z^{\otimes n}(z)}+1\right]\nonumber\\
    &\mathop=\limits^{(c)}\Delta_1+\Delta_2\label{eq:Bounding_General_KLD_C}
\end{align}where
\begin{itemize}
    \item[$(a)$] follows from Jensen's inequality;
    \item[$(b)$] follows by expanding the summation in the argument of the $\log$ function and considering $\indi{1}_{\{\cdot\}}\le1$;
    \item[$(c)$] follows by defining $\Delta_1$ and $\Delta_2$ as
    \end{itemize}
\begin{align}
    \Delta_1&\triangleq\frac{1}{2^{n(R+R_J+R_L+2R_K+\tilde{R}_L+\tilde{R}_K)}}\sum\limits_{(k_{b-1},\ell_b,k_b)}\sum\limits_{m_b}\sum\limits_{k_{1,b-2}}\sum\limits_{\ell_{b-1}}\sum\limits_{k_{2,b-2}}\sum\limits_{j_b}\nonumber\\
    &\sum\limits_{(z^n,a^n(m_b,k_{1,b-2}),v^n(j_b),u^n(m_b,k_{1,b-2},\ell_{b-1},k_{2,b-2}))\in\calT_\epsilon^{(n)}}\hspace{-25mm}\Gamma_{ZAVU}^{\otimes n}\Big(z_b^n,a^n(m_b,k_{1,b-2}),v^n(j_b),u^n(m_b,k_{1,b-2},\ell_{b-1},k_{2,b-2})\Big)\times\nonumber\\
    &\quad\log\left[\frac{W_{Z|AVU}^{\otimes n}\Big(z_b^n|a^n(m_b,k_{1,b-2}),v^n(j_b),u^n(m_b,k_{1,b-2},\ell_{b-1},k_{2,b-2})\Big)}{2^{n(R+R_J+R_L+R_K-\tilde{R}_L-\tilde{R}_K)}Q_Z^{\otimes n}(z)}\right.+\nonumber\\
    &\quad\frac{W_{Z|AU}^{\otimes n}\Big(z_b^n|a^n(m_b,k_{1,b-2}),u^n(m_b,k_{1,b-2},\ell_{b-1},k_{2,b-2})\Big)}{2^{n(R+R_L+R_K)}Q_Z^{\otimes n}(z)}+\nonumber\\
    &\quad\left.\frac{W_{Z|AV}^{\otimes n}\Big(z_b^n|a^n(m_b,k_{1,b-2}),v^n(j_b)\Big)}{2^{n(R+R_J+R_{K_1}-\tilde{R}_L-\tilde{R}_K)}Q_Z^{\otimes n}(z)}+\frac{W_{Z|A}^{\otimes n}\Big(z_b^n|a^n(m_b,k_{1,b-2})\Big)}{2^{n(R+R_{K_1})}Q_Z^{\otimes n}(z)}+\frac{W_{Z|V}^{\otimes n}\Big(z_b^n|v^n(j_b)\Big)}{2^{n(R_J-\tilde{R}_L-\tilde{R}_K)}Q_Z^{\otimes n}(z)}+1\right]\nonumber\\
    &\le\frac{1}{2^{n(R+R_J+R_L+2R_K+\tilde{R}_L+\tilde{R}_K)}}\sum\limits_{(k_{b-1},\ell_b,k_b)}\sum\limits_{m_b}\sum\limits_{k_{1,b-2}}\sum\limits_{\ell_{b-1}}\sum\limits_{k_{2,b-2}}\sum\limits_{j_b}\nonumber\\
    &\sum\limits_{(z^n,a^n(m_b,k_{1,b-2}),v^n(j_b),u^n(m_b,k_{1,b-2},\ell_{b-1},k_{2,b-2}))\in\calT_\epsilon^{(n)}}\hspace{-25mm}\Gamma_{ZAVU}^{\otimes n}\Big(z_b^n,a^n(m_b,k_{1,b-2}),v^n(j_b),u^n(m_b,k_{1,b-2},\ell_{b-1},k_{2,b-2})\Big)\times\nonumber\\
    &\quad\log\left[\frac{2^{-n(1-\epsilon)\bbH(Z|A,V,U)}}{2^{n(R+R_J+R_L+R_K-\tilde{R}_L-\tilde{R}_K)}2^{-n(1+\epsilon)\bbH(Z)}}\right.+\frac{2^{-n(1-\epsilon)\bbH(Z|A,U)}}{2^{n(R+R_L+R_K)}2^{-n(1+\epsilon)\bbH(Z)}}+\nonumber\\
    &\quad\left.\frac{2^{-n(1-\epsilon)\bbH(Z|A,V)}}{2^{n(R+R_J+R_{K_1}-\tilde{R}_L-\tilde{R}_K)}2^{-n(1+\epsilon)\bbH(Z)}}+\frac{2^{-n(1-\epsilon)\bbH(Z|A)}}{2^{n(R+R_{K_1})}2^{-n(1+\epsilon)\bbH(Z)}}+\frac{2^{-n(1-\epsilon)\bbH(Z|V)}}{2^{n(R_J-\tilde{R}_L-\tilde{R}_K)}2^{-n(1+\epsilon)\bbH(Z)}}+1\right],\label{eq:Si1_C}\\
    \Delta_2&\triangleq\frac{1}{2^{n(R+R_J+R_L+2R_K+\tilde{R}_L+\tilde{R}_K)}}\sum\limits_{(k_{b-1},\ell_b,k_b)}\sum\limits_{m_b}\sum\limits_{k_{1,b-2}}\sum\limits_{\ell_{b-1}}\sum\limits_{k_{2,b-2}}\sum\limits_{j_b}\nonumber\\
    &\sum\limits_{(z^n,a^n(m_b,k_{1,b-2}),v^n(j_b),u^n(m_b,k_{1,b-2},\ell_{b-1},k_{2,b-2}))\notin\calT_\epsilon^{(n)}}\hspace{-25mm}\Gamma_{ZAVU}^{\otimes n}\Big(z_b^n,a^n(m_b,k_{1,b-2}),v^n(j_b),u^n(m_b,k_{1,b-2},\ell_{b-1},k_{2,b-2})\Big)\times\nonumber\\
    &\quad\log\left[\frac{W_{Z|AVU}^{\otimes n}\Big(z_b^n|a^n(m_b,k_{1,b-2}),v^n(j_b),u^n(m_b,k_{1,b-2},\ell_{b-1},k_{2,b-2})\Big)}{2^{n(R+R_J+R_L+R_K-\tilde{R}_L-\tilde{R}_K)}Q_Z^{\otimes n}(z)}\right.+\nonumber\\
    &\quad\frac{W_{Z|AU}^{\otimes n}\Big(z_b^n|a^n(m_b,k_{1,b-2}),u^n(m_b,k_{1,b-2},\ell_{b-1},k_{2,b-2})\Big)}{2^{n(R+R_L+R_K)}Q_Z^{\otimes n}(z)}+\nonumber\\
    &\quad\left.\frac{W_{Z|AV}^{\otimes n}\Big(z_b^n|a^n(m_b,k_{1,b-2}),v^n(j_b)\Big)}{2^{n(R+R_J+R_{K_1}-\tilde{R}_L-\tilde{R}_K)}Q_Z^{\otimes n}(z)}+\frac{W_{Z|A}^{\otimes n}\Big(z_b^n|a^n(m_b,k_{1,b-2})\Big)}{2^{n(R+R_{K_1})}Q_Z^{\otimes n}(z)}+\frac{W_{Z|V}^{\otimes n}\Big(z_b^n|v^n(j_b)\Big)}{2^{n(R_J-\tilde{R}_L-\tilde{R}_K)}Q_Z^{\otimes n}(z)}+1\right]\nonumber\\
    &\le2\abs{A}\abs{U}\abs{V}\abs{Z}e^{-n\epsilon^2\mu_{A,V,U,Z}}n\log\left(\frac{5}{\mu_Z}+1\right),
\end{align}where $\mu_{A,U,V,Z}=\min\limits_{(a,u,v,z)\in(\calA,\calU,\calV,\calZ)}\Gamma_{AUVZ}(a,u,v,z)$ and $\mu_Z=\min\limits_{z\in\calZ}\Gamma_{Z}(z)$.
When $n\to\infty$ then $\Delta_2\to 0$, and $\Delta_1\to 0$ when
\begin{subequations}\label{eq:resol_C_2}
\begin{align}
    R+R_J+R_L+R_K-\tilde{R}_L-\tilde{R}_K&>\bbI(A,V,U;Z),\\
    R+R_L+R_K&>\bbI(A,U;Z),\\
    R+R_J+R_{K_1}-\tilde{R}_L-\tilde{R}_K&>\bbI(A,V;Z),\\
    R+R_{K_1}&>\bbI(A;Z),\\
    R_J-\tilde{R}_L-\tilde{R}_K&>\bbI(V;Z).
\end{align}
\end{subequations}

\subsection{Decoding}
The following lemma is essential to analyze the probability of error.
\begin{lemma}[Typicality]
\label{lemma:Typicaity_C}
If $(R_K,R_L,R_J)\in\bbR_+^3$ satisfy the constraint in \eqref{eq:Conditional_SCL_C} then for any $m_b\in\calM$ and $\epsilon>0$ we have
\begin{align}
    \bbE_{C_b^{(n)}}\bbP_P\big[\big(A^n(m_b,k_{1,b-2}),S_b^n,V^n(J_b)\big)\notin\calT_\epsilon^{(n)}\big]\xrightarrow[]{n\to\infty}0.\label{eq:lemma_typicality_C}
\end{align}
\end{lemma}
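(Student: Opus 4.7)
The plan is to mirror the template used for the non-causal case in Appendix~\ref{app:Typicality_Proof}, by routing the argument through the idealized distribution $\Gamma$ of \eqref{eq:Encoding_Ideal_PMF_C} and invoking a total-variation transfer.

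First I would verify the claim under $\Gamma$. Under the ideal \ac{PMF} in \eqref{eq:Encoding_Ideal_PMF_C}, conditioned on the codebook $C_b^{(n)}$ and on $(M_b,K_{1,b-2},L_{b-1},K_{2,b-2},J_b)$ chosen uniformly at random, the triple $\big(A^n(m_b,k_{1,b-2}),V^n(J_b),S_b^n\big)$ is \ac{iid} across coordinates with single-letter law $P_A(a)P_V(v)Q_{S|AV}(s|a,v)$, whose marginal in $(A,S)$ matches $P_A Q_{S|A}$ by \eqref{eq:Dist_Condition_C}. Hence by the \ac{LLN},
\begin{align}
\bbE_{C_b^{(n)}}\bbP_\Gamma\!\left[\big(A^n(m_b,k_{1,b-2}),S_b^n,V^n(J_b)\big)\notin\calT_\epsilon^{(n)}(P_{ASV})\right]\xrightarrow[]{n\to\infty}0.\nonumber
\end{align}

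Next, I would transfer this bound from $\Gamma$ to $P$. Defining $F_n(a^n,s^n,v^n)\triangleq\indi{1}_{\{(a^n,s^n,v^n)\notin\calT_\epsilon^{(n)}\}}$, a bounded function, and applying \cite[Property~1]{Likelihood_Encoder},
\begin{align}
&\bbE_{C_b^{(n)}}\Big|\bbP_P[F_n]-\bbP_\Gamma[F_n]\Big|\nonumber\\
&\quad\le\bbE_{C_b^{(n)}}\ToV\!\left(P_{\cdots|C_b^{(n)}},\Gamma_{\cdots|C_b^{(n)}}\right),\nonumber
\end{align}
where $\cdots$ denotes the full tuple of \acp{RV}. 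The \ac{RHS} was already shown to vanish in \eqref{eq:General_TV_C}--\eqref{eq:First_Level_TV_C}: after peeling away the deterministic or channel-induced conditional kernels by identity substitutions, it reduces to the soft-covering lemma \cite[Corollary~VII.5]{Cuff13} applied to $Q_{S|A}\big(\cdot\big|a^n(1,1)\big)$ approximated by $\Gamma_{S^n|a^n(1,1),C_b^{(n)}}$, which vanishes under precisely the hypothesis $R_J>\bbI(V;S|A)$ in \eqref{eq:Conditional_SCL_C}.

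Combining the two bounds,
\begin{align}
&\bbE_{C_b^{(n)}}\bbP_P\!\left[\big(A^n(m_b,k_{1,b-2}),S_b^n,V^n(J_b)\big)\notin\calT_\epsilon^{(n)}\right]\nonumber\\
&\quad\le\bbE_{C_b^{(n)}}\bbP_\Gamma[F_n]+\bbE_{C_b^{(n)}}\ToV\!\left(P_{\cdots|C_b^{(n)}},\Gamma_{\cdots|C_b^{(n)}}\right),\nonumber
\end{align}
and both summands on the \ac{RHS} vanish as $n\to\infty$, yielding the claim. The only nontrivial step is the total-variation bound, but this is exactly the same soft-covering computation already carried out for the covert analysis, so no new technical obstacle arises; the remainder is a direct reuse of the machinery from Lemma~\ref{lemma:Typicaity}.
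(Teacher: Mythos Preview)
Your proposal is correct and follows essentially the same approach as the paper, which simply states that the proof is similar to that of Lemma~\ref{lemma:Typicaity}. You have faithfully adapted the template of Appendix~\ref{app:Typicality_Proof}: establish the typicality under the idealized distribution $\Gamma$ via the \ac{LLN}, then transfer to $P$ using the total-variation bound already obtained in \eqref{eq:General_TV_C}--\eqref{eq:First_Level_TV_C} under the hypothesis $R_J>\bbI(V;S|A)$ of \eqref{eq:Conditional_SCL_C}.
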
The proof of Lemma~\ref{lemma:Typicaity_C} is similar to the proof of  Lemma~\ref{lemma:Typicaity}.
After receiving $Y_b^n$, given the shared key $k_{b-2}\triangleq(k_{1,b-2},k_{2,b-2})$, the decoder looks for the smallest value of $(\hat{m}_b,\hat{\ell}_{b-1},\hat{j}_b)$ such that $\left(A^n(\hat{m}_b,k_{1,b-2}),U^n\big(\hat{m}_b,k_{1,b-2},\hat{\ell}_{b-1},k_{2,b-2}\big),V^n(\hat{j}_b),Y_b^n\right)\in\calT_{\epsilon}^{(n)}(P_{AUVY})$ and choose $(\hat{M}_b,\hat{L}_{b-1},\hat{J}_b)=(1,1,1)$ if such a $(\hat{m}_b,\hat{\ell}_{b-1},\hat{j}_b)$ does not exist. Let,
\begin{subequations}
    \begin{align}
    \calE&\triangleq\left\{M\ne\hat{M}\right\},\label{eq:General_C_Pe}\\
    \calE_b&\triangleq\left\{M_b\ne\hat{M}_b\right\},\label{eq:General_C_Pe_b}\\
    \calE_{1,b}&\triangleq\left\{\left(A^n(\hat{m}_b,k_{1,b-2}),U^n\big(\hat{m}_b,k_{1,b-2},\hat{\ell}_{b-1},k_{2,b-2}\big),S_b^n\right)\notin\calT_{\epsilon}^{(n)}(P_{AUS})\right\},\label{eq:General_C_Enc}\\
    \calE_{2,b}&\triangleq\left\{\left(A^n(\hat{m}_b,k_{1,b-2}),U^n\big(\hat{m}_b,k_{1,b-2},\hat{\ell}_{b-1},k_{2,b-2}\big),Y_b^n\right)\notin\calT_{\epsilon}^{(n)}(P_{AUY})\right\},\label{eq:General_C_Dec1}\\
    \calE_{3,b}&\triangleq\left\{\left(A^n(\hat{m}_b,k_{1,b-2}),U^n\big(\hat{m}_b,k_{1,b-2},\hat{\ell}_{b-1},k_{2,b-2}\big),Y_b^n\right)\in\calT_{\epsilon}^{(n)}(P_{AUY}),\Squad\text{for some}\right.\nonumber\\
    &\qquad\qquad\left.\big(\hat{m}_b,\hat{\ell}_b\big)\ne (M_b,L_b)\right\}.\label{eq:General_C_Dec2}
\end{align}
\end{subequations}Now, we bound the probability of error as
\begin{align}
    \bbP(\calE)=\bbP\left(\bigcup_{b=1}^B\calE_b\right)\le\sum\limits_{b=1}^B\bbP(\calE_b),\label{eq:Total_PE_C}
\end{align}where the inequality follows from the union bound. Next for each $b\in[B]$ we bound $\bbP(\calE_b)$ as follows,
\begin{align}
    \bbP(\calE_b)\le\bbP(\calE_{1,b})+\bbP(\calE_{1,b}^c\cap\calE_{2,b})+\bbP(\calE_{1,b}^c\cap\calE_{2,b}^c\cap\calE_{3,b}),\label{eq:Union_Bound_C}
\end{align}from Lemma~\ref{lemma:Typicaity_C} the first term on the \ac{RHS} of \eqref{eq:Union_Bound_C} vanishes when $n$ grows to infinity, the second on the \ac{RHS} of \eqref{eq:Union_Bound_C} vanishes when $n$ grows by the law of large numbers, by the union bound and \cite[Theorem~1.3]{Kramer_Book}, the last term on the \ac{RHS} of \eqref{eq:Union_Bound_C} vanishes when $n$ grows if,
\begin{align}
    R+R_L<\bbI(U;Y).\label{eq:Dec_Constraint_C}
\end{align}
Next, we bound the error probability for the key generation. Let $L_{b-1}$ and $J_{b-1}$ denote the indices that are chosen by the transmitter, and $\hat{L}_{b-1}$ and $\hat{J}_{b-1}$ denote the estimate of these indices at the receiver. The receiver by decoding $U^n$ at the end of block $b$ knows the index $\hat{L}_{b-1}$. 
Let,
\begin{subequations}
\begin{align}
    \tilde{\calE}&\triangleq\left\{\left(V^n(\hat{J}_{b-1}),S_{b-1}^n,A_{b-1}^n,U_{b-1}^n,Y_{b-1}^n\right)\notin\calT_\epsilon^{(n)}\right\},\label{eq:Key_Gen_Err1_C}\\
    \tilde{\calE}_1&\triangleq\left\{\left(V^n(j_{b-1}),A_{b-1}^n,S_{b-1}^n\right)\notin\calT_{\tilde{\epsilon}}^{(n)},\quad\text{for all}\Squad j_{b-1}\in\left[2^{nR_J}\right]\right\},\label{eq:Key_Gen_Err2_C}\\
    \tilde{\calE}_2&\triangleq\left\{\left(V^n(J_{b-1}),S_{b-1}^n,A_{b-1}^n,U_{b-1}^n,Y_{b-1}^n\right)\notin\calT_{\tilde{\epsilon}}^{(n)}\right\},\label{eq:Key_Gen_Err3_C}\\
    \tilde{\calE}_3&\triangleq\left\{\left(V^n(\hat{j}_{b-1}),A_{b-1}^n,U_{b-1}^n,Y_{b-1}^n\right)\in\calT_{\tilde{\epsilon}}^{(n)},\Squad\text{for some}\Squad\hat{j}_b\in\calB(\hat{L}_{b-1}),\hat{j}_b\ne J_{b-1}\right\}.\label{eq:Key_Gen_Err4_C}
\end{align}where $\epsilon>\tilde{\epsilon}>0$. 
\end{subequations}Now by the union bound, 
\begin{align}
    \bbP\big(\tilde{\calE}\big)\le\bbP\big(\tilde{\calE}_1\big)+\bbP\big(\tilde{\calE}_1^c\cap\tilde{\calE}_2\big)+\bbP\big(\tilde{\calE}_3\big).\label{eq:Key_Gen_UB_C}
\end{align}By Lemma~\ref{lemma:Typicaity_C} the first term on the \ac{RHS} of \eqref{eq:Key_Gen_UB_C} vanishes when $n$ grows to infinity if \eqref{eq:Conditional_SCL_C} holds, similar to \cite[Sec.~11.3.1]{ElGamalKim} the second and the third terms on the \ac{RHS} of \eqref{eq:Key_Gen_UB_C} go to zero when $n\to\infty$ if,
\begin{subequations}\label{eq:WZ_Dec_C}
\begin{align}
    R_J&>\bbI(V;S|A),\label{eq:WZ_1_C}\\
    R_J-R_L&<\bbI(V;A,U,Y),\label{eq:WZ_2_C}
\end{align}
\end{subequations}
Applying Fourier-Motzkin elimination procedure, \cite{ElGamalKim,FMEIT}, to eliminate $(R_J,R_L,R_{K_1},R_{K_2},R_K,\tilde{R}_L,\tilde{R}_K)$, in \eqref{eq:Conditional_SCL_C}, \eqref{eq:resol_C_2}, \eqref{eq:Dec_Constraint_C}, and \eqref{eq:WZ_Dec_C} and considering $\tilde{R}_L+\tilde{R}_K\ge R_L+R_K$ and $R_K=R_{K_1}+R_{K_2}$ leads to
\begin{subequations}\label{eq:FME_Output_C}
\begin{align}
R&>\bbI(A;Z)+\bbI(V;Z)-\bbI(V;A,U,Y),\label{eq:FME_Output_C_1}\\
R&>\bbI(A,V;Z)-\bbI(V;A,U,Y),\label{eq:FME_Output_C_2}\\
0&>\bbI(V;Z)-\bbI(V;A,U,Y),\label{eq:FME_Output_C_3}\\
0&>\bbI(A,U;Z)+\bbI(V;Z)-\bbI(A,U;Y)-\bbI(V;A,U,Y),\label{eq:FME_Output_C_4}\\
0&>\bbI(A,V,U;Z)-\bbI(A,U;Y)-\bbI(V;A,U,Y),\label{eq:FME_Output_C_5}\\
R&<\bbI(A,U;Y),\label{eq:FME_Output_C_6}\\
R&<-\bbI(V;S|A)+\bbI(A,U;Y)+\bbI(V;A,U,Y),\label{eq:FME_Output_C_7}
\end{align}
where, since $\bbI(A,U,V;Z)\ge\bbI(A,U;Z)+\bbI(V;Z)$, \eqref{eq:FME_Output_C_4} is redundant because of \eqref{eq:FME_Output_C_5}, and since $\bbI(A,V;Z)\ge\bbI(A;Z)+\bbI(V;Z)$, \eqref{eq:FME_Output_C_1} is redundant because of \eqref{eq:FME_Output_C_2}. Therefore, one can rewrite the rate constraints in \eqref{eq:FME_Output_C} as the rate constraints in Theorem~\ref{thm:Acievability_KG_C}. 
\end{subequations}

\section{Proof of Theorem~\ref{thm:Converse_C}}
\label{proof:thm:Converse_C}
Consider any sequence of codes with length $N$ for channels with \ac{ADSI} when the state is available causally at the transmitter such that $P_e^{(N)}\le\epsilon_N$, $\bbD\left(P_{Z^N}||Q_0^{\otimes N}\right)\le\tilde{\epsilon}$, and $\bar{R}_K/N\triangleq\eta_N$, where $\epsilon_N\xrightarrow[]{N\to\infty}0$ and $\eta_N\xrightarrow[]{N\to\infty}0$. 
\subsection{Epsilon Rate}We first define $\calF^{(\epsilon)}_{\text{U-C}}$ for $\epsilon>0$ which expands the rate defined in \eqref{eq:Converse_AD_C} as
\begin{subequations}\label{eq:Converse_AD_epsilon_C}
\begin{align}
  \calF^{(\epsilon)}_{\text{U-C}} = \left.\begin{cases}R\geq 0: \exists P_{ASUVXYZ}\in\calG^{(\epsilon)}_{\text{U-C}}:\\
  R\le\bbI(U;Y)+\epsilon\\
\end{cases}\right\},\label{eq:Converse_A_epsilon_C}
\end{align}
where
\begin{align}
  \calG^{(\epsilon)}_{\text{U-C}}\triangleq \left.\begin{cases}P_{ASUVXYZ}:\\
P_{ASUVXYZ}=P_AQ_{S|A}P_{U|A}P_{V|US}P_{X|US}W_{YZ|XS}\\
\bbI(U;Y)\ge\bbI(V;Z)-\epsilon\\
\bbD\big(P_Z||Q_0\big)\le\epsilon\\
\end{cases}\right\}.\label{eq:Converse_D_epsilon_C}
\end{align}
\end{subequations}Now we show that any achievable rate $R$ belongs to $\calF^{(\epsilon)}_{\text{U-C}}$, i.e., $R\in\calF^{(\epsilon)}_{\text{U-C}}$. 
For any $\gamma>0$, and $\epsilon_N>0$ we have,
\begin{align}
    NR&=\bbH(M)\nonumber\\
    &=\bbH(M|K)\nonumber\\
    &\mathop\le\limits^{(a)}\bbI\big(M;Y^N|K\big)+N\epsilon_N\nonumber\\
    &=\sum\limits_{t=1}^N\big[\bbI\big(M;Y_t|K,Y^{t-1}\big)\big]+N\epsilon_N\nonumber\\
    &\le\sum\limits_{t=1}^N\big[\bbI\big(M,K,A_t,Y^{t-1};Y_t\big)\big]+N\epsilon_N\nonumber\\
    &\mathop\le\limits^{(b)}\sum\limits_{t=1}^N\big[\bbI\big(M,K,A_t,S^{t-1};Y_t\big)\big]+N\epsilon_N\nonumber\\
    &\mathop=\limits^{(c)}\sum\limits_{t=1}^N\bbI(U_t;Y_t)+N\epsilon_N\nonumber\\
    &=N\sum\limits_{t=1}^N\bbP(T=t)\bbI(U_T;Y_T|T=t)+N\epsilon_N\nonumber\\
    &=N\bbI(U_T;Y_T|T)+N\epsilon_N\nonumber\\
    &\le N\bbI(U_T,T;Y_T)+N\epsilon_N\nonumber\\
    &\mathop=\limits^{(d)} N\bbI(U;Y)+N\epsilon_N\nonumber\\
    &\mathop\le\limits^{(e)} N\bbI(U;Y)+N\delta\label{eq:Traditianl_Converse_C}
\end{align}where
\begin{itemize}
    \item[$(a)$] follows from Fano's inequality;
    \item[$(b)$] follows since $(M,K,A_t,Y^{t-1})-(M,K,A_t,S^{t-1})-Y_t$ forms a Markov chain, note that $V_t-(M,K,A_t,S^{t-1})-Y_t$, where $V_t\triangleq(M,K,Z^{t-1})$, also forms a Markov chain;
    \item[$(c)$] follows by defining $U_t\triangleq\big(M,K,A_t,S^{t-1}\big)$;
    \item[$(d)$] follows by defining $U\triangleq(U_T,T)$ and $Y\triangleq Y_T$;
    \item[$(e)$] follows by defining $\delta\triangleq\max\{\epsilon_N,\eta_N,\gamma\}$.
\end{itemize}
We can also lower bound the rate $NR+\bar{R}_K^{(N)}$ as follows,
\begin{align}
    NR+\bar{R}_K^{(N)}&=\bbH(M,K)\nonumber\\
    &\ge\bbI\big(M,K;Z^N\big)\nonumber\\
    &=\sum\limits_{t=1}^N\big[\bbI\big(M,K;Z_t|Z^{t-1}\big)\big]\nonumber\\
    &\mathop\ge\limits^{(a)}\sum\limits_{t=1}^N\big[\bbI\big(M,K,Z^{t-1};Z_t\big)\big]-\tilde{\epsilon}\nonumber\\
    &\mathop=\limits^{(b)}\sum\limits_{t=1}^N\big[\bbI(V_t;Z_t)\big]-\tilde{\epsilon}\nonumber\\
    &=N\sum\limits_{t=1}^N\big[\bbP(T=t)\bbI(V_T;Z_T|T=t)\big]-\tilde{\epsilon}\nonumber\\
    &=N\bbI(V_T;Z_T|T)-\tilde{\epsilon}\nonumber\\
    &\mathop\ge\limits^{(c)} N\bbI(V_T,T;Z_T)-2\tilde{\epsilon}\nonumber\\
    &\mathop=\limits^{(d)} N\bbI(V;Z)-2\tilde{\epsilon}\label{eq:Non-Resolvability1_Converse_C}
\end{align}where
\begin{itemize}
    \item[$(a)$] and $(c)$ follow from Lemma~\ref{lemma:iid_Time_indep};
    \item[$(b)$] follows by defining  $V_t\triangleq\big(M,K,Z^{t-1}\big)$;
    \item[$(d)$] follows by defining $V\triangleq(V_T,T)$ and  $Z\triangleq Z_T$.
\end{itemize}
For any $\gamma>0$, selecting $N$ large enough ensures that,
\begin{align}
    R+\frac{\bar{R}_K^{(N)}}{N}\ge\bbI(V;Z)-2\gamma.
\end{align}Hence,
\begin{align}
    R&\ge\bbI(V;Z)-2\gamma-\frac{\bar{R}_K}{N}\nonumber\\
    &=\bbI(V;Z)-2\gamma-\eta_N\nonumber\\
    &\ge\bbI(V;Z)-3\delta,\label{eq:Non-Resolvability_Converse_C}
\end{align}where the last inequality follows by defining $\delta\triangleq\max\{\epsilon_N,\eta_N,\gamma\}$.

To prove that $\bbD(P_Z||Q_0)\le\epsilon$, for $N$ large enough we have
\begin{align}
    \bbD(P_Z||Q_0)&=\bbD(P_{Z_T}||Q_0)=\bbD\left(\frac{1}{N}\sum\limits_{t=1}^NP_{Z_t}\Big|\Big|Q_0\right)\le\frac{1}{N}\sum\limits_{t=1}^N\bbD\left(P_{Z_t}\Big|\Big|Q_0\right)\nonumber\\
    &\le\frac{1}{N}\bbD\left(P_{Z^N}\Big|\Big|Q_0^{\otimes N}\right)\le\frac{\tilde{\epsilon}}{N}\le\gamma\le\delta.
\end{align}Combining \eqref{eq:Traditianl_Converse_C} and \eqref{eq:Non-Resolvability_Converse_C} proves that $\forall\epsilon_N,\tilde{\epsilon},\eta_N$, $R\le\max\{R:R\in\calF^{(\epsilon)}_{\text{U-C}}\}$. Hence,
\begin{align}
    R\le\max\left\{R:R\in\bigcap\limits_{\epsilon>0}\calF^{(\epsilon)}_{\text{U-C}}\right\}.
\end{align}
\subsection{Proof for Continuity at Zero}The proof follows similar lines as the proof for continuity at zero in \cite[Appendix~F]{Keyless22}.
\begin{remark}[Converse for More General Channels]
\label{remark:Causal_General_Converse}
For the channels of the form $W_{YZ|XSA}$ we can derive an upper bound similar to the bound that we have in \eqref{eq:Traditianl_Converse_C} except that the auxiliary \ac{RV} $U\triangleq\left(M,K,Y^{t-1},S^{t-1}\right)$. Therefore, we get the same upper bound as that in Theorem~\ref{thm:Converse_C} without the condition $\bbI(U;Y)\ge\bbI(V;Z)$ in \eqref{eq:Converse_D_C} and the joint probability distribution will be $P_AQ_{S|A}P_{U|A}P_{X|ASU}W_{YZ|ASX}$. 
\end{remark}

\section{Proof of Theorem~\ref{thm:Gaussian_Converse}}
\label{proof:thm:Converse_AWGN}
Consider any sequence of codes with length $N$ for channels with \ac{ADSI} when the state is available non-causally at the transmitter such that $P_e^{(N)}\le\epsilon_N$, $\bbD\left(P_{Z^N}||Q_0^{\otimes N}\right)\le\tilde{\epsilon}$, and $R_K/N\le\eta_N$, where $\epsilon_N\xrightarrow[]{n\to\infty}0$ and $\eta_N\xrightarrow[]{n\to\infty}0$. 
Let $\tilde{P}_X\triangleq\bbE[X^2]\le P_X$, $\tilde{P}_A\triangleq\bbE[A^2]\le P_A$, $\Lambda_{XA}\triangleq\bbE[XA]$, $\Lambda_{XS}\triangleq\bbE[XN_S]$, and therefore
\begin{subequations}\label{eq:Second_Moment_Variances}
\begin{align}
    \bar{\sigma}_Y^2&=\bbE(Y^2)=\tilde{P}_A+\tilde{P}_X+T+\sigma_Y^2+2\Lambda_{XA}+2\Lambda_{XS},\label{eq:Variance_Y}\\
    \bar{\sigma}_Z^2&=\bbE(Z^2)=\tilde{P}_A+\tilde{P}_X+T+\sigma_Z^2+2\Lambda_{XA}+2\Lambda_{XS},\label{eq:Variance_Z}\\
    \Sigma_1&=\bbE\left(N_S\left[A\Squad Y\right]\right)=\left[0\Squad \Lambda_{XS}+T\right]=\begin{bmatrix}0&\sqrt{T\tilde{P}_X}\rho_{XS}+T\end{bmatrix},\label{eq:Variance_1}\\
    \Sigma_2&=\bbE\left(N_S\left[A\Squad Y\right]^\intercal\right)=\left[0\Squad \Lambda_{XS}+T\right]^\intercal=\begin{bmatrix}0\\\sqrt{T\tilde{P}_X}\rho_{XS}+T\end{bmatrix},\label{eq:Variance_2}\\
    \Sigma_3&=\bbE\left(\left[A\Squad Y\right]^\intercal\left[A\Squad Y\right]\right)=\begin{bmatrix}\tilde{P}_A&\sqrt{\tilde{P}_A\tilde{P}_X}\rho_{XA}+\tilde{P}_A\\\sqrt{\tilde{P}_A\tilde{P}_X}\rho_{XA}+\tilde{P}_A&\tilde{P}_A+\tilde{P}_X+T+\sigma_Y^2+2\sqrt{\tilde{P}_A\tilde{P}_X}\rho_{XA}+2\sqrt{T\tilde{P}_X}\rho_{XS}\end{bmatrix},\label{eq:Variance_4}\\
    \sigma_{N_S|A,Y}^2&=T-\Sigma_1\Sigma_3^{-1}\Sigma_2=\frac{T\tilde{P}_X\left(1-\rho_{XS}^2-\rho_{XA}^2\right)+T\sigma_Y^2}{\tilde{P}_X\left(1-\rho_{XS}^2-\rho_{XA}^2\right)+\left(\sqrt{T}+\sqrt{\tilde{P}_X}\rho_{XS}\right)^2+\sigma_Y^2}
    \label{eq:Variance_SgAY}
\end{align}
\end{subequations}Now since $Q_0\sim\calN(0,T+\sigma_Z^2)$ the covertness constraint $P_Z=Q_0$, implies that
\begin{align}
    \bar{\sigma}_Z^2=T+\sigma_Z^2\Rightarrow\tilde{P}_X+\tilde{P}_A+T+\sigma_Z^2+2\Lambda_{XA}+2\Lambda_{XS}=T+\sigma_Z^2.\label{eq:Covertness_Constarint_Gaussian_1}
\end{align}Considering $\Lambda_{XS}=\sqrt{T\tilde{P}_X}\rho_{XS}$ and $\Lambda_{XA}=\sqrt{\tilde{P}_X\tilde{P}_A}\rho_{XA}$, the covertness constraint in \eqref{eq:Covertness_Constarint_Gaussian_1} can be rewritten as
\begin{align}
    &\tilde{P}_X+\tilde{P}_A+2\sqrt{\tilde{P}_X\tilde{P}_A}\rho_{XA}+2\sqrt{T\tilde{P}_X}\rho_{XS}=0.\label{eq:Covertness_Constarint_Gaussian}
\end{align}Algebraic manipulation shows that this constraint can be written as
\begin{align}
\tilde{P}_X\left(1-\rho_{XA}^2-\rho_{XS}^2\right)+\left(\sqrt{T}+\sqrt{\tilde{P}_X}\rho_{XS}\right)^2+\left(\sqrt{\tilde{P}_A}+\sqrt{\tilde{P}_X}\rho_{XA}\right)^2-T=0.\label{eq:Useful_ident_Partial_y_2}
\end{align} 
Now from Theorem~\ref{thm:Converse_NC} we have,
\begin{align}
    R&\le\bbI(A,U;Y)-\bbI(U;S|A)\nonumber\\
    &=\bbI(A;Y)+\bbI(U;Y|A)-\bbI(U;S|A)\nonumber\\
    &\le\bbI(A;Y)+\bbI(U;Y,S|A)-\bbI(U;S|A)\nonumber\\
    &=\bbI(A;Y)+\bbI(U;Y|A,S)\nonumber\\
    &\le\bbI(A;Y)+\bbI(U,X;Y|A,S)\nonumber\\
    &\mathop=\limits^{(a)}\bbI(A;Y)+\bbI(X;Y|A,S)\nonumber\\
    &=\bbI(A;Y)+\bbI(X;Y|A,N_S)\nonumber\\
    &=\dent(Y)-\dent(Y|A)+\dent(Y|A,N_S)-\dent(Y|X,A,N_S)\nonumber\\
    &\mathop=\limits^{(b)}\dent(Y)-\bbI(N_S;Y|A)-\dent(N_Y)\nonumber\\
    &=\dent(Y)+\dent(N_S|A,Y)-\dent(N_S|A)-\dent(N_Y)\nonumber\\
    &\mathop=\limits^{(c)}\dent(Y)+\dent(N_S|A,Y)-\dent(N_S)-\dent(N_Y)\nonumber\\
    &\mathop\le\limits^{(d)}\frac{1}{2}\log\left(\frac{\bar{\sigma}_Y^2\sigma_{N_S|A,Y}^2}{T\sigma_Y^2}\right)\nonumber\\
    &\mathop=\limits^{(e)}\frac{1}{2}\log\left(\frac{\bar{\sigma}_Y^2(T-\Sigma_1\Sigma_3^{-1}\Sigma_2)}{T\sigma_Y^2}\right)\nonumber\\
    &\mathop=\limits^{(f)}\frac{1}{2}\log\left(1+\frac{\tilde{P}_X\left(1-\rho_{XS}^2-\rho_{XA}^2\right)}{\sigma_Y^2}\right)+\frac{1}{2}\log\left(1+\frac{\left(\sqrt{\tilde{P}_A}+\sqrt{\tilde{P}_X}\rho_{XA}\right)^2}{\tilde{P}_X\left(1-\rho_{XS}^2-\rho_{XA}^2\right)+\left(\sqrt{T}+\sqrt{\tilde{P}_X}\rho_{XS}\right)^2+\sigma_Y^2}\right)\nonumber\\
    &=\frac{1}{2}\log\left(\frac{\left(\tilde{P}_X\left(1-\rho_{XS}^2-\rho_{XA}^2\right)+\sigma_Y^2\right)}{\sigma_Y^2}\left[1+\frac{\left(\sqrt{\tilde{P}_A}+\sqrt{\tilde{P}_X}\rho_{XA}\right)^2}{\tilde{P}_X\left(1-\rho_{XS}^2-\rho_{XA}^2\right)+\sigma_Y^2+\left(\sqrt{T}+\sqrt{\tilde{P}_X}\rho_{XS}\right)^2}\right]\right)\nonumber\\
    &\mathop\le\limits^{(g)}\frac{1}{2}\log\left(\frac{\left(\tilde{P}_X\left(1-\rho_{XS}^2\right)+\sigma_Y^2\right)}{\sigma_Y^2}\left[1+\frac{\tilde{P}_A}{\tilde{P}_X\left(1-\rho_{XS}^2\right)+\sigma_Y^2+\left(\sqrt{T}+\sqrt{\tilde{P}_X}\rho_{XS}\right)^2}\right]\right)\label{eq:Gaussian_Upper}
\end{align}where
\begin{itemize}
    \item[$(a)$] follows since $U-(X,S)-Y$ forms a Markov-chain;
    \item[$(b)$] follows from \eqref{eq:system_Model_Final};
    \item[$(c)$] follows since $A$ and $N_S$ are independent;
    \item[$(d)$] follows from the maximum differential entropy lemma \cite{ElGamalKim};
    \item[$(e)$] follows from \eqref{eq:Variance_SgAY};
    \item[$(f)$] follows from \eqref{eq:Variance_Y} and \eqref{eq:Variance_SgAY};
    \item[$(g)$] follows by setting $\rho_{XA}=0$ since
    \begin{align}
    &\frac{1}{2}\log\left(\frac{\left(\tilde{P}_X\left(1-\rho_{XS}^2-\rho_{XA}^2\right)+\sigma_Y^2\right)}{\sigma_Y^2}\left[1+\frac{\left(\sqrt{\tilde{P}_A}+\sqrt{\tilde{P}_X}\rho_{XA}\right)^2}{\tilde{P}_X\left(1-\rho_{XS}^2-\rho_{XA}^2\right)+\sigma_Y^2+\left(\sqrt{T}+\sqrt{\tilde{P}_X}\rho_{XS}\right)^2}\right]\right)\nonumber\\
    &=\frac{1}{2}\log\left(\frac{\left(\tilde{P}_X\left(1-\rho_{XS}^2-\rho_{XA}^2\right)+\sigma_Y^2\right)}{\sigma_Y^2}\right.\times\nonumber\\
    &\qquad\left.\left[\frac{\tilde{P}_X\left(1-\rho_{XS}^2-\rho_{XA}^2\right)+\sigma_Y^2+\left(\sqrt{T}+\sqrt{\tilde{P}_X}\rho_{XS}\right)^2+\left(\sqrt{\tilde{P}_A}+\sqrt{\tilde{P}_X}\rho_{XA}\right)^2}{\tilde{P}_X\left(1-\rho_{XS}^2-\rho_{XA}^2\right)+\sigma_Y^2+\left(\sqrt{T}+\sqrt{\tilde{P}_X}\rho_{XS}\right)^2}\right]\right)\nonumber\\
    &=\frac{1}{2}\log\left(\frac{\left(\tilde{P}_X\left(1-\rho_{XS}^2-\rho_{XA}^2\right)+\sigma_Y^2\right)}{\sigma_Y^2}\left[\frac{T+\sigma_Y^2}{\tilde{P}_X\left(1-\rho_{XS}^2-\rho_{XA}^2\right)+\sigma_Y^2+\left(\sqrt{T}+\sqrt{\tilde{P}_X}\rho_{XS}\right)^2}\right]\right);\label{eq:equivalent_Exp}
    \end{align}where the last equality follows from the covertness constraint \eqref{eq:Useful_ident_Partial_y_2}.
\end{itemize}
\begin{figure*}
\centering
\includegraphics[width=10cm]{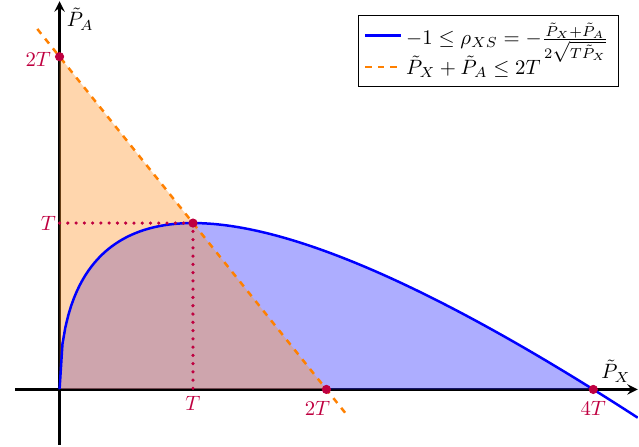}
\caption{The blue region corresponds to $-1\le\rho_{XS}=-\frac{\tilde{P}_X+\tilde{P}_A}{2\sqrt{T\tilde{P}_X}}\le0$, note that since $\tilde{P}_X\ge0$, $\tilde{P}_A\ge0$, and $T\ge0$ the condition $\rho_{XS}=-\frac{\tilde{P}_X+\tilde{P}_A}{2\sqrt{T\tilde{P}_X}}\le0$ is satisfied; the orange region corresponds to the optimal $\tilde{P}_X+\tilde{P}_A\le2T$, note that since we aim to achieve a higher rate with the least amounts of power inputs it is not optimal to go beyond $\tilde{P}_X+\tilde{P}_A=2T$.}
\label{fig:Admis_XY}
\end{figure*}
Now, one can show that the argument of the $\log$ function in \eqref{eq:equivalent_Exp} is decreasing in $\rho_{XA}^2$ and therefore the optimal choice for $\rho_{XA}$ is $0$. Now, setting $\rho_{XA}=0$ the covertness constraint \eqref{eq:Covertness_Constarint_Gaussian} reduces to $\tilde{P}_X+\tilde{P}_A+2\sqrt{T\tilde{P}_X}\rho_{XS}=0$, where $-1\le\rho_{XS}\le0$, which can be written as  
\begin{align}
    \rho_{XS}=-\frac{\tilde{P}_X+\tilde{P}_A}{2\sqrt{T\tilde{P}_X}}.\label{eq:Covertness_Constarint_Gaussian_2}
\end{align}Substituting \eqref{eq:Covertness_Constarint_Gaussian_2} in \eqref{eq:Gaussian_Upper} leads to,
\begin{align}
    R\le\frac{1}{2}\log\left(\frac{\left(T+\sigma_Y^2\right)\left(4T\tilde{P}_X-(\tilde{P}_X+\tilde{P}_A)^2+4T\sigma_Y^2\right)}{4T\sigma_Y^2\left(T+\sigma_Y^2-\tilde{P}_A\right)}\right).\label{eq:Gaussian_Upper_2}
\end{align}
Now, taking the derivative \ac{wrt} $\tilde{P}_X$ and $\tilde{P}_A$ one can show that \eqref{eq:Gaussian_Upper_2} is maximized when $\tilde{P}_X+\tilde{P}_A=2T$. However, since we must have $-1\le\rho_{XS}=-\frac{\tilde{P}_X+\tilde{P}_A}{2\sqrt{T\tilde{P}_X}}\le0$, every $\tilde{P}_X\ge0$ and $\tilde{P}_A\ge0$ that satisfy $\tilde{P}_X+\tilde{P}_A=2T$ is not admissible (see Fig.~\ref{fig:Admis_XY}). Not that since $\tilde{P}_X>0$, $\tilde{P}_A>0$, and $T>0$ the condition $\rho_{XS}=-\frac{\tilde{P}_X+\tilde{P}_A}{2\sqrt{T\tilde{P}_X}}\le0$ is satisfied. 
When $\tilde{P}_A+\tilde{P}_X\le2T$ and $-1\le\rho_{XS}=-\frac{\tilde{P}_X+\tilde{P}_A}{2\sqrt{T\tilde{P}_X}}\le0$ the argument of the $\log$ function in \eqref{eq:Gaussian_Upper_2} is increasing in $\tilde{P}_X$ and $\tilde{P}_A$, therefore
\begin{align}
    &\argmax\limits_{\substack{0\le\tilde{P}_X\le P_X,\\ 0\le\tilde{P}_A\le P_A:\\-1\le-\frac{\tilde{P}_X+\tilde{P}_A}{2\sqrt{T\tilde{P}_X}}}}\frac{\left(T+\sigma_Y^2\right)\left(4T\tilde{P}_X-(\tilde{P}_X+\tilde{P}_A)^2+4T\sigma_Y^2\right)}{4T\sigma_Y^2\left(T+\sigma_Y^2-\tilde{P}_A\right)}\nonumber\\
    &=\begin{cases}
(\tilde{P}_X=P_X,\tilde{P}_A=P_A)&\text{when}\Squad P_A\le2\sqrt{TP_X}-P_X\Squad\text{and}\Squad P_X+P_A<2T\\
(\tilde{P}_X=P_X,\tilde{P}_A=2\sqrt{TP_X}-P_X)&\text{when}\Squad P_A>2\sqrt{TP_X}-P_X\Squad\text{and}\Squad P_X<T\\
(\tilde{P}_X=P'_X,\tilde{P}_A=P'_A):P'_X+P'_A=2T&\text{when}\Squad P_X+P_A\ge2T\Squad\text{and}\Squad P_X\ge T\\
\end{cases}\label{eq:armax_sol}
\end{align}
Substituting \eqref{eq:armax_sol} into \eqref{eq:Gaussian_Upper_2} leads to
\begin{align}
\begin{cases}
R\le\frac{1}{2}\log\left(\frac{\left(T+\sigma_Y^2\right)\left(4TP_X-(P_X+P_A)^2+4T\sigma_Y^2\right)}{4T\sigma_Y^2\left(T+\sigma_Y^2-P_A\right)}\right)&\text{when}\Squad P_A\le2\sqrt{TP_X}-P_X\Squad\text{and}\Squad P_X+P_A<2T\\
R\le\frac{1}{2}\log\left(\frac{T+\sigma_Y^2}{T+\sigma_Y^2+P_X-2\sqrt{TP_X}}\right)&\text{when}\Squad P_A>2\sqrt{TP_X}-P_X\Squad\text{and}\Squad P_X<T\\
R\le\frac{1}{2}\log\left(1+\frac{T}{\sigma_Y^2}\right)&\text{when}\Squad P_X+P_A\ge2T\Squad\text{and}\Squad P_X\ge T\\
\end{cases}\label{eq:Constaints_Region_6}
\end{align}

\end{appendices}

\bibliographystyle{IEEEtran}
\bibliography{IEEEabrv,bibfile}

\begin{thebibliography}{10}
\providecommand{\url}[1]{#1}
\csname url@samestyle\endcsname
\providecommand{\newblock}{\relax}
\providecommand{\bibinfo}[2]{#2}
\providecommand{\BIBentrySTDinterwordspacing}{\spaceskip=0pt\relax}
\providecommand{\BIBentryALTinterwordstretchfactor}{4}
\providecommand{\BIBentryALTinterwordspacing}{\spaceskip=\fontdimen2\font plus
\BIBentryALTinterwordstretchfactor\fontdimen3\font minus \fontdimen4\font\relax}
\providecommand{\BIBforeignlanguage}[2]{{%
\expandafter\ifx\csname l@#1\endcsname\relax
\typeout{** WARNING: IEEEtran.bst: No hyphenation pattern has been}%
\typeout{** loaded for the language `#1'. Using the pattern for}%
\typeout{** the default language instead.}%
\else
\language=\csname l@#1\endcsname
\fi
#2}}
\providecommand{\BIBdecl}{\relax}
\BIBdecl

\bibitem{ISIT23}
H.~ZivariFard and X.~Wang, ``Covert communication when action-dependent states is available non-causally at the transmitter,'' in \emph{Proc. {IEEE} Int. Symp. on Info. Theory (ISIT)}, Taipei, Taiwan, Jun. 2023, pp. 1--6.

\bibitem{BlochBarros}
M.~R. Bloch and J.~Barros, \emph{Physical-Layer Security: From Information Theory to Security Engineering}.\hskip 1em plus 0.5em minus 0.4em\relax {\hspace{-2mm}}Cambridge University Press, 2011.

\bibitem{Bash13}
A.~B. Bash, D.~Goeckel, and D.~Towsley, ``Limits of reliable communication with low probability of detection on {AWGN} channels,'' \emph{{IEEE} J. Sel. Areas Commun.}, vol.~31, no.~9, pp. 1921--1930, Sep. 2013.

\bibitem{CheISIT13}
P.~H. Che, M.~Bakshi, and S.~Jaggi, ``Reliable deniable communication: Hiding messages in noise,'' in \emph{Proc. {IEEE} Int. Symp. on Info. Theory (ISIT)}, Istanbul, Turkey, Jul. 2013, pp. 2945--2949.

\bibitem{Bloch16}
M.~R. Bloch, ``Covert communication over noisy channels: A resolvability perspective,'' \emph{{IEEE} Trans. Inf. Theory}, vol.~62, no.~5, pp. 2334--2354, May 2016.

\bibitem{Wang16}
L.~Wang, G.~W. Wornell, and L.~Zheng, ``Fundamental limits of communication with low probability of detection,'' \emph{{IEEE} Trans. Inf. Theory}, vol.~62, no.~6, pp. 3493--3503, Jun. 2016.

\bibitem{Dai20}
B.~Dai, Y.~Liang, Z.~Ma, and S.~Shamai, ``Impact of action-dependent state and channel feedback on {G}aussian wiretap channels,'' \emph{{IEEE} Trans. Inf. Theory}, vol.~66, no.~6, pp. 2723--2734, Jun. 2020.

\bibitem{Information_Embedding}
B.~Ahmadi, H.~Asnani, O.~Simeone, and H.~H. Permuter, ``Information embedding on actions,'' \emph{{IEEE} Trans. Inf. Theory}, vol.~60, no.~11, pp. 6902--6916, Nov. 2014.

\bibitem{Feedbak_or_Not}
H.~Asnani, H.~H. Permuter, and T.~Weissman, ``To feed or not to feedback,'' \emph{{IEEE} Trans. Inf. Theory}, vol.~60, no.~9, pp. 5150--5172, Sep. 2015.

\bibitem{Weissman10}
T.~Weissman, ``Capacity of channels with action-dependent states,'' \emph{{IEEE} Trans. Inf. Theory}, vol.~56, no.~11, pp. 5396--5411, Nov. 2010.

\bibitem{MultiStage_Writing}
K.~Kittichokechai, T.~J. Oechtering, and M.~Skoglund, ``Multi-stage coding for channels with a rewrite option and reversible input,'' in \emph{Proc. {IEEE} Int. Symp. on Info. Theory (ISIT)}, Boston, MA, USA, Jul. 2012, pp. 3063--3067.

\bibitem{Heegard83}
C.~Heegard and A.~El~Gamal, ``On the capacity of computer memory with defects,'' \emph{{IEEE} Trans. Inf. Theory}, vol.~29, no.~5, Sep. 1983.

\bibitem{Mehrdad19}
M.~Tahmasbi and M.~R. Bloch, ``First- and second-order asymptotics in covert communication,'' \emph{{IEEE} Trans. Inf. Theory}, vol.~65, no.~4, pp. 2190--2212, Apr. 2019.

\bibitem{Lee15}
S.~Lee, R.~J. Baxley, M.~A. Weitnauer, and B.~Walkenhorst, ``Achieving undetectable communication,'' \emph{{IEEE} J. Sel. Topics Signal Process.}, vol.~9, no.~7, pp. 1195--1205, Oct. 2015.

\bibitem{Deniable_ITW14}
P.~H. Che, M.~Bakshi, C.~Chan, and S.~Jaggi, ``Reliable deniable communication with channel uncertainty,'' in \emph{Proc. {IEEE} Info. Theory Workshop (ITW)}, Hobart, TAS, Australia, Nov. 2014, pp. 30--34.

\bibitem{Sobers17}
T.~V. Sobers, A.~B. Bash, S.~Guha, D.~Towsley, and D.~Goeckel, ``Covert communication in the presence of an uninformed jammer,'' \emph{{IEEE} Trans. Wireless Commun.}, vol.~16, no.~9, pp. 6193--6206, Sep. 2017.

\bibitem{Shahzad18}
K.~Shahzad, X.~Zhou, S.~Yan, J.~Hu, F.~Shu, and J.~Li, ``Achieving covert wireless communications using a full-duplex receiver,'' \emph{{IEEE} Trans. Wireless Commun.}, vol.~17, no.~12, pp. 8517--8530, Dec. 2018.

\bibitem{Shmuel19}
O.~Shmuel, A.~Cohen, O.~Gurewitz, and A.~Cohen, ``Multi-antenna jamming in covert communication,'' in \emph{Proc. {IEEE} Int. Symp. on Info. Theory (ISIT)}, Paris, France, Jul. 2019, pp. 987--991.

\bibitem{ISIT21}
H.~ZivariFard, M.~R. Bloch, and A.~Nosratinia, ``Covert communication via non-causal cribbing from a cooperative jammer,'' in \emph{Proc. {IEEE} Int. Symp. on Info. Theory (ISIT)}, Melbourne, Australia, Jul. 2021, pp. 202--207.

\bibitem{ISIT22}
------, ``Covert communication in the presence of an uninformed, informed, and coordinated jammer,'' in \emph{Proc. {IEEE} Int. Symp. on Info. Theory (ISIT)}, Melbourne, Australia, Jul. 2022, pp. 306--311.

\bibitem{MyDissertation}
H.~ZivariFard, ``Secrecy and covertness in the presence of multi-casting, channel state information, and cooperative jamming,'' Ph.D. dissertation, Univ. Texas at Dallas, TX, USA, Dec. 2021.

\bibitem{StateDepChan}
S.~I. Gel'fand and M.~S. Pinsker, ``Coding for channel with random parameters,'' \emph{Problem Control Inf. Theory}, vol.~9, no.~1, pp. 19--31, Jan. 1980.

\bibitem{LeeWang18}
S.-H. Lee, L.~Wang, A.~Khisti, and G.~W. Wornell, ``Covert communication with channel-state information at the transmitter,'' \emph{{IEEE} Trans. Inf. Forensics Security}, vol.~13, no.~9, pp. 2310--2319, Sep. 2018.

\bibitem{Keyless22}
H.~ZivariFard, M.~R. Bloch, and A.~Nosratinia, ``Keyless covert communication via channel state information,'' \emph{{IEEE} Trans. Inf. Theory}, vol.~68, no.~8, pp. 5440--5474, Aug. 2022.

\bibitem{Steinberg09}
O.~Sumszyk and Y.~Steinberg, ``Information embedding with reversible stegotext,'' in \emph{Proc. {IEEE} Int. Symp. on Info. Theory (ISIT)}, Seoul, Korea, Jun. 2009, pp. 2728--2732.

\bibitem{ChenVinck08}
Y.~Chen and A.~J. Han~Vinck, ``Wiretap channel with side information,'' \emph{{IEEE} Trans. Inf. Theory}, vol.~54, no.~1, pp. 395--402, Jan. 2008.

\bibitem{ChiaElGamal12}
Y.-K. Chia and A.~El~Gamal, ``Wiretap channel with causal state information,'' \emph{{IEEE} Trans. Inf. Theory}, vol.~58, no.~5, pp. 2838--2849, May 2012.

\bibitem{HanSasaki19}
T.~S. Han and M.~Sasaki, ``Wiretap channels with causal state information: Strong secrecy,'' \emph{{IEEE} Trans. Inf. Theory}, vol.~65, no.~10, pp. 6750--6765, Oct. 2019.

\bibitem{ZivBC17}
Z.~Goldfeld, G.~Kramer, H.~H. Permuter, and P.~Cuff, ``Strong secrecy for cooperative broadcast channels,'' \emph{{IEEE} Trans. Inf. Theory}, vol.~63, no.~19, pp. 469--495, Jan. 2017.

\bibitem{ZivCSIT20}
Z.~Goldfeld, P.~Cuff, and H.~H. Permuter, ``Wiretap channels with random states non-causally available at the encoder,'' \emph{{IEEE} Trans. Inf. Theory}, vol.~66, no.~3, pp. 1497--1519, Mar. 2020.

\bibitem{Bunin20}
A.~Bunin, Z.~Goldfeld, H.~H. Permuter, S.~Shamai, P.~Cuff, and P.~Piantanida, ``Key and message semantic-security over state-dependent channels,'' \emph{{IEEE} Trans. Inf. Forensics Security}, vol.~15, pp. 1541--1556, 2020.

\bibitem{HanSasaki21}
T.~S. Han and M.~Sasaki, ``Wiretap channels with causal and non-causal state information: revisited,'' \emph{{IEEE} Trans. Inf. Theory}, vol.~67, no.~9, pp. 6122--6139, Oct. 2021.

\bibitem{GMAWCJamming}
E.~Tekin and A.~Yener, ``The general {G}aussian multiple-access and two-way wiretap channels: Achievable rates and cooperative jamming,'' \emph{{IEEE} Trans. Inf. Theory}, vol.~54, no.~6, pp. 2735--2751, Jun. 2008.

\bibitem{YassaeeMAWC}
M.~H. Yassaee and M.~R. Aref, ``Multiple access wiretap channels with strong secrecy,'' in \emph{Proc. {IEEE} Info. Theory Workshop (ITW)}, Dublin, Ireland, Sep. 2010, pp. 1--5.

\bibitem{Frey18}
M.~Frey, I.~Bjelakovi\'{c}, and S.~Sta\'{n}czak, ``The {MAC} resolvability region, semantic security and its operational implications,'' \emph{available at \url{https://arxiv.org/abs/1710.02342}}, Jan. 2018.

\bibitem{Helal20}
N.~Helal, M.~R. Bloch, and A.~Nosratinia, ``Cooperative resolvability and secrecy in the cribbing multiple-access channel,'' \emph{{IEEE} Trans. Inf. Theory}, vol.~66, no.~9, pp. 5429--5447, May 2020.

\bibitem{Gaussian_ADSI}
C.~Choudhuri and U.~Mitra, ``How useful is adaptive action?'' in \emph{Proc. {IEEE} Global Conf. on Signal and Info. Processing (GlobalSIP)}, Anaheim, CA, USA, Dec. 2012, pp. 2251--2255.

\bibitem{Secure_Source_Coding_Action}
K.~Kittichokechai, T.~J. Oechtering, M.~Skoglund, and C.~Y.-K, ``Secure source coding with action-dependent side information,'' \emph{{IEEE} Trans. Inf. Theory}, vol.~61, no.~12, pp. 6444--6464, Oct. 2015.

\bibitem{Dikstein15}
L.~Dikstein, H.~H. Permuter, and S.~Shamai, ``{MAC} with action-dependent state information at one encoder,'' \emph{{IEEE} Trans. Inf. Theory}, vol.~61, no.~1, pp. 173--188, Jan. 2015.

\bibitem{Cooperative_MAC}
A.~Somekh-Baruch, S.~Shamai, and S.~Verd\'u, ``Cooperative multiple-access encoding with states available at one transmitter,'' \emph{{IEEE} Trans. Inf. Theory}, vol.~54, no.~10, pp. 4448--4469, Oct. 2008.

\bibitem{Cuff13}
P.~Cuff, ``Distributed channel synthesis,'' \emph{{IEEE} Trans. Inf. Theory}, vol.~59, no.~11, pp. 7071--7096, Nov. 2013.

\bibitem{Yassaee13}
M.~H. Yassaee, M.~R. Aref, and A.~A. Gohari, ``A technique for deriving one-shot achievability results in network information theory,'' in \emph{Proc. {IEEE} Int. Symp. on Info. Theory (ISIT)}, Istanbul, Turkey, Jul. 2013, pp. 1287--1291.

\bibitem{Watanabe15}
S.~Watanabe, S.~Kuzuoka, and V.~Y.~F. Tan, ``Nonasymptotic and second-order achievability bounds for coding with side-information,'' \emph{{IEEE} Trans. Inf. Theory}, vol.~61, no.~4, pp. 1574--1605, Apr. 2015.

\bibitem{HypothesesTesting}
E.~L. Lehmann and J.~P. Romano, \emph{Testing Statistical Hypotheses}.\hskip 1em plus 0.5em minus 0.4em\relax New York, NY, USA: Springer-Verlag, 2005.

\bibitem{Source_Coding_SI}
A.~D. Wyner and J.~Ziv, ``The rate-distortion function for source coding with side information at the decoder,'' \emph{{IEEE} Trans. Inf. Theory}, vol.~22, no.~1, pp. 1--10, Jan. 1976.

\bibitem{ElGamalKim}
A.~El~Gamal and Y.-H. Kim, \emph{Network Information Theory}, 1st~ed.\hskip 1em plus 0.5em minus 0.4em\relax \hspace{-0.3cm}Cambridge, U.K: Cambridge University Press, 2012.

\bibitem{Cover_Book}
T.~M. Cover and J.~A. Thomas, \emph{Elements of Information Theory}, 2nd~ed.\hskip 1em plus 0.5em minus 0.4em\relax {\hspace{-3mm}} John Wiley \& Sons, Inc., 2006.

\bibitem{YuryWu_Book}
Y.~Polyanskiy and Y.~Wu, \emph{Information Theory: From Coding to Learning}.\hskip 1em plus 0.5em minus 0.4em\relax {\hspace{-2mm}}Cambridge University Press, 2024.

\bibitem{Kramer_Book}
G.~Kramer, ``Topics in multi-user information theory,'' \emph{Found. Trends Comm. Inf. Theory}, vol.~4, no. 4-5, pp. 265--444, 2008.

\bibitem{FMEIT}
I.~B. Gattegno, Z.~Goldfeld, and H.~H. Permuter, ``Fourier-{M}otzkin elimination software for information theoretic inequalities,'' in \emph{\url{http://www.ee.bgu.ac.il/~fmeit/}}, 2016.

\bibitem{Likelihood_Encoder}
C.~S. Song, P.~Cuff, and H.~V. Poor, ``The likelihood encoder for lossy compression,'' \emph{{IEEE} Trans. Inf. Theory}, vol.~62, no.~4, pp. 1836--1849, Apr. 2016.

\bibitem{BCC:IT78}
I.~Csisz\'ar and J.~K\"{o}rner, ``Broadcast channels with confidential messages,'' \emph{{IEEE} Trans. Inf. Theory}, vol.~24, no.~3, pp. 339--348, May 1978.

\end{thebibliography}

\end{document}